\newlength{\dinwidth}
\newlength{\dinmargin}
\renewcommand{\emptyset}{\varnothing}
\newcommand{\textb}{\textcolor{blue}}
\newcommand{\textr}{\textcolor{red}}
\newcommand{\texto}{\textcolor{orange}}
\newcommand{\fA}{\mathfrak{A}}
\newcommand{\ua}{\textup a}
\newcommand{\fB}{\mathfrak{B}}
\newcommand{\cD}{\mathcal{D}}
\newcommand{\cE}{\mathcal{E}}
\newcommand{\fF}{\mathfrak{F}}\newcommand{\cF}{\mathcal{F}}
\newcommand{\fg}{\mathfrak{g}}
\newcommand{\fh}{\mathfrak{h}}
\newcommand{\fH}{\mathfrak{H}}\newcommand{\cH}{\mathcal{H}}
\newcommand{\cI}{\mathcal{I}}
\newcommand{\cK}{\mathcal{K}}
\newcommand{\cL}{\mathcal{L}}
\newcommand{\fl}{\mathfrak{l}}
\newcommand{\cO}{\mathcal{O}}
\newcommand{\fQ}{\mathfrak{Q}}
\newcommand{\fP}{\mathfrak{P}}
\newcommand{\fu}{\mathfrak{u}}
\newcommand{\CC}{\mathbb{C}}\newcommand{\C}{\mathbb{C}}
\newcommand{\Z}{\mathbb{Z}}
\newcommand{\T}{\mathbb{T}}
\newcommand{\RR}{\mathbb{R}}\newcommand{\R}{\mathbb{R}}
\newcommand{\NN}{\mathbb{N}}\newcommand{\N}{\mathbb{N}}
\newcommand{\vep}{\varepsilon}
\newcommand{\ltwo}{\ell^{2}}
\DeclareMathOperator{\Ad}{Ad}
\DeclareMathOperator{\ad}{ad}
\DeclareMathOperator{\1}{\mathds{1}}
\DeclareMathOperator{\Tr}{Tr}
\DeclareMathOperator{\pf}{pf}
\DeclareMathOperator{\vol}{vol}
\DeclareMathOperator{\Diff}{Diff}
\DeclareMathOperator{\im}{im}
\DeclareMathOperator{\supp}{supp}
\DeclareMathOperator{\CAR}{\textup{CAR}}
\DeclareMathOperator{\SDC}{\textup{SDC}}
\DeclareMathOperator{\TL}{\textup{TL}}
\DeclareMathOperator{\TIM}{\textup{TIM}}
\DeclareMathOperator{\sign}{\textup{sign}}
\DeclareMathOperator{\sinc}{sinc}
\DeclareMathOperator{\sd}{sd}
\DeclareMathOperator{\std}{std}
\DeclareMathOperator{\per}{per}
\DeclareMathOperator{\alg}{alg}
\DeclareMathOperator{\act}{\curvearrowright}
\def\Undertilde#1{\mathord{\vtop{\ialign{##\crcr
$\hfil\displaystyle{#1}\hfil$\crcr\noalign{\kern1.5pt\nointerlineskip}
$\hfil\widetilde{}\hfil$\crcr\noalign{\kern1.5pt}}}}}
\def\S2{S^{1(2)}}
\def\sl2{{{\rm SL}(2,\RR)}}
\def\psl2{{{\rm PSL}(2,\RR)}}
\def\u1{{{\rm V}(1)}}
\def\su2{{{\rm SV}(2)}}
\def\so3{{{\rm SO}(3)}}
\newtheorem{letterthm}{Theorem}
\newtheorem{thm}{Theorem} [section]
\newtheorem{prop}[thm]{Proposition}
\newtheorem{lemma}[thm]{Lemma}
\newtheorem{defn}[thm]{Definition}
\newtheorem{cor}[thm]{Corollary}
\newtheorem{rem}[thm]{Remark}
\newtheorem{example}[thm]{Example}
\newtheorem{criterion}[thm]{Criterion}
\newtheorem{conjecture}{Conjecture}
\newtheorem{assumption}{Assumption}
\newcommand{\bea}{\begin{assumption}}
	\newcommand{\eea}{\end{assumption}}
\newcommand{\beco}{\begin{conjecture} }
	\newcommand{\eeco}{\end{conjecture} }
\newcommand{\beq}{\begin{equation}}
	\newcommand{\eeq}{\end{equation}}
\newcommand{\beqa}{\begin{eqnarray}}
	\newcommand{\eeqa}{\end{eqnarray}}
\newcommand{\ben}{\begin{arabicenumerate}}
	\newcommand{\een}{\end{arabicenumerate}}
\newcommand{\bex}{\begin{example}}
	\newcommand{\eex}{\end{example}}
\newcommand{\ber}{\begin{remark}}
	\newcommand{\eer}{\end{remark}}
\newcommand{\bec}{\begin{corollary}}
	\newcommand{\eec}{\end{corollary}}
\newcommand{\bep}{\begin{proposition}}
	\newcommand{\eep}{\end{proposition}}
\newcommand{\becr}{\begin{criterion}}
	\newcommand{\eecr}{\end{criterion}}
\begin{document}

\tikzset{->-/.style={decoration={
  markings,
  mark=at position #1 with {\arrow{latex}}},postaction={decorate}}}

\title{Conformal field theory from lattice fermions}
\author{Tobias J. Osborne${}^{1}$, Alexander Stottmeister${}^{1}$}
\date{
\small{${}^{1}$ Institut f\"ur Theoretische Physik, Leibniz Universit\"at Hannover
\\Appelstra\ss e 2, 30167 Hannover, Germany} \\[0.5cm]
\today}
\maketitle
\begin{abstract}
We provide a rigorous lattice approximation of conformal field theories given in terms of lattice fermions in 1+1-dimensions, focussing on free fermion models and Wess-Zumino-Witten models. To this end, we utilize a recently introduced operator-algebraic framework for Wilson-Kadanoff renormalization. In this setting, we prove the convergence of the approximation of the Virasoro generators by the Koo-Saleur formula. From this, we deduce the convergence of lattice approximations of conformal correlation functions to their continuum limit. In addition, we show how these results lead to explicit error estimates pertaining to the quantum simulation of conformal field theories.
\end{abstract}

\tableofcontents

\section{Introduction}
\label{sec:intro}

A rigorous mathematical formulation of quantum field theory (QFT) is one the central challenges for the new millennium \cite{seibergNathanSeiberg20152014}. A variety of approaches to build a rigorous theory of QFT have been employed throughout the past decades and have lead to deep results and insights -- from \emph{constructive QFT} \cite{GlimmQuantumFieldTheory, glimmQuantumPhysicsFunctional1987, SummersAPerspectiveOn} to algebraic QFT \cite{HaagAnAlgebraicApproach, LechnerConstructionOfQuantum, BahnsLocalNetOf}, vertex operator algebras (VOAs) \cite{BorcherdsVertexAlgebrasKac, FrenkelVertexOperatorAlgebras, FrenkelBasicRepresentationsOf, TenerGeometricRealizationOf}, and beyond \cite{segalDefinitionConformalField2004, GrosseSolutionOfThe, GubinelliAPDEConstruction}. However, a completely satisfactory mathematical theory of QFT is still yet to be obtained. For this reason a proof of the rigorous existence of (and mass gap for) Yang-Mills theory was selected as one of the Clay maths prize problems. 

Several key difficulties to obtaining a rigorous formulation of QFT are already epitomized by $1+1$-dimensional gapless quantum field theories, in particular, \emph{conformal field theories} (CFTs) where a variety of constructions and classifications results has been obtained \cite{BelavinInfiniteConformalSymmetry, DiFrancescoCFTBook, EvansQuantumSymmetriesOn, WassermannOperatorAlgebrasAnd, BrunettiModularStructureAnd, GabbianiOperatorAlgebrasAnd, BuchholzTheCurrentAlgebra, KawahigashiClassificationOfLocal, CarpiFromVertexOperator}. 

A natural strategy to prove the existence of such a theory is to somehow realize it as the limit of a sequence of \emph{discretized approximations} (see \cite{JonesSomeUnitaryRepresentations, JonesANoGo, JonesScaleInvariantTransfer, BrothierConstructionsOfConformal, KlieschContinuumLimitsOf, OsborneQuantumFieldsFor} for recent attempts in a Hamiltonian setting), or \emph{lattice models}, on finite spatial lattices $\Lambda_{N} = \varepsilon_{N}\{-L_{N}, -L_{N} +1, \ldots, L_{N}-2, L_{N}-1\}\subset\varepsilon_{N}  \mathbb{Z}$, where $L=\varepsilon_N r_N$ is the length of the system. Associated to each lattice site $x\in \Lambda_N$ is a quantum degree of freedom, typically modeled by a Hilbert space $\mathcal{H}_x$, so that one assigns the \emph{total Hilbert space} $\mathcal{H}_{N} \subset \bigotimes_{x\in \Lambda_{N}} \mathcal{H}_x$ to the system\footnote{In the cases considered in this paper the total Hilbert space is provided by fermionic Fock space $\fF_{\ua}(\mathfrak{h}_{N})$ based on a one-particle space $\mathfrak{h}_{N}$ for a particle with spin hopping on $\Lambda_N$, which can be realised as subspace of a direct sum of tensor products of smaller Hilbert spaces.}. One requires, further, a \emph{Hamiltonian} $H_{0}^{(N)}$ for each discretization
\begin{align*}
H^{(N)}_0 = \tfrac{L}{\pi}\varepsilon_N\sum_{x\in\Lambda_N} h_x^{(N)},
\end{align*}
which is a sum of self-adjoint operators $h_x^{(N)}$ on $\mathcal{H}_N$ with finite support centered on $x$. A key additional ingredient that must be specified is a $C^{*}$-algebra $\mathfrak{A}_N$ (considered as a subalgebra of the bounded operators $B(\mathcal{H}_N)$ on $\cH_{N}$) of basic \emph{observables or fields}, corresponding to discretizations of their potential continuum counterparts. Given this data the task is to assign mathematical meaning to the \emph{scaling limit} of $\{\mathfrak{A}_N, \mathcal{H}_N, H^{(N)}_0\}$. 

A multitude of challenges must be overcome to realize this limit, in particular: (a) specifying the mathematical data to compare the discretizations with differing lattice size (and hence the topology for the limit); (b) identifying the correct lattice Hamiltonians for a target QFT; (c) proving the convergence of the sequence; and (d) proving that the requisite symmetry groups of the QFT are realized via projective unitary representations on the limit space. The recovery of expected continuum symmetries, i.e., requirement (d), typically poses severe difficulties (see \cite{SmirnovTowardsConformalInvariance, ChelkakUniversalityInThe, HonglerConformalFieldTheory, DuminilCopinRotationalInvarianceIn} for major advances in two-dimensional critical lattice models in the Euclidean framework).

The recently introduced \emph{operator-algebraic renormalization} (OAR) \cite{StottmeisterOperatorAlgebraicRenormalization, MorinelliScalingLimitsOf, BrothierConstructionsOfConformal} supplies a novel approach to defining, and proving the existence of, the above limit in Hamiltonian approach, thus overcoming the four challenges outlined in the previous paragraph. To do this one introduces the following additional structures:
\begin{itemize}
	\item \emph{Refining quantum channels}, which are unital completely positive (ucp) maps, $\alpha^{N}_{N+1}:\fA_{N}\rightarrow\fA_{N+1}$, connecting the observable algebras between the different scales. These induce dual \emph{coarse-graining} trace-preserving completely positive (tpcp) maps $\cE^{N+1}_{N}$ between the state spaces, or density matrices respectively preduals, $B(\cH_{N+1})_{*}\rightarrow B(\cH_{N})_{*}$, in the ultraweakly continuous case. Here, $B(\cH_{N})_{*}$ denotes the predual on $B(\cH_{N})$, i.e.~the trace-class operators on $\cH_{N}$.
	\item a sequence of initial (bare) states $\omega^{(N)}_{0}$ on the algebras $\mathfrak{A}_N$, possibly given by a density matrix $\rho^{(N)}_{0}\in B(\cH_{N})_{*}$, which are renormalized according to $\omega^{(N)}_{M} = \omega^{(N+M)}_{0}\circ\alpha^{N}_{N+M}$.
\end{itemize}
The scaling limit is then obtained via the Gelf'and-Naimark-Segal (GNS) construction\footnote{Informally, this means that the continuum Hilbert space and its inner product are reconstructed from the correlation functions of the basic observables $\fA_{N}$ in the scaling limit $\omega_{\infty}$.} $\{\cH_{\infty}, \pi_{\infty}, \Omega_{\infty}\}$ applied to the inductive limit $\fA_{\infty} = \varinjlim_{N}\fA_{N}$ and scaling-limit state $\omega^{(N)}_{\infty} = \lim_{M\rightarrow\infty}\omega^{(N)}_{M}$ (the existence of the latter usually requires the imposition of additional \emph{renormalization conditions} on the sequence $\omega^{(N)}_{0}$). The convergence of Hamiltonians $H^{(N)}_{0}\rightarrow H$ is considered in terms convergent operator sequences $\lim_{M\rightarrow\infty}\limsup_{N\rightarrow\infty}\|H^{(N)}_{0}-\alpha^{M}_{N}(H^{(M)}_{0})\|_{*}$ in a suitable operator topology (or their associated unitary respectively automorphism groups) \cite{DuffieldMeanFieldDynamical}. It is important to note that this convergence is a direct extension of the convergence that describes the inductive-limit algebra $\fA_{\infty}$, i.e., each element $O\in\fA_{\infty}$ is the limit of a convergent sequence $O_{N}$ in the sense that:
\begin{align*}
\lim_{M\rightarrow\infty}\limsup_{N\rightarrow\infty}\|O_{N}-\alpha^{M}_{N}(O_{M})\|_{C^{*}} & = 0.
\end{align*}

To obtain a (full) CFT via an OAR scaling limit one needs to realize the infinite-dimensional conformal group of symmetries via projective unitary representations on the limit space. This requires the additional identification of a discretized family of generators corresponding to the \emph{Virasoro algebra with central charge} $c$ \cite{DiFrancescoCFTBook},
\begin{align*}
[L_{k}, L_{k’}] & = \tfrac{L}{\pi}(k-k’)L_{k+k’} + \delta_{k+k’,0}\tfrac{c}{12}(\tfrac{L}{\pi}k)((\tfrac{L}{\pi}k)^2-1), \\ \nonumber 
[\overline{L}_{k}, \overline{L}_{k’}] & = \tfrac{L}{\pi}(k-k’)\overline{L}_{k+k’} + \delta_{k+k’,0}\tfrac{c}{12}(\tfrac{L}{\pi}k)((\tfrac{L}{\pi}k)^2-1), \\ \nonumber
[L_{k}, \overline{L}_{k’}] & = 0.
\end{align*}
Such a family is furnished by the \emph{Koo-Saleur (KS) approximants} \cite{KooRepresentationsOfThe} 
\begin{align*}
L_{k}^{(N)} &\!=\!\tfrac{1}{2}\Big(H_{k}^{(N)}\!+\!\tfrac{\pi\vep_{N}}{2L\sin(\frac{1}{2}\vep_{N}k)}\Big[H_{k}^{(N)}\!,H_0^{(N)}\Big]\Big), & \overline{L}_{k}^{(N)} \!=\!\tfrac{1}{2}\Big(H_{-k}^{(N)}\!+\!\tfrac{\pi\vep_{N}}{2L\sin(\frac{1}{2}\vep_{N}k)}\Big[H_{-k}^{(N)}\!,H_0^{(N)}\Big]\Big),
\end{align*}
where the \emph{lattice hamiltonian Fourier modes} are given by
\begin{align*}
H_k^{(N)} = \tfrac{L}{\pi}\vep_{N}\sum_{x\in \Lambda_{N}} e^{ikx} h^{(N)}_x,
\end{align*}
with $k\in\Gamma_{N}$ in the dual lattice. The inductive family of ucp maps $\alpha_{N+1}^N$ must be carefully chosen to ensure the existence of (meaningful) scaling limits (interestingly the inductive-limit algebra is somewhat insensitive to the specific realization of each map \cite{BlackadarGeneralizedInductiveLimits1}). We focus on two classes (and their combination): a real-space renormalisation group based on wavelets and their scaling functions \cite{DaubechiesTenLecturesOn} which explicitly maintains locality of the algebra and a momentum-space renormalisation group based on a sharp cutoff in momentum space. With this preliminary data in hand we turn now to a summary of the main results.

\subsection{Main results}
\label{sec:main}
Let us give a brief overview of the results presented in this paper. The main results are presented in abridged form as Theorems \ref{thm:A} to \ref{thm:C} accompanied by references to the appropriate statements in the main text. The basis of our results is a scaling limit construction for free lattice fermions using OAR. This entails that our approximation results for CFTs, i.e.~the convergence of the KS approximants and other current-type fields, are currently restricted to free-fermion CFTs, products thereof and certain CFTs embeddable into those (see below). Although, we phrase all results in terms of complex fermion algebras, fully analogous statements also hold for suitably defined self-dual fermion algebras (Majorana fermions), as explained in the main sections. Specifically, we consider scaling limits $\omega_{\infty}$ of the ground state $\omega^{(N)}_{0}$ of a quadratic lattice Dirac Hamiltonian:
\begin{align*}
H^{(N)}_{0} & = \vep_{N}^{-1}\sum_{x\in\Lambda_{N}}\left(a^{(1)\!\!\ \dagger}_{x+\vep_{N}}a^{(2)}_{x} - a^{(1)\!\!\ \dagger}_{x}a^{(2)}_{x} + \textup{h.c.} + \lambda_{N}\left(a^{(1)\!\!\ \dagger}_{x}a^{(1)}_{x} - a^{(2)\!\!\ \dagger}_{x}a^{(2)}_{x}\right)\right),
\end{align*}
in the massless case $\lambda_{N}=0$. Here, $a_{x}, a^{\dag}_{x}$ are the usual annihilation and creation operators generating a complex fermion algebra $\fA_{N}$ associated with the lattice $\Lambda_{N}$.\\

The first main result of the paper concerns the convergence of the KS approximants to the continuum Virasoro generators in projective unitary positive-energy representations with central charge $c=\tfrac{1}{2},1$ associated with free-fermion scaling limits $\omega_{\infty}$. The convergence holds on the natural domain of finite-energy vectors in Fock space (with respect to the chiral conformal Hamiltonian $L_{\pm,0}$ arising from $H^{(N)}_{0}$) \cite{CarpiOnTheUniqueness}. Here and in the following, $\pm$ denotes the chiral and anti-chiral components defined in Section \ref{sec:ffchiral}. In the main text, this result is stated as Theorem \ref{thm:KSconvc}, where all details can be found.

\begin{letterthm}[Convergence of the Koo-Saleur approximants for $\pi_{\infty}$]
\label{thm:A}
Let $\omega_{\infty}$ be a massless scaling limit of the free-fermion ground state $\omega^{(N)}_{0}$ and $\pi_{\infty}$ its GNS representation. The chiral Koo-Saleur approximants, $L^{(N)}_{k,\pm}$, converge strongly to the continuum Virasoro generators, $L_{k,\pm}$, on the dense, common core $\cD_{\textup{fin}}\subset\cH_{\infty}$ spanned by finite-energy vectors of the chiral conformal Hamiltonian $L_{\pm,0}$:
\begin{align*}
\lim_{N\rightarrow\infty}\|(:\!\pi_{\infty}(\alpha^{N}_{\infty}(L^{(N)}_{\pm,k})\!)\!: - L_{\pm,k})\phi\| & = 0,
\end{align*}
for all $\phi\in\cD_{\textup{fin}}$.
\end{letterthm}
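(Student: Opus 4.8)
The plan is to exploit that the whole construction is quadratic (quasi-free): the lattice Hamiltonian $H^{(N)}_0$, its Fourier modes $H^{(N)}_k$, and hence the commutator $[H^{(N)}_k, H^{(N)}_0]$ are all bilinear in the fermion operators, so each Koo-Saleur approximant $L^{(N)}_{\pm,k}$ is a fermion bilinear. Since $\cH_\infty$ is a fermionic Fock space $\fF_\ua(\fh_\pm)$ over the one-particle space $\fh_\pm$ of the chiral theory, the problem reduces to the one-particle level. My first step would be to identify, on this Fock space, $:\pi_\infty(\alpha^N_\infty(L^{(N)}_{\pm,k})): = d\Gamma(t^{(N)}_{\pm,k})$, where $d\Gamma$ is the second quantization relative to the ground-state (Dirac-sea) polarization -- so that its off-diagonal blocks encode pair creation and annihilation -- and $t^{(N)}_{\pm,k}$ is an explicit one-particle operator on $\fh_\pm$; the normal ordering fixes only the additive scalar. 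Likewise one writes the continuum generator as $L_{\pm,k} = d\Gamma(t_{\pm,k})$, with $t_{\pm,k}$ the one-particle symbol of the chiral stress-tensor mode.

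The core analytic step is then to prove $t^{(N)}_{\pm,k}\to t_{\pm,k}$ strongly on a core of finite-energy one-particle vectors. In momentum space $t_{\pm,k}$ acts on one-particle modes by the shift $p\mapsto p+k$ with weight $(p+\tfrac{k}{2})$ (the free-fermion stress-tensor mode), whereas the lattice symbol is assembled from the lattice dispersion $\omega^{(N)}(p)$ through $H^{(N)}_k$ and its commutator with $H^{(N)}_0$, the latter producing the difference $\omega^{(N)}(p)-\omega^{(N)}(p+k)$. The regularizing prefactor $\tfrac{\pi\vep_N}{2L\sin(\tfrac12\vep_N k)}$ in the Koo-Saleur formula is designed precisely to normalize the resulting discrete group velocity; I would make this explicit by linearizing $\omega^{(N)}$ about the two Fermi momenta (where, for $\lambda_N=0$, it vanishes), checking that the leading term reproduces $t_{\pm,k}$ while the curvature corrections are $O(\vep_N)$ on each fixed finite-energy vector. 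The convergence of the refinement maps $\alpha^N_\infty$ established earlier is used here to identify the image of a fixed lattice mode inside $\fh_\pm$.

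To lift this to $\cD_{\textup{fin}}$ I would establish linear energy bounds uniform in $N$, of the form $\|t^{(N)}_{\pm,k}\xi\|\le C(1+|k|)(\|L^{(1)}_{\pm,0}\xi\| + \|\xi\|)$ on the one-particle space (with $C$ independent of $N$), together with uniform Hilbert-Schmidt control of the pair-creating off-diagonal blocks, which is what makes the bilinears implementable. Granting these, the standard second-quantization estimate bounds $\|d\Gamma(t^{(N)}_{\pm,k})\phi\|$ by $C'(1+|k|)\|(L_{\pm,0}+1)\phi\|$ uniformly in $N$; on each finite-particle finite-energy vector one then writes $d\Gamma(t^{(N)}_{\pm,k})-d\Gamma(t_{\pm,k})$ as a sum of single-particle differences $t^{(N)}_{\pm,k}-t_{\pm,k}$, applies the one-particle strong convergence factor by factor, and sums the finitely many terms. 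A density argument using the uniform energy bound then extends strong convergence from the algebraic core of $L_{\pm,0}$-eigenvectors to all of $\cD_{\textup{fin}}$.

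The main obstacle I anticipate is controlling the commutator term $[H^{(N)}_k, H^{(N)}_0]$ uniformly in $N$. Away from the Fermi momenta the lattice dispersion is genuinely nonlinear, so the discrete velocity it generates matches the continuum vector field only after the finite-energy projection suppresses the high-momentum modes; making this suppression quantitative -- so that the curvature of $\omega^{(N)}$ and the lattice artifacts contribute errors vanishing on $\cD_{\textup{fin}}$ while the uniform energy and Hilbert-Schmidt bounds survive -- is the delicate point. A secondary subtlety is showing that the lattice cross-terms linking the two Fermi points vanish in the limit, so that the chiral and anti-chiral sectors decouple and each approximant converges to the generator of its own sector.
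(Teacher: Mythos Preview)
Your proposal is correct and follows the same overall architecture as the paper: reduce to one-particle operators via (normal-ordered) second quantization, prove strong convergence of the one-particle symbols on a core, control the off-diagonal (pair-creating) blocks in Hilbert--Schmidt norm, and lift to Fock space. The paper writes this as $:\!\pi_{{\color{blue}\pm}}(\alpha^{N}_{\infty}(L^{(N)}_{\pm,k}))\!:=dF_{{\color{blue}\pm}}(\tilde{\ell}^{(N)}_{\pm,k})$ and $L_{\pm,k}=dF_{{\color{blue}\pm}}(\ell_{\pm,k})$, then invokes the estimate \eqref{eq:noquantconv} together with Lemma~\ref{lem:KSconvc}.

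Where the paper is sharper than your outline is in the \emph{execution}, and this eliminates precisely the obstacles you flag. First, the commutator $[H^{(N)}_{k},H^{(N)}_{0}]$ is computed explicitly at the one-particle level, and after the chiral projection one obtains the closed form \eqref{eq:latvira1p}:
\[
\ell^{(N)}_{\pm,k}=\tfrac{\pm1}{2\pi}\sum_{l\in\Gamma_{N,{\color{blue}\pm}}}\cos(\tfrac{1}{4}\vep_{N}k)^{2}\tfrac{\sin(\vep_{N}(l\mp\tfrac{k}{2}))}{\vep_{N}}\,e_{l\mp k}\otimes\overline{e}_{l}.
\]
No linearisation about Fermi points or curvature analysis is needed; the coefficient visibly tends to $l\mp\tfrac{k}{2}$ pointwise. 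Second, the paper chooses the \emph{momentum-cutoff} renormalisation group for this theorem, so that $R^{N}_{\infty}(e_{k})=\vep_{N}^{-1/2}e_{k}$ and $\tilde{\ell}^{(N)}_{\pm,k}$ is simply \eqref{eq:latvirachi1pmom}. On a fixed plane wave $e_{m}$ the difference $(\tilde{\ell}^{(N)}_{\pm,k}-\ell_{\pm,k})e_{m}$ involves a \emph{single} term once $N$ is large, and the off-diagonal Hilbert--Schmidt norms in \eqref{eq:momrgestimates} are \emph{finite} sums (the step functions $\theta(\pm(n\mp k))\theta(\mp n)$ confine $n$ to a window of width $|k|$). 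Thus both parts of the estimate \eqref{eq:noquantconv} go to zero by elementary pointwise convergence; your proposed uniform energy bounds and density extension are not needed, because $\cD_{\textup{fin}}$ \emph{equals} $\fF^{\alg}_{\ua}(\cD_{\std})$ here (eigenvectors of $L_{\pm,0}$ are finite wedge products of plane waves). The chiral/anti-chiral cross terms you worry about are handled upstream: the paper defines the chiral approximants $L^{(N)}_{\pm,k}$ as the $p_{\mp}$-diagonal blocks of $\ell^{(N)}_{k}$, $\overline{\ell}^{(N)}_{k}$ (see \eqref{eq:latvirasubchi}), and the off-diagonal $p_{+}\ell^{(N)}_{k}p_{-}$ in \eqref{eq:latviraDchi} is manifestly $O(\vep_{N})$.

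In short: your plan works, but the paper's two concrete choices---explicit kernel computation and the sharp momentum cutoff---turn the analytic core into finite sums, bypassing the uniform-in-$N$ bounds you set up.
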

The proof of this theorem is a consequence of the basic Lemma \ref{lem:KSconvc} and exploits the momentum-cutoff renormalization group given in Definition \ref{def:momrg} in combination with known results concerning the implementability of Bogoliubov transformations in quasi-free representations of fermion algebras (see Section \ref{sec:bogoliubov}).

A version of this result for the Fock representation with $c=0$ (not of positive energy) is given in Theorem \ref{thm:KSconv} and another one for smeared Virasoro generators in Theorem \ref{thm:KSconvsmeared}. Moreover, a similar result can be obtained for the lattice approximation of Wess-Zumino-Witten (WZW) currents (see Section \ref{sec:WZWcur} and Theorem \ref{thm:U1cursmearedconv}).

\bigskip

The second major result of our paper concerns correlation functions of fermion operators under the action of conformal transformations generated by Virasoro generators. Informally it states that the finite-scale dynamical correlation function approximate the continuum dynamical correlation functions for arbitrary collections of operators in the (chiral) fermion algebra (especially including trace-class generators of quasi-free derivations as in \cite{WitteveenQuantumCircuitApproximations}). This results is referred to as Theorem \ref{thm:corapprox} together with Corollary \ref{cor:corapprox} in the main text.
\begin{letterthm}[Convergence of fermion correlation functions]
\label{thm:B}
Let $s\in C^{\alpha}(\R)$ be a sufficiently regular, compactly supported orthonormal Daubechies scaling function, and $\pi_{\infty}$ be the scaling limit representation of the fermion algebra $\fA_{\infty}$ associated with the free-fermion scaling limit $\omega_{\infty}$. Then, for any convergent sequences $\{A_{N}\}_{N\in\N_{0}}$, $\{B_{N}\}_{N\in\N_{0}}$ with limits $A, B\in\fA_{\infty}$ and uniformly in $t\in\R$ on compact intervals, we have:
\begin{align*}
\lim_{N\rightarrow\infty}(\Omega^{(N)}_{0},\pi^{(N)}(A_{N})\sigma^{(N)}_{t}(\pi^{(N)}(B_{N}))\Omega^{(N)}_{0}) & = (\Omega_{\infty},\pi_{\infty}(A)\sigma_{t}(\pi_{\infty}(B))\Omega_{\infty})\;,
\end{align*}
where $\Omega^{(N)}_{0}$, $\pi^{(N)}$ are the GNS-vector and -representation of $\omega^{(N)}_{0}$. $\sigma^{(N)}_{t}$ and $\sigma_{t}$ are $1$-parameter (semi-)groups of Bogoliubov transformations generated by a (smeared) Koo-Saleur approximant or a (non-)abelian current and their continuum analogues respectively.
\end{letterthm}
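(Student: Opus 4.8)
The plan is to reduce the convergence of the full dynamical correlation functions to three separate convergences, all transported to the common scaling-limit space $\cH_{\infty}$ via the isometries $V_{N}\colon\cH_{N}\rightarrow\cH_{\infty}$ induced by the refining channels, and then to assemble them by an elementary ``strong-times-strong with uniform bounds'' estimate. Writing the finite-scale correlator as
\begin{align*}
(\Omega^{(N)}_{0},\,\pi^{(N)}(A_{N})\,U^{(N)}_{t}\,\pi^{(N)}(B_{N})\,(U^{(N)}_{t})^{*}\,\Omega^{(N)}_{0}),
\end{align*}
where $U^{(N)}_{t}$ implements the Bogoliubov dynamics $\sigma^{(N)}_{t}$ in the GNS representation, it suffices to establish: (i) $V_{N}\Omega^{(N)}_{0}\rightarrow\Omega_{\infty}$; (ii) $V_{N}\pi^{(N)}(X_{N})V_{N}^{*}\rightarrow\pi_{\infty}(X)$ strongly for every convergent sequence $X_{N}\rightarrow X$ in $\fA_{\infty}$; and (iii) $U^{(N)}_{t}\rightarrow U_{t}$ strongly, uniformly for $t$ in compact intervals. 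Indeed, once these hold, applying the operators in the order $(U^{(N)}_{t})^{*}$, $\pi^{(N)}(B_{N})$, $U^{(N)}_{t}$, $\pi^{(N)}(A_{N})$ to the convergent vacuum sequence and using at each step the fact that $T_{N}\psi_{N}\rightarrow T\psi$ whenever $T_{N}\rightarrow T$ strongly, $\psi_{N}\rightarrow\psi$, and $\sup_{N}\|T_{N}\|<\infty$ (valid here since the $U^{(N)}_{t}$ are unitary and $\|\pi^{(N)}(X_{N})\|=\|X_{N}\|$ stays bounded), yields the claimed limit, with uniformity in $t$ inherited from (iii).

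Statements (i) and (ii) are built into the operator-algebraic renormalization framework. Claim (i) is the convergence of the bare-state GNS vectors to the scaling-limit vacuum, a direct consequence of the convergence of the renormalized states $\omega^{(N)}_{0}\rightarrow\omega_{\infty}$ underlying the scaling-limit GNS construction. Claim (ii) follows from the compatibility of the GNS data with the refining channels --- concretely a Stinespring-type relation $\pi^{(N)}(a)=V_{N}^{*}\pi_{\infty}(\alpha^{N}_{\infty}(a))V_{N}$ intrinsic to the construction --- together with $\alpha^{N}_{\infty}(X_{N})\rightarrow X$ in the $C^{*}$-norm of $\fA_{\infty}$ (so that $\pi_{\infty}(\alpha^{N}_{\infty}(X_{N}))\rightarrow\pi_{\infty}(X)$ in operator norm) and $V_{N}V_{N}^{*}\rightarrow\1$ strongly as the ranges exhaust $\cH_{\infty}$.

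The heart of the argument is (iii). Implementability of $\sigma^{(N)}_{t}$ and $\sigma_{t}$ in the respective quasi-free representations, furnishing the unitaries $U^{(N)}_{t}$ and $U_{t}$, is supplied by the Shale--Stinespring criterion recalled in Section \ref{sec:bogoliubov}. The generators are the smeared Koo--Saleur approximants, respectively the (non-)abelian currents, which by the smeared variants of Theorem \ref{thm:A} --- namely Theorems \ref{thm:KSconvsmeared} and \ref{thm:U1cursmearedconv} and their non-abelian analogues --- converge strongly on the common core $\cD_{\textup{fin}}$ of finite-energy vectors to the self-adjoint continuum generator. Since $\cD_{\textup{fin}}$ is a core for the limit generator and is contained in the domains of the approximants, this strong convergence upgrades to strong resolvent convergence, whence the Trotter--Kato approximation theorem gives $U^{(N)}_{t}\rightarrow U_{t}$ strongly and uniformly on compact $t$-intervals; in the non-self-adjoint (semigroup) cases one invokes the contraction-semigroup version of Trotter--Kato after verifying the requisite range condition on $\cD_{\textup{fin}}$.

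The main obstacle lies precisely in this last step, for two reasons. First, the refining maps are only unital completely positive and not $*$-homomorphisms, so the finite-lattice dynamics cannot be transported to $\cH_{\infty}$ naively; one must instead identify $U^{(N)}_{t}$ through the second-quantized Bogoliubov structure and the induced isometries, and it is exactly the quasi-free (Wick/Pfaffian) nature of the states and dynamics that makes this tractable, confining all analytic difficulty to the one-particle level on which Theorem \ref{thm:A} operates. Second, one must genuinely verify the hypotheses of Trotter--Kato --- that $\cD_{\textup{fin}}$ lies inside the domains of the approximating generators and is a core for the limit, together with the dissipativity/range condition in the semigroup case; establishing these, rather than the final assembly, is the substantive part of the proof.
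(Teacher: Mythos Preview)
Your proposal has a genuine gap in the way it transports the finite-scale data to $\cH_\infty$. The isometries $V_N:\cH_N\to\cH_\infty$ you invoke, together with the Stinespring relation $\pi^{(N)}(a)=V_N^*\pi_\infty(\alpha^N_\infty(a))V_N$ and $V_N\Omega^{(N)}_0\to\Omega_\infty$, cannot coexist: combining them forces $\omega^{(N)}_0(a)=\omega_\infty(\alpha^N_\infty(a))=\omega^{(N)}_\infty(a)$, i.e.\ that the bare lattice state already equals its own scaling limit, which is false at every finite $N$. The inductive-limit isometries discussed in Section \ref{sec:comp} connect the GNS spaces of the \emph{projectively consistent} family $\omega^{(N)}_\infty$, not those of the bare states $\omega^{(N)}_0$; the paper explicitly notes there that in the massless chiral case the finitely renormalized states $\omega^{(N)}_M$ are not even pure, so the connecting-isometry structure you rely on is unavailable. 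The same conflation affects step (iii): Theorem \ref{thm:A} and its smeared variants establish convergence of $:\!\pi_\infty(\alpha^N_\infty(L^{(N)}_\pm))\!:$ on $\cH_\infty$, i.e.\ of the approximants \emph{normal-ordered in the limit representation}, not of the implementers $U^{(N)}_t$ of $\sigma^{(N)}_t$ in the distinct finite-scale GNS spaces of $\omega^{(N)}_0$.

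The paper's proof (Theorem \ref{thm:corapprox} and Corollary \ref{cor:corapprox}) avoids Hilbert-space transport altogether by working at the algebraic level. The key observation is that the finite-scale dynamics $\sigma^{(N)}_t=\Ad_{U_N(t)}$ is \emph{inner}, with $U_N(t)\in\fA_N$, so the finite-scale correlator is $\omega^{(N)}_0(O_N)$ for the honest algebra element $O_N=\alpha^M_N(A)\,U_N(t)\,\alpha^M_N(B)\,U_N(t)^*\in\fA_N$. One then shows two things separately: first, that $\{O_N\}$ is an $\alpha$-convergent sequence in $\fA_\infty$ with limit $\alpha^M_\infty(A)\,\sigma_t(\alpha^M_\infty(B))$, which follows from Corollary \ref{cor:KSconvbog} (respectively \ref{cor:KSconvbogsmeared}) applied in the standard Fock representation $\pi_0$; second, that bare and scaling-limit states agree on such sequences, via the norm convergence $\|\omega^{(K)}_{N-K}-\omega^{(K)}_\infty\|\to0$ of Lemma \ref{lem:stateconv} combined with a three-term estimate detouring through $\alpha^K_N(O_K)$. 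The one-particle strong resolvent convergence you correctly identify is indeed used, but to prove convergence of the Bogoliubov \emph{automorphisms} on $\fA_\infty$, not of unitaries on a common Hilbert space.
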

The proof of this theorem is implied the existence of the scaling limit of ground states due to Lemma \ref{lem:stateconv} and Lemmata \ref{lem:KSconv}, \ref{lem:KSconvc} \& \ref{lem:KSconvsmeared} because of the semi-group convergence theorem \cite[Theorem 2.16, p. 504]{KatoPerturbationTheoryFor}, cf.~also \cite[Theorem 1.8, p. 141]{EngelAShortCourse}.

\bigskip

The final major result pertains to the correlation functions of the Virasoro algebra in free-fermion CFTs yielding a result comparable to that presented in \cite{ZiniConformalFieldTheories}. Similar to Theorem \ref{thm:B}, we find that the correlation functions of smeared Virasoro generators respectively their exponentials, which generate local Virasoro nets \cite{BrunettiModularStructureAnd, GabbianiOperatorAlgebrasAnd} (here with central charge $c=\tfrac{1}{2},1$), are obtained as limits of ground state correlation functions of the Koo-Saleur approximants. This results is referred to as Theorem \ref{thm:corapproxvir} in the main text.
\begin{letterthm}[Convergence of Virasoro correlation functions]
\label{thm:C}
Let $s\in C^{\alpha}(\R)$ be a sufficiently regular, compactly supported orthonormal Daubechies scaling function, and $\pi_{\infty}$ be the scaling limit representation of the fermion algebra $\fA_{\infty}$ associated with the free-fermion scaling limit $\omega_{\infty}$. Then, for any $n\in\N$ and convergent sequences of smearing functions $X_{N,p}\stackrel{N\rightarrow\infty}{\rightarrow} X_{p}$, $N\in\N_{0}$ and $p=1,...,n$ with sufficient regularity, we have:
\begin{align*}
\lim_{N\rightarrow\infty}(\Omega_{\infty},\prod_{p=1}^{n}:\!(\pi_{\infty}\circ\alpha^{N}_{\infty})(L^{(N)}_{\pm}(X_{N,p}))\!:\Omega_{\infty}) & = (\Omega_{\infty},\prod_{p=1}^{n}L_{\pm}(X_{p})\Omega_{\infty})\;,
\end{align*}
and similarly,
\begin{align*}
\lim_{N\rightarrow\infty}(\Omega_{\infty},\prod_{p=1}^{n}e^{i:(\pi_{\infty}\circ\alpha^{N}_{\infty})(L^{(N)}_{\pm}(X_{N,p})):}\Omega_{\infty}) & = (\Omega_{\infty},\prod_{p=1}^{n}e^{iL_{\pm}(X_{p})}\Omega_{\infty})\;,
\end{align*}
Moreover, the finite-scale correlation functions of the scaling limit state $\omega_{\infty}$ can be approximated in terms of the renormalized finite-scale states:
\begin{align*}
& (\Omega_{\infty},\prod_{p=1}^{n}:\!\pi_{\infty}(\alpha^{N}_{\infty}(L^{(N)}_{\pm}(X_{N,p})\!)\!)\!:\Omega_{\infty}) = \omega^{(N)}_{\infty}(\prod_{p=1}^{n}(L^{(N)}_{\pm}(X_{N,p})-\omega^{(N)}_{\infty}(L^{(N)}_{\pm}(X_{N,p})\!)\!)\!) \\
& = \lim_{M\rightarrow\infty}\omega^{(N)}_{M}(\prod_{p=1}^{n}(L^{(N)}_{\pm}(X_{N,p})-\omega^{(N)}_{\infty}(L^{(N)}_{\pm}(X_{N,p})\!)\!)\!) \\
& = \lim_{M\rightarrow\infty}\omega^{(N+M)}_{0}\!(\prod_{p=1}^{n}(\alpha^{N}_{N+M}(L^{(N)}_{\pm}\!(X_{N,p})\!)\!-\!\omega^{(N+M)}_{0}\!(\alpha^{N}_{N+M}(L^{(N)}_{\pm}\!(X_{N,p})\!)\!)\!)\!) \;,
\end{align*}
and similarly,
\begin{align*}
& (\Omega_{\infty},\prod_{p=1}^{n}e^{i:\pi_{\infty}(\alpha^{N}_{\infty}(L^{(N)}_{\pm}(X_{N,p})\!)\!):}\Omega_{\infty}) = \omega^{(N)}_{\infty}(\prod_{p=1}^{n}e^{iL^{(N)}_{\pm}(X_{N,p})-i\omega^{(N)}_{\infty}(L^{(N)}_{\pm}(X_{N,p})\!)}) \\
& = \lim_{M\rightarrow\infty}\omega^{(N)}_{M}(\prod_{p=1}^{n}e^{iL^{(N)}_{\pm}(X_{N,p})-i\omega^{(N)}_{M}(L^{(N)}_{\pm}(X_{N,p})\!)}) \\
& = \lim_{M\rightarrow\infty}\omega^{(N+M)}_{0}\!(\prod_{p=1}^{n}e^{i\alpha^{N}_{N+M}(L^{(N)}_{\pm}(X_{N,p})\!)-i\omega^{(N+M)}_{0}\!(\alpha^{N}_{N+M}(L^{(N)}_{\pm}(X_{N,p})\!)\!)}) \;,
\end{align*}
where $\omega^{(N)}_{\infty} = \omega_{\infty}\circ\alpha^{N}_{\infty}$, and $L^{(N)}_{\pm}(X_{N})\!=\!\tfrac{1}{2L_{N}}\!\sum_{k\in\Gamma_{N}}\!\hat{X}_{N|k}L^{(N)}_{\pm,k}$ (similarly for $L_{\pm}(X)$).
\end{letterthm}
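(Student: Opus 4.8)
The plan is to separate the two genuine limit statements from the subsequent chain of equalities, and to exploit throughout that $\omega_{\infty}$ together with every renormalized state $\omega^{(N)}_{M}$ is quasi-free, while the vacuum $\Omega_{\infty}$ lies in the finite-energy core $\cD_{\textup{fin}}$. The single analytic input is the strong convergence of the transported, normal-ordered smeared approximants $:\!(\pi_{\infty}\circ\alpha^{N}_{\infty})(L^{(N)}_{\pm}(X_{N,p}))\!:\,\to L_{\pm}(X_{p})$ on $\cD_{\textup{fin}}$, supplied by Lemma \ref{lem:KSconvsmeared} (resting in turn on Lemma \ref{lem:KSconvc}). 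Both limit identities then follow by pushing this convergence through the $n$-fold product applied to $\Omega_{\infty}$ and pairing with $\Omega_{\infty}$; the chain of equalities is, by contrast, an unwinding of the GNS data combined with the defining properties of the renormalization group.

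For the first (polynomial) correlator I would work directly with the quasi-free structure: since each $L^{(N)}_{\pm}(X_{N,p})$ is quadratic in the fermion fields and $\omega^{(N)}_{\infty}$ is Gaussian, the centered $n$-point function expands by Wick's theorem into a finite sum over pairings, each term a product of two-point functions of the one-particle symbols. Convergence of these symbols in Hilbert--Schmidt (and, where needed, trace) norm---which is exactly the content underlying Lemma \ref{lem:KSconvc}---then forces termwise convergence, and the combinatorial sum being finite, the full correlator converges. The representation-theoretic alternative is a telescoping induction on $n$: one upgrades Lemma \ref{lem:KSconvsmeared} to convergence in the graph norms of all powers of $L_{\pm,0}$ and establishes the energy bound $\|:\!(\pi_{\infty}\circ\alpha^{N}_{\infty})(L^{(N)}_{\pm}(X))\!:\phi\|\le C_{X}\|(1+L_{\pm,0})^{s}\phi\|$ \emph{uniformly in} $N$; writing $A^{(N)}_{1}\psi_{N}-A_{1}\psi=A^{(N)}_{1}(\psi_{N}-\psi)+(A^{(N)}_{1}-A_{1})\psi$ then closes the induction, the first term tamed by the uniform bound and graph-norm convergence and the second by pointwise strong convergence on the fixed vector $\psi$. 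Either route hinges on estimates uniform in the lattice index $N$, and securing that uniformity---which does not come for free from the mere strong convergence of Lemma \ref{lem:KSconvsmeared}---is the step I expect to be the main obstacle.

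The exponential correlator is cleaner. The smeared generators $L_{\pm}(X_{p})$ and their approximants are essentially self-adjoint on $\cD_{\textup{fin}}$ (the former integrate to the Virasoro net), so strong convergence on this common core yields strong resolvent convergence, whence $e^{i:\!(\pi_{\infty}\circ\alpha^{N}_{\infty})(L^{(N)}_{\pm}(X_{N,p}))\!:}\to e^{iL_{\pm}(X_{p})}$ strongly by the semigroup convergence theorem \cite[Theorem 2.16, p.~504]{KatoPerturbationTheoryFor}. As these are unitaries, products of strongly convergent unitaries converge strongly with no energy bound at all, via $\|U^{(N)}V^{(N)}\psi-UV\psi\|\le\|(V^{(N)}-V)\psi\|+\|(U^{(N)}-U)V\psi\|$; pairing with $\Omega_{\infty}$ gives the second displayed limit.

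The chain of equalities is essentially definitional once convergence is secured. The first equality unwinds the GNS triple: $(\Omega_{\infty},\pi_{\infty}(\alpha^{N}_{\infty}(A))\Omega_{\infty})=\omega^{(N)}_{\infty}(A)$ with $\omega^{(N)}_{\infty}=\omega_{\infty}\circ\alpha^{N}_{\infty}$, so that normal ordering is precisely centering each factor by its $\omega^{(N)}_{\infty}$-expectation. The second equality is continuity under $\omega^{(N)}_{\infty}=\lim_{M}\omega^{(N)}_{M}$ from Lemma \ref{lem:stateconv}, applied to the bounded unitaries in the exponential case and, in the polynomial case, justified by the quasi-free moment bounds dominating products of quadratics. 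The third equality inserts $\omega^{(N)}_{M}=\omega^{(N+M)}_{0}\circ\alpha^{N}_{N+M}$ and invokes the asymptotic multiplicativity $\lim_{M}\|\alpha^{N}_{N+M}(ab)-\alpha^{N}_{N+M}(a)\alpha^{N}_{N+M}(b)\|_{C^{*}}=0$ that characterizes the generalized inductive limit \cite{BlackadarGeneralizedInductiveLimits1}: although each $\alpha^{N}_{N+M}$ is merely ucp, the limit $\alpha^{N}_{\infty}$ is a $*$-homomorphism, so the product distributes as $M\to\infty$, and the two families of centering constants agree in that limit since $\omega^{(N+M)}_{0}(\alpha^{N}_{N+M}(\cdot))=\omega^{(N)}_{M}(\cdot)\to\omega^{(N)}_{\infty}(\cdot)$.
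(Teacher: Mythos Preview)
Your proposal is largely correct and, in its telescoping/energy-bound variant, matches the paper's own proof closely. The paper (Theorem \ref{thm:corapproxvir}) uses exactly the telescoping decomposition
\[
\prod_{p} A_{p}^{(N)} - \prod_{p} A_{p} = \sum_{q}\Big(\prod_{p>q}A_{p}^{(N)}\Big)\big(A_{q}^{(N)}-A_{q}\big)\Big(\prod_{p<q}A_{p}\Big)
\]
applied to $\Omega_{0}$, together with (i) the uniform relative bound of the approximants by the continuum generators (your ``main obstacle,'' which the paper dispatches in one clause by noting the lattice symbols are pointwise dominated by the continuum ones), and (ii) the observation that $\prod_{p<q}L_{\pm}(X_{p})\Omega_{0}$, while not itself in $\fF^{\alg}_{\ua}(\cD_{\std})$, is a norm-convergent sum of such vectors, so Lemma \ref{lem:KSconvsmeared} still applies. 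The unitary case is handled exactly as you say, via strong resolvent convergence (Corollary \ref{cor:KSconvsmeared}) and the triangle inequality for products of unitaries. The chain of equalities reduces to the normal-ordering identity $:\!\pi_{S}(A)\!: = \pi_{S}(A) - \omega_{S}(A)$ plus Lemma \ref{lem:stateconv}.

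Two remarks. First, your Wick-theorem route for the polynomial correlator is a genuine alternative not used in the paper; it is viable but you would need more than what Lemma \ref{lem:KSconvc} gives---that lemma provides strong convergence on $\cD_{\std}$ and Hilbert--Schmidt convergence of the off-diagonal blocks, whereas Wick contractions of $n$ quadratic operators produce traces of products of one-particle symbols, so you would need to check that the relevant trace-class estimates follow. Second, your invocation of ``asymptotic multiplicativity'' \`a la \cite{BlackadarGeneralizedInductiveLimits1} for the third equality is unnecessary: in this paper the refining maps $\alpha^{N}_{N+M}$ are not merely ucp but genuine injective $*$-morphisms (Bogoliubov transformations induced by one-particle isometries, see \eqref{eq:rgF}--\eqref{eq:rgH} and Definition \ref{def:waveletrg}), so $\alpha^{N}_{N+M}(AB)=\alpha^{N}_{N+M}(A)\alpha^{N}_{N+M}(B)$ exactly at every finite $M$, and the third equality is just the definition $\omega^{(N)}_{M}=\omega^{(N+M)}_{0}\circ\alpha^{N}_{N+M}$.
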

The proof of this theorem is a consequence of the convergence of the scaling limit procedure for the ground states stated in Lemma \ref{lem:stateconv} in combination with the extension of Theorem \ref{thm:A} to smeared Koo-Saleur approximants and Virasoro generators given in Theorem \ref{thm:KSconvsmeared}. It equally well applies to WZW currents instead because of Theorem \ref{thm:U1cursmearedconv} and its generalization. 

\bigskip

In view of the seminal work of Koo and Saleur \cite{KooRepresentationsOfThe}, it should be noted that the convergence of the KS approximants for free-fermion CFTs was anticipated therein. Specifically, exact agreement with the Virasoro generators in expectation values with respect to lattice ground states in a formal scaling limit construction was found for central charges $c=0,\tfrac{1}{2},-2$. Thus, our results yield a rigorous justification of the computations in \cite{KooRepresentationsOfThe} for the cases $c=0,\tfrac{1}{2}$. Moreover, our construction of scaling limits $\omega_{\infty}$ using OAR provides a detailed mathematical framework for the scaling limit considered by Koo and Saleur, allowing for the approximation of Virasoro generators in an operator sense (strong operator topology) instead of expectation values (weak operator topology). This is achieved by explicitly constructing a Hilbert space of states via the GNS representation of $\omega_{\infty}$ corresponding to scaling limits of the low-lying excited states relative to the lattice ground state $\omega^{(N)}_{0}$. Interestingly, the exact formal as well as rigorous results for the approximation of the Virasoro generators by the Koo-Saleur formula essentially rely on the definition of the correct scaling limit representation by subtracting vacuum/ground-state expectation values, which corresponds to normal ordering as the Virasoro generators and KS approximants are quadratic expression in free fermions. Therefore, it appears reasonable to assume that the failure of the exact computations in \cite{KooRepresentationsOfThe} for general central charge $c\leq1$, is due, at least in part, to the insufficiency of this procedure in the general case. To allow for a more direct comparison between \cite{KooRepresentationsOfThe}, we translate our formulation in terms of fermions into the language of Temperley-Lieb algebras \cite{TemperleyRelationsBetweenThe, GrimmTheSpin12} in Section \ref{sec:tl}.

\bigskip

We point out that in addition to the treatment of currents of WZW models, our results also cover the scaling limit of the transverse-field Ising model, which is intimately related to the two-dimensional classical Ising model \cite{LiebTwoSolubleModels, SchultzTwoDimensionalIsing}, in the following sense: (1) the lattice ground states converge to their appropriate limit on the subalgebra of even observables (by the Jordan-Wigner isomorphism \cite{EvansQuantumSymmetriesOn}), (2) the Koo-Saleur approximants converge to the correct Virasoro generators in the representation associated with the scaling limit (as also observed in \cite{ZiniConformalFieldTheories}). The convergence with respect to the full observable algebra including odd observables \cite{LashkevichSectorsOfMutually, BostelmannCharacterizationOfLocal} and their conformal covariance requires additional work going beyond the scope of the present paper and will be treated in a separate publication adapting results from \cite{SatoHolonomicQuantumFieldsIV}.

\bigskip

Although, in this paper, we restricted our treatment to scaling limits of fermion systems and their quasi-free representations, or systems based thereon such as WZW models, not least to show the validity of our approach to recover conformal symmetry in the scaling limit via OAR in the clearest possible manner, the general method is not restricted to this setting. But, to handle, for example, arbitrary rational CFTs, it will be necessary to address lattice models involving anyonic chains \cite{FeiguinInteractingAnyonsIn, ZiniConformalFieldTheories}. To employ OAR to such models, we need an appropriate family of refining maps. Such a potential family of such maps is induced by the Jones-Wenzl projection \cite{WenzlOnSequencesOf, KauffmanTemperleyLiebRecoupling}. However, this it is beyond the scope of this paper and will be presented elsewhere.

Furthermore, our current treatment of WZW models only allows for a central charge in the range $r\leq c\leq D$, where $D$ is the number of copies of fermions and $r$ is the rank of Lie algebra of the model. To allow for a central charge $c<1$, a natural next step is to check the compatibility with the coset construction \cite{GoddardSymmetricSpacesSugawara, GoddardUnitaryRepresentationsOf}

In addition, it would be interesting to understand whether the analysis presented here extends to the setting of so-called ``symplectic fermions'' \cite{SaleurPolymersAndPercolation, KauschSymplecticFermions, GaberdielALocalLogarithmic} with $c=-2$ corresponding to one of the case where exact formal results were obtained by Koo and Saleur \cite{KooRepresentationsOfThe}. Such an extension appears to be feasible because the algebraic structure of the continuum chiral algebra generated by the symplectic two-component fermion can be discretized by multi-scale decomposition using the adjoint conditional expectations $\alpha^{M}_{N}$, $M>N$, associated with the refining quantum channels (cp.~Section \ref{sec:comp}). But, it needs to be clarified whether such a discretization of the continuum model can be recovered via the renormalization group flow in OAR from a suitable Hamiltonian lattice model, e.g.~the XX spin chain formulation of dense polymers \cite{SaleurPolymersAndPercolation}. This said, we leave a detailed analysis of this case for future work.

\subsection{Comparison with other approaches}
\label{sec:comp}

The strategy outlined above is by no means the only one may adopt to realize QFTs rigorously. Indeed, tremendous effort and major successes have been obtained by focussing on other approaches where the Hilbert space of the quantum field theory is realized directly in the continuum or by probabilistic Euclidean methods (see \cite{SummersAPerspectiveOn} for an overview), and most recently in the context of stochastic quantization (see, e.g., \cite{GubinelliAPDEConstruction, HairerATheoryOf}). Our approach can be viewed as closely related to the quantum mechanical constructions of Glimm-Jaffe and others \cite{GlimmJaffe1, GlimmJaffe2, GlimmQuantumFieldTheory} in contrast with Euclidean approaches. Moreover, considerable dividends are paid by the discretization approach because, if realized correctly, one obtains a sequence of quantum lattice systems which are directly amenable to \emph{quantum simulation} on quantum computers (we comment on this aspect in Remark \ref{rem:stateapprox}, see also \cite{OsborneCFTsim}). This was emphasized by Zini and Wang, who made the first concerted effort to realize CFTs via limits of Hamiltonian lattice models. Below we summarize their approach to scaling limits in the context of OAR.

\bigskip

Zini and Wang \cite{ZiniConformalFieldTheories} define a \emph{low-energy scaling limit} with the following data:
\begin{itemize}
	\item A nested sequence of (energy-bounded) subspaces $\cH^{E}_{N}\subset\cH^{E'}_{N}$, associated with energy scales $E'\geq E\geq0$,
	\item \emph{connecting unitiaries} $\phi^{E}_{N}:\cH^{E}_{N}\rightarrow\cH^{E}_{N+1}$ for sufficiently large $N$, i.e.~a stabilizing dimension $\dim\cH^{E}_{N} = d_{E}$ for all $N>>0$,
	\item and the extension property: $\phi^{E}_{N} = \phi^{E'}_{|\cH^{E}_{N}}$ for $E'\geq E$.
\end{itemize}
The scaling limit results from an inductive limit $\cH^{E}_{\infty}=\varinjlim_{N}\cH^{E}_{N}$ along the connecting unitaries, and, by the extension property, gives a coherent system of energy-bounded Hilbert spaces $\cH^{E}_{\infty}\subset\cH^{E'}_{\infty}\subset\cH_{\infty}$ for $E\leq E'$. The convergence of Hamiltonians is understood in terms of convergent operator sequences $\lim_{N\rightarrow\infty}\lim_{N'\rightarrow\infty}\|H^{(N')}_{0|E}-\phi^{E}_{N\rightarrow N'}H^{(N)}_{0|E}(\phi^{E}_{N\rightarrow N'})^{*}\|_{*}$, for some suitable semi-norm $\|\!\ .\!\ \|_{*}$, which should determine $H_{|E}$ (e.g.~by the second Trotter-Kato approximation theorem \cite{ReedMethodsOfModern1}). Here, $_{|E}$ indicates the restriction of the Hamiltonians to energies below $E$, and $\phi^{E}_{N\rightarrow N'}$ results from the iteration of the connecting unitaries between scales $N\leq N'$. Again, because of coherence in $E$, this defines $H$ acting on $\cH_{\infty}$.

Zini and Wang also define the notion of \emph{strong scaling limit}, which describes the case, when $\phi^{E}_{N\rightarrow N'}$ results from the restriction of an isometry $\phi_{N}:\cH_{N}\rightarrow\cH_{N+1}$ to the energy-bounded subspace.

For $Q_{N}(\!\ .\!\ ,\!\ .\!\ ) = \langle\!\ .\!\ ,O_{N}\!\ .\!\ \rangle_{\cH_{N}}$, the sesquilinear form associated with $O_{N}\in\fA_{N}$, convergence is defined as convergence,
\begin{align*}
\lim_{N\rightarrow\infty}\lim_{N'\rightarrow}Q_{N'}(\phi^{E}_{N\rightarrow N'}(\!\ .\!\ ),\phi^{E}_{N\rightarrow N'}(\!\ .\!\ )) & = Q^{E},
\end{align*}
on $\cH^{E}_{\infty}$ (similar to the convergence of the Hamiltonians).

Zini and Wang observe \cite[p. 898]{ZiniConformalFieldTheories} that compatible embeddings $\tau_{N}:\fA_{N}\rightarrow\fA_{N+1}$,
\begin{align*}
\phi_{N}\circ O_{N} & = \tau_{N}(O_{N}) \circ \phi_{N},
\end{align*}
 in the special case of a strong scaling limit, yield an (operator-)algebraic scaling limit of the observable algebras $\fA_{\infty} = \varinjlim_{N}\fA_{N}$ along these embeddings. They also conjecture that higher-level anyonic chains have this property. 
 
 We now explain how these scaling limits are recovered via OAR: Assuming that we have constructed the inductive limit algebra $\fA_{\infty}$, together with a scaling limit state $\omega^{(N)}_{\infty}$, we can perform the GNS construction at each scale $N$ resulting in a sequence of triples $\{\cH_{\infty}^{(N)}, \pi_{\infty}^{(N)}, \Omega_{\infty}^{(N)}\}$ together with isometries $V^{N}_{N+1}:\cH_{\infty}^{(N)}\rightarrow\cH_{\infty}^{(N+1)}$ induced by $\alpha^{N}_{N+1}$ such that $V^{N}_{N+1}\Omega_{\infty}^{(N)} = \Omega_{\infty}^{(N+1)}$ and:
\begin{align*}
V^{N}_{N+1}\circ \pi_{\infty}^{(N)}(O_{N}) & = \pi_{\infty}^{(N+1)}(\alpha^{N}_{N+1}(O_{N}))\circ V^{N}_{N+1}.
\end{align*}
Thus, we find that the compatibility condition of Zini and Wang is precisely recovered by the inductive-limit structure of the scaling-limit Hilbert space $\cH_{\infty}$ in the operator-algebraic formulation. Moreover, if the GNS representations $\{\cH_{M}^{(N)}, \pi_{M}^{(N)}, \Omega_{M}^{(N)}\}$ of the finitely renormalized states $\omega^{(N)}_{M}$ are unitarily equivalent, e.g.~assuming a Stone-von Neumann-type result at finite scales, that is, if there are (connecting) unitaries,
\begin{align*}
\phi^{N}_{M} : \cH_{M}^{(N)} & \rightarrow \cH_{M+1}^{(N)},
\end{align*}
we can recover the structure of a low-energy scaling limit (the extension property follows again from the GNS construction because $\omega^{(N)}_{M+1} = \omega^{(N+1)}_{M}\circ\alpha^{N}_{N+1}$), such that $N$ plays the role of the energy scale $E$. The overall structure is illustrated in Figure \ref{fig:Wtree}.

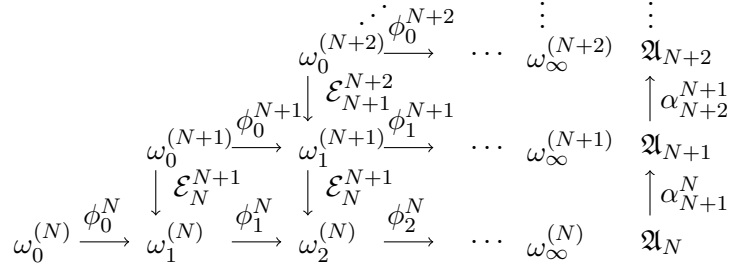
\begin{figure}[ht]
\begin{center}
\scalebox{1}{
\begin{tikzpicture}
	
	\draw (0.25,0.25) node[right]{$\omega^{(N)}_{0}$} (2,0.25) node[right]{$\omega^{(N)}_{1}$} (4,0.25) node[right]{$\omega^{(N)}_{2}$}
	(6.25,0.25) node[right]{$\dots$} (7,0.25) node[right]{$\omega^{(N)}_\infty$} (8.5,0.25) node[right]{$\mathfrak{A}_N$};
	\draw[->] (1.25,0.25) to (1.9,0.25);
	\draw (1.55, 0.25) node[above]{$\phi^{N}_{0}$};
	\draw[->] (3.25,0.25) to (3.9,0.25);
	\draw (3.55, 0.25) node[above]{$\phi^{N}_{1}$};
	\draw[->] (5.25,0.25) to (5.9,0.25);
	\draw (5.55, 0.25) node[above]{$\phi^{N}_{2}$};
	
	\draw (2,1.5) node[right]{$\omega^{(N+1)}_{0}$} (4,1.5) node[right]{$\omega^{(N+1)}_{1}$} (6.25,1.5)
	node[right]{$\dots$} (7,1.5) node[right]{$\omega^{(N+1)}_\infty$}  (8.5,1.5) node[right]{$\mathfrak{A}_{N+1}$};
	\draw[->] (3.25,1.5) to (3.9,1.5);
	\draw (3.75, 1.5) node[above]{$\phi^{N+1}_{0}$};
	\draw[->] (5.25,1.5) to (5.9,1.5);
	\draw (5.75, 1.5) node[above]{$\phi^{N+1}_{1}$};
	
	\draw (2.35,0.975) node[right]{$\mathcal{E}^{N+1}_{N}$} (4.35,0.975) node[right]{$\mathcal{E}^{N+1}_{N}$};
	\draw[->] (2.25,1.2) to (2.25,0.6);
	\draw[->] (4.25,1.2) to (4.25,0.6);
	\draw[<-] (8.75,1.2) to (8.75,0.6); 
	\draw (8.75,0.875) node[right]{$\alpha^{N}_{N+1}$};
	
	\draw (4,2.75) node[right]{$\omega^{(N+2)}_{0}$} (6.25,2.75)
	node[right]{$\dots$} (7,2.75) node[right]{$\omega^{(N+2)}_\infty$} (8.5,2.75) node[right]{$\mathfrak{A}_{N+2}$};
	\draw[->] (5.25,2.75) to (5.9,2.75);
	\draw (5.75, 2.75) node[above]{$\phi^{N+2}_{0}$};
	
	\draw (4.35,2.225) node[right]{$\mathcal{E}^{N+2}_{N+1}$};
	\draw[->] (4.25,2.45) to (4.25,1.85);
	\draw[<-] (8.75,2.45) to (8.75,1.85);
	\draw (8.75,2.125) node[right]{$\alpha^{N+1}_{N+2}$};
	
	\node[right] at (4.75,3.375) {$\iddots$};
	\node[right] at (7.15,3.375) {$\vdots$};
	\node[right] at (8.575,3.375) {$\vdots$};
\end{tikzpicture}
}
\caption{\small Enriched version of Wilson's triangle of renormalization \cite{WilsonTheRenormalizationGroupKondo}: Coarse-graining quantum channels ($\cE$'s) between quantum states, refining quantum channels between observables/fields ($\alpha$'s), and low-energy unitaries ($\phi$'s) between renormalized representations (if existent).}
\label{fig:Wtree}
\end{center}
\end{figure}
(Such structures are also achieved in the operator-algebraic renormalization of a scalar lattice field \cite{MorinelliScalingLimitsOf}.)

It is worth pointing out that such a structure is not immediately available for the operator-algebraic renormalization of (free) massless chiral lattice fermions in 1+1-dimensions considered here, where the finitely renormalized states $\omega^{(N)}_{M}$ are not pure (due to the entanglement of the chiral halves) in contrast with their scaling limit (given by the Hardy projection).

An important difference of our approach with the construction by Zini and Wang is that our renormalized Hilbert spaces $\cH_{M}^{(N)}$ and the connecting unitaries $\phi^{N}_{M}$ are not necessarily associated with energy bounds, and indeed can be far more general, e.g., in the wavelet approach formulated in \cite{StottmeisterOperatorAlgebraicRenormalization} and in Section \ref{sec:waveletrg} the parameter $N$ is associated with a spatial resolution, while energy bounds are similar to the momentum-cutoff approach, see \cite{MorinelliScalingLimitsOf} and Section \ref{sec:momcut}. This flexibility enables us to overcome a major challenge facing the construction of Zini and Wang's strong scaling limit, namely that the addition of irrelevant terms to the lattice discretizations, resulting in lattice-scale artifacts, makes it difficult if not impossible to find connecting unitaries. 

In summary, the core differences between the scaling limit construction of Zini and Wang, and the OAR, may be summarized as follows: 
\begin{itemize}
	\item In the framework of Zini and Wang, the Hilbert space $\cH_{\infty}$ of the scaling limit, in other words the sector of the observables, is fixed from the outset by the specification of the connecting unitaries ($\phi$'s) and the energy-bounded subspaces, or by the connecting isometries in the strong scaling limit. Then, the renormalization of the Hamiltonian and other observables amounts to determining all compatible operators (or sesquilinear forms), i.e.~$\fQ_{\infty}$, on $\cH_{\infty}$. Because the sector is fixed by the provision of the connecting unitaries, quantum states are in essence not renormalized.
	\item In the framework of operator-algebraic renormalization, we fix a minimal set of observables over all scales, i.e.~$\fA_{\infty}$, the consistency of which is described by the refining quantum channels ($\alpha$'s). In turn, the renormalization of states is used to find (all) sensible, compatible states on $\fA_{\infty}$ by starting from initial choices on each scale.
\end{itemize}
In this respect, we observe that the two notions of convergence of Hamiltonians (and observables) relative to the scaling limit $\cH_{\infty}$ can be compared using the GNS isometries, because
\begin{align*}
V^{N}_{N+1}\circ \pi_{\infty}^{(N)}(O_{N})(V^{N}_{N+1})^{*} & \!=\!\pi_{\infty}^{(N+1)}(\alpha^{N}_{N+1}(O_{N}))p_{\im V^{N}_{N+1}}
\end{align*}
where $p_{\im V^{N}_{N+1}}$ is projection onto the image of $V^{N}_{N+1}$.

\bigskip

A second key contribution to the mathematical literature on approximations of CFTs is found in papers building tensor-network approximations via matrix product states \cite{KoenigMatrixProductApproximations1, KoenigMatrixProductApproximations2} in the context of VOAs or wavelets and their scaling functions in a Hamiltonian lattice setup \cite{EvenblyEntanglementRenormalizationAnd, HaegemanRigorousFreeFermion, WitteveenQuantumCircuitApproximations}. Specifically, in the latter group of papers a central role is also played by an inductive system of quantum channels $\alpha^{N}_{M}$, $M>N$, or more precisely their adjoint conditional expectations \cite{EvansCompletelyPositiveQuasi},
\begin{align*}
\alpha^{M}_{N} : \fA_{M} & \longrightarrow \fA_{N}, & M & >N,
\end{align*}
These are used to define coarse-grained fermion fields as well as finite-scale approximants of (trace-class) second-quantized one-particle operators. In that respect, it should be noted that the main theorem of \cite{WitteveenQuantumCircuitApproximations} concerning the approximation of correlation functions of the continuum free fermion field only applies to the insertion of basic fermions in the sense of our Theorem \ref{thm:B}, see \cite[Theorem 4.4, p.~31]{WitteveenQuantumCircuitApproximations}). Thus, more general operators such as Virasoro generators $L_{\pm,k}$ or WZW currents are explicitly excluded. Such restriction do not apply in the OAR approach we adopt here. Another important difference in comparison with the operator-algebraic approach is the need for a continuum model to define the approximants using the conditional expectations $\alpha^{\infty}_{N}$, which is not intrinsically required by our method. It should also be noted that the approximation of correlation functions achieved in \cite{HaegemanRigorousFreeFermion, WitteveenQuantumCircuitApproximations} is deduced from an error bound of the form,
\begin{align*}
|\langle O_{1}\ldots O_{n}\rangle^{(N)}\!-\!\langle O_{1}...O_{n}\rangle_{\text{cont}}|\!\lesssim\!\cO(2^{c N})\!+\!\cO(\vep\log\vep),
\end{align*}
requiring an $\vep$-dependent choice of scaling functions to define the lattice correlation function $\langle O_{1}\dots O_{n}\rangle^{(N)}$, $N = N(\vep)$, and achieve a desired degree of accuracy prohibiting a rigorous proof of convergence in the scaling limit. Although it should be said, that $N$ is expected to decay exponentially with $\vep$, which would entail the aforesaid convergence, according to \cite{WitteveenQuantumCircuitApproximations} and the references therein. Such an adaptation of the chosen scaling functions to the accuracy goal of the approximation is not required by OAR.\\

Finally, let us comment on further results in the literature.\\

In \cite{GransSamuelssonTheActionOf1} (see also \cite{GransSamuelssonTheActionOf2}), there is very interesting work on the KS approximants in the setting of the XXZ spin chain in terms of Temperley-Lieb algebras as in \cite{KooRepresentationsOfThe}. The authors consider a certain ``weak-scaling'' procedure, which we would paraphrase as weak operator convergence of KS approximants and products thereof. Interestingly, the weak convergence is restricted to a certain class of scaling states lattice states, which we believe roughly correspond to the states identified by the GNS representation of the scaling limit $\{\fA_{\infty},\omega_{\infty}\}$ in OAR. Moreover, \cite{GransSamuelssonTheActionOf1} provides a wealth of conjectures, backed by extensive numerics, concerning the limits appearing in the ``weak-scaling'' procedure, and, thus, it would be worthwhile to investigate whether our methods allow for a proof at least further progress, potentially exploiting the connection with the formulation based on Temperley-Lieb algebras for general $c\neq0$ (see Section \ref{sec:tl}). In particular, it would be interesting whether our methods allow to lift the ``weak-scaling'' procedure to a ``strong'' version (in the sense of operator topologies).\\

As mentioned above, there is an ongoing concerted and very successful effort in the Euclidean setting using classical probabilistic methods and, specifically, the concept of \textit{discrete holomorphicity} (see \cite{HonglerConformalFieldTheory, ChelkakCorrelationsOfPrimary, ChelkakConformalInvarianceOf}) and references therein. In \cite{HonglerConformalFieldTheory} it is shown how to explicitly relate discrete holomorphic structures with the continuum Virasoro algebra, thereby establishing a correspondence between correlation functions of lattice local fields and their counterparts in the continuum. As this is analogous to various degrees to what is achieved in the work presented here and the other Hamiltonian frameworks discussed above, it would be interesting to explore potential connections. Concretely, such an analysis could start from the ideas presented \cite{FendleyIntegrabilityAndBraided} and by exploiting the natural connection between Euclidean and Hamiltonian formulation of critical systems via the transfer matrix formalism.

\subsection{Outline of the article}
\label{sec:outline}
In this paper we develop the theory of OAR and apply it to the discrete approximation of conformal symmetries in continuum CFTs. This yields an explicit family of lattice systems and bounds which may be directly exploited to carry out the quantum simulation of CFTs on quantum computers (see our companion paper \cite{OsborneCFTsim}). 

The structure of the paper is as follows: In Section \ref{sec:ffapprox}, we provide a detailed overview of known results on complex and self-dual fermion algebras required for the proofs of our results. Moreover, we introduce the necessary setup for fermions on lattices required for our scaling limit construction. In Section \ref{sec:ffscaling}, we introduce the scaling maps required for OAR: (1) the wavelet renormalization group, (2) the momentum-cutoff renormalization group. In addition, we discuss the compatibility of the wavelet renormalization group with the quasi-local structure of the fermion algebra in the scaling limit, give a basic decay estimate for wavelet approximations, and prove the convergence of the scaling limit of the ground states of the lattice Dirac Hamiltonian. In Section \ref{sec:approxconf}, we discuss Koo-Saleur approximants and their formal scaling limit yielding the Virasoro generators, before we prove their convergence in the scaling limit as alluded to in Theorem \ref{thm:A}. In Section \ref{sec:WZWcur}, we explain the modifications necessary to apply the convergence results to cover WZW currents. Finally, in Section \ref{sec:corr}, we apply the convergence results for the KS approximants to deduce the convergence of (dynamical) correlation functions of various kinds as exemplified by Theorems \ref{thm:B} \& \ref{thm:C}.

\section{Lattice fermions in 1+1-dimensions}
\label{sec:ffapprox}

\subsection{The algebra of canonical anti-commutation relations}
\label{sec:car}
Let us recall some basic structures and results concerning the canonical anti-commutation relations (CAR). For further details we refer to the general references \cite{EvansQuantumSymmetriesOn, BratteliOperatorAlgebrasAnd2}.

The complex CAR algebra $\fA_{\CAR}(\fh)$ of a complex Hilbert space $\fh$, the \textit{one-particle space}, is the universal unital $C^{*}$-algebra generated by anti-linear map,
\begin{align}
\label{eq:annihilation}
a : \fh & \longrightarrow \fA_{\CAR}(\fh),
\end{align}
referred to as \textit{annihilation operators} and subject to,
\begin{align}
\label{eq:car}
\{a(\xi),a^{\dag}(\eta)\} & = \langle\xi,\eta\rangle_{\fh}, & \{a(\xi),a(\eta)\} & = 0 = \{a^{\dag}(\xi),a^{\dag}(\eta)\}, & \xi,\eta & \in\fh,
\end{align}
where $a^{\dag}(\eta) = a(\eta)^{*}$ denotes the adjoint \textit{creation operator}. Each $a(\xi)$ is automatically bounded: $\|a(\xi)\|=\|\xi\|_{\fh}$. We will also refer to the maps $a, a^{\dag}$ as a \textit{complex fermion}. We also recall that there is a $*$-isomorphism $\fA_{\CAR}(\fh)\cong\otimes_{n=1}^{\dim\fh}M_{2}(\C)$ which we come back to in Section \ref{sec:xymodel}.

The irreducible standard \textit{Fock representation} of $\fA_{\CAR}(\fh)$ on the anti-symmetric Fock space $\fF_{\ua}(\fh) = \oplus_{n=0}^{\infty}\wedge^{n}\fh$, with vacuum vector $\Omega_{0}$, is given by:
\begin{align}
\label{eq:fockrep}
a^{\dag}(\xi)\eta_{1}\wedge\dots\wedge\eta_{n} & = \xi\wedge\eta_{1}\wedge\dots\wedge\eta_{n}, & \eta_{1}\wedge\dots\wedge\eta_{n} & \in\wedge^{n}\fh,
\end{align}
where $\eta_{1}\wedge\dots\wedge\eta_{n} = (n!)^{\frac{1}{2}}S_{-}(\eta_{1}\otimes\dots\otimes\eta_{n})$ with the projection, $S_{-}:\fh^{\otimes n}\rightarrow\wedge^{n}\fh$, on the anti-symmetric subspace. Subsequently, we call $P_{\leq n}$ the projection onto the subspace $\oplus_{m=0}^{n}\wedge^{m}\fh$, and $P_{n} = P_{\leq n}-P_{\leq n-1}$. On $\fF_{\ua}(\fh)$, we denote by $(-1)^{F}$ the \textit{parity operator}, i.e.~the unitary operator implementing the grading of $\fA_{\CAR}(\fh)$ defined by: $\alpha_{-1}(a(\xi)) = a(-\xi) = -a(\xi)$, $\xi\in\fh$. We remark that $(-1)^{F}\notin\fA_{\CAR}(\fH)$ unless $\dim\fh<\infty$.

To treat Majorana fermions, we need in addition to $\fA_{\CAR}$ the notion a \textit{self-dual CAR algebra} $\fA_{\SDC}(\fh,C)$ for some (charge) conjugation $C:\fh\rightarrow\fh$. The $C^{*}$-algebra $\fA_{\SDC}(\fH,C)$ is generated by an anti-linear map,
\begin{align}
\label{eq:majorana}
\Psi : \fh & \longrightarrow \fA_{\SDC}(\fh,C),
\end{align}
which we refer to as the \textit{Majorana fermion or operator}, subject to,
\begin{align}
\label{eq:sdc}
\{\Psi(\xi),\Psi(\eta)\} & = \langle\xi,C\eta\rangle_{\fh}, & \Psi(\xi)^{*} & = \Psi(C\xi), & \xi,\eta & \in\fh.
\end{align}
We note that complex and Majorana fermions can be related by a projection, $P:\fh\rightarrow\fh$, such that $CP=(1-P)C$ (called a \textit{basis projection}):
\begin{align}
\label{eq:carsdc}
\fA_{\SDC}(\fh,C) & \cong \fA_{\CAR}(P\fh), & a(\xi) & = \Psi(\xi), & a^{\dag}(\xi) & = \Psi(C\xi), & \xi & \in P\fh.  
\end{align}
For the description of ground states of quadratic Hamiltonians, we need the notion of a quasi-free states $\omega$ on $\fA_{\CAR}(\fh)$ respectively $\fA_{\SDC}(\fh,C)$ (gauge invariant in the first case). In both cases, the state $\omega$ is completely determined by its two-point function, i.e.~$\omega(a(\xi)a^{\dag}(\eta))$ respectively $\omega(\Psi(\xi)\Psi(\eta)^{*})$. 

In the first case, the two-point function defines a positive operator, $0\leq S\leq1$, such that:
\begin{align}
\label{eq:carqfs}
\omega_{S}(a(\xi)a^{\dag}(\eta)) & = \langle\xi,(1-S)\eta\rangle_{\fh}, & \xi,\eta & \in\fh, \\ \nonumber
\omega_{S}(a(\xi_{1})\dots a(\xi_{n})a^{\dag}(\eta_{n+1})\dots a^{\dag}(\eta_{n+m})) & = \delta_{n,m}\det((\langle\xi_{i},(1-S)\eta_{j}\rangle_{\fh})_{i,j=1}^{n}).
\end{align}
Clearly, the vacuum vector $\Omega_{0}$ corresponds to $S=0$. Moreover, $\omega_{S}$ is a pure state if and only if $S$ is a projection.

In the second case, the two-point function also defines a positive operator, $0\leq S\leq 1$, with the additional property $CS = (1-S)C$, such that:
\begin{align}
\label{eq:sdcqfs}
\omega_{S}(\Psi(\xi)\Psi(\eta)^{*}) & = \langle\xi,(1-S)\eta\rangle_{\fh}, & \xi,\eta & \in\fh, \\ \nonumber
\omega_{S}(\Psi(\xi_{1})\dots \Psi(\xi_{n})) & = \pf((\langle\xi_{i},(1-S)C\xi_{j}\rangle_{\fh})_{i,j=1}^{n}),
\end{align}
where $\pf$ denotes the Pfaffian, which we define to vanish for $n$ odd. As before, the state $\omega_{S}$ is pure if and only if $S$ is a (basis) projection.

In both cases, the Gelfand-Naimark-Segal (GNS) representation of pure states, $S^{2}=S$, can be realized on anti-symmetric Fock space by:
\begin{align}
\label{eq:repqfpcar}
\pi_{S}(a(\xi)) & = a((1-S)\xi) + a^{\dag}(JS\xi), & \xi & \in\fh,
\end{align}
for some conjugation $J:\fh\rightarrow\fh$ with $JS = SJ$, respectively,
\begin{align}
\label{eq:repqfpsdc}
\pi_{S}(\Psi(\xi)) & = a((1-S)\xi) + a^{\dag}(CS\xi), & \xi & \in\fh.
\end{align}
Clearly, the notation is consistent with $\pi_{0}$ being the Fock representation \eqref{eq:fockrep}. In the non-pure situation, an analogous realization on the doubled Fock space, $\fF_{a}(\fh^{\oplus 2})=\fF_{a}(\fh)^{\otimes 2}$, via the $\Z_{2}$-twisted tensor product, $\fA_{\CAR}(\fh)^{\otimes_{\Z_{2}}2} = \fA_{\CAR}(\fh^{\oplus 2})$ respectively $\fA_{\SDC}(\fh,C)^{\otimes_{\Z_{2}}2} = \fA_{\SDC}(\fh^{\oplus 2},C\oplus(-C))$, is possible because,
\begin{align}
\label{eq:doubleproj}
P_{S} & = \begin{pmatrix} S & S^{\frac{1}{2}}(1-S)^{\frac{1}{2}} \\ S^{\frac{1}{2}}(1-S)^{\frac{1}{2}} & 1-S \end{pmatrix},
\end{align}
is a (basis) projection \cite{ArakiOnQuasiFree, LundbergQuasiFreeSecond}.

\subsubsection{Implementation of Bogoliubov transformations}
\label{sec:bogoliubov}
Central to our analysis of the recovery of conformal symmetry via lattice approximations is the question of implementability of \textit{Bogoliubov or quasi-free transformations} of the complex and Majorana fermion algebras, $\fA_{\CAR}(\fh)$ respectively $\fA_{\SDC}(\fh, C)$, in quasi-free representations \eqref{eq:repqfpcar} \& \eqref{eq:repqfpsdc}. To this end, we briefly recall some well-known formulas and results and refer to \cite{ArakiOnQuasiFree, LundbergQuasiFreeSecond, RuijsenaarsOnBogolyubovTransformations2, FredenhagenImplementationOfAutomorphisms, CareyOnFermionGauge, EvansQuantumSymmetriesOn} for further details.

In its basic form a Bogoliubov transformation $\alpha_{T}$ is densely defined morphism of the CAR associated with an invertible, bounded operator, $T\in B(\fh)$, on the one-particle space:
\begin{align}
\label{eq:bogmor}
\alpha_{T}(a^{\dag}(\xi)) & = a^{\dag}(T\xi), & \alpha_{T}(a(\xi)) & = a(T^{-1\!\ *}\xi).
\end{align}
Similarly a bounded operator $G\in B(\fh)$ yields a densely defined derivation $\delta_{G}$ by:
\begin{align}
\label{eq:bogder}
\delta_{G}(a^{\dag}(\xi)) & = a^{\dag}(G\xi), & \delta_{G}(a(\xi)) & = -a(G^{*}\xi).
\end{align}
In the Fock representation \eqref{eq:fockrep}, Bogoliubov transformations and their derivations can be implemented as (unbounded) operators on the dense subspace $\cD(\fh) =\!\ _{\alg}\!\oplus_{n=0}^{\infty}\wedge^{n}\fh$ of vectors with finite particle number by multiplicative, $F_{0}$, and additive, $dF_{0}$, second quantization:
\begin{align}
\label{eq:bogmultcar}
\alpha_{T}(a^{\dag}(\xi)) & = \Ad_{F_{0}(G)}(a^{\dag}(\xi)), & \delta_{G}(a^{\dag}(\xi)) & = \ad_{dF_{0}(G)}(a^{\dag}(\xi)),
\end{align}
where
\begin{align}
\label{eq:secondquant}
F_{0}(T)\eta_{1}\wedge\dots\wedge\eta_{n} & = T\eta_{1}\wedge\dots\wedge T\eta_{n}, & F_{0}(T)\Omega_{0} & = \Omega_{0}, \\ \nonumber
dF_{0}(G)\eta_{1}\wedge\dots\wedge\eta_{n} & = \sum_{m=1}^{n}\eta_{1}\wedge\dots\wedge G\eta_{m}\wedge\dots\wedge\eta_{n}, & dF_{0}(G)\Omega_{0} & = 0.
\end{align}
$F_{0}$ and $dF_{0}$ are related by $F_{0} = \exp dF_{0}$ via the identification $T = \exp(G)$, and we have the obvious bound:
\begin{align}
\label{eq:secondquantbound}
\|dF_{0}(G)P_{\leq n}\| & \leq n \|G\|.
\end{align}
Moreover, we can expand $F_{0}$ and $dF_{0}$ in terms of annihilation and creation operators because:
\begin{align}
\label{eq:secondquantaraki}
dF_{0}(G) & = \sum_{i\in\cI}a^{\dag}(G\xi_{i})a(\xi_{i}) = a^{\dag}Aa,
\end{align}
for some orthonormal basis $\{\xi_{i}\}_{i\in\cI}$ of $\fh$, which satisfies:
\begin{align}
\label{eq:secondquantcomm}
[dF_{0}(G_{1}),dF_{0}(G_{2})] & = dF_{0}([G_{1},G_{2}]).
\end{align}
We note that the parity operator is given by $F_{0}(-1) = (-1)^{F}$. 

In the self-dual setting, $\fA_{\SDC}(\fh,C)$, the same ideas apply with further constraints on the admissible one-particle operators. Specifically, with respect to the conjugation $C$, we require:
\begin{align}
\label{eq:sdcconstraints}
CT^{-1\!\ *}C & = T, & CG^{*}C & = -G.
\end{align}
If we write $G = iH$ instead, we will also have $CH^{*}C=-H$. We denote the analogues of $F_{0}$ and $dF_{0}$ by $Q_{0}$ respectively $dQ_{0}$ in this case:
\begin{align}
\label{eq:secondquantarakisdc}
dQ_{0}(G) & = \sum_{i\in\cI}\tfrac{1}{2}\Psi(G\xi_{i})^{*}\Psi(\xi_{i}) = \tfrac{1}{2}\Psi^{*}A\Psi.
\end{align}

Starting from the basic case, it is possible to extend the notion of Bogoliubov transformations and their derivations as well as multiplicative and additive second quantization to larger classes of operators on $\fh$, based on the bound:
\begin{align}
\label{eq:secondquantboundvect}
\|dF_{0}(G)a^{\dag}(\eta_{1})\dots a^{\dag}(\eta_{n})\Omega_{0}\| & \leq \sum_{m=1}^{n}\|\eta_{1}\|\dots\|G\eta_{m}\|\dots\|\eta_{n}\|.
\end{align}
Important examples, that we will make use of, comprise contractions $\|T\|\leq1$ as well as unbounded (essentially) self-adjoint operators $(H,\cD(H))$, such that $G = iH$, or even closed unbounded operators in the case of the generators of the Virasoro algebra.

As pointed out above, we are particularly interested in the implementability of Bogoliubov transformations in the context of quasi-free representations $\pi_{S}$. In this context, a key formula is the additive \textit{normal-ordered second quantization} relative to a pure-state representation $\pi_{S}$, $S^{2}=S$, in analogy with \eqref{eq:secondquantaraki}:
\begin{align}
\label{eq:secondquantarakino}
dF_{S}(G) & = \sum_{i\in\cI}:\pi_{S}(a^{\dag}(G\xi_{i})a(\xi_{i})): \\ \nonumber
 & = \sum_{i\in\cI}\Big(a^{\dag}(S_{+}G\xi_{i})a(S_{+}\xi_{i})) - a^{\dag}(JS_{-}\xi_{i})a(JS_{-}G\xi_{i})) \\[-0.35cm] \nonumber
 &\hspace{1.5cm} + a^{\dag}(S_{+}G\xi_{i})a^{\dag}(JS_{-}\xi_{i}))+a(JS_{-}G\xi_{i}))a(S_{+}\xi_{i}))\Big) \\ \nonumber
 & = a^{\dag}G_{++}a - a^{\dag}G^{t}_{--}a + a^{\dag}G_{+-}a^{\dag}+aG_{-+}a,
\end{align}
with $G^{t} = JG^{*}J$, and similarly,
\begin{align}
\label{eq:secondquantarakinosdc}
dQ_{S}(G) & = \sum_{i\in\cI}:\pi_{S}(\tfrac{1}{2}\Psi(G\xi_{i})^{*}\Psi(\xi_{i})): \\ \nonumber
 & = \tfrac{1}{2}\big(a^{\dag}G_{++}a - a^{\dag}G^{T}_{--}a + a^{\dag}G_{+-}a^{\dag}+aG_{-+}a\big) \\ \nonumber
 & = a^{\dag}G_{++}a + \tfrac{1}{2}\big(a^{\dag}G_{+-}a^{\dag}+aG_{-+}a\big),
\end{align}
with $G^{T} = CG^{*}C$. Here, $S_{-} = S$, $S_{+}=1-S$, and we use block-matrix notation,
\begin{align}
\label{eq:blockmatrix}
G & = S_{+}GS_{+} + S_{+}GS_{-} + S_{-}GS_{+} + S_{-}GS_{-} = \begin{pmatrix} G_{++} & G_{+-} \\ G_{-+} & G_{--} \end{pmatrix},
\end{align}
as well as the standard symbol $:\!\ \cdot\!\ :$ for normal ordering with respect to the Fock vacuum $\Omega_{0}$\footnote{i.e.~creation operators to left and annihilation operators to the right with the appropriate signs due to the CAR.}. These formulas show that $dF_{S}(G)$ and $dQ_{S}(G) $ can be defined as operators on Fock space if the off-diagonal parts $G_{+-}$ and $G_{-+}$ are in the Hilbert-Schmidt class because:
\begin{align}
\label{eq:hilbertschmidt}
\|a^{\dag}G_{+-}a^{\dag}P_{\leq n}\| & \leq (n+2)\|G_{+-}\|_{2}, & \|aG_{+-}aP_{\leq n}\| & \leq n \|G_{+-}\|_{2}.
\end{align}
Notably, the formulas \eqref{eq:secondquantarakino} \& \eqref{eq:secondquantarakinosdc} can be written as:
\begin{align}
\label{eq;vacexpren}
dF_{S}(G) & = \pi_{S}(dF_{0}(G)) - \Tr_{\fh}(G_{--}) = \pi_{S}(dF_{0}(G)) - \omega_{S}(dF_{0}(G)), \\ \nonumber
dQ_{S}(G) & = \pi_{S}(dQ_{0}(G)) - \tfrac{1}{2}\Tr_{\fh}(G_{--}) = \pi_{S}(dQ_{0}(G)) - \omega_{S}(dQ_{0}(G)),
\end{align}
when $G_{--}$ is in the trace class. It is a consequence of these formulas that the commutator identity \eqref{eq:secondquantcomm} acquires an additional central term, called the \textit{Schwinger cocycle} \cite{LundbergQuasiFreeSecond}:
\begin{align}
\label{eq:schwinger}
c_{S}(G_{1},G_{2}) & = \Tr_{\fh}([G_{1},G_{2}]_{--}) = \Tr_{\fh}((G_{1})_{-+}(G_{2})_{+-}-(G_{2})_{-+}(G_{1})_{+-}),
\end{align}
such that
\begin{align}
\label{secondquantcommno}
[dF_{S}(G_{1}),dF_{S}(G_{2})] & = dF_{S}([G_{1},G_{2}]) + c_{S}(G_{1},G_{2}), \\ \nonumber
[dQ_{S}(G_{1}),dQ_{S}(G_{2})] & = dQ_{S}([G_{1},G_{2}]) + \tfrac{1}{2}c_{S}(G_{1},G_{2}).
\end{align}
It precisely the Schwinger cocycle $c_{S}$ that leads to the central charge in the commutation relations of the Virasoro generators. For operators $G_{1}$, $G_{2}$ with off-diagonal parts in the Hilbert-Schmidt class the Schwinger cocycle satisfies the obvious bound:
\begin{align}
\label{eq:schwingerbound}
|c_{S}(G_{1},G_{2})| & \leq \big(\|(G_{1})_{-+}\|_{2}\|(G_{2})_{+-}\|_{2}+\|(G_{2})_{-+}\|_{2}\|(G_{1})_{+-}\|_{2}\big).
\end{align}

\subsection{Lattice fermions at different scales}
\label{sec:latfermi}
We describe spatial fermions\footnote{Or time-zero fermions as opposed to spacetime fermions.} with $s$ components at a (dyadic) resolution $\vep_{N}=2^{-N}\vep_{0}$, $N\in\N_{0}$, for some basic length scale $\vep_{0}$ in a $d$-dimensional hypercubic volume, $\T^{d}_{L} = (S^{1}_{L})^{\times d} = \R^{d}/2L\Z^{d}$, of size $\vol \T^{d}_{L} = (2L)^{d}$, $L>0$, by the one-particle space,
\begin{align}
\label{eq:1psp}
\fh_{N,{\color{blue}\pm}} & = \ltwo(\Lambda_{N})_{{\color{blue}\pm}}\otimes\C^{s}, & \langle\xi,\eta\rangle_{N} & = \sum_{x\in\Lambda_{N}}\langle\xi_{x},\eta_{x}\rangle_{\C^{s}}, 
\end{align}
associated with the lattice,
\begin{align}
\label{eq:latsp}
\Lambda_{N} & = \vep_{N}\{-L_{N},\dots,L_{N}-1\}^{d}\subset\T^{d}_{L}, & \vep_{N}L_{N} & = L.
\end{align}
The subscript ${\color{blue}\pm}$ refers to the unitary action of the translations $\tau^{(N)}_{{\color{blue}\pm}}:\Lambda_{N}\act\fh_{N,{\color{blue}\pm}}$ given by:
\begin{align}
\label{eq:translations}
\tau^{(N)}_{{\color{blue}\pm} |x}(\xi)_{y} & = ({\color{blue}\pm} 1)^{\big[\frac{y-x}{2L}\big]}\xi_{y-x}, & \xi & \in\fh_{N, {\color{blue}\pm}},
\end{align}
where we allow for the occurrence of a non-trivial phase which we also refer to as \textit{boundary condition}\footnote{We restrict attention boundary conditions associated with phases ${\color{blue}\pm} 1$, i.e. unitary representations of the double cover of $\T^{d}_{L}$, because we are interested in translation-invariant quadratic Hamiltonians in the following, but more general phases given by other coverings of $\T^{d}_{L}$ are conceivable.}, because it encodes whether $\xi\in\fh_{N, {\color{blue}\pm}}$ can be considered as periodic (${\color{blue}+} $) or anti-periodic (${\color{blue}-}$) on $\Lambda_{N}$. These two types of boundary conditions are also referred to as \textit{Ramond sector} (${\color{blue}+}$) and \textit{Neveu-Schwarz sector} (${\color{blue}-}$) \cite{DiFrancescoCFTBook, EvansQuantumSymmetriesOn}.

We denote the corresponding algebras of lattice fermions, complex and Majorana, by:
\begin{align}
\label{eq:latfermi}
\fA_{N, {\color{blue}\pm}} & = \fA_{\CAR}(\fh_{N, {\color{blue}\pm}}), & \fB_{N, {\color{blue}\pm}} & = \fA_{\SDC}(\fh_{N, {\color{blue}\pm}},C), 
\end{align}
where $C:\fh_{N, {\color{blue}\pm}}\rightarrow\fh_{N, {\color{blue}\pm}}$ is the charge conjugation that we only specify explicitly in the next subsection. We use the notation,
\begin{align}
\label{eq:rsfermi}
a^{(j)}_{x} & = a(\delta^{(N)}_{x,j}), & \Psi_{x} & = \Psi(\delta^{(N)}_{x,j}),
\end{align}
for the special set of generators of $\fA_{N, {\color{blue}\pm}}$ and $\fB_{N, {\color{blue}\pm}}$ associated with the standard basis of $\fh_{N, {\color{blue}\pm}}$:
\begin{align}
\label{eq:basisrs}
\delta^{(N)}_{x,j} & = \delta^{(N)}_{x}\otimes(0,\dots,\!\!\!\!\!\!\underbrace{1}_{j\textup{th component}}\!\!\!\!\!\!,\dots,0) & x & \in\Lambda_{N}, j=1,...,s.
\end{align}

The lattice Fourier transform,
\begin{align}
\label{eq:lattF}
\hat{\xi}^{(j)}(k) = \sum_{x\in\Lambda_{N}}e^{-ikx}\xi^{(j)}(x) & = \langle e_{k,j},\xi\rangle_{N}, & \xi = (\xi^{(1)},\dots,\xi^{(s)})&\in\ltwo(\Lambda_{N})_{{\color{blue}\pm}}\otimes\C^{s},
\end{align}
provides a unitary identification $\fh_{N, {\color{blue}\pm}}\cong\ltwo(\Gamma_{N,{\color{blue}\pm}},(2L_{N})^{-d})\otimes\C^{s}$ where
\begin{align}
\label{eq:latmom}
\Gamma_{N, {\color{blue}+}} & = \tfrac{\pi}{L}\{-L_{N},...,0,...,L_{N}-1\}^{d}, & \Gamma_{N, {\color{blue}-}} & = \tfrac{\pi}{L}\{-L_{N}+\tfrac{1}{2},...,0,...,L_{N}-\tfrac{1}{2}\}^{d},
\end{align}
is the dual momentum-space lattice $\Lambda_{N}$, and
\begin{align}
\label{eq:basismom}
e_{k,j} & = e^{ik(\!\ .\!\ )}\otimes(0,\dots,\!\!\!\!\!\!\underbrace{1}_{j\textup{th component}}\!\!\!\!\!\!,\dots,0), & k & \in\Gamma_{N, {\color{blue}\pm}}, j=1,\dots,s,
\end{align}
is the plane-wave basis of $\fh_{N, {\color{blue}\pm}}$. 

We extend the Fourier transform and its inverse to $\fA_{N, {\color{blue}\pm}}$ and $\fB_{N, {\color{blue}\pm}}$ by linearity:
\begin{align}
\label{eq:Ffields}
\hat{a}^{(j)}_{k} & = a(e_{k,j}) = \sum_{x\in\Lambda_{N}}e^{-ikx}\phi^{(j)}_{x}, & a^{(j)}_{x} & = \hat{a}(e_{-x,j}) = \tfrac{1}{2L_{N}}\sum_{k\in\Gamma_{N}}e^{ikx}\hat{a}^{(j)}_{k},  \\ \nonumber
\hat{\Psi}^{(j)}_{k} & = \Psi(e_{k,j}) = \sum_{x\in\Lambda_{N}}e^{-ikx}\Psi^{(j)}_{x}, & \Psi^{(j)}_{x} & = \hat{\Psi}(e_{-x,j}) = \tfrac{1}{2L_{N}}\sum_{k\in\Gamma_{N}}e^{ikx}\hat{\Psi}^{(j)}_{k},
\end{align}
for $x\in\Lambda_{N}$, $k\in\Gamma_{N}$, $j =1,\dots,s$. Here, we also introduced the dual plane-wave basis $\{e_{x,j}\ |\ x\in\Lambda_{N}, j=1,\dots,s\}$ by analogy with \eqref{eq:basismom}. Thus, the lattice Fourier transform provides us with $*$-isomorphims,
\begin{align}
\label{eq:Ffermi}
\fA_{N.\pm} &\!\cong\!\fA_{\CAR}(\ltwo(\Gamma_{N,{\color{blue}\pm}},(2L_{N})^{-d})\otimes\C^{s}), & \fB_{N,{\color{blue}\pm}} &\!\cong\!\fA_{\SDC}(\ltwo(\Gamma_{N,{\color{blue}\pm}},(2L_{N})^{-d})\otimes\C^{s},C),
\end{align}
which can be expressed by: $\hat{a}(\hat{\xi}) = a(\xi)$ respectively $\hat{\Psi}(\hat{\xi}) = \Psi(\xi)$ for $\xi\in\fh_{N,{\color{blue}\pm}}$.

In view of the construction of the scaling limit in Section \ref{sec:ffscaling}, we also allow for $N=\infty$, where we put:
\begin{align}
\label{eq:1pcont}
\fh_{\infty,{\color{blue}\pm}} & = L^{2}(\T^{d}_{L})_{{\color{blue}\pm}}\otimes\C^{s}\cong\ltwo(\Gamma_{\infty,{\color{blue}\pm}},(2L)^{-d})\otimes\C^{s}, & \Gamma_{\infty,{\color{blue}+}} & = \tfrac{\pi}{L}\Z^{d}, & \Gamma_{\infty,{\color{blue}-}} & = \tfrac{\pi}{L}(\Z+\tfrac{1}{2})^{d},
\end{align}
with the continuum Fourier transform,
\begin{align}
\label{eq:contF}
\hat{\xi}^{(j)}_{k} & = \cF_{L}[\xi^{(j)}]_{k} = \int_{\T^{d}_{L}}dx\!\ e^{-ikx}\xi^{(j)}_{x} = \langle e_{k,j}, f\rangle_{L^{2}(\T^{d}_{L})}, & k & \in\Gamma_{\infty,{\color{blue}\pm}},
\end{align}
for periodic and anti-periodic functions in $L^{2}(\T^{d}_{L})_{{\color{blue}\pm}}$. As above, we have:
\begin{align}
\label{eq:contfermi}
\fA_{\infty,{\color{blue}\pm}} & = \fA_{\CAR}(\fh_{\infty,{\color{blue}\pm}}), & \fB_{\infty,{\color{blue}\pm}} & = \fA_{\SDC}(\fh_{\infty,{\color{blue}\pm}},C). 
\end{align}

\subsection{The free lattice Dirac Hamiltonian in 1+1-dimensions}
\label{sec:latdirac}
Since our principal interest lies with models that exhibit local conformal symmetries in the scaling limit, we restrict from this point on to one spatial dimension ($d=1$) and two-component fermions ($s=1$). Specifically, we undertake a detailed analysis of a discretized version of the free Dirac Hamiltonian on the lattice $\Lambda_{N}$ as various models of conformal field theory can be realized in terms of free fermions in the continuum \cite{DiFrancescoCFTBook, EvansQuantumSymmetriesOn, XuSomeResultsOn, WassermannOperatorAlgebrasAnd}. In the staggered lattice approximation\footnote{We use the forward and backward difference to approximate the derivative of the two components respectively. This avoids the notorious fermion doubling in lattice approximations of continuum fermion models.} \cite{SusskindLatticeFermions} the massive Dirac-Hamiltonian as an element of $\fA_{N,{\color{blue}\pm}}$ is given by:
\begin{align}
\label{eq:stagH}
H^{(N)}_{0} & = \vep_{N}^{-1}\sum_{x\in\Lambda_{N}}\left(a^{(1)\!\!\ \dagger}_{x+\vep_{N}}a^{(2)}_{x} - a^{(1)\!\!\ \dagger}_{x}a^{(2)}_{x} + \textup{h.c.} + \lambda_{N}\left(a^{(1)\!\!\ \dagger}_{x}a^{(1)}_{x} - a^{(2)\!\!\ \dagger}_{x}a^{(2)}_{x}\right)\right).
\end{align}
Here, $\lambda_{N}\geq0$ is a dimensionless ``lattice mass'' parameter, and ``h.c.'' denotes the hermitean conjugate of the preceding terms. Invoking the lattice Fourier transform, the Hamiltonian \eqref{eq:stagH} can be written in $2\times 2$-matrix notation as a second quantized operator:
\begin{align}
\label{eq:FstagH}
H^{(N)}_{0} & = \tfrac{1}{2L}\sum_{k\in\Gamma_{N,{\color{blue}\pm}}}\begin{pmatrix} \hat{a}^{(1)}_{k} \\ \hat{a}^{(2)}_{k} \end{pmatrix}^{\!\!*} \underbrace{\begin{pmatrix} \lambda_{N} & e^{-ik\vep_{N}}-1 \\ e^{ik\vep_{N}}-1 & -\lambda_{N} \end{pmatrix}}_{= n_{\lambda_{N}}(k)\cdot\sigma = h^{(N)}_{0}(k)} \begin{pmatrix} \hat{a}^{(1)}_{k} \\ \hat{a}^{(2)}_{k} \end{pmatrix} = \vep_{N}^{-1}dF_{0}(h^{(N)}_{0}),
\end{align}
where $n_{\lambda_{N}}(k) = (\cos(\vep_{N}k)-1)e_{x} + \sin(\vep_{N}k)e_{y} + \lambda_{N}e_{z}\in\R^{3}$, and $\sigma$ is the vector of Pauli matrices:
\begin{align}
\label{eq:pauli}
\sigma_{x} & = \begin{pmatrix} 0 & 1 \\ 1 & 0 \end{pmatrix}, & \sigma_{y} & = \begin{pmatrix} 0 & -i \\ i & 0 \end{pmatrix}, & \sigma_{z} & = \begin{pmatrix} 1 & 0 \\ 0 & -1 \end{pmatrix}.
\end{align}
The spectral decomposition of the \textit{one-particle Hamiltonian} $h^{(N)}_{0}(k)$,
\begin{align}
\label{eq:1pHdiag}
h^{(N)}_{0}(k) & = \omega_{\lambda_{N}}(k)\Big(P^{(+)}_{\lambda_{N}}(k)-P^{(-)}_{\lambda_{N}}(k)\Big),
\end{align}
is given in terms of the spectral projections\footnote{Apart from $k=0$ and $\lambda_{N}=0$ where $h^{(N)}_{0}(0)=0$ which is consistent with $P^{(\pm)}_{0}(0) = \tfrac{1}{2}\1_{2}$.},
\begin{align}
\label{eq:1pproj}
P^{(\pm)}_{\lambda_{N}}(k) & = \tfrac{1}{2\omega_{\lambda_{N}}(k)}\big(\omega_{\lambda_{N}}(k)\1_{2}\pm h^{(N)}_{0}(k)\big)
\end{align}
with the lattice dispersion relation:
\begin{align}
\label{eq:lattdisp}
\omega_{\lambda_{N}}(k) & = \left(\lambda_{N}^{2}+4\sin(\tfrac{1}{2}\vep_{N}k)^{2}\right)^{\frac{1}{2}}.
\end{align}
At $k=0$, we have $h^{(N)}_{0}(0)=\lambda_{N}\sigma_{z}$, and we diagonalize $h^{(N)}_{0}(k)$ for all $k\in\Gamma_{N,{\color{blue}\pm}}$ by:
\begin{align}
\label{eq:stagHbog}
\gamma_{\lambda_{N}}(k) & = \Big(\tfrac{1}{2\omega_{\lambda_{N}}(k)}\Big)^{\!\frac{1}{2}}\begin{pmatrix} (\omega_{\lambda_{N}}(k)+\lambda_{N})^{\frac{1}{2}} & i\sign(k)(\omega_{\lambda_{N}}(k)-\lambda_{N})^{\frac{1}{2}}e^{-\frac{i}{2}\vep_{N}k} \\ i\frac{2\sin(\frac{1}{2}\vep_{N}k)}{(\omega_{\lambda_{N}}(k)+\lambda_{N})^{\frac{1}{2}}}e^{\frac{i}{2}\vep_{N}k} & \sign(k)\frac{2\sin(\frac{1}{2}\vep_{N}k)}{(\omega_{\lambda_{N}}(k)-\lambda_{N})^{\frac{1}{2}}} \end{pmatrix} \\ \nonumber
& = \cosh(\vartheta_{k})^{-\frac{1}{2}}\begin{pmatrix} \cosh(\frac{1}{2}\vartheta_{k}) & i\sinh(\frac{1}{2}\vartheta_{k})e^{-\frac{i}{2}\vep_{N}k} \\ i\sinh(\frac{1}{2}\vartheta_{k})e^{\frac{i}{2}\vep_{N}k} & \cosh(\frac{1}{2}\vartheta_{k}) \end{pmatrix},
\end{align}
i.e.~$\gamma_{\lambda_{N}}(k)^{*}h^{(N)}_{0}(k)\gamma_{\lambda_{N}}(k) = \omega_{\lambda_{N}}(k)\sigma_{z}$. In the second line, we introduced lattice rapidities, $2\sin(\tfrac{1}{2}\vep_{N}k)=\lambda_{N}\sinh(\vartheta_{k})$, to exemplify the unitarity of $\gamma_{\lambda_{N}}(k)$ for $\lambda_{N}\neq0$. Thus, the Bogoliubov transformation,
\begin{align}
\label{eq:lattB}
(\hat{c}^{(1)}_{k}, \hat{c}^{(2)}_{k})^{t} & = \gamma_{\lambda_{N}}(k)^{*}(\hat{a}^{(1)}_{k}, \hat{a}^{(2)}_{k})^{t},
\end{align}
diagonalizes the Hamiltonian $H^{(N)}_{0}$:
\begin{align}
\label{eq:diagH}
H^{(N)}_{0} & = \tfrac{1}{2L}\sum_{k\in\Gamma_{N,{\color{blue}\pm}}}\omega_{\lambda_{N}}(k)(\hat{c}^{(1)\!\!\ \dagger}_{k}\hat{c}^{(1)}_{k}-\hat{c}^{(2)\!\!\ \dagger}_{k}\hat{c}^{(2)}_{k}).
\end{align}
The many-body ground state of $H^{(N)}_{0}$, i.e. the lattice vacuum $\Omega^{(N)}_{0,{\color{blue}\pm}}$ at scale $N$, corresponds to the half-filled Fock vacuum (\textit{Fermi sea}):
\begin{align}
\label{eq:Fsea}
\hat{c}^{(1)}_{k}\Omega^{(N)}_{0,{\color{blue}\pm}} & = 0, & \hat{c}^{(2)\!\!\ \dagger}_{k}\Omega^{(N)}_{0,{\color{blue}\pm}} & = 0, & \forall k&\in\Gamma_{N,{\color{blue}\pm}}.
\end{align}
The unitary that implements the Bogoliubov transform \eqref{eq:lattB} and relates the lattice vacuum $\Omega^{(N)}_{0,{\color{blue}\pm}}$ to the Fock vacuum of $\fA_{N,{\color{blue}\pm}}$ is given by:
\begin{align}
\label{eq:lattBimp}
U_{\lambda_{N}}	& \!=\!\prod_{k\in\Gamma_{N,{\color{blue}\pm}}}\!\!\!\!\exp\!\Big(\!\tfrac{1}{2L_{N}}\!\big(\nu_{k}\hat{a}^{(2)}_{k}\hat{a} ^{(1)\!\!\ \dag}_{k}\!\!+\!\mu_{k}\hat{a}^{(1)}_{k}\hat{a}^{(2)\!\!\ \dag}_{k}\big)\!\!\Big)\!=\!\exp\!\Big(\!\tfrac{1}{2L_{N}}\!\!\!\!\sum_{k\in\Gamma_{N,{\color{blue}\pm}}}\!\!\!\!\nu_{k}\hat{a}^{(2)}_{k}\hat{a}^{(1)\!\!\ \dag}_{k}\!\!+\!\mu_{k}\hat{a}^{(1)}_{k}\hat{a}^{(2)\!\!\ \dag}_{k}\Big)
\end{align}
where
\begin{align}
\label{eq:lattBpar}
(\nu_{k}\mu_{k})^{\frac{1}{2}} & = \cosh^{-1} \Big(\tfrac{\cosh(\frac{1}{2}\vartheta_{k})}{\cosh(\vartheta_{k})^{\frac{1}{2}}}\Big),\ \big(\tfrac{\nu_{k}}{\mu_{k}}\big)^{\frac{1}{2}} = -ie^{-\frac{i}{2}\vep_{N}k}.
\end{align}
It is evident from the expression for $U_{\lambda_{N}}$ that the lattice vacuum is of even parity: $(-1)^{F}\Omega^{(N)}_{0,{\color{blue}\pm}}\!=\!\Omega^{(N)}_{0,{\color{blue}\pm}}$. Alternatively, we can invoke another Bogoliubov transformation, commonly used in quantum field theory,
\begin{align}
\label{eq:qftB}
\tilde{c}^{(1)}_{k} & = \hat{c}^{(1)}_{k}, & \tilde{c}^{(2)}_{k} & = \hat{c}^{(2)\!\!\ \dagger}_{k}, & \forall k&\in\Gamma_{N,{\color{blue}\pm}},
\end{align}
to obtain a description of the Fermi sea as a standard Fock vacuum:
\begin{align}
\label{eq:qftvac}
\tilde{c}^{(1)}_{k}\Omega^{(N)}_{0,{\color{blue}\pm}} & = 0, & \tilde{c}^{(2)}_{k}\Omega^{(N)}_{0,{\color{blue}\pm}} & = 0, & \forall k&\in\Gamma_{N,{\color{blue}\pm}}.
\end{align}
The Bogoliubov transformation \eqref{eq:qftB} exemplifies the need for an additive energy renormalization of the Hamiltonian $H^{(N)}_{0}$ in the scaling limit $N\rightarrow\infty$:
\begin{align}
\label{eq:qftH}
H^{(N)}_{0} & = \tfrac{1}{2L}\sum_{k\in\Gamma_{N,{\color{blue}\pm}}}\omega_{\lambda_{N}}(k)(\tilde{c}^{(1)\!\!\ \dagger}_{k}\tilde{c}^{(1)}_{k}+\tilde{c}^{(2)\!\!\ \dagger}_{k}\tilde{c}^{(2)}_{k}) - \vep_{N}^{-1}\sum_{k\in\Gamma_{N,{\color{blue}\pm}}}\omega_{\lambda_{N}}(k).
\end{align}
The lattice vacuum $\Omega^{(N)}_{0,{\color{blue}\pm}}$ determines a quasi-free state on the fermion algebra $\fA_{N,{\color{blue}\pm}}$:
\begin{align}
\label{eq:Fgroundstate}
\omega^{(N)}_{0,{\color{blue}\pm}}(a(\xi)a^{\dag}(\eta)\!) &\!=\!(\Omega^{(N)}_{0,{\color{blue}\pm}}\!,\hat{a}(\hat{\xi})\hat{a}^{\dag}(\hat{\eta})\Omega^{(N)}_{0,{\color{blue}\pm}})\!=\!\tfrac{1}{2 L_{N}}\!\!\!\!\sum_{k\in\Gamma_{N,{\color{blue}\pm}}}\!\!\!\langle\hat{\xi}_{k},\!P^{(+)}_{\lambda_{N}}\!(k)\hat{\eta}_{k}\rangle_{\C^{2}}\!=\!\langle\xi,\!(1\!-\!S^{(N)}_{0})\eta\rangle_{N},
\end{align}
where the one-particle operator is specified in momentum-space by $S^{(N)}_{0}(k) = P^{(-)}_{\lambda_{N}}(k)$.

\bigskip

From this expression it is easy to obtain the massless lattice vacuum in the limit $\lambda_{N}\rightarrow0+$, although the Ramond sector acquires a degeneracy in the zero mode ($k=0$) because $h^{(N)}_{0}(0) = 0$:
\begin{align}
\label{eq:Fmassless}
P^{(+)}_{\lambda_{N}=0}(k) & = \tfrac{1}{2}(\mathds{1}_{2} + \sign(k)(-\sin(\tfrac{1}{2}\vep_{N}k)\sigma_{x} + \cos(\tfrac{1}{2}\vep_{N}k)\sigma_{y})) \\ \nonumber
 & = \tfrac{1}{2}\begin{pmatrix} 1 & -i\sign(k)e^{-\frac{i}{2}k\vep_{N}} \\ i\sign(k)e^{\frac{i}{2}k\vep_{N}} & 1 \end{pmatrix},
\end{align}
with the convention $\sign(0) = 0$. This way, $P^{(+)}_{\lambda_{N}=0}(0) = \tfrac{1}{2}\mathds{1}_{2}$ reflects a uniform probability distribution of the quasi-free state $\omega^{(N)}_{0,{\color{blue}+}}$ at $k=0$ compatible with the chiral decomposition of $\fh_{N,{\color{blue}\pm}}$ (see Section \ref{sec:ffchiral}). But, the state is not pure because $P^{(+)}_{\lambda_{N}=0}$ is not a projection\footnote{Another possibility is to enforce purity by putting $\sign(0) = 1$ to have $P^{(+)}_{\lambda_{N}=0}(0) = \tfrac{1}{2}(\mathds{1}_{2}+\sigma_{y})$ which is also compatible with the chiral decomposition.}.

\subsection{Chiral decomposition}
\label{sec:ffchiral}
The chiral parts of the lattice fermion algebras result from applying the chiral projectors $p_{\pm} = \tfrac{1}{2}(\mathds{1}_{2}\pm\sigma_{y})$\footnote{In our parametrization of the Dirac Hamiltonian $\gamma_{5} = \sigma_{y}$.} to the one-particle space $\fh_{N,{\color{blue}\pm}}$:
\begin{align}
\label{eq:chiral1p}
\fh_{N,{\color{blue}\pm}} & = \ltwo(\Lambda_{N})_{{\color{blue}\pm}}\otimes\CC^{2}= \ltwo(\Lambda_{N})_{{\color{blue}\pm}}\otimes p_{+}\CC^{2}\oplus\ltwo(\Lambda_{N})_{{\color{blue}\pm}}\otimes p_{-}\CC^{2} = \fh^{(+)}_{N,\pm}\oplus\fh^{(-)}_{N,\pm}.
\end{align}
Using the eigenvectors $e_{\pm} = 2^{-\frac{1}{2}}(e_{1}\pm i e_{2})$ of $p_{\pm}$, we can introduce the complex chiral fermions, $\fA^{(\pm)}_{N,{\color{blue}\pm}} = \fA_{\CAR}(\fh^{(\pm)}_{N,{\color{blue}\pm}})$, by
\begin{align}
\label{eq:chiralF}
\psi_{\pm}(\xi) = a(\xi e_{\pm}) & = 2^{-\frac{1}{2}}(a^{(1)}(\xi) \mp i a^{(2)}(\xi)) = \langle e_{\pm}, (a^{(1)}(\xi), a^{(2)}(\xi))^{t}\rangle.
\end{align}
At this point, we also fix the charge conjugation $C$ (for all $N\leq\infty$) as:
\begin{align}
\label{eq:majconj}
C\xi & = \sigma_{z}\overline{\xi}, & \xi & \in\fh_{N,{\color{blue}\pm}},
\end{align}
which acts as complex conjugation with respect to the chiral decomposition:
\begin{align}
\label{eq:chiralconj}
C\xi & = \overline{\xi_{+}}e_{+}+\overline{\xi_{-}}e_{-},
\end{align}
for $\xi_{\pm} = \langle e_{\pm},\xi\rangle$ and, thus, $a(C\xi) = \psi_{+}(\overline{\xi_{+}})+\psi_{-}(\overline{\xi_{-}})$. This should be contrasted with the standard conjugation, $\xi\mapsto\overline{\xi}$, on $\fh_{N}$ that satisfies $a(\overline{\xi}) = \psi_{+}(\overline{\xi_{-}})+\psi_{-}(\overline{\xi_{+}})$.

In contrast with infinite-volume continuum theory, the massless ground-state projector $P^{(+)}_{\lambda_{N}=0}$ does not factorize w.r.t.~the chiral parts, in other words the massless lattice vacuum is entangled relative to the chiral parts:
\begin{align}
\label{eq:chiralentanglement}
p_{\pm}P^{(+)}_{\lambda_{N}=0}(k)p_{\pm} & = \tfrac{1}{2}(1\pm\sign(k)\cos(\tfrac{1}{2}\vep_{N}k))p_{\pm}, \\ \nonumber
p_{\pm}P^{(+)}_{\lambda_{N}=0}(k)p_{\mp} & = \pm\tfrac{i}{2}\sign(k)\sin(\tfrac{1}{2}\vep_{N}k)\underbrace{\tfrac{1}{2}(\sigma_{z}\pm i\sigma_{x})}_{=n_{\pm}}.
\end{align}
In terms of these, the Hamiltonian \eqref{eq:FstagH} takes the form:
\begin{align}
\label{eq:FstagHchiral}
H^{(N)}_{0} & = \tfrac{1}{2L}\!\!\sum_{k\in\Gamma_{N,{\color{blue}\pm}}}\!\!\begin{pmatrix} \hat{\psi}_{+|k} \\ \hat{\psi}_{-|k} \end{pmatrix}^{\dagger}\!\! \begin{pmatrix} \sin(\vep_{N}k) & \!\!\!\!-i(\cos(\vep_{N}k)-1) + \lambda_{N} \\ i(\cos(\vep_{N}k)-1) + \lambda_{N} & -\sin(\vep_{N}k) \end{pmatrix}\!\! \begin{pmatrix} \hat{\psi}_{+|k} \\ \hat{\psi}_{-|k} \end{pmatrix},
\end{align}
which, in the massless case ($\lambda_{N}=0$), is compatible \eqref{eq:sdcconstraints} with the charge conjugation $C$ which allows for the identification of the self-dual components, i.e.~the Majorana fermions $\fB^{(\pm)}_{N,{\color{blue}\pm}}=\fA_{\SDC}(\fh^{(\pm)}_{N,{\color{blue}\pm}},C)$, of the complex massless chiral fields (see Section \ref{sec:sdc}).

\bigskip

Because of the entanglement of the massless lattice vacuum $\omega^{(N)}_{0,{\color{blue}\pm}}$ relative to the chiral decomposition, its restriction to chiral components defines non-pure quasi-free states ($p_{\pm}P^{(+)}_{\lambda_{N}=0}(k)p_{\pm}$ is positive but not a projection):
\begin{align}
\label{eq:chiral2p}
\omega^{(N)}_{0,{\color{blue}\pm}}(\psi_{\pm}(\xi)\psi_{\pm}^{\dagger}(\eta)) & = \tfrac{1}{2 L_{N}}\sum_{k\in\Gamma_{N,{\color{blue}\pm}}}\tfrac{1}{2}(1\pm\sign(k)\cos(\tfrac{1}{2}\vep_{N}k))\overline{\hat{\xi}(k)}\hat{\zeta}(k),
\end{align}
for $\xi,\eta\in\fh^{(\pm)}_{N,{\color{blue}\pm}}\cong\ltwo(\Gamma_{N,{\color{blue}\pm}},(2 L_{N})^{-1})$.

\subsection{The Majorana mass term}
\label{sec:majmass}
As stated above, the Hamiltonian \eqref{eq:stagH} is a discretized version of the massive Dirac Hamiltonian using the algebra $\fA_{N,{\color{blue}\pm}}$, which is only compatible with the Majorana fermions $\fB_{N,{\color{blue}\pm}}$ for vanishing lattice mass $\lambda_{N}=0$. To allow for a massive Hamiltonian in terms of the Majorana fermions $\fB_{N,{\color{blue}\pm}}$, the mass term (proportional to $\lambda_{N}$) needs to be compatible with self-duality \eqref{eq:sdcconstraints}. This is achieved by replacing the Dirac mass term in \eqref{eq:stagH} with:
\begin{align}
\label{eq:majmass}
H^{(N)}_{\textup{mass}} & = \vep_{N}^{-1}\lambda_{N}\sum_{x\in\Lambda_{N}}\!\!\!\left(a^{(2)\!\!\ \dagger}_{x}a^{(1)}_{x} + a^{(1)\!\!\ \dagger}_{x}a^{(2)}_{x}\!\right) = \vep_{N}^{-1}\lambda_{N}\sum_{x\in\Lambda_{N}}\!(-i)\!\left(\psi^{\dagger}_{+|x}\psi_{-|x} - \psi^{\dagger}_{-|x}\psi_{+|x}\right) \\ \nonumber
 & = \tfrac{1}{2L}\!\!\sum_{k\in\Gamma_{N,{\color{blue}\pm}}}\!\!\begin{pmatrix} \hat{\psi}_{+|k} \\ \hat{\psi}_{-|k} \end{pmatrix}^{\dagger}\!\! \begin{pmatrix} 0 & \!\!\!\!-i\lambda_{N} \\ i \lambda_{N} & 0 \end{pmatrix}\!\! \begin{pmatrix} \hat{\psi}_{+|k} \\ \hat{\psi}_{-|k} \end{pmatrix},
\end{align}
which results in the modified spectral projections,
\begin{align}
\label{eq:majproj}
P^{(\pm)}_{\lambda_{N}}(k) = \tfrac{1}{2\omega_{\lambda_{N}}(k)}(\omega_{\lambda_{N}}(k)\mathds{1}_{2} \pm n_{\lambda_{N}}(k)\cdot\sigma),
\end{align}
with $n_{\lambda_{N}}(k) = (\cos(\vep_{N}k)-1)e_{x} + \sin(\vep_{N}k)e_{y} + \lambda_{N}e_{x}$, and the modified dispersion relation,
\begin{align}
\label{eq:majdis}
\omega_{\lambda_{N}}(k) & = \left(\lambda_{N}^{2}+4(1-\lambda_{N})\sin(\tfrac{1}{2}\vep_{N}k)^{2}\right)^{\frac{1}{2}}.
\end{align}
The modified spectral projections satisfy $CP^{(+)}_{\lambda_{N}} = P^{(-)}_{\lambda_{N}}C$, consistent with \eqref{eq:sdcqfs}, and, thus, \eqref{eq:Fgroundstate} defines a quasi-free state, also denoted $\omega^{(N)}_{0,{\color{blue}\pm}}$ by a slight abuse of notation, on $\fB_{N,{\color{blue}\pm}}$.

\subsection{The self-dual components of the lattice Majorana Hamiltonian}
\label{sec:sdc}
The Hamiltonian with the Majorana mass term,
\begin{align}
\label{eq:stagHmaj}
H^{(N)}_{0} & = \vep_{N}^{-1}\sum_{x\in\Lambda_{N}}\left(a^{(1)\!\!\ \dagger}_{x+\vep_{N}}a^{(2)}_{x} - a^{(1)\!\!\ \dagger}_{x}a^{(2)}_{x} + \textup{h.c.} + \lambda_{N}\left(a^{(2)\!\!\ \dagger}_{x}a^{(1)}_{x} + a^{(1)\!\!\ \dagger}_{x}a^{(2)}_{x}\right)\right) \\ \nonumber
 & = \tfrac{1}{2L}\!\!\sum_{k\in\Gamma_{N,{\color{blue}\pm}}}\!\!\begin{pmatrix} \hat{\psi}_{+|k} \\ \hat{\psi}_{-|k} \end{pmatrix}^{\dagger}\!\! \underbrace{\begin{pmatrix} \sin(\vep_{N}k) & \!\!\!\!-i((\cos(\vep_{N}k)-1) + \lambda_{N}) \\ i((\cos(\vep_{N}k)-1) + \lambda_{N}) & -\sin(\vep_{N}k) \end{pmatrix}}_{=h^{(N)}_{\sd}(k)}\!\! \begin{pmatrix} \hat{\psi}_{+|k} \\ \hat{\psi}_{-|k} \end{pmatrix} \\ \nonumber
 & = \vep_{N}^{-1}dF_{0}(h^{(N)}_{\sd}),
\end{align}
is compatible with the charge conjugation \eqref{eq:majconj}. Moreover, the projections on positive and negative momenta (with an appropriate modification in the Ramond sector),
\begin{align}
\label{eq:pmom}
(P^{\pm}\hat{\xi})_{k} & = \left\{\begin{matrix} \theta(\pm k)\hat{\xi}_{k} & k\neq0,-\tfrac{\pi}{\vep_{N}} \\ \Re(e^{\mp\frac{i}{4}\pi}\hat{\xi}_{k})e^{\pm\frac{i}{4}\pi} & k=0,-\tfrac{\pi}{\vep_{N}} \end{matrix} \right., & \xi & \in\fh^{(\pm)}_{N,{\color{blue}\pm}},
\end{align}
are basis projections for $C$ (see \eqref{eq:sdc} and below):
\begin{align}
\label{eq:sdcbasis}
(CP^{\pm}\hat{\xi})(k) & = ((1-P^{\pm})C\hat{\xi})(k).
\end{align}
Therefore, it is possible to formulate $H^{(N)}_{0}$ in terms of four Majorana fermions, $\Psi_{\pm},:\fh^{(\pm)}_{N,{\color{blue}\pm}}\rightarrow\fB^{(\pm)}_{N,{\color{blue}\pm}}$ and $ \tilde{\Psi}_{\pm}:\fh^{(\pm)}_{N,{\color{blue}\pm}}\rightarrow\fB^{(\pm)}_{N,{\color{blue}\pm}}$:
\begin{align}
\label{eq:sdcfermi}
\Psi_{\pm}(\xi) & = \psi_{\pm}(P^{+}\xi) + \psi^{\dag}_{\pm}(CP^{-}\xi), & \tilde{\Psi}_{\pm}(\xi) & = \psi_{\pm}(P^{-}\xi) + \psi^{\dag}_{\pm}(CP^{+}\xi),
\end{align}
for $\xi\in\fh^{(\pm)}_{N,{\color{blue}\pm}}$, which correspond to the restrictions of the complex chiral fermions $\psi_{\pm}$ to positive and negative momenta. Then $H^{(N)}_{0}$ is given by self-dual second quantization:
\begin{align}
\label{eq:stagHmajSQ}
H^{(N)}_{0} & = \vep_{N}^{-1}\big(dQ_{0}(h^{(N)}_{\sd}) + d\tilde{Q}_{0}(h^{(N)}_{\sd})\big), & [dQ_{0}(h^{(N)}_{\sd}),d\tilde{Q}_{0}(h^{(N)}_{\sd})] & = 0,
\end{align}
In view of the diagonalization \eqref{eq:diagH} of $H^{(N)}_{0}$, it is worth noting that the Majorana fermions \eqref{eq:sdcfermi} can be interpreted as the self-dual components of $c^{(1)}, c^{(2)}$ in the (formal) scaling limit $N\rightarrow\infty$ of the massless case $\lambda_{N}=0$:
\begin{align}
\label{eq:sdcfermidiag}
\hat{\Psi}_{\sigma|k} &\!=\!\left\{\begin{matrix} \theta(+k)\hat{c}^{(1)}_{k}+\theta(-k)\hat{c}^{(1)\!\!\ \dag}_{-k} & \sigma\!=\!+ \\[0.1cm] \theta(+k)\hat{c}^{(2)}_{k}+\theta(-k)\hat{c}^{(2)\!\!\ \dag}_{-k} & \sigma\!=\!- \end{matrix}\right., & \hat{\tilde{\Psi}}_{\sigma|k} &\!=\!\left\{\begin{matrix} \theta(-k)\hat{c}^{(2)}_{k}+\theta(+k)\hat{c}^{(2)\!\!\ \dag}_{-k} & \sigma\!=\!+ \\[0.1cm] \theta(-k)\hat{c}^{(1)}_{k}+\theta(+k)\hat{c}^{(1)\!\!\ \dag}_{-k} & \sigma\!=\!- \end{matrix}\right.,
\end{align}
for $k\in\Gamma_{\infty,{\color{blue}\pm}}$. Furthermore, it is evident from \eqref{eq:stagHmajSQ} that $H^{(N)}_{0}$ should decouple into four massless Majorana fermions in this limit.

\subsection{Equivalence with the transverse XY model}
\label{sec:xymodel}
In view of our companion article \cite{OsborneCFTsim} that discusses our results in relation to the quantum simulation of CFTs, we provide some details on how to map the fermion algebra $\fA_{N,{\color{blue}\pm}}$ together with the Dirac Hamiltonian $H^{(N)}_{0}$ to a \textit{Pauli algebra}, $\fP_{N+1,{\color{orange}\pm}}=\otimes_{x\in\Lambda_{N+1}}M_{2}(\C)$, in which each local matrix factor resembles a (logical) qubit, using a Jordan-Wigner isomorphism. Under this mapping the Dirac Hamiltonian becomes that of an XY model with a transverse field.

To begin with, we identify the algebra $\fA_{N,{\color{blue}\pm}}$ of two-component fermion with a one-component fermion algebra, $\fA_{\SDC}(\ltwo(\Lambda_{N+1})_{{\color{blue}\pm}})$, on the doubled lattice $\Lambda_{N+1}$:
\begin{align}
\label{eq:stagfermion}
b_{x} & = a^{(1)}_{x}, & b_{x+\vep_{N+1}} & = a^{(2)\dag}_{x}, & x & \in\Lambda_{N},
\end{align}
which exploits the symmetry of $H^{(N)}_{0}$ given by: $(a^{(1)}_{x},a^{(2)}_{x})\mapsto(a^{(2)\dag}_{x}, a^{(1)\dag}_{x+\vep_{N}})$ for all $x\in\Lambda_{N}$. In terms of $b, b^{\dag}$, the Hamiltonian \eqref{eq:stagH} is mapped to:
\begin{align}
\label{eq:stagH1c}
H^{(N)}_{0} & = \vep_{N}^{-1}\hspace{-0.25cm}\sum_{x\in\Lambda_{N+1}}\hspace{-0.25cm}\big(b_{x}b_{x+\vep_{N+1}}-b^{\dag}_{x}b^{\dag}_{x+\vep_{N+1}}+\lambda_{N}(b^{\dag}_{x}b_{x}-\tfrac{1}{2}\mathds{1}_{2})\big),
\end{align}
with periodic or anti-periodic boundary conditions in accordance with $(a^{(1)},a^{(2)})$. The translation invariance of $H^{(N)}_{0}$ on $\Lambda_{N+1}$ subsumes the translation invariance on $\Lambda_{N}$ and special field-exchange symmetry from above. In the momentum space representation, the relation \eqref{eq:stagfermion} between the single-component fermion $a$ and two-component fermion $(a^{(1)},a^{(2)})$ is given by:
\begin{align}
\label{eq:stagfermionfourier}
\hat{a}^{(1)}_{k} & = \tfrac{1}{2}(\hat{b}_{k}+\hat{b}_{k+\frac{\pi}{\varepsilon_{N+1}}}), & \hat{a}^{(2)}_{k} & = \tfrac{1}{2}e^{i\varepsilon_{N+1}k}(\hat{b}^{\dag}_{-k}-\hat{b}^{\dag}_{-k+\frac{\pi}{\varepsilon_{N+1}}}),
\end{align}
where the two-component fermion can be considered to be periodically extended from $\Gamma_{N,{\color{blue}\pm}}$ to $\Gamma_{N+1,{\color{blue}\pm}}$.

\bigskip

A follow-up \textit{Jordan-Wigner isomorphism},
\begin{align}
\label{eq:JWtrafo}
b_{x} & = \Big(\prod_{\substack{y\in\Lambda_{N+1} \\ -L\leq y<x}}\sigma^{(1)}_{y}\Big)\tfrac{1}{2}(\sigma^{(3)}_{x}+i\sigma^{(2)}_{x}), & x & \in\Lambda_{N+1},
\end{align}
yields a soluble transverse XY model (cp.~\cite{LiebTwoSolubleModels}):
\begin{align}
\label{eq:XYmodel}
H^{(N)}_{0} & = \tfrac{1}{2}\vep_{N}^{-1} \hspace{-0.25cm}\sum_{x\in\Lambda_{N+1}} \hspace{-0.25cm}\big(\sigma^{(3)}_{x}\sigma^{(3)}_{x+\vep_{N+1}}-\sigma^{(2)}_{x}\sigma^{(2)}_{x+\vep_{N+1}}+\lambda_{N}\sigma^{(1)}_{x}\big) \\ \nonumber
& \hspace{0.25cm} -\tfrac{1}{2}\vep_{N}^{-1}(\texto{\pm}1\textr{\pm}(-1)^{F})(\sigma^{(3)}_{-L}\sigma^{(3)}_{L-\vep_{N+1}}-\sigma^{(2)}_{-L}\sigma^{(2)}_{L-\vep_{N+1}}),
\end{align}
where ($\textr{\pm}$) indicates the fermion boundary conditions while ($\texto{\pm}$) labels the complementing spin boundary conditions, such that boundary term in the second line vanishes on the lattice vacuum $\Omega^{(N)}_{0,{\color{blue}\pm}}$.

We can rephrase the diagonalization of $H^{(N)}_{0}$ as given in \eqref{eq:diagH} in terms of the Fourier transform of the one-component fermion $b, b^{\dag}$:
\begin{align}
\label{eq:FstagH1c}
H^{(N)}_{0} & = \tfrac{1}{4L}\hspace{-0.25cm}\sum_{k\in\Gamma_{N+1,-,>0}}\begin{pmatrix} \hat{b}_{k} \\ \hat{b}^{\dag}_{-k} \end{pmatrix}^{\dagger} \begin{pmatrix} \lambda_{N} & -i2\sin(\vep_{N+1}k) \\ i2\sin(\vep_{N+1}k) & -\lambda_{N} \end{pmatrix} \begin{pmatrix} \hat{b}_{k} \\ \hat{b}^{\dag}_{-k} \end{pmatrix}.
\end{align}
The diagonalizing Bogoliubov transformation is completely analogous to \eqref{eq:stagHbog} for $k>0$:
\begin{align}
\label{eq:stagHbog1c}
\gamma_{\lambda_{N}}(k) & = \Big(\tfrac{1}{2\omega_{\lambda_{N}}(k)}\Big)^{-\frac{1}{2}}\begin{pmatrix} (\omega_{\lambda_{N}}(k)+\lambda_{N})^{\frac{1}{2}} & i\sign(k)(\omega_{\lambda_{N}}(k)-\lambda_{N})^{\frac{1}{2}} \\ i\frac{2\sin(\vep_{N+1}k)}{(\omega_{\lambda_{N}}(k)+\lambda_{N})^{\frac{1}{2}}} & \sign(k)\frac{2\sin(\vep_{N+1}k)}{(\omega_{\lambda_{N}}(k)-\lambda_{N})^{\frac{1}{2}}} \end{pmatrix},
\end{align}
which defines the diagonalizing fermion, $(\hat{c}_{k}, \hat{c}^{\dag}_{-k})^{t}=\gamma_{\lambda_{N}}(k)^{\dagger}(\hat{b}_{k}, \hat{b}^{\dag}_{-k})^{t}$,
leading to:
\begin{align}
\label{eq:diagH1c}
H^{(N)}_{0} & = \tfrac{1}{4L}\hspace{-0.25cm}\sum_{k\in\Gamma_{N+1,{\color{blue}\pm}}}\hspace{-0.25cm}\omega_{\lambda_{N}}(k)\hat{c}^{\dag}_{k}\hat{c}_{k} - \tfrac{1}{4}\vep_{N+1}^{-1}\hspace{-0.25cm}\sum_{k\in\Gamma_{N+1,{\color{blue}\pm}}}\hspace{-0.25cm}\omega_{\lambda_{N}}(k).
\end{align}
If we normalize the fermion, $\tilde{c}_{k} = (2L_{N+1})^{-\frac{1}{2}}\hat{c}_{k}$, and subject it to another inverse Jordan-Wigner transform similar to \eqref{eq:JWtrafo}, we can write
\begin{align}
\label{eq:diagH1cnorm}
H^{(N)}_{0} & = \tfrac{1}{2}\vep_{N}^{-1}\sum_{k\in\Gamma_{N+1,{\color{blue}\pm}}}\omega_{\lambda_{N}}(k)\tilde{\sigma}^{(1)}_{k}.
\end{align}
Therefore, it is possible to express the lattice vacuum as a qubit product state: $\Omega^{(N)}_{0,{\color{blue}\pm}}=\otimes_{k\in\Gamma_{N+1,{\color{blue}\pm}}}|\!\leftarrow\rangle$, which allows for the initialization of a quantum simulation of \eqref{eq:XYmodel} in its many-body ground state following \cite{VerstraeteQuantumCircuitsFor}.

\section{Scaling limits of lattice fermions}
\label{sec:ffscaling}
We now explain how to implement the scaling limit, $N\rightarrow\infty$, rigorously by operator-algebraic renormalization \cite{StottmeisterOperatorAlgebraicRenormalization, BrothierAnOperatorAlgebraic} which in turn allows us to give precise statements about the convergence of lattice quantities, e.g.~correlation functions with respect to the lattice vacuum $\omega^{(N)}_{0,{\color{blue}\pm}}$ or finite-scale Bogoliubov transformations and their implementers.
 
Following the recipe in \cite{MorinelliScalingLimitsOf}, we define the scaling limit of lattice fermions, complex and Majorana, by a renormalization group procedure in terms of an inductive system of unital, injective $*$-morphisms,
\begin{align}
\label{eq:rgF}
\alpha^{N}_{N+1} : \fA_{N,{\color{blue}\pm}} & \longrightarrow \fA_{N+1,\pm}, & \alpha^{N+1}_{N+2}\circ\alpha^{N}_{N+1} & = \alpha^{N}_{N+2} , & N & \in\N_{0},
\end{align}
and similarly for $\fB_{N,{\color{blue}\pm}}$. In the following we state the formulas for $\fA_{N,{\color{blue}\pm}}$ only, but they equally apply to $\fB_{N,{\color{blue}\pm}}$.

In present setting, we define the \textit{renormalization group} $\{\alpha^{N}_{N+1}\}_{N\in\N_{0}}$ as Bogoliubov transformations associated with an inductive system of isometries between one-particle spaces,
\begin{align}
\label{eq:rgH}
R^{N}_{N+1} : \fh_{N,{\color{blue}\pm}} & \longrightarrow \fh_{N+1,\pm}, & R^{N+1}_{N+2}\circ R^{N}_{N+1} & = R^{N}_{N+2} & N & \in\N_{0},
\end{align}
with the additional requirement, $R^{N}_{N+1}C = CR^{N}_{N+1}$, in the self-dual case.

Because of the general properties of CAR algebras \cite{BratteliOperatorAlgebrasAnd2, EvansQuantumSymmetriesOn}, we obtain the inductive-limit objects:
\begin{align}
\label{eq:indlimF}
\fA_{\infty,{\color{blue}\pm}} & = \fA_{\CAR}(\fh_{\infty,{\color{blue}\pm}}) = \varinjlim_{N}\fA_{N,{\color{blue}\pm}}, & \fh_{\infty,{\color{blue}\pm}} & = \varinjlim_{N}\fh_{N,{\color{blue}\pm}}.
\end{align}
At the level of states, $\omega^{(N)}\in\fA^{*}_{N,\pm}$, the \textit{renormalization group flow} is defined by:
\begin{align}
\label{eq:rgflow}
\omega^{(N)}_{M} & = \omega^{(N+M)}\circ\alpha^{N}_{N+M}, & N,M\in\NN_{0}.
\end{align}
A scaling limit of a family of lattice states $\{\omega^{(N)}\}_{N\in\N_{0}}$ is defined as:
\begin{align}
\label{eq:sl}
\omega^{(N)}_{\infty} & = \lim_{M\rightarrow\infty}\omega^{(N)}_{M}, & N & \in\N_{0},
\end{align}
which is automatically projectively consistent, $\omega^{N+1}_{\infty}\circ\alpha^{N}_{N+1} = \omega^{N}_{\infty}$, by continuity \cite{MorinelliScalingLimitsOf, BrothierAnOperatorAlgebraic} if it exists as a weak$^{*}$-limit and, thus, defines a state $\omega^{(\infty)}_{\infty}$ on $\fA_{\infty,{\color{blue}\pm}}$. In this sense, the inductive-limit objects \eqref{eq:indlimF} serve as carrier spaces of the scaling limit of a family $\{\omega^{(N)}\}_{N\in\N_{0}}$.

The observation that the inductive-limit objects \eqref{eq:indlimF} can be characterized as the sets of $\alpha$- respectively $R$-convergent sequences \cite{DuffieldMeanFieldDynamical},
\begin{align}
\label{eq:convseq}
\lim_{M\rightarrow\infty}\limsup_{N\rightarrow\infty}\|O_{N}-\alpha^{M}_{N}(O_{M})\| & = 0, & \lim_{M\rightarrow\infty}\limsup_{N\rightarrow\infty}\|\xi_{N}-R^{M}_{N}(\xi_{M})\|_{N} & = 0,
\end{align}
for $O_{N}\in\fA_{N,{\color{blue}\pm}}$ respectively $\xi_{N}\in\fh_{N,{\color{blue}\pm}}$, motivates the following definition of the scaling limit of sequences of composite operators with limits possibly not in $\fA_{\infty,{\color{blue}\pm}}$\footnote{In \cite{DuffieldMeanFieldDynamical} the concept of convergent sequences is discussed for generalized inductive systems of Banach spaces in the context of mean-field limits.}.
\begin{defn}[Convergent sequences of operators]
\label{def:convseq}
Let $\{p_{N}\}_{N\in\N_{0}}$ be a sequence of semi-norms, $p_{N}:\fA_{N,{\color{blue}\pm}}\rightarrow\R_{\geq0}$. A sequence of operators $\{O_{N}\}_{N\in\N_{0}}$ with $O_{N}\in\fA_{N,{\color{blue}\pm}}$ is called $\alpha$-convergent with respect to the family $\{p_{N}\}_{N\in\N_{0}}$ if:
\begin{align*}
\lim_{M\rightarrow\infty}\limsup_{N\rightarrow\infty}p_{N}(O_{N}-\alpha^{M}_{N}(O_{M})) & = 0.
\end{align*}
Similarly, let $\{q_{N}\}_{N\in\N_{0}}$ be a sequence of semi-norms, $q_{N}:B(\fh_{N,{\color{blue}\pm}})\rightarrow\R_{\geq0}$, we call a sequence of operators $\{o_{N}\}_{N\in\N_{0}}$, with $o_{N}\in B(\fh_{N,{\color{blue}\pm}})$, $R$-convergent with respect to the family $\{q_{N}\}_{N\in\N_{0}}$ if:
\begin{align*}
\lim_{M\rightarrow\infty}\limsup_{N\rightarrow\infty}q_{N}(o_{N}-R^{M}_{N}o_{M}R^{M}_{N}{}^{*}) & = 0.
\end{align*}
\end{defn}
Of course, $R$-convergence for one-particle operators is related to $\alpha$-convergence by the observation:
\begin{align}
\label{eq:Ralpha}
\alpha^{N}_{M}(dF_{0}(o_{N})) & = dF_{0}(R^{N}_{M}o_{N}R^{N}_{M}{}^{*}).
\end{align}
In Section \ref{sec:approxconf}, we invoke families of semi-norms that are induced by semi-norms $p_{\infty}$ and $q_{\infty}$ on $B(\fF_{\ua}(\fh_{\infty,{\color{blue}\pm}}))$ respectively $B(\fh_{\infty,{\color{blue}\pm}})$:
\begin{align}
\label{eq:snind}
p_{N} & = p_{\infty}\circ\alpha^{N}_{\infty}, & q_{N} & = q_{\infty}\circ(R^{N}_{\infty}(\!\ \cdot\!\ )R^{N}_{\infty}{}^{*}).
\end{align}
Specific examples are strong operator semi-norms (for $M\in\N_{0}$),
\begin{align}
\label{eq:snex}
p_{\infty}(O) & = \|Oa^{\dag}(R^{M}_{\infty}(\xi_{1}))...a^{\dag}(R^{M}_{\infty}(\xi_{n}))\Omega_{0}\|, & \xi_{j} & \in\fh_{M,{\color{blue}\pm}}, j=1,...,n, \\ \nonumber
q_{\infty}(o) & = \|oR^{M}_{\infty}(\xi)\|_{\infty}, & \xi & \in\fh_{M,{\color{blue}\pm}}, 
\end{align}
for operators $O$ on $\fF_{\ua}(\fh_{\infty,{\color{blue}\pm}})$ or operators $o$ on the one-particle space $\fh_{N,{\color{blue}\pm}}$. This way, we can study the convergence of sequences of operators at finite scales such as the lattice Dirac Hamiltonian $H^{(N)}_{0}$ and its one-particle version $h^{(N)}_{0}$, which should have limits  that are unbounded operators in the scaling limit.

\subsection{Wavelet renormalization group}
\label{sec:waveletrg}
In the following, we make use of a specific realization of the renormalization group using the theory of wavelets by adapting the constructions in \cite{MorinelliScalingLimitsOf} to the fermionic setting.

To this end, we consider a compactly supported orthonormal \textit{scaling function}, $s\in L^{2}(\R)$, $\|s\|_{L^{1}}=1$, associated with a wavelet basis of $L^{2}(\R)$ that satisfies the \textit{scaling equation} \cite{DaubechiesTenLecturesOn, MeyerWaveletsAndOperators}:
\begin{align}
\label{eq:scalingeq}
s(x) & = \sum_{n\in\Z}h_{n} 2^{\frac{1}{2}}s(2x-n), & x & \in\R,
\end{align}
where the coefficients $\{h_{n}\}_{n\in\Z}$ are called a \textit{low-pass filter}\footnote{The compact support of $s$ enforces a that the low-pass filter has only finitely many non-vanishing elements.}. Since our finite-scale fermion algebras, $\fA_{N,{\color{blue}\pm}}$ and $\fB_{N,{\color{blue}\pm}}$, describe quantum systems in the finite volume $S^{1}_{L}$, we denote by,
\begin{align}
\label{eq:rescaledsf}
s^{(\vep_{N})}(x) & = \sum_{m\in\Z}\vep_{N}^{-\frac{1}{2}}s(\vep_{N}^{-1}(x+2Lm)),
\end{align}
the rescaled and $2L$-periodized scaling function, which yield a wavelet basis of $L^{2}(S^{1}_{L})$ and satisfy in analogy with \eqref{eq:scalingeq}:
\begin{align}
\label{eq:scalingeqre}
s^{(\vep_{N})}(x) & = \sum_{n\in\vep_{N+1}^{-1}\Lambda_{N+1}}h_{n}s^{(\vep_{N+1})}(x-\vep_{N+1}n), & x & \in S^{1}_{L}.
\end{align}
The wavelet renormalization group is now defined as follows:

\begin{defn}[Wavelet renormalization group]
\label{def:waveletrg}
Given a compactly supported orthonormal scaling function $s\in L^{2}(\R)$ with low-pass filter $\{h_{n}\}_{n\in\Z}$, let,
\begin{align}
\label{eq:waveletrg1p}
R^{N}_{N+1}(\xi^{(j)}) & = \sum_{x\in\Lambda_{N}}\xi^{(j)}_{x}\sum_{n\in\vep_{N+1}^{-1}\Lambda_{N+1}}h_{n}\delta^{(N+1)}_{x+\vep_{N+1}n},
\end{align}
and
\begin{align}
\label{eq:waveletrg1pinf}
R^{N}_{\infty}(\xi^{(j)}) & = \sum_{x\in\Lambda_{N}}\xi^{(j)}_{x}\underbrace{s^{(\vep_{N})}(\cdot\!\ -x)}_{:=s^{(\vep_{N})}_{x}} = s^{(\vep_{N})}\ast_{\Lambda_{N}}\xi^{(j)},
\end{align}
for $\xi=(\xi^{(1)},\xi^{(2)})\in\fh_{N,{\color{blue}\pm}}$, where we use the notation $\ast_{\Lambda_{N}}$ to denote the convolution with respect to the lattice $\Lambda_{N}$. The \emph{wavelet renormalization group} is densely defnined by:
\begin{align}
\label{eq:waveletrg}
\alpha^{N}_{N+1}(a(\xi)) & = a(R^{N}_{N+1}(\xi)), & \alpha^{N}_{\infty}(a(\xi)) & = a(R^{N}_{\infty}(\xi)), & \xi & \in\fh_{N,{\color{blue}\pm}}.
\end{align}
\end{defn}

We summarize some important properties of the maps $R^{N}_{N+1}$ that follow immediately from the properties of a (compactly supported) scaling function and its associated wavelet basis, cf.~\cite{MeyerWaveletsAndOperators, DaubechiesTenLecturesOn, MorinelliScalingLimitsOf}:
\begin{prop}
\label{prop:waveletrg}
The one-particle map of the wavelet renormalization group has the following properties: $R^{N}_{N+1}:\fh_{N,{\color{blue}\pm}}\rightarrow\fh_{N+1,\pm}$ is
\begin{itemize}
	\item[1.] well-defined,
	\item[2.] $C$-compatible,
	\item[3.] isometric, i.e.~ $R^{N}_{N+1}{}^{\!\!*}\circ R^{N}_{N+1} = \1_{\fh_{N}}$, 
	\item[4.] asymptotically compatible, i.e.~$R^{N+1}_{\infty}\circ R^{N}_{N+1} = R^{N}_{\infty}$.
\end{itemize}
Moreover, we have $\fh_{\infty,{\color{blue}\pm}}\cong L^{2}(S^{1}_{L})_{{\color{blue}\pm}}\otimes\C^{2}$, and $R^{N}_{\infty}{}^{*}\circ R^{N}_{\infty} = \1_{\fh_{N}}$.
\end{prop}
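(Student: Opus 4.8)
The plan is to reduce every one of the four numbered items, and the two "moreover" assertions, to three elementary properties of the low-pass filter $\{h_{n}\}_{n\in\Z}$ and the scaling function: the reality of the (Daubechies) filter, its orthonormality relation $\sum_{n}\overline{h_{n+2k}}h_{n}=\delta_{k,0}$, and the $2L$-periodized scaling equation \eqref{eq:scalingeqre}, together with the fact that $R^{N}_{N+1}$ acts componentwise in $j$, i.e.\ only on the spatial factor $\ltwo(\Lambda_{N})_{\pm}$ and as the identity on $\C^{2}$. Well-definedness (item 1) is then immediate: since $s$ is compactly supported only finitely many $h_{n}$ are non-zero, so the inner sum in \eqref{eq:waveletrg1p} is finite and lands in $\ltwo(\Lambda_{N+1})_{\pm}\otimes\C^{2}$, while the periodization of the sites over $\Lambda_{N+1}$ together with the sign convention in the translation action \eqref{eq:translations} ensures the correct boundary condition is inherited. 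For $C$-compatibility (item 2) I would use that the filter is real, $\overline{h_{n}}=h_{n}$, so $R^{N}_{N+1}$ commutes with complex conjugation on $\ltwo(\Lambda_{N})_{\pm}$; since $C=\sigma_{z}\circ(\text{conj.})$ by \eqref{eq:majconj} and $\sigma_{z}$ acts only on the untouched factor $\C^{2}$, the relation $R^{N}_{N+1}C=CR^{N}_{N+1}$ follows.

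For isometry (item 3) I would compute the Gram matrix on the standard basis: using \eqref{eq:waveletrg1p} one finds $\langle R^{N}_{N+1}\delta^{(N)}_{x,j},R^{N}_{N+1}\delta^{(N)}_{y,j'}\rangle_{N+1}=\delta_{j,j'}\sum_{m,n}\overline{h_{m}}h_{n}\,\delta_{x+\vep_{N+1}m,\,y+\vep_{N+1}n}$. Since $\vep_{N}=2\vep_{N+1}$, the differences $y-x$ for $x,y\in\Lambda_{N}$ lie in $2\vep_{N+1}\Z$, so the Kronecker constraint forces $m-n\in2\Z$; writing $y-x=2\vep_{N+1}k$ collapses the double sum to $\sum_{n}\overline{h_{n+2k}}h_{n}=\delta_{k,0}$, i.e.\ $\delta_{x,y}$, giving $R^{N}_{N+1}{}^{*}R^{N}_{N+1}=\1_{\fh_{N}}$. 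Asymptotic compatibility (item 4) is the most transparent step: applying $R^{N+1}_{\infty}$ of \eqref{eq:waveletrg1pinf} to $R^{N}_{N+1}\delta^{(N)}_{x}$ yields $\sum_{n}h_{n}s^{(\vep_{N+1})}(\cdot-x-\vep_{N+1}n)$, which is exactly $s^{(\vep_{N})}(\cdot-x)=R^{N}_{\infty}\delta^{(N)}_{x}$ by the $2L$-periodized scaling equation \eqref{eq:scalingeqre}; holding on a basis, the identity $R^{N+1}_{\infty}\circ R^{N}_{N+1}=R^{N}_{\infty}$ holds on all of $\fh_{N,\pm}$.

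The two "moreover" assertions are handled in the same spirit. Isometry of $R^{N}_{\infty}$, i.e.\ $R^{N}_{\infty}{}^{*}R^{N}_{\infty}=\1_{\fh_{N}}$, reduces to orthonormality of the translates of the periodized rescaled scaling function, $\langle s^{(\vep_{N})}(\cdot-x),s^{(\vep_{N})}(\cdot-y)\rangle_{L^{2}(S^{1}_{L})}=\delta_{x,y}$ for $x,y\in\Lambda_{N}$, which is the $2L$-periodized form of $\int_{\R}s(u)\overline{s(u-k)}\,du=\delta_{k,0}$. For the identification $\fh_{\infty,\pm}\isom L^{2}(S^{1}_{L})_{\pm}\otimes\C^{2}$ I would observe that the family $\{R^{N}_{\infty}\}$ furnishes, by item 4, a coherent system of isometric embeddings whose images $V_{N}=\im R^{N}_{\infty}$ are the nested multiresolution subspaces generated by $s^{(\vep_{N})}$; the inductive-limit space $\varinjlim_{N}\fh_{N,\pm}$ is therefore the closure of $\bigcup_{N}V_{N}$, so the claim is precisely the density $\overline{\bigcup_{N}V_{N}}=L^{2}(S^{1}_{L})_{\pm}\otimes\C^{2}$.

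I expect this last density step to be the only genuine obstacle: items 1--4 and the isometry of $R^{N}_{\infty}$ are purely algebraic consequences of the filter identities, whereas density is the analytic heart of the multiresolution analysis and is where the regularity of the Daubechies scaling function and the correct treatment of the periodic ($+$) versus anti-periodic ($-$) boundary conditions enter. I would dispatch it by invoking the standard completeness of the multiresolution analysis on the line, $\overline{\bigcup_{N}V_{N}}=L^{2}(\R)$, and transporting it to the circle through the $2L$-periodization \eqref{eq:rescaledsf}, checking that the sign twist distinguishing the two sectors is compatible with the periodization, exactly as in the scalar-field treatment of \cite{MorinelliScalingLimitsOf}.
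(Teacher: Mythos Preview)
Your proposal is correct and matches the paper's approach: the paper does not give an explicit proof of this proposition but states that the properties ``follow immediately from the properties of a (compactly supported) scaling function and its associated wavelet basis,'' citing \cite{MeyerWaveletsAndOperators, DaubechiesTenLecturesOn, MorinelliScalingLimitsOf}. Your reduction of items 1--4 and the isometry of $R^{N}_{\infty}$ to the filter orthonormality relation and the periodized scaling equation, together with your identification of the density $\overline{\bigcup_{N}V_{N}}=L^{2}(S^{1}_{L})_{\pm}\otimes\C^{2}$ as the MRA completeness statement to be transported from the line via periodization, is exactly the content of those references.
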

For our purposes it is convenient to explicitly derive the form of the wavelet renormalization groups in the momentum-space representation via the Fourier transform, cf. \eqref{eq:lattF} \& \eqref{eq:contF}:
\begin{align}
\label{eq:waveletrg1pF}
R^{N}_{N+1}(\hat{\xi})_{k} & = 2^{\frac{1}{2}}m_{0}(\vep_{N+1}k)(\hat{\xi}_{\per})_{k}, & R^{N}_{\infty}(\hat{\xi})_{k} & = \vep_{N}^{\frac{1}{2}}\hat{s}(\vep_{N}k)(\hat{\xi}_{\per})_{k}.
\end{align}
where $m_{0}(\vep_{N+1}k) = 2^{-\frac{1}{2}}\sum_{n\in\Z}h_{n}e^{-ik\vep_{N+1}n}$ (considered as periodic on $\Gamma_{N+1,{\color{blue}\pm}}$), and $\hat{\xi}_{\per}$ denotes the periodic extension of $\hat{\xi}$ from $\Gamma_{N,{\color{blue}\pm}}$ to $\Gamma_{N+1,{\color{blue}\pm}}$ respectively $\Gamma_{\infty,{\color{blue}\pm}}$. The asymptotic compatibility and various convergence results, as we show in Section \ref{sec:approxconf}, turn out to be a consequence of (applying the Fourier transform to \eqref{eq:scalingeq}):
\begin{align}
\label{eq:infprod}
\hat{s}(\vep_{N}\!\ .\!\ ) & = m_{0}(\vep_{N+1}\!\ .\!\ )\hat{s}(\vep_{N+1}\!\ .\!\ ) = \lim_{M\rightarrow\infty}\prod_{n=1}^{M}m_{0}(\vep_{N+n}\!\ .\!\ ),
\end{align}
where the infinite product converges absolutely and uniformly on compact sets \cite{DaubechiesTenLecturesOn}. But, the convergence also holds in arbitrary Sobolev-type subspaces of $L^{2}(S^{1}_{L})$ provided the scaling function is regular enough, $s\in C^{\alpha}(\R)$, $\alpha>0$, e.g.~those of the Daubechies family, see Lemma \ref{lem:convergence} below. 

As we are not only interested in the scaling limit of the fermions, $\fA_{N,{\color{blue}\pm}}$ and $\fB_{N,{\color{blue}\pm}}$, but also that of implementers of Bogoliubov transformations and their generators, such as $dF_{S}(G)$ and $dQ_{S}(Q)$ given in \eqref{eq:secondquantarakino} \& \eqref{eq:secondquantarakinosdc}, we introduce a modified version of the maps $R^{N}_{N+1}$ adapted to the fact that the generators are quadratic in the fermions (i.e.~\textit{currents}). To this end, consider the spaces,
\begin{align}
\label{eq:vectcurrents}
\fl(\Lambda_{N},\C^{D}) & = \{X:\Lambda_{N}\rightarrow\C^{D}\} \cong \fl(\Gamma_{N,{\color{blue}+}},\C^{D}),
\end{align}
of $\C^{D}$-valued differential loops (periodic boundary conditions) together with the obvious extension of the (lattice) Fourier transform.
\begin{defn}[Wavelet renormalization for differential loops]
\label{def:waveletrgcur}
Given a compactly supported orthonormal scaling function $s\in L^{2}(\R)$ as in Definition \ref{def:waveletrg}. The wavelet renormalization group for $\C^{D}$-valued loops is defined by:
\begin{align}
\label{eq:waveletrgcur}
S^{N}_{N+1}(\hat{X}^{(j)})_{k} & = 2 m_{0}(\vep_{N+1}k)(\hat{X}^{(j)}_{\per})_{k}, & S^{N}_{\infty}(\hat{X}^{(j)})_{k} & = \vep_{N}\hat{s}(\vep_{N}k)(\hat{X}^{(j)}_{\per})_{k},
\end{align}
for $X\in\fl(\Lambda_{N},\C^{D})$.
\end{defn}
In view of Proposition \ref{prop:waveletrg}, we note that the wavelet renormalization group for currents also satisfies asymptotic compatibility:
\begin{align}
\label{eq:waveletrgcurcomp}
S^{N+1}_{\infty}\circ S^{N}_{N+1} & = S^{N}_{\infty},
\end{align}
but we refrain from formulating an associated (topological) inductive limit and simply note that $\im S^{N}_{\infty}\subset \fl^{\alpha}(S^{1}_{L},\C^{D})$ is a finite-dimensional subspace of differential $\C^{D}$-valued $C^{\alpha}$-loops associated with the span of $s^{(\vep_{N})}$ and its $\Lambda_{N}$-translates. 

Clearly, the difference between $R^{N}_{\infty}$ and $S^{N}_{\infty}$ is the geometrical scaling factor $\vep_{N}^{\frac{1}{2}}$ respectively $\vep_{N}$ which reflects the fact that fermions are half densities while generators are densities.

\subsubsection{Quasi-local structure}
\label{sec:antiloc}
By analogy with case of lattice scalar fields, the wavelet renormalization group is compatible with the real-space anti-local structure of the fermions on the lattice and in the continuum. The simple reason for this compatibility is the finite length of the low-pass filter $\{h_{n}\}_{n\in\Z}$, which entails that according to \eqref{eq:waveletrg1p} and \eqref{eq:waveletrg} the fermion and Majorana algebras of a single lattice site at a given scale, $\fA_{N,{\color{blue}\pm}}(x)$ and $\fB_{N,{\color{blue}\pm}}(x)$, $x\in\Lambda_{N}$, generated by $a_{x}, a^{\dag}_{x}$ respectively $\Psi_{x}$, are mapped into localized algebras at the successive scale, $\fA_{N+1,\pm}(I_{x})$ and $\fB_{N+1,\pm}(I_{x})$, generated by $a_{y}, a^{\dag}_{y}$ respectively $\Psi_{y}$ for $y\in I_{x}\cap\Lambda_{N+1}$, where $I_{x}\subset S^{1}_{L}$ is an interval determined by the length of the low-pass filter. 

This observation allows for the following definition.
\begin{defn}
\label{def:antiloc}
Let $I\subset S^{1}_{L}$ be an open interval, and denote by $\Lambda_{N}\cap I$ denote the subset of those $x$ of intersection as subsets of $S^{1}_{L}$ such that $x+\supp(s^{(\vep_{N})})$ does not intersect the boundary $\partial I$. Then, we define the \emph{local one-particle Hilbert spaces} as:
\begin{align*}
\fh_{N,{\color{blue}\pm}}(I) & = \ltwo(\Lambda_{N}\cap I)_{{\color{blue}\pm}}\otimes\C^{2}\subset\fh_{N,{\color{blue}\pm}},
\end{align*}
where the inclusion results from extension by zero. The \emph{twisted-local fermion and Majorana algebras} are, thus, defined as:
\begin{align*}
\fA_{N,{\color{blue}\pm}}(I) & = \fA_{\CAR}(\fh_{N,{\color{blue}\pm}}(I)), & \fB_{N,{\color{blue}\pm}}(I) & = \fB_{\CAR}(\fh_{N,{\color{blue}\pm}}(I),C),
\end{align*}
which are considered to be subalgebras of $\fA_{N,{\color{blue}\pm}}$ and $\fB_{N,{\color{blue}\pm}}$.
\end{defn}
As an immediate consequence of this definition we have the following properties:
\begin{prop}
\label{prop:antilocal}
The local one-particle Hilbert spaces and twisted-local fermion and Majorana algebras form inductive systems with respect to the wavelet renormalization group. Specifically, we have:
\begin{align*}
\fh_{\infty,{\color{blue}\pm}}(I) & = \varinjlim_{N}\fh_{N,{\color{blue}\pm}}(I), & \fA_{\infty,{\color{blue}\pm}}(I) & = \varinjlim_{N}\fA_{N,{\color{blue}\pm}}(I), & \fB_{\infty,{\color{blue}\pm}}(I) & = \varinjlim_{N}\fB_{N,{\color{blue}\pm}}(I).
\end{align*}
Moreover, we recover the (twisted) quasi-local structure of $\fA_{\infty,{\color{blue}\pm}}$ and $\fB_{\infty,{\color{blue}\pm}}$:
\begin{align*}
& \fA_{\infty,{\color{blue}\pm}}(I)\subset\fA_{\infty,{\color{blue}\pm}}(I'), &  & \fB_{\infty,{\color{blue}\pm}}(I)\subset\fB_{\infty,{\color{blue}\pm}}(I'), & I\subset I', \\
& [\fA_{\infty,{\color{blue}\pm}}(I),\fA_{\infty,{\color{blue}\pm}}(I')] = \{0\}, &  & [\fB_{\infty,{\color{blue}\pm}}(I),\fB_{\infty,{\color{blue}\pm}}(I')] = \{0\}, & I\cap I' & = \emptyset, \\
& \fA_{\infty,{\color{blue}\pm}} = \overline{\bigcup_{I\subset S^{1}_{L}}\fA_{\infty,{\color{blue}\pm}}(I)}, &  & \fB_{\infty,{\color{blue}\pm}} = \overline{\bigcup_{I\subset S^{1}_{L}}\fB_{\infty,{\color{blue}\pm}}(I)},
\end{align*}
where $[\!\ ,\!\ ]$ in the second line is the $\Z_{2}$-graded or twisted commutator with respect to the grading given by the parity operator $(-1)^{F}$ \cite{BratteliOperatorAlgebrasAnd1}.
\end{prop}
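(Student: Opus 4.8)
The plan is to reduce every assertion to a statement about the one-particle spaces $\fh_{N,\pm}(I)$ and then transport it to the $C^{*}$-algebraic level via the functoriality of the $\CAR$ (and $\SDC$) construction, which turns an inductive system of Hilbert spaces with \emph{isometric} connecting maps into an inductive system of $C^{*}$-algebras with injective $*$-homomorphisms and commutes with inductive limits (exactly as used already in \eqref{eq:indlimF}). The single geometric input is the \textbf{finite length of the low-pass filter} $\{h_{n}\}$: writing the filter support as $\{N_{1},\dots,N_{2}\}$, one has $\supp(s^{(\vep_{N})})=[\,N_{1}\vep_{N},N_{2}\vep_{N}\,]$, and the rescaled scaling equation \eqref{eq:scalingeqre} then gives the \emph{support-nesting} property $\vep_{N+1}n+\supp(s^{(\vep_{N+1})})\subset\supp(s^{(\vep_{N})})$ for every $n$ with $h_{n}\neq0$. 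Thus supports only shrink under refinement, and a site localized strictly inside $I$ stays localized inside $I$ at every finer scale.

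First I would prove the inductive-system claim. By \eqref{eq:waveletrg1p}, $R^{N}_{N+1}(\delta^{(N)}_{x,j})$ is supported on the sites $x+\vep_{N+1}n$ with $h_{n}\neq0$. If $x\in\Lambda_{N}\cap I$, i.e. $x+\supp(s^{(\vep_{N})})\subset I$ in the sense of Definition \ref{def:antiloc}, the support-nesting property yields $x+\vep_{N+1}n+\supp(s^{(\vep_{N+1})})\subset x+\supp(s^{(\vep_{N})})\subset I$, so each image site lies in $\Lambda_{N+1}\cap I$; hence $R^{N}_{N+1}\big(\fh_{N,\pm}(I)\big)\subset\fh_{N+1,\pm}(I)$. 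Since $R^{N}_{N+1}$ is isometric and asymptotically compatible (Proposition \ref{prop:waveletrg}), its restrictions form an inductive system whose limit is realized by $\fh_{\infty,\pm}(I):=\overline{\bigcup_{N}R^{N}_{\infty}(\fh_{N,\pm}(I))}\subset\fh_{\infty,\pm}$; applying $\CAR$/$\SDC$ functoriality produces the three inductive-limit identities. Isotony in $I$ is then immediate: for $I\subset I'$ the compactly supported translate $x+\supp(s^{(\vep_{N})})$, lying in the open set $I$, also lies in $I'$, so $\Lambda_{N}\cap I\subset\Lambda_{N}\cap I'$ as index sets, whence $\fh_{N,\pm}(I)\subset\fh_{N,\pm}(I')$ and the inclusion passes to the limit and to the algebras.

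The heart of the argument is locality. By \eqref{eq:waveletrg1pinf} every element of $R^{N}_{\infty}(\fh_{N,\pm}(I))$ is a linear combination of translates $s^{(\vep_{N})}_{x}$ with $x\in\Lambda_{N}\cap I$, each supported in $x+\supp(s^{(\vep_{N})})\subset I$; as the space of $L^{2}$-functions supported in $\overline{I}$ is closed, $\fh_{\infty,\pm}(I)\subset\{\,\xi\in\fh_{\infty,\pm}:\supp\xi\subset\overline{I}\,\}$. For disjoint $I,I'$ the closures meet at most in boundary points of measure zero, so $\fh_{\infty,\pm}(I)\perp\fh_{\infty,\pm}(I')$ in $\fh_{\infty,\pm}\cong L^{2}(S^{1}_{L})_{\pm}\otimes\C^{2}$. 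Orthogonality of the one-particle spaces forces $\{a(\xi),a^{\dagger}(\eta)\}=\langle\xi,\eta\rangle=0$ and $\{a(\xi),a(\eta)\}=0$ for $\xi\in\fh_{\infty,\pm}(I)$, $\eta\in\fh_{\infty,\pm}(I')$ by \eqref{eq:car}; the standard fact that $\CAR$ algebras over orthogonal subspaces \emph{graded}-commute then yields the vanishing of the $\Z_{2}$-graded commutator (and likewise for the self-dual generators $\Psi$). Density follows because each basic fermion maps to $\alpha^{N}_{\infty}(a(\delta^{(N)}_{x,j}))=a(s^{(\vep_{N})}_{x,j})$, which is localized in any interval containing $x+\supp(s^{(\vep_{N})})$ and hence lies in some $\fA_{\infty,\pm}(I)$; since such elements generate the dense subalgebra $\bigcup_{N}\alpha^{N}_{\infty}(\fA_{N,\pm})$ of $\fA_{\infty,\pm}$, we obtain $\fA_{\infty,\pm}=\overline{\bigcup_{I}\fA_{\infty,\pm}(I)}$.

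The step I expect to require the most care is the locality statement: one must track the compact-support property faithfully through the closure (so that $\fh_{\infty,\pm}(I)$ genuinely consists of functions supported in $\overline{I}$, not merely approximately so) and then invoke the correct \emph{twisted} notion of commutator, since fermion fields over orthogonal subspaces anticommute rather than commute and only the $\Z_{2}$-graded commutator with respect to the grading by $(-1)^{F}$ vanishes. The support-nesting property is the other delicate point, but it holds for any compactly supported orthonormal scaling function once one identifies $\supp s$ with the interval spanned by the filter support. Everything else is bookkeeping once $\CAR$ functoriality and Proposition \ref{prop:waveletrg} are in hand.
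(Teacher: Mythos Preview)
Your proposal is correct and follows precisely the approach the paper itself indicates. In fact the paper does not give a proof of Proposition~\ref{prop:antilocal} at all: it is introduced with the sentence ``As an immediate consequence of this definition we have the following properties'', and the only argument offered is the paragraph preceding Definition~\ref{def:antiloc}, which observes that the finite length of the low-pass filter forces $\alpha^{N}_{N+1}$ to map the single-site algebra $\fA_{N,{\color{blue}\pm}}(x)$ into a localized algebra $\fA_{N+1,{\color{blue}\pm}}(I_{x})$ at the next scale. Your write-up is a faithful and detailed unfolding of exactly this observation: the support-nesting property of the scaling equation gives $R^{N}_{N+1}(\fh_{N,{\color{blue}\pm}}(I))\subset\fh_{N+1,{\color{blue}\pm}}(I)$, functoriality of $\fA_{\CAR}$ and $\fA_{\SDC}$ passes this to the algebra level, and the remaining assertions (isotony, graded locality via orthogonality of one-particle subspaces with disjoint supports, density) are the standard consequences. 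There is nothing to compare here; you have supplied the proof the paper omits.
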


\subsubsection{A decay estimate}
\label{sec:convergence}
We provide decay estimate on the finite products , $\prod_{n=1}^{M}m_{0}(\vep_{N+n}k)$, that directly implies \eqref{eq:infprod} in Sobolev-type norms for compactly supported Daubechies scaling functions by dominated convergence. The proof adapts a multi-scale decomposition strategy in \cite{DaubechiesTenLecturesOn} used to obtain regularity estimates for such scaling functions to finite products. A similar albeit less general statement can be found in a recent article by one of the authors \cite{MorinelliScalingLimitsOf}.

\begin{lemma}
\label{lem:convergence}
Let $\phi =\!\ _{K}\phi$ be a compactly supported Daubechies scaling function with $K\geq2$, i.e.:
\begin{align}
\label{eq:m0factor}
_{K}m_{0}(l) & = \left(\tfrac{1+e^{-il}}{2}\right)^{K} {}_{K}\cL(l).
\end{align}
where ${}_{K}\cL$ is a certain trigonometric polynomial. Then, for $j\in\N$, and $j^{-1}M\in\N$:
\begin{align}
\label{eq:convergence}
\left|\chi_{[-2^{M}\pi,2^{M}\pi)}(l)\!\prod_{n=1}^{M}\!{}_{K}m_{0}(2^{-n}l)\right| & \!=\!\chi_{[-2^{M}\pi,2^{M}\pi)}(l)\max\{e^{C}\!,\!\pi^{-\cK_{j}}\}\!\!\left(\tfrac{|\sin(\frac{1}{2}l)|}{|l|}\right)^{\!K}\!(1+|l|)^{\cK_{j}},
\end{align}
where $\cK_{j} = j^{-1}\log_{2}(q_{j})$, $q_{j}=\sup_{l\in\R}\prod_{n=0}^{j-1}{}_{K}\cL(2^{-n}l)$, and $|{}_{K}\cL(l)|\leq 1+C|l|$.
\end{lemma}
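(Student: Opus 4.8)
The plan is to exploit the factorization \eqref{eq:m0factor} to split the finite product into a purely trigonometric piece, coming from the $\bigl(\tfrac{1+e^{-il}}{2}\bigr)^{K}$ factors, and a remainder piece built from the trigonometric polynomial ${}_{K}\cL$, and then to bound the two pieces by completely different mechanisms. The first piece is controlled \emph{exactly}, producing the Fej\'er-type decay $\bigl(\tfrac{|\sin(\frac12 l)|}{|l|}\bigr)^{K}$; the second piece is controlled by a block (multi-scale) argument producing the subpolynomial growth $(1+|l|)^{\cK_{j}}$, which is the device by which one converts a uniform bound on $j$ consecutive mask factors into the sharp Sobolev/H\"older exponent (this is the finite-$M$ analogue of the regularity estimates in \cite{DaubechiesTenLecturesOn}). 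Throughout I would assume $l$ lies in the support $[-2^{M}\pi,2^{M}\pi)$, so that $|2^{-M-1}l|<\tfrac{\pi}{2}$; the final bound then follows for all $l$ after reinstating the characteristic function. The consequence \eqref{eq:infprod} in Sobolev-type norms is obtained afterwards by dominated convergence.

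For the trigonometric piece, I would write $\bigl|\tfrac{1+e^{-i2^{-n}l}}{2}\bigr| = |\cos(2^{-n-1}l)|$ and invoke the telescoping identity $\prod_{n=1}^{M}\cos(2^{-n}\theta) = \tfrac{\sin\theta}{2^{M}\sin(2^{-M}\theta)}$ with $\theta = \tfrac12 l$, giving
\begin{align*}
\Bigl|\prod_{n=1}^{M}\bigl(\tfrac{1+e^{-i2^{-n}l}}{2}\bigr)^{K}\Bigr| & = \Bigl(\frac{|\sin(\tfrac12 l)|}{2^{M}|\sin(2^{-M-1}l)|}\Bigr)^{K}.
\end{align*}
On the support the argument $2^{-M-1}l$ lies in $[-\tfrac\pi2,\tfrac\pi2)$, so Jordan's inequality $\tfrac{2}{\pi}|x|\le|\sin x|\le|x|$ converts $2^{M}|\sin(2^{-M-1}l)|$ into $|l|$ up to the fixed constant $\pi$, yielding $\bigl(\tfrac{|\sin(\frac12 l)|}{|l|}\bigr)^{K}$ times a numerical factor that I carry into the final prefactor.

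For the ${}_{K}\cL$-piece I would use the hypotheses $M/j\in\N$ and $|{}_{K}\cL(l)|\le 1+C|l|$ together with $q_{j}=\sup_{l}\prod_{n=0}^{j-1}{}_{K}\cL(2^{-n}l)$. Partition $\{1,\dots,M\}$ into $M/j$ consecutive blocks of length $j$; writing the $b$-th block as $\prod_{m=1}^{j}|{}_{K}\cL(2^{-m}(2^{-jb}l))|$ and using the scale-invariance of the supremum shows each block is bounded by $q_{j}=2^{j\cK_{j}}$. For fixed $l$ I would then pick the threshold block $b_{0}$ minimal with $2^{jb_{0}}\ge|l|/\pi$ (the support bound $|l|<2^{M}\pi$ ensures $b_{0}\le M/j$, up to an irrelevant boundary block). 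I bound the $b_{0}$ low-index blocks crudely by $q_{j}^{b_{0}}=(2^{jb_{0}})^{\cK_{j}}$, which by minimality of $b_{0}$ is $\lesssim\pi^{-\cK_{j}}(1+|l|)^{\cK_{j}}$, and I bound the remaining (high-index) factors by the refined estimate $\prod_{n>jb_{0}}(1+C2^{-n}|l|)\le\exp\bigl(C\,2^{-jb_{0}}|l|\bigr)$, a quantity of order $e^{C}$ since $2^{-jb_{0}}|l|=O(1)$ by the choice of $b_{0}$. This is precisely where the growth $(1+|l|)^{\cK_{j}}$, the constant $\pi^{-\cK_{j}}$, and the constant $e^{C}$ in \eqref{eq:convergence} originate; multiplying the two pieces and collecting all numerical factors into $\max\{e^{C},\pi^{-\cK_{j}}\}$ finishes the estimate.

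The main obstacle is the $\cL$-piece: driving the exponent down to $\cK_{j}=j^{-1}\log_{2}q_{j}$, rather than to the crude $\log_{2}\sup_{l}|{}_{K}\cL(l)|$, requires the $l$-adapted threshold $b_{0}$ placed at a block boundary, and one must check that the number of large-argument blocks is $\asymp\log_{2}|l|$ \emph{uniformly in} $M$, so that the bound does not degenerate as $M\to\infty$. Aligning this $l$-dependent cut with the fixed block partition is exactly what the divisibility assumption $M/j\in\N$ and the frequency cutoff $|l|<2^{M}\pi$ are engineered to make clean. The only remaining work is then the routine bookkeeping of the sinc, $\pi$, and exponential constants into the single prefactor $\max\{e^{C},\pi^{-\cK_{j}}\}$.
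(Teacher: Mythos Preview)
Your proposal is correct and follows essentially the same route as the paper: factor $\prod_{n=1}^{M}{}_{K}m_{0}(2^{-n}l)$ via \eqref{eq:m0factor}, collapse the cosine part exactly by the telescoping identity $\prod_{n=1}^{M}\cos(2^{-n-1}l)=\tfrac{\sin(\frac12 l)}{2^{M}\sin(2^{-M-1}l)}$, and bound the ${}_{K}\cL$-part by grouping the $M$ factors into $M/j$ blocks, bounding the blocks with large argument by $q_{j}$ each and the tail (small-argument) factors via $|{}_{K}\cL(l)|\le 1+C|l|$ and a geometric sum in the exponent. The paper organises the $\cL$-estimate as three explicit cases ($|l|\le 1$; $2^{j(M'-1)}<|l|\le 2^{jM'}$ with $M'\le M/j$; $2^{M}<|l|\le 2^{M}\pi$) rather than via a single threshold $b_{0}$, but this is only a cosmetic difference in bookkeeping; the two arguments are the same.
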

\begin{proof}
Clearly, $_{K}\cL(0)=1$ and, thus, $|{}_{K}\cL(l)|\leq 1+C|l|$. Put, ${}_{K}\cL_{j}(l) \!=\! \prod_{n=0}^{j-1}{}_{K}\cL(2^{-n}l)$ such that $\prod_{m=0}^{j^{-1}M-1}{}_{K}\cL_{j}(2^{-(mj+1)}l) = \prod_{n=1}^{M}{}_{K}\cL(2^{-n}l)$ and $q_{j} = \sup_{l\in\R}|{}_{K}\cL_{j}(l)|$. Now:
\begin{itemize}
	\item[1.] Assume $|l|\leq1$:
	\begin{align*}
	\prod_{m=0}^{j^{-1}M-1}|{}_{K}\cL_{j}(2^{-mj+1}l)| & \leq e^{C(1-2^{-m})|l|}\leq e^{C}.
	\end{align*}
	\item[2.] Assume $2^{j(M'-1)}<|l|\leq2^{jM'}$ for some $M'\leq j^{-1}M$:
	\begin{align*}
	\prod_{m=0}^{j^{-1}M-1}|{}_{K}\cL_{j}(2^{-(mj+1)}l)| & = \prod_{m=0}^{M'-1}|{}_{K}\cL_{j}(2^{-(mj+1)}l)|\prod_{n=M'}^{j^{-1}M-1}|{}_{K}\cL_{j}(2^{-(nj+1)}l)| \\
	& = \prod_{m=0}^{M'-1}|{}_{K}\cL_{j}(2^{-(mj+1)}l)|\prod_{n=0}^{j^{-1}M-M'-1}|{}_{K}\cL_{j}(2^{-(nj+1)}\underbrace{2^{-jM'}l}_{|\!\ \cdot\!\ |\leq1})| \\
	& \leq e^{C} q_{j}^{M'} = e^{C}2^{jM'\cK_{j}} \leq e^{C}(1+|l|)^{\cK_{j}}.
	\end{align*}
	\item[3.] Assume $2^{M}<|l|\leq2^{M}\pi$:
	\begin{align*}
	\prod_{m=0}^{j^{-1}M-1}|{}_{K}\cL_{j}(2^{-(mj+1)}l)| & \leq q_{j}^{j^{-1}M} = 2^{M\cK_{j}}\leq \pi^{-\cK_{j}}(1+|l|)^{\cK_{j}}.
	\end{align*}
\end{itemize}
Thus, we have $\prod_{n=1}^{M}|{}_{K}\cL(2^{-n}l)| = \prod_{m=0}^{j^{-1}M-1}|{}_{K}\cL_{j}(2^{-(mj+1)}l)| \leq \max\{e^{C},\pi^{-\cK_{j}}\}(1+|l|)^{\cK_{j}}$ for $|l|\leq 2^{M}\pi$. In combination with the formula $\prod_{n=1}^{M}\Big|\tfrac{1-e^{-i2^{-n}l}}{2}\Big| = \prod_{n=1}^{M}|\cos(\tfrac{1}{2}2^{-n}l)| = \Big|\tfrac{2\sin(\frac{1}{2}l)}{2^{M+1}\sin(2^{-(M+1)l})}\Big|$ the result follows.
\end{proof}
It is known that $\cK_{2} \leq (K-1)(2-\tfrac{3}{4}\log_{2}(3)$ \cite{DaubechiesTenLecturesOn} implying:
\begin{align}
\label{eq:decay}
-K+\cK_{2} & \leq -(1+(K-1)\underbrace{(\tfrac{3}{4}\log_{2}(3)-1)}_{\sim 0.1887}) \stackrel{K\rightarrow\infty}{\longrightarrow} -\infty.
\end{align}
Thus, the decay in \eqref{eq:convergence} can be made arbitrarily fast in a polynomial sense by increasing $K$, and we note that $K_{j_{2}}\leq\cK_{j_{1}}+C\tfrac{j_{1}}{j_{2}}$ for $j_{1}<j_{2}$

For convenience, we state the following regularity estimate for the scaling function from \cite[Lemma 7.1.2]{DaubechiesTenLecturesOn}:
\begin{lemma}
\label{lem:decay}
Let $\phi =\!\ _{K}\phi$ be a compactly supported Daubechies scaling function with $K\geq2$, then it satisfies the following decay estimate:
\begin{align}
\label{eq:decaysf}
|\hat{s}(l)| & \leq C'(1+|l|)^{-K+\cK}
\end{align}
for $\cK = \inf_{j\in\N}\cK_{j}$. which implies $s\in C^{\alpha}(\R)$ if $\cK < K-(1+\alpha)$.
\end{lemma}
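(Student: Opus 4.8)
The plan is to obtain the pointwise bound \eqref{eq:decaysf} by letting $M\to\infty$ in the finite-product estimate \eqref{eq:convergence} of Lemma \ref{lem:convergence}, and then to read off the Hölder regularity from the classical Fourier criterion; essentially all the analytic work has already been done in Lemma \ref{lem:convergence}.

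First I would record the infinite-product representation of $\hat{s}$. Iterating the scaling equation \eqref{eq:scalingeq} in Fourier space, together with ${}_{K}m_{0}(0)=1$ (so that $\hat{s}(0)=1$), gives $\hat{s}(l)=\prod_{n=1}^{\infty}{}_{K}m_{0}(2^{-n}l)$, where the product converges absolutely and uniformly on compacta by \eqref{eq:infprod}. Now fix $l\in\R$ and take $M$ large enough that $|l|\leq 2^{M}\pi$, so that $\chi_{[-2^{M}\pi,2^{M}\pi)}(l)=1$. Since the right-hand side of \eqref{eq:convergence} is independent of $M$, letting $M\to\infty$ yields, for every $j\in\N$,
\[
|\hat{s}(l)|\leq\max\{e^{C},\pi^{-\cK_{j}}\}\Big(\tfrac{|\sin(\frac{1}{2}l)|}{|l|}\Big)^{K}(1+|l|)^{\cK_{j}}.
\]

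The second step converts the trigonometric prefactor into polynomial decay: from $|\sin(\tfrac{1}{2}l)|\leq\min\{1,\tfrac{1}{2}|l|\}$ one gets $|\sin(\tfrac{1}{2}l)|/|l|\leq C''(1+|l|)^{-1}$ with a universal $C''$, so the displayed bound becomes $|\hat{s}(l)|\leq C'_{j}(1+|l|)^{-K+\cK_{j}}$. Taking the best exponent over $j$ produces \eqref{eq:decaysf} with $\cK=\inf_{j\in\N}\cK_{j}$. For the regularity claim, if $\cK<K-(1+\alpha)$ then by definition of the infimum there is a $j$ with $\cK_{j}<K-(1+\alpha)$, hence $-K+\cK_{j}<-(1+\alpha)$ and therefore $\int_{\R}(1+|l|)^{\alpha}|\hat{s}(l)|\,dl<\infty$; the finiteness of this weighted $L^{1}$-norm implies $s\in C^{\alpha}(\R)$ via the inverse Fourier transform and the elementary estimate $|e^{ilx}-e^{ily}|\leq 2^{1-\alpha}|l|^{\alpha}|x-y|^{\alpha}$.

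I expect the only genuinely delicate points to be the two facts I am importing rather than the combinatorics: first, that the $M$-uniform bound of Lemma \ref{lem:convergence} survives the limit $M\to\infty$ for each fixed $l$ (immediate once one fixes $l$ and then chooses $M$, but relying on the absolute convergence in \eqref{eq:infprod}); and second, the passage from integrability of $(1+|l|)^{\alpha}\hat{s}$ to membership in $C^{\alpha}$, which for non-integer $\alpha$ is the standard Fourier characterization of Hölder--Zygmund spaces and for which one must be slightly careful with the borderline behaviour. Both are classical \cite{DaubechiesTenLecturesOn, MeyerWaveletsAndOperators}, so the proof amounts to assembling them around Lemma \ref{lem:convergence}.
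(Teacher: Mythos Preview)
The paper does not actually give its own proof of this lemma: it is introduced with ``For convenience, we state the following regularity estimate for the scaling function from \cite[Lemma 7.1.2]{DaubechiesTenLecturesOn}'' and no argument follows. Your proposal is therefore not competing with a proof in the paper but supplying one, and the route you take---pass to the limit $M\to\infty$ in Lemma~\ref{lem:convergence} along multiples of $j$, absorb $(\,|\sin(\tfrac{1}{2}l)|/|l|\,)^{K}$ into $(1+|l|)^{-K}$, and then invoke the Fourier criterion for $C^{\alpha}$---is exactly the standard Daubechies argument, so substantively there is nothing to contrast.

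One point to tighten: from $|\hat{s}(l)|\leq C'_{j}(1+|l|)^{-K+\cK_{j}}$ for each $j$ you cannot literally ``take the best exponent'' to obtain a single constant $C'$ at the infimum $\cK=\inf_{j}\cK_{j}$, since $C'_{j}$ depends on $j$ and the bound with exponent $\cK_{j}$ is \emph{weaker} than the one with $\cK$ when $\cK_{j}>\cK$. What your argument actually yields is the family of bounds indexed by $j$, which is precisely what is needed for the regularity conclusion (and you use it correctly there, choosing a particular $j$ with $\cK_{j}<K-(1+\alpha)$). The single-line estimate \eqref{eq:decaysf} with the infimum in the exponent should be read in that spirit; if you want to state it cleanly, either replace $\cK$ by $\cK_{j}$ with $C'=C'_{j}$, or allow $C'=C'_{\varepsilon}$ with exponent $-K+\cK+\varepsilon$ for every $\varepsilon>0$.
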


\subsection{Momentum-cutoff renormalization group}
\label{sec:momcut}
A second renormalization group that we make heavy use of in Section \ref{sec:approxconf} implements a sharp cutoff in momentum space \cite{MorinelliScalingLimitsOf}: 
\begin{defn}[Momentum-cutoff renormalization group]
\label{def:momrg}
Let $\chi_{\Gamma_{N,{\color{blue}\pm}}}$ be the characteristic function of $\Gamma_{N,{\color{blue}\pm}}\subset\Gamma_{N+1,{\color{blue}\pm}}\subset\Gamma_{\infty,{\color{blue}\pm}}$, and let:
\begin{align}
\label{eq:momrg1p}
R^{N}_{N+1}(\hat{\xi})_{k} & = 2^{\frac{1}{2}}\chi_{\Gamma_{N}}(k)(\hat{\xi}_{\per})_{k}, & R^{N}_{\infty}(\hat{\xi})_{k} & = \vep_{N}^{\frac{1}{2}}\chi_{\Gamma_{N}}(k)(\hat{\xi}_{\per})_{k},
\end{align}
for $\xi=(\xi^{(1)},\xi^{(2)})\in\fh_{N,{\color{blue}\pm}}$. The \emph{momentum-cutoff renormalization group} is densely defined by:
\begin{align}
\label{eq:momrg}
\alpha^{N}_{N+1}(\hat{a}(\hat{\xi})) & = \hat{a}(R^{N}_{N+1}(\hat{\xi})), & \alpha^{N}_{\infty}(\hat{a}(\hat{\xi})) & = \hat{a}(R^{N}_{\infty}(\hat{\xi})).
\end{align}
\end{defn}
The momentum-cutoff renormalization group enjoys the same properties as the wavelet renormalization group, cp.~Proposition \ref{prop:waveletrg}.
\begin{prop}
\label{prop:momrg}
The one-particle map of the momentum-cutoff renormalization group has the following properties: $R^{N}_{N+1}:\fh_{N,{\color{blue}\pm}}\rightarrow\fh_{N+1,{\color{blue}\pm}}$ is
\begin{itemize}
	\item[1.] well-defined,
	\item[2.] $C$-compatible,
	\item[3.] isometric, i.e.~ $R^{N}_{N+1}{}^{\!\!*}\circ R^{N}_{N+1} = \1_{\fh_{N}}$, 
	\item[4.] asymptotically compatible, i.e.~$R^{N+1}_{\infty}\circ R^{N}_{N+1} = R^{N}_{\infty}$.
\end{itemize}
Moreover, the inductive-limit Hilbert space agrees with that of the wavelet renormalization group, $\fh_{\infty,{\color{blue}\pm}}\cong L^{2}(S^{1}_{L})_{{\color{blue}\pm}}\otimes\C^{2}$, and $R^{N}_{\infty}{}^{*}\circ R^{N}_{\infty} = \1_{\fh_{N}}$.
\end{prop}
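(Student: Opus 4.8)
The plan is to carry out every verification in the momentum-space representation $\fh_{N,\pm}\cong\ltwo(\Gamma_{N,\pm},(2L_{N})^{-1})\otimes\C^{2}$ furnished by the lattice and continuum Fourier transforms \eqref{eq:lattF} and \eqref{eq:contF}. In that picture both maps of \eqref{eq:momrg1p} act as multiplication by the scalar cutoff $\chi_{\Gamma_{N}}$ times a constant ($2^{\frac{1}{2}}$ for $R^{N}_{N+1}$, $\vep_{N}^{\frac{1}{2}}$ for $R^{N}_{\infty}$), precomposed with periodic extension, so the whole statement reduces to elementary facts about these multiplication operators together with the nesting $\Gamma_{N,\pm}\subset\Gamma_{N+1,\pm}\subset\Gamma_{\infty,\pm}$ recorded in \eqref{eq:latmom} and \eqref{eq:1pcont}. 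I would present the proof as a short variant of the wavelet proof of Proposition \ref{prop:waveletrg}, with the filter symbol $m_{0}$ simply replaced by the sharp cutoff $\chi_{\Gamma_{N}}$.

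Well-definedness (1) is immediate: since $\Gamma_{N,\pm}\subset\Gamma_{N+1,\pm}$, the factor $\chi_{\Gamma_{N}}$ restricts the support of $R^{N}_{N+1}(\hat{\xi})$ to $\Gamma_{N,\pm}$, and extension by zero places the result in $\fh_{N+1,\pm}$; likewise $R^{N}_{\infty}(\hat{\xi})\in\fh_{\infty,\pm}$. For isometry (3), only the momenta $k\in\Gamma_{N,\pm}$ contribute, where $(\hat{\xi}_{\per})_{k}=\hat{\xi}_{k}$, and the normalization constants are exactly those forced by the rescaling of the counting measure. Using $L_{N+1}=2L_{N}$ and $L_{N}=L/\vep_{N}$ (both from $\vep_{N}L_{N}=L$, \eqref{eq:latsp}) one finds
\begin{align*}
\|R^{N}_{N+1}\hat{\xi}\|_{N+1}^{2} & = \tfrac{1}{2L_{N+1}}\sum_{k\in\Gamma_{N,\pm}}2\,\|\hat{\xi}_{k}\|_{\C^{2}}^{2} = \tfrac{1}{2L_{N}}\sum_{k\in\Gamma_{N,\pm}}\|\hat{\xi}_{k}\|_{\C^{2}}^{2} = \|\hat{\xi}\|_{N}^{2}, \\
\|R^{N}_{\infty}\hat{\xi}\|_{\infty}^{2} & = \tfrac{1}{2L}\sum_{k\in\Gamma_{N,\pm}}\vep_{N}\,\|\hat{\xi}_{k}\|_{\C^{2}}^{2} = \tfrac{1}{2L_{N}}\sum_{k\in\Gamma_{N,\pm}}\|\hat{\xi}_{k}\|_{\C^{2}}^{2} = \|\hat{\xi}\|_{N}^{2},
\end{align*}
so $R^{N}_{N+1}{}^{*}R^{N}_{N+1}=\1_{\fh_{N}}$ and $R^{N}_{\infty}{}^{*}R^{N}_{\infty}=\1_{\fh_{N}}$ by polarization, the latter being the final assertion of the proposition.

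For asymptotic compatibility (4) I would compose the two maps and collapse the nested cutoffs. Because $R^{N}_{N+1}(\hat{\xi})$ is already supported in $\Gamma_{N,\pm}$, the subsequent scale-$(N+1)$ periodic extension followed by $\chi_{\Gamma_{N+1}}$ acts as the identity on it, and with $\chi_{\Gamma_{N+1}}\chi_{\Gamma_{N}}=\chi_{\Gamma_{N}}$ and $(2\vep_{N+1})^{\frac{1}{2}}=\vep_{N}^{\frac{1}{2}}$ one gets
\begin{align*}
(R^{N+1}_{\infty}\circ R^{N}_{N+1})(\hat{\xi})_{k} & = \vep_{N+1}^{\frac{1}{2}}\chi_{\Gamma_{N+1}}(k)\,2^{\frac{1}{2}}\chi_{\Gamma_{N}}(k)(\hat{\xi}_{\per})_{k} = \vep_{N}^{\frac{1}{2}}\chi_{\Gamma_{N}}(k)(\hat{\xi}_{\per})_{k} = R^{N}_{\infty}(\hat{\xi})_{k}.
\end{align*}
The identification of the inductive limit then follows formally as in the wavelet case: since $\bigcup_{N}\Gamma_{N,\pm}=\Gamma_{\infty,\pm}$ (as $L_{N}\to\infty$) and the $R^{N}_{N+1}$ are isometries with nested ranges, $\varinjlim_{N}\fh_{N,\pm}$ is the completion $\ltwo(\Gamma_{\infty,\pm},(2L)^{-1})\otimes\C^{2}\cong L^{2}(S^{1}_{L})_{\pm}\otimes\C^{2}$, the same Hilbert space as for the wavelet group, with $R^{N}_{\infty}$ the canonical embedding.

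The only genuinely delicate point is $C$-compatibility (2), and I would flag it as the place where care is needed rather than as a deep obstacle. With $C\xi=\sigma_{z}\overline{\xi}$ from \eqref{eq:majconj}, the conjugation acts in momentum space as $\hat{\xi}_{k}\mapsto\sigma_{z}\overline{\hat{\xi}_{-k}}$; multiplication by the real scalar $2^{\frac{1}{2}}\chi_{\Gamma_{N}}(k)$ commutes with $\sigma_{z}$ and with complex conjugation precisely when $\chi_{\Gamma_{N}}(-k)=\chi_{\Gamma_{N}}(k)$. This reflection invariance holds verbatim in the Neveu--Schwarz sector, where $\Gamma_{N,-}$ is symmetric under $k\mapsto-k$, and in the Ramond sector everywhere except at the two reflection-fixed modes $k=0$ and $k=-\tfrac{\pi}{\vep_{N}}$; at those momenta one must invoke the symmetric convention already fixed in \eqref{eq:pmom} (together with the $\sign(0)$ conventions of Section~\ref{sec:latdirac}), exactly as in the self-dual decomposition. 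Apart from this edge bookkeeping the argument is entirely mechanical, the normalizations being forced by the measure rescalings above.
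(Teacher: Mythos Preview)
The paper does not supply a proof of this proposition at all; it is stated as the direct analogue of Proposition~\ref{prop:waveletrg}, with the tacit understanding that replacing the filter $m_{0}$ by the sharp indicator $\chi_{\Gamma_{N}}$ makes every check immediate. Your write-up therefore goes well beyond what the paper offers, and your verifications of items 1, 3, 4 and of the inductive-limit identification are clean and correct: the normalizations are exactly those forced by $L_{N+1}=2L_{N}$ and $\vep_{N}L_{N}=L$, and the nesting $\Gamma_{N,\pm}\subset\Gamma_{N+1,\pm}$ together with $\chi_{\Gamma_{N+1}}\chi_{\Gamma_{N}}=\chi_{\Gamma_{N}}$ gives asymptotic compatibility in one line.

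Your treatment of item 2 is where I would push back. You are right that the only content is the reflection symmetry $\chi_{\Gamma_{N}}(-k)=\chi_{\Gamma_{N}}(k)$, and that this holds verbatim in the Neveu--Schwarz sector. In the Ramond sector, however, the description is slightly off: the mode $k=0$ causes no trouble (it is genuinely reflection-fixed in $\Gamma_{N+1,+}$ and lies in $\Gamma_{N,+}$), whereas $k=-\tfrac{\pi}{\vep_{N}}$ is \emph{not} reflection-fixed once embedded in $\Gamma_{N+1,+}$ --- its image $+\tfrac{\pi}{\vep_{N}}$ is a distinct point of $\Gamma_{N+1,+}$ that lies outside $\Gamma_{N,+}$, so $\chi_{\Gamma_{N,+}}$ is literally asymmetric there. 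Invoking \eqref{eq:pmom} does not resolve this: that formula modifies the basis projection $P^{\pm}$ at the self-conjugate modes, not the renormalization map $R^{N}_{N+1}$. The honest statement is that strict $C$-compatibility in the Ramond sector requires either a symmetric choice of fundamental domain for $\Gamma_{N,+}\subset\Gamma_{N+1,+}$ (e.g.\ splitting the boundary mode between $\pm\tfrac{\pi}{\vep_{N}}$) or simply observing that this single edge mode is irrelevant for the applications in Section~\ref{sec:approxconf}, which in the Ramond sector proceed via the doubling \eqref{eq:doubleprojR} anyway. The paper glosses over this point entirely.
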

Let us also state the real-space form of the asymptotic one-particle maps of the momentum-cutoff renormalization group:
\begin{align}
\label{eq:momrg1preal}
R^{N}_{\infty}(\xi) & = \tfrac{1}{2L}\vep_{N}^{\frac{1}{2}}\sum_{x\in\Lambda_{N}}\xi_{x}\!\ e^{i\frac{\pi}{2L}(\!\ .\!\ -x)}\tfrac{\sin(\pi\vep_{N}^{-1}(\!\ .\!\ -x))}{\sin(\frac{\pi}{2L}(\!\ .\!\ -x)},
\end{align}
for $\xi\in\fh_{N,{\color{blue}\pm}}$. This makes it evident that images, $\alpha^{N}_{\infty}(a(\xi))$, of lattice-localized localized operators, $a(\xi)$, $\xi\in\fh_{N,{\color{blue}\pm}}(I)$, are not strictly localized in $I\subset S^{1}_{L}$, in contrast with the wavelet scaling maps.

\subsection{The scaling limit of lattice vacua}
\label{sec:latvaclim}
We are now in a position to determine the scaling limit of the family of lattice vacua $\{\omega^{(N)}_{0,{\color{blue}\pm}}\}_{N\in\N_{0}}$ according to \eqref{eq:rgflow}. Using either the wavelet or the momentum-cutoff renormalization group, we find:
\begin{align}
\label{eq:renvac}
\omega^{(N)}_{M,{\color{blue}\pm}}(a(\xi)a^{\dag}(\eta)) & = \langle R^{N}_{M}(\xi),(1-S^{(N+M)}_{0})R^{N}_{M}(\eta)\rangle_{N+M} \\ \nonumber
& = \tfrac{1}{2L_{N+M}}\sum_{k\in\Gamma_{N+M,{\color{blue}\pm}}}\langle R^{N}_{M}(\hat{\xi})_{k},P^{(+)}_{\lambda_{N+M}}(k)R^{N}_{M}(\hat{\eta})_{k}\rangle_{\C^{2}},
\end{align}
such that the scaling-limit states $\omega^{(N)}_{\infty,{\color{blue}\pm}}$, $N\in\N_{0}$, is well-defined if we impose the \textit{renormalization condition},
\begin{align}
\label{eq:rgcond}
\lim_{N\rightarrow\infty}\vep_{N}^{-1}\lambda_{N} & = m,
\end{align}
for some $m\geq0$ (the \textit{physical mass} of the continuum fermion):
\begin{align}
\label{eq:scalinglim}
\omega^{(N)}_{\infty,{\color{blue}\pm}}(a(\xi)a^{\dag}(\eta)) & = \tfrac{1}{2L}\sum_{k\in\Gamma_{\infty,{\color{blue}\pm}}}\langle R^{N}_{\infty}(\hat{\xi})_{k},P^{(+)}_{m} R^{N}_{\infty}(\hat{\eta})_{k}\rangle_{\C^{2}},
\end{align}
where $P_{m}(k) = (2\omega_{m}(k))^{-1}(\omega_{m}(k)\mathds{1}_{2}+n_{m}(k)\cdot\sigma)$ with $\omega_{m}(k)=(m^{2}+k^{2})^{\frac{1}{2}}$ and $n_{m}(k) = k e_{y} + m e_{z}$.
If we use the Majorana mass term \eqref{eq:majmass} instead of the Dirac mass term, $P^{(+)}_{m}$ instead takes the form:
\begin{align}
\label{eq:majprojlim}
P^{(+)}_{m}(k) = (2\omega_{m}(k))^{-1}(\omega_{m}(k)\mathds{1}_{2} + n_{m}(k)\cdot\sigma),
\end{align}
with  $n_{m}(k) = k e_{y} + m e_{x}$, which satisfies: $CP^{(+)}_{m} = (\mathds{1}-P^{(+)}_{m})C$ for all $m\geq0$. In this case, the scaling limit is also well-defined as a state on $\fB_{\infty,{\color{blue}\pm}}$:
\begin{align}
\label{eq:scalinglimsdc}
\omega^{(N)}_{\infty,{\color{blue}\pm}}(\Psi(\xi)\Psi(\eta)^{*}) & = \tfrac{1}{2L}\sum_{k\in\Gamma_{\infty,{\color{blue}\pm}}}\langle R^{N}_{\infty}(\hat{\xi})_{k},P^{(+)}_{m} R^{N}_{\infty}(\hat{\eta})_{k}\rangle_{\C^{2}}.
\end{align}

The one-particle operator $P^{(+)}_{0}$ of the massless scaling-limit state is directly related to the chiral projections $p_{\pm}$, cp.~\eqref{eq:chiral1p}:
\begin{align}
\label{eq:masslesslimproj}
P^{(+)}_{0}(k) & = \tfrac{1}{2}(\mathds{1}_{2}+\sign(k)\sigma_{y}) = p_{\sign(k)},
\end{align}
which results in the following expression of the scaling limit on the chiral algebra $\fA^{(\pm)}_{N,{\color{blue}\pm}}$ (and again similar for $\fB^{(\pm)}_{N,{\color{blue}\pm}}$),
\begin{align}
\label{eq:chiral2plimit}
\omega^{(N)}_{\infty,{\color{blue}\pm}}(\psi_{\pm}(\xi)\psi_{\pm}^{\dagger}(\eta)) & = \tfrac{1}{2L}\sum_{k\in\Gamma_{\infty,{\color{blue}\pm}}}\tfrac{1}{2}(1\pm\sign(k))\overline{R^{N}_{\infty}(\xi)_{k}}R^{N}_{\infty}(\eta)_{k}, \\ \nonumber
\omega^{(N)}_{\infty,{\color{blue}\pm}}(\psi_{\pm}(\xi)\psi_{\mp}^{\dagger}(\eta)) & = 0,
\end{align}
such that the chiral parts decouple. As expected, we observe that the the scaling limit of the massless lattice splits into independent components relative to the chiral decomposition \eqref{eq:chiral1p}, and \eqref{eq:chiral2plimit} is related to the projection, cf. \eqref{eq:pmom},
\begin{align}
\label{eq:hardyproj}
P^{+} & : L^{2}(S^{1}_{L}) \longrightarrow H^{2}(S^{1}_{L}),
\end{align}
onto Hardy space and its complement $P^{-} = 1-P^{+}$. Inverting the Fourier transform in \eqref{eq:chiral2plimit}, we obtain a real-space expression that is seen to approximate the standard infinite-volume two-point function in the limit $L\rightarrow\infty$ \cite{AbdallaNonPerturbativeMethods}:
\begin{align}
\label{eq:chiral2preal}
\omega^{(N)}_{\infty,{\color{blue}\pm}}(\psi_{\pm}(\xi)\psi_{\pm}^{\dagger}(\zeta)) & = \mp\int_{S^{1}_{L}}dx\!\ \int_{S^{1}_{L}}dx’\!\ (2L)^{-1}\tfrac{\overline{\xi\ast_{\Lambda_{N}}s^{(\vep_{N})}(x)}\zeta\ast_{\Lambda_{N}}s^{(\vep_{N})}(x’)}{e^{i\frac{\pi}{L}(x-x’\pm i0^{+})}-1} \\ \nonumber
& \rightarrow \pm\tfrac{i}{2\pi}\int_{\RR}dx\!\ \int_{\RR}dx’\!\ \frac{\overline{\xi\ast_{\Lambda_{N}}s^{(\vep_{N})}(x)}\zeta\ast_{\Lambda_{N}}s^{(\vep_{N})}(x’)}{x-x’\pm i0^{+}}.
\end{align}

Now that we have established sensible candidates, \eqref{eq:scalinglim} \& \eqref{eq:scalinglimsdc}, for the scaling limits of the families of lattice vacua, we establish the convergence to these limits for the wavelet renormalization group using the results of Section \ref{sec:convergence}.

\begin{lemma}
\label{lem:stateconv}
Let $\{\omega^{(N)}\}_{N\in\N_{0}}$ be a family of quasi-free lattice states, each defined by an operator $0\leq S^{(N)}\leq1$ on $\fh_{N,{\color{blue}\pm}}$ ($CS^{(N)} = (1-S^{(N)})C$ in the self-dual case). Assume that for all $N\in\N_{0}$ the kernel of $S^{(N)}$ is diagonal in momentum space and converges point-wise to the kernel of an operator $0\leq S\leq 1$ ($CS = (1-S)C$),
\begin{align}
\label{eq:kernelpointlim}
\lim_{N\rightarrow\infty}S^{(N)}(k)_{ij} & = S(k)_{ij}, & k\in\Gamma_{N,{\color{blue}\pm}}, i,j=1,2,
\end{align}
where $\langle e_{k,i}, S^{(N)}e_{l,j}\rangle_{N} = 2L_{N}\delta_{k,l}S^{(N)}(k)_{ij}$ and $\langle e_{k,i}, S e_{l,j}\rangle_{\infty}= 2L\delta_{k,l}S(k)_{ij}$. Then, the renormalization group flow \eqref{eq:rgflow} of $\omega^{(N)}$ converges to $\omega^{(N)}_{S} = \omega_{S}\circ\alpha^{N}_{\infty}$ :
\begin{align}
\label{eq:2pconvnorm}
\lim_{M\rightarrow\infty}\|\omega^{(N)}_{M}-\omega^{(N)}_{S}\| & = 0.
\end{align}
\end{lemma}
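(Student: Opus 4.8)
The plan is to reduce the claim to the convergence of the one-particle covariance operators and then invoke the explicit, continuous dependence of a quasi-free state on its covariance. Since $R^{N}_{N+M}$ and $R^{N}_{\infty}$ are isometries (Proposition \ref{prop:waveletrg}), pulling back the quasi-free state $\omega^{(N+M)}$ with covariance $S^{(N+M)}$ along the Bogoliubov transformation $\alpha^{N}_{N+M}$ again yields a quasi-free state on the finite-dimensional algebra $\fA_{N,\pm}$, namely $\omega^{(N)}_{M}$ with covariance
\begin{align*}
S^{(N)}_{M} & = R^{N}_{N+M}{}^{*} S^{(N+M)} R^{N}_{N+M},
\end{align*}
and likewise $\omega^{(N)}_{S}=\omega_{S}\circ\alpha^{N}_{\infty}$ is quasi-free with covariance $S^{(N)}_{\infty}=R^{N}_{\infty}{}^{*} S R^{N}_{\infty}$; both are operators on the finite-dimensional space $\fh_{N,\pm}$. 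Because $\fh_{N,\pm}$ is finite-dimensional, $\fA_{N,\pm}$ is finite-dimensional and the formulas \eqref{eq:carqfs} (resp.~\eqref{eq:sdcqfs} in the self-dual case) express $\omega^{(N)}_{M}$ and $\omega^{(N)}_{S}$ on every monomial as determinants (resp.~Pfaffians) that are polynomial, hence Lipschitz, in the entries of the covariance. Consequently the map $T\mapsto\omega_{T}$ is norm-continuous on the compact set $\{0\le T\le 1\}$, and it suffices to prove $\lim_{M\rightarrow\infty}\|S^{(N)}_{M}-S^{(N)}_{\infty}\|_{B(\fh_{N,\pm})}=0$.

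First I would pass to momentum space. By \eqref{eq:waveletrg1pF} the map $R^{N}_{N+M}$ acts as multiplication by $2^{M/2}\prod_{n=1}^{M}m_{0}(\vep_{N+n}k)$ followed by periodic extension from $\Gamma_{N,\pm}$, and $S^{(N+M)}$, $S$ are diagonal in momentum by hypothesis. A short computation then shows that both $S^{(N)}_{M}$ and $S^{(N)}_{\infty}$ are diagonal in $k\in\Gamma_{N,\pm}$, with matrix symbols
\begin{align*}
S^{(N)}_{M}(k) & = \sum_{\substack{p\in\Gamma_{N+M,\pm}\\ p\equiv k}} w_{M}(p)\,S^{(N+M)}(p), & S^{(N)}_{\infty}(k) & = \sum_{\substack{p\in\Gamma_{\infty,\pm}\\ p\equiv k}} w_{\infty}(p)\,S(p),
\end{align*}
the sums running over $p$ congruent to $k$ modulo $2\pi\vep_{N}^{-1}$, with weights $w_{M}(p)=c\,\big|\prod_{n=1}^{M}m_{0}(\vep_{N+n}p)\big|^{2}$ and $w_{\infty}(p)=c\,|\hat s(\vep_{N}p)|^{2}$ sharing the same $M$-independent normalization constant $c=\tfrac{\vep_{N}}{2L}$; by the isometry property $\sum_{p}w_{M}(p)=\sum_{p}w_{\infty}(p)=1$. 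For the momentum-cutoff group of Proposition \ref{prop:momrg} the filters are replaced by sharp cutoffs and these sums collapse to $S^{(N)}_{M}(k)=S^{(N+M)}(k)$ and $S^{(N)}_{\infty}(k)=S(k)$, so that the claim is immediate from \eqref{eq:kernelpointlim} over the finite set $\Gamma_{N,\pm}$.

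For the wavelet group I would fix $k$ and establish the termwise convergence of the symbols by dominated convergence on the countable index set $p\in k+2\pi\vep_{N}^{-1}\Z$. Writing $l=\vep_{N}p$ so that $\vep_{N+n}p=2^{-n}l$, the finite products $\prod_{n=1}^{M}m_{0}(2^{-n}l)$ converge pointwise to $\hat s(\vep_{N}p)$ by the infinite-product formula \eqref{eq:infprod}, whence $w_{M}(p)\to w_{\infty}(p)$; meanwhile $S^{(N+M)}(p)\to S(p)$ by the hypothesis \eqref{eq:kernelpointlim} and $\|S^{(N+M)}(p)\|\le 1$. The decay estimate of Lemma \ref{lem:convergence} supplies the required $M$-uniform majorant: on $|l|\le 2^{M}\pi$ one has $|\prod_{n=1}^{M}m_{0}(2^{-n}l)|\le C(1+|l|)^{-K+\cK_{j}}$, and for a sufficiently regular Daubechies scaling function (so that $-K+\cK_{j}$ is negative enough, i.e.~$s\in C^{\alpha}$) the envelope $(1+|l|)^{2(-K+\cK_{j})}$ is summable over $p\in k+2\pi\vep_{N}^{-1}\Z$. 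Dominated convergence then yields $S^{(N)}_{M}(k)\to S^{(N)}_{\infty}(k)$ for every $k$, and since $\Gamma_{N,\pm}$ is finite and the symbols are diagonal in $k$, this gives operator-norm convergence $\|S^{(N)}_{M}-S^{(N)}_{\infty}\|\to 0$, completing the argument; the self-dual case is identical, using \eqref{eq:sdcqfs} and the $C$-compatibility of $R^{N}_{N+M}$.

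The main obstacle is the joint passage to the limit in the momentum sums, where both the weights $\prod_{n=1}^{M}m_{0}$ and the symbols $S^{(N+M)}(p)$ depend on $M$ while the index set $\Gamma_{N+M,\pm}$ itself grows; the whole point of Lemma \ref{lem:convergence} is to produce a single summable majorant, uniform in $M$, that tames the high-momentum tail and legitimizes exchanging the limit with the (now genuinely infinite) sum. Everything else—the reduction to covariances and the continuity of $T\mapsto\omega_{T}$—is soft and uses only finite-dimensionality of $\fA_{N,\pm}$.
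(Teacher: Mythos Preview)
Your proof is correct and follows essentially the same strategy as the paper: reduce to convergence of the one-particle covariances on the finite-dimensional space $\fh_{N,\pm}$ (where weak$^{*}$ and norm convergence of states coincide), express these covariances in momentum space as weighted averages of the kernels $S^{(N+M)}(p)$ and $S(p)$, and conclude by dominated convergence using a wavelet decay estimate.

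The only technical difference is where the computation is carried out. You pull the covariances all the way back to $\fh_{N,\pm}$ and obtain the $M$-dependent weights $\big|\prod_{n=1}^{M}m_{0}(\vep_{N+n}p)\big|^{2}$, which is why you need the uniform bound on finite products from Lemma~\ref{lem:convergence} as your majorant. The paper instead lifts the two-point difference to $\fh_{\infty,\pm}$ via the asymptotic compatibility $R^{N+M}_{\infty}R^{N}_{N+M}=R^{N}_{\infty}$; after this move the weight becomes the $M$-independent $|\hat{s}(\vep_{N}k)|^{2}$, so only the decay of $\hat{s}$ itself (Lemma~\ref{lem:decay}) is needed for the dominating function. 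Both routes work; the paper's is slightly more economical in what it invokes, while yours is a bit more transparent about the convergence $S^{(N)}_{M}\to S^{(N)}_{\infty}$ happening on a fixed finite-dimensional space.
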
 
\begin{proof}
It is sufficient to prove the statement for the two-point functions because the states are quasi-free and the algebras $\fA_{N,{\color{blue}\pm}}$ and $\fB_{N,{\color{blue}\pm}}$ are generated (in norm) the by algebraic span of annihilation and creation operators respectively the Majorana operators. We state the proof only for the complex fermions $\fA_{N,{\color{blue}\pm}}$ as it is completely analogous for the Majorana fermions $\fB_{N,{\color{blue}\pm}}$. By the Cauchy-Schwarz inequality and the properties of the renormalization group we have:
\begin{align*}
|(\omega^{(N)}_{M}-\omega^{(N)}_{S})(a(\xi)a^{\dag}(\eta)) & = \langle R^{N}_{\infty}(\xi), (R^{N+M}_{\infty}\!S^{(N+M)}\!R^{N}_{N+M}-SR^{N}_{\infty})(\eta)\rangle_{\infty} \\
& \leq \|\xi\|_{N}\|(R^{N+M}_{\infty}\!S^{(N+M)}\!R^{N}_{N+M}-SR^{N}_{\infty})(\eta)\|_{\infty} \\
& \leq \|\xi\|_{N} \tfrac{1}{2L_{N}}\!\!\!\sum_{l\in\Gamma_{N,{\color{blue}\pm}}}\sum_{j=1,2}|\hat{\eta}^{(j)}_{l}| \|(R^{N+M}_{\infty}\!S^{(N+M)}\!R^{N}_{N+M}\!-\!SR^{N}_{\infty})(e_{l,j})\|_{\infty} \\
& \leq \|\xi\|_{N} \|\eta\|_{N} \Big(\tfrac{1}{2L_{N}}\!\!\!\sum_{l\in\Gamma_{N,{\color{blue}\pm}}}\sum_{j=1,2}\!\!\|(R^{N+M}_{\infty}\!S^{(N+M)}\!R^{N}_{N+M}\!-\!SR^{N}_{\infty})(e_{l,j})\|_{\infty}^{2}\Big)^{\!\frac{1}{2}}.
\end{align*}
Next, we evaluate the last factor in the last line explicitly:
\begin{align*}
 & \tfrac{1}{2L_{N}}\!\!\!\sum_{l\in\Gamma_{N,{\color{blue}\pm}}}\sum_{j=1,2}\!\!\|(R^{N+M}_{\infty}\!S^{(N+M)}\!R^{N}_{N+M}\!-\!SR^{N}_{\infty})(e_{l,j})\|_{\infty}^{2} \\
 & = \tfrac{1}{2L_{N}}\!\!\!\sum_{l\in\Gamma_{N,{\color{blue}\pm}}}\sum_{j=1,2}\tfrac{1}{2L}\!\!\!\sum_{k\in\Gamma_{\infty,{\color{blue}\pm}}}\sum_{i=1,2}\!\!|\langle e_{k,i},(R^{N+M}_{\infty}\!S^{(N+M)}\!R^{N}_{N+M}\!-\!SR^{N}_{\infty})(e_{l,j})\rangle_{\infty}|^{2} \\
 & = \sum_{l\in\Gamma_{N,{\color{blue}\pm}}}\sum_{k\in\Gamma_{\infty,{\color{blue}\pm}}}\sum_{i,j=1,2}|\hat{s}(\vep_{N}k)|^{2}|(S^{(N+M)}_{\per})(k)_{ij}-S(k)_{ij}|^{2}\delta_{0,\frac{\pi}{L}(k-l)\!\!\!\mod 2L_{N}},
\end{align*}
where the last line follows from the scaling equation in momentum space \eqref{eq:infprod}, the inner products,
\begin{align}
\label{eq:waveletrginner}
\langle e_{k,i},R^{N}_{M}(e_{l,j})\rangle_{M} & = \delta_{ij}2^{\frac{M-N}{2}}\Big(\prod_{m=1}^{M-N}m_{0}(\vep_{N+m}k)\Big)\sum_{x\in\Lambda_{N}}e^{-i(k-l)x} \\ \nonumber
& = 2L_{N}\delta_{ij}2^{\frac{M-N}{2}}\Big(\prod_{m=1}^{M-N}m_{0}(\vep_{N+m}k)\Big)\delta_{0,\frac{L}{\pi}(k-l)\!\!\!\!\mod 2L_{N}}, \\ \nonumber
\langle e_{k,i},R^{N}_{\infty}(e_{l,j})\rangle_{\infty} & = \delta_{ij}\vep_{N}^{\frac{1}{2}}\hat{s}(\vep_{N}k)\sum_{x\in\Lambda_{N}}e^{-i(k-l)x} \\ \nonumber
& = 2L_{N}\delta_{ij}\vep_{N}^{\frac{1}{2}}\hat{s}(\vep_{N}k)\delta_{0,\frac{L}{\pi}(k-l)\!\!\!\!\mod 2L_{N}},
\end{align}
for all $k\in\Gamma_{\infty,{\color{blue}\pm}}$ and $l\in\Gamma_{N,{\color{blue}\pm}}$, and the periodic extension of $S^{(N+M)}(k)_{ij}$ to $k\in\Gamma_{\infty,{\color{blue}\pm}}$.

As a consequence of the decay estimate on $\hat{s}$ in Lemma \ref{lem:decay}, Lebesgue's dominated convergence theorem, and the fact that the states are quasi-free \eqref{eq:carqfs}, we know that $\omega^{(N)}$ converges to $\omega^{(N)}_{S}$ in the weak$^{*}$ sense. Since the algebra $\fA_{N,{\color{blue}\pm}}$ is finite dimensional, we know that this is equivalent to strong convergence, hence the result follows.
\end{proof}

\begin{cor}
\label{cor:stateconv}
Assuming the renormalization condition \eqref{eq:rgcond}, the renormalization group flow of the family of lattice vacua $\omega^{(N)}_{0,{\color{blue}\pm}}$ converges to $\omega^{(N)}_{\infty,{\color{blue}\pm}}$ (in norm) at any scale $N$.
\end{cor}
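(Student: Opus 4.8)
The plan is to obtain the corollary as a direct application of Lemma \ref{lem:stateconv}, so that the whole task reduces to checking that the one-particle operators defining the lattice vacua satisfy the two hypotheses of that lemma — diagonality in momentum space and pointwise convergence of the kernel — with the correct limiting operator. From \eqref{eq:Fgroundstate} the vacuum $\omega^{(N)}_{0,\pm}$ is the quasi-free state determined by $S^{(N)}_{0}(k)=P^{(-)}_{\lambda_{N}}(k)$, which by construction acts as multiplication by a $2\times2$ matrix in the plane-wave basis and is therefore diagonal in $k$; this settles the first hypothesis. Comparing \eqref{eq:scalinglim} with $\omega_{S}\circ\alpha^{N}_{\infty}$ and using $1-S=P^{(+)}_{m}$, I read off the candidate limit $S=P^{(-)}_{m}$, so that $\omega^{(N)}_{\infty,\pm}=\omega_{P^{(-)}_{m}}\circ\alpha^{N}_{\infty}$ is precisely the state $\omega^{(N)}_{S}$ appearing in the lemma.

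The substance is the pointwise convergence $P^{(-)}_{\lambda_{N}}(k)\to P^{(-)}_{m}(k)$ at each fixed momentum. I would fix $k\in\Gamma_{\infty,\pm}$, note that $k\in\Gamma_{N,\pm}$ for all large $N$, and expand for small lattice spacing: since $\vep_{N}k\to0$ one has $\cos(\vep_{N}k)-1=O(\vep_{N}^{2})$ and $\sin(\vep_{N}k)=\vep_{N}k+O(\vep_{N}^{3})$, while the renormalization condition \eqref{eq:rgcond} gives $\lambda_{N}=\vep_{N}m+o(\vep_{N})$. Hence $\omega_{\lambda_{N}}(k)=(\lambda_{N}^{2}+4\sin(\tfrac{1}{2}\vep_{N}k)^{2})^{1/2}=\vep_{N}\,\omega_{m}(k)\,(1+o(1))$ with $\omega_{m}(k)=(m^{2}+k^{2})^{1/2}$, and the rescaled Bloch vector obeys
\begin{align*}
\frac{n_{\lambda_{N}}(k)}{\omega_{\lambda_{N}}(k)} &= \frac{1}{\vep_{N}\omega_{m}(k)(1+o(1))}\big(O(\vep_{N}^{2}),\,\vep_{N}k,\,\vep_{N}m\big)\ \longrightarrow\ \frac{k\,e_{y}+m\,e_{z}}{\omega_{m}(k)}=\frac{n_{m}(k)}{\omega_{m}(k)}.
\end{align*}
Since $P^{(-)}_{\lambda_{N}}(k)=\tfrac{1}{2}\big(\1_{2}-n_{\lambda_{N}}(k)\cdot\sigma/\omega_{\lambda_{N}}(k)\big)$ by \eqref{eq:1pproj}, this yields $P^{(-)}_{\lambda_{N}}(k)\to P^{(-)}_{m}(k)$ entrywise, which is exactly \eqref{eq:kernelpointlim}. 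The bounds $0\le S^{(N)}_{0}\le1$ and $0\le P^{(-)}_{m}\le1$ hold because these are (limits of) spectral projections, and in the Majorana case the extra constraint $CS^{(N)}_{0}=(1-S^{(N)}_{0})C$ follows from $CP^{(+)}_{\lambda_{N}}=P^{(-)}_{\lambda_{N}}C$ (and likewise for the limit via the mass term \eqref{eq:majmass}), so the hypotheses of Lemma \ref{lem:stateconv} are met for both $\fA_{N,\pm}$ and $\fB_{N,\pm}$; invoking the lemma gives $\lim_{M\to\infty}\|\omega^{(N)}_{M,\pm}-\omega^{(N)}_{\infty,\pm}\|=0$ at every scale $N$, as claimed. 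The momentum-cutoff group is handled identically, with $\hat{s}(\vep_{N}k)$ replaced by $\chi_{\Gamma_{N}}$, which trivially supplies the domination used in the lemma's proof.

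The one delicate point — and the only place where I expect to argue rather than compute — is the zero mode $k=0$ in the Ramond ($+$) sector. There $P^{(-)}_{\lambda_{N}}(0)=\tfrac{1}{2}(\1_{2}-\sigma_{z})$ for every $\lambda_{N}>0$, whereas the massless convention $\sign(0)=0$ in \eqref{eq:Fmassless} makes the intended limit $P^{(-)}_{0}(0)=\tfrac{1}{2}\1_{2}$, so strict pointwise convergence can fail at this single momentum when $m=0$. This is harmless: the contribution of any fixed $k$ to the two-point function carries the weight $\tfrac{1}{2L_{N}}\to0$, so a discrepancy at $k=0$ drops out of the dominated-convergence argument underlying Lemma \ref{lem:stateconv} and does not affect the norm limit; equivalently one may simply set $\lambda_{N}=0$ in the massless case, restoring the convention. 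The Neveu-Schwarz ($-$) sector has no zero mode and needs no such remark.
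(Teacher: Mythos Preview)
Your main argument is correct and matches the paper's one-line proof: verify pointwise convergence of the one-particle kernels $S^{(N)}_{0}(k)=P^{(-)}_{\lambda_{N}}(k)\to P^{(-)}_{m}(k)$ under the renormalization condition, then invoke Lemma~\ref{lem:stateconv}. Your explicit expansion of $n_{\lambda_{N}}(k)/\omega_{\lambda_{N}}(k)$ is exactly the computation the paper suppresses.

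The one flaw is in your handling of the Ramond zero mode. Your claim that ``the contribution of any fixed $k$ carries the weight $\tfrac{1}{2L_{N}}\to0$'' is wrong: in Lemma~\ref{lem:stateconv} the scale $N$ is \emph{fixed} and $M\to\infty$, so $1/(2L_{N})$ is a constant. If you instead meant $1/(2L_{N+M})$ in the expression~\eqref{eq:renvac}, that also fails, because $R^{N}_{N+M}(\hat{\xi})_{0}$ grows like $2^{M/2}$ (since $m_{0}(0)=1$), and the product $\tfrac{1}{2L_{N+M}}|R^{N}_{N+M}(\hat{\xi})_{0}|^{2}=\tfrac{1}{2L_{N}}|\hat{\xi}_{0}|^{2}$ stays of order one. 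Equivalently, in the error bound from the lemma's proof the $k=0$ term enters with weight $|\hat{s}(0)|^{2}=1$, so a persistent discrepancy $P^{(-)}_{\lambda_{N+M}}(0)=\tfrac{1}{2}(\1_{2}-\sigma_{z})\neq\tfrac{1}{2}\1_{2}=P^{(-)}_{0}(0)$ does \emph{not} drop out of dominated convergence. Your fallback (take $\lambda_{N}=0$ in the massless case, so that the convention $\sign(0)=0$ already holds at every scale) is the correct resolution; the paper does not discuss this edge case at all.
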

\begin{proof}
The renormalization condition implies the point-wise convergence of the kernel $S^{(N)}_{0}(k) = \1-P^{(+)}_{\lambda_{N}}(k) \rightarrow \1 - P^{(+)}_{m}$ for $N\rightarrow\infty$.
\end{proof}

\begin{rem}
\label{rem:stateconv}
The statement of Lemma \ref{lem:stateconv} is remains valid if we replace the wavelet renormalization group by the momentum-cutoff renormalization group. To see, this we observe that:
\begin{align*}
|\langle e_{k,i},(R^{N+M}_{\infty}\!S^{(N+M)}\!R^{N}_{N+M}\!-\!SR^{N}_{\infty})(e_{l,j})\rangle_{\infty}|^{2} & = 2L_{N}\vep_{N}^{\frac{1}{2}}\delta_{k,l}|(S^{(N+M)})(l)_{ij}-S(l)_{ij}|^{2},
\end{align*}
for all $k\in\Gamma_{\infty,{\color{blue}\pm}}$ and $l\in\Gamma_{N,{\color{blue}\pm}}$ because:
\begin{align}
\label{eq:momrginner}
\langle e_{k,i},R^{N}_{M}(e_{l,j})\rangle_{M} & = 2L_{N}\delta_{ij}2^{\frac{1}{2}(M-N)}\delta_{k,l}, \\ \nonumber
\langle e_{k,i},R^{N}_{\infty}(e_{l,j})\rangle_{\infty} & = 2L_{N}\delta_{ij}\vep_{N}^{\frac{1}{2}}\delta_{k,l}.
\end{align}
\end{rem}

\section{Approximation of conformal symmetries}
\label{sec:approxconf}
It follows from \eqref{eq:chiral2plimit} that the chiral subalgebras of the fermion and Majorana algebras, $\fA^{(\pm)}_{\infty,{\color{blue}\pm}}$ and $\fB^{(\pm)}_{\infty,{\color{blue}\pm}}$, together with the scaling limit $\omega_{{\color{blue}\pm}} = \omega^{(\infty)}_{\infty,{\color{blue}\pm}}$ are conformal field theories with central charge $c=1$ respectively $c=\tfrac{1}{2}$ \cite{DiFrancescoCFTBook, EvansQuantumSymmetriesOn}. But, we would like to understand how the $\Diff_{+}(S^{1}_{L})$-covariance arises in its differential form, i.e.~the \textit{Virasoro algebra}, in the scaling limit from the lattice data, $\fA_{N,{\color{blue}\pm}}$, $\fB_{N,{\color{blue}\pm}}$, and $\omega^{(N)}_{0,{\color{blue}\pm}}$.
To this end, we combine the method of operator-algebraic renormalization and its specific realizations introduced in Section \ref{sec:ffscaling} with the well-known results on Bogoliubov transformations and their implementability collected in Section \ref{sec:bogoliubov} to analyze the approximation of conformal symmetries by the so-called \textit{Koo-Saleur formula} \cite{KooRepresentationsOfThe}, see also  \cite{MilstedExtractionOfConformal, ZiniConformalFieldTheories}. 

\subsection{The Koo-Saleur approximants}
\label{sec:KSapprox}
To motivate the Koo-Saleur formula for the lattice analogues of the Virasoro generators, we first introduce some terminology, following \cite{MilstedExtractionOfConformal}. The discussion stays rather formal and assumes that we are given a conformal field theory defined on the circle $S^{1}_{L}$ with Hamiltonian,
\begin{align}
\label{eq:cftH}
H = \int_{S^{1}_{L}} dx\, h(x),
\end{align}
acting on some Hilbert space $\cH$, where $h$ is the Hamiltonian density (a local quantum field on $\cH$). The Hamiltonian density is itself given by the chiral and anti-chiral energy-momentum tensors, $T$ and $\overline{T}$, via
\begin{align}
\label{eq:cftHd}
h(x) = \tfrac{1}{2\pi}\big(T(x)+\overline{T}(x)\big).
\end{align}
The Fourier modes of the energy-momentum tensors,
\begin{align}
\label{eq:vira}
L_k &\!=\!\tfrac{2L}{(2\pi)^2}\int_{S^{1}_{L}}\!dx\, e^{ikx} T(x)\!+\!\tfrac{c}{24}\delta_{k,0}, & \overline{L}_k &\!=\!\tfrac{2L}{(2\pi)^2}\int_{S^{1}_{L}}\!dx\, e^{-ikx} \overline{T}(x)\!+\!\tfrac{c}{24}\delta_{k,0},
\end{align}
for $k\in\Gamma_{\infty,{\color{blue}+}}$, form a representation of the Virasoro algebra with \emph{central charge} $c$ \cite{DiFrancescoCFTBook},
\begin{align}
\label{eq:viralg}
[L_{k}, L_{k’}] & = \tfrac{L}{\pi}(k-k’)L_{k+k’} + \delta_{k+k’,0}\tfrac{c}{12}(\tfrac{L}{\pi}k)((\tfrac{L}{\pi}k)^2-1), \\ \nonumber 
[\overline{L}_{k}, \overline{L}_{k’}] & = \tfrac{L}{\pi}(k-k’)\overline{L}_{k+k’} + \delta_{k+k’,0}\tfrac{c}{12}(\tfrac{L}{\pi}k)((\tfrac{L}{\pi}k)^2-1), \\ \nonumber
[L_{k}, \overline{L}_{k’}] & = 0,
\end{align}
on $\cH$. Thus, the Fourier modes $H_k$ of the Hamiltonian density $h$ correspond to linear combinations of the Virasoro generators:
\begin{align}
\label{eq:fourierH}
H_k & = \tfrac{L}{\pi}\int_{S^{1}_{L}}\!dx\, e^{ikx}h(x) = L_k + \overline{L}_{-k} - \tfrac{c}{12}\delta_{k,0},
\end{align}
with $H_0 = \frac{L}{\pi}H$. In addition to the Virasoro generators, it is useful to consider their smeared versions,
\begin{align}
\label{eq:virasmeared}
L(X) & = \tfrac{1}{2L}\sum_{k\in\Gamma_{\infty,{\color{blue}+}}}\hat{X}_{k}L_{k}, & \overline{L}(X) & = \tfrac{1}{2L}\sum_{k\in\Gamma_{\infty,{\color{blue}+}}}\hat{X}_{k}\overline{L}_{k},
\end{align}
for a sufficiently regular differential loop $X\in\fl^{\alpha}(S^{1}_{L},\C)$ (cf.~Definition \ref{def:waveletrgcur} and below, see, for example, \cite{GoodmanProjectiveUnitaryPositive, CarpiOnTheUniqueness} precise definitions of smeared Virasoro generators).

Now, the Koo-Saleur proposal is to introduce, by analogy, the lattice Fourier modes of a lattice Hamiltonian, for example, $H^{(N)}_{0}$:
\begin{align}
\label{eq:hdfourier}
H_k^{(N)} = \tfrac{L}{\pi}\vep_{N}\sum_{x\in \Lambda_{N}} e^{ikx} h^{(N)}_x,
\end{align}
where $h^{(N)}:\Lambda_{N}\rightarrow\fA_{N,{\color{blue}\pm}}$ or $\fB_{N,{\color{blue}\pm}}$ is a suitable lattice Hamiltonian density. Here, we slightly abuse our notation for the sake of simplicity and denote $H^{(N)}_{k=0}=\tfrac{L}{\pi}H^{(N)}_{0}$ by $H^{(N)}_{0}$ as well. This leads to the following lattice analogues of $L_k$ and $\overline{L}_{k}$:
\begin{defn}[Koo-Saleur approximants]
\label{def:KSapprox}
The \emph{Koo-Saleur (KS) approximants} of a lattice Hamiltonian $H^{(N)}$ with density $h^{(N)}:\Lambda_{N}\rightarrow\fA_{N,{\color{blue}\pm}}$ or $\fB_{N,{\color{blue}\pm}}$ are given by\footnote{The prefactor $\sim k^{-1}$ used in \cite{MilstedExtractionOfConformal} has no intrinsic meaning on the lattice at finite scale and is replaced by $\tfrac{\pi}{2L_{N}}|\sin(\tfrac{1}{2}\vep_{N}k)|^{-1}\sim\tfrac{\pi}{L}k^{-1}$ for small $\vep_{N}$ which also avoids Fermion doublers at the boundary of the Brillouin zone.}:
\begin{align}
\label{eq:latviracpt}
L_{k}^{(N)} & = \tfrac{1}{2}\Big(H_{k}^{(N)}+\tfrac{\pi\vep_{N}}{2L\sin(\frac{1}{2}\vep_{N}k)}\Big[H_{k}^{(N)}\!,H_0^{(N)}\Big]\Big), \\ \nonumber
\overline{L}_{k}^{(N)} & = \tfrac{1}{2}\Big(H_{-k}^{(N)}+\tfrac{\pi\vep_{N}}{2L\sin(\frac{1}{2}\vep_{N}k)}\Big[H_{-k}^{(N)}\!,H_0^{(N)}\Big]\Big),
\end{align}
for $k\in\Gamma_{N,{\color{blue}+}}$. Given a differential loop $X\in\fl(\Lambda_{N},\C)$, we define the smeared KS approximants by:
\begin{align}
\label{eq:latvirasmeared}
L^{(N)}(X) & = \tfrac{1}{2L_{N}}\sum_{k\in\Gamma_{N,{\color{blue}+}}}\hat{X}_{k}L_{k}^{(N)}, & \overline{L}^{(N)}(X) & = \tfrac{1}{2L_{N}}\sum_{k\in\Gamma_{N,{\color{blue}+}}}\hat{X}_{k}\overline{L}_{k}^{(N)}.
\end{align}
\end{defn}
An alternative formula for the Koo-Saleur approximants is obtained by replacing $2\sin(\tfrac{1}{2}\vep_{N}k)$ with the dispersion relation $\omega_{\lambda_{N}=0}(k)$ of the free massless lattice Dirac Hamiltonian $H^{(N)}_{0}$ because $\tfrac{\pi}{2L_{N}}|\sin(\tfrac{1}{2}\vep_{N}k)|^{-1}=\tfrac{\pi}{L_{N}}\omega_{\lambda_{N}=0}(k)^{-1}$:
\begin{align}
\label{eq:latvira}
(1\!+\!\sign(k))L_{k}^{(N)}\!+\!(1\!-\!\sign(k))\overline{L}_{-k}^{(N)} & = H_{k}^{(N)} + \tfrac{\pi}{L_{N} \omega_{\lambda_{N}=0}(k)}\Big[H_{k}^{(N)}\!,H_0^{(N)}\Big], \\ \nonumber
(1\!+\!\sign(k))L_{-k}^{(N)}\!+\!(1\!-\!\sign(k))\overline{L}_{k}^{(N)} & = H_{-k}^{(N)} - \tfrac{\pi}{L_{N} \omega_{\lambda_{N}=0}(k)}\Big[H_{-k}^{(N)}\!,H_0^{(N)}\Big],
\end{align}
In the following we show that the (smeared) KS approximants, $L^{(N)}(X)$ and $\overline{L}^{(N)}(X)$, associated with the free massless lattice Dirac Hamiltonian \eqref{eq:stagH} converge to the (smeared) Virasoro generators, $L(S^{N}_{\infty}(X))$ and $\overline{L}(S^{N}_{\infty}(X))$ in a sense we make precise below.

As a direct consequence, we find that the (unitary\footnote{Note that $L^{(N)}_{k}{}^{*} = L^{(N)}_{-k}$ and $\overline{L}^{(N)}_{k}{}^{*} = \overline{L}^{(N)}_{-k}$.}) exponentials,
\begin{align}
\label{eq:latviraexp}
U^{(N)}(X) & = \exp(iL^{(N)}(X)), & \overline{U}^{(N)}(X) & = \exp(iL^{(N)}(X)),
\end{align}
as well as the (automorphic) dynamics,
\begin{align}
\label{eq:latviradyn}
\sigma^{(N)}_{tX} & =  \Ad_{U^{(N)}(tX)}, & \overline{\sigma}^{(N)}_{tX} & =  \Ad_{\overline{U}^{(N)}(tX)},
\end{align}
for $X\in\fl(\Lambda_{N},\R)$, converge to implementers of localized diffeomorphisms in $\Diff_{+}(S^{1}_{L})$ and their automorphic action on $\fA_{\infty,{\color{blue}\pm}}$ respectively $\fB_{\infty,{\color{blue}\pm}}$.

It is important to note that we have two degrees of freedom in the correspondence between lattice and continuum models: it may be necessary to shift the ground-state energy and scale the energy axis in order to match the two in the scaling limit. That is, we should allow for hamiltonian the following adjustment of the lattice Hamiltonian
\begin{align}
\label{eq:scaleshift}
H^{(N)} & \mapsto a_{N} H^{(N)} + b_{N} \mathds{1},
\end{align}  
where $a_{N}, b_{N}\in\R$ are model-dependent constants.

\subsection{The case of massless free lattice fermions}
\label{sec:ffKS}
We determine the KS approximants for the complex fermion algebra $\fA_{N,{\color{blue}\pm}}$ of the free massless ($\lambda_{N}=0$) Dirac Hamiltonian \eqref{eq:stagH} according to Definition \ref{def:KSapprox} using the following lattice Hamiltonian density\footnote{Note that the expression for $h^{(N)}_{x}$ is chosen to be symmetric with respect to the basic lattice translation $x\mapsto x+\vep_{N}$. A different choice, for example, directly using the summand in \eqref{eq:stagH} corresponds to partial integration in the scaling limit and, thus, a modification by a total divergence.}:
\begin{align}
\label{eq:lathdfermisym}
h^{(N)}_{x} & = \epsilon_{N}^{-2}\tfrac{1}{2}\Big(a^{(1)\!\!\ \dagger}_{x+\vep_{N}}a^{(2)}_{x} - a^{(1)\!\!\ \dagger}_{x}a^{(2)}_{x} +a^{(2)\!\!\ \dagger}_{x-\vep_{N}}a^{(1)}_{x} - a^{(2)\!\!\ \dagger}_{x}a^{(1)}_{x} + \textup{h.c.}\Big)\in\fA_{N,{\color{blue}\pm}}, & x\in\Lambda_{N}
\end{align}
Although we evaluate \eqref{eq:latviracpt} only for $\lambda_{N}=0$, we note that \eqref{eq:stagHmaj}) shows that the lattice approximation effectively generates a contribution that behaves like Majorana mass at finite scales. Similar to  \eqref{eq:FstagH} and the Fourier modes of the lattice Hamiltonian density are given by additive second quantization:
\begin{align}
\label{eq:Flathdfermisym}
H^{(N)}_{k} & = \tfrac{L}{\pi}\vep_{N}^{-1}dF_{0}(h^{(N)}_{k}) = \tfrac{1}{2\pi}\sum_{l\in\Gamma_{N,{\color{blue}\pm}}}\sum_{i,j=1,2}\hat{a}^{(i)\dag}_{l+k}h^{(N)}_{k}(l)_{ij}\hat{a}^{(j)}_{l}, \\ \nonumber
h^{(N)}_{k}(l) & = \begin{pmatrix}\hspace{-1cm} 0 & \hspace{-1cm} e^{-i\vep_{N}(l+\tfrac{k}{2})}\cos(\tfrac{1}{2}\vep_{N}k)-1 \\ e^{i\vep_{N}(l+\tfrac{k}{2})}\cos(\tfrac{1}{2}\vep_{N}k)-1 & 0 \end{pmatrix}.
\end{align}
Now, because of \eqref{eq:secondquantcomm}, the KS approximants are also given by additive second quantization:
\begin{align}
\label{eq:latviraSQ}
L_{k}^{(N)} & = \tfrac{1}{2}\tfrac{L}{\pi}\vep_{N}^{-1}dF_{0}\Big(h_{k}^{(N)}+(2\sin(\tfrac{1}{2}\vep_{N}k))^{-1}\Big[h_{k}^{(N)}\!,h^{(N)}_{0}\Big]\Big), \\ \nonumber
\overline{L}_{k}^{(N)} & = \tfrac{1}{2}\tfrac{L}{\pi}\vep_{N}^{-1}dF_{0}\Big(h_{-k}^{(N)}+(2\sin(\tfrac{1}{2}\vep_{N}k))^{-1}\Big[h_{-k}^{(N)}\!,h^{(N)}_{0}\Big]\Big),
\end{align}
Explicitly, we find for the KS approximants of the complex fermion algebra $\fA_{N,{\color{blue}\pm}}$:
\begin{align}
\label{eq:latviraD}
L^{(N)}_{k} & = \tfrac{L}{\pi}dF_{0}(\ell^{(N)}_{k}) = \tfrac{\vep_{N}}{2\pi}\sum_{l\in\Gamma_{N,{\color{blue}\pm}}}\sum_{i,j=1,2}\!\!\hat{a}^{(i)\dag}_{l+k}\ell^{(N)}_{k}(l)_{ij}\hat{a}^{(j)}_{l},\\ \nonumber
\ell^{(N)}_{k}(l) & = \tfrac{1}{2}\vep_{N}^{-1}\!\!\begin{pmatrix}\hspace{-1cm} -\sin(\vep_{N}(l+\tfrac{k}{2})) & \hspace{-1cm} e^{-i\vep_{N}(l+\tfrac{k}{2})}\cos(\tfrac{1}{2}\vep_{N}k)-1 \\ e^{i\vep_{N}(l+\tfrac{k}{2})}\cos(\tfrac{1}{2}\vep_{N}k)-1 & -\sin(\vep_{N}(l+\tfrac{k}{2})) \end{pmatrix}, \\[0.25cm] \nonumber
\overline{L}^{(N)}_{k} & = \tfrac{L}{\pi}dF_{0}(\overline{\ell}^{(N)}_{k}) = \tfrac{\vep_{N}}{2\pi}\sum_{l\in\Gamma_{N,{\color{blue}\pm}}}\sum_{i,j=1,2}\!\!\hat{a}^{(i)\dag}_{l-k}\overline{\ell}^{(N)}_{k}(l)_{ij}\hat{a}^{(j)}_{l},\\ \nonumber
\overline{\ell}^{(N)}_{k}(l) & = \tfrac{1}{2}\vep_{N}^{-1}\!\!\begin{pmatrix}\hspace{-1cm} \sin(\vep_{N}(l-\tfrac{k}{2})) & \hspace{-1cm} e^{-i\vep_{N}(l-\tfrac{k}{2})}\cos(\tfrac{1}{2}\vep_{N}k)-1 \\ e^{i\vep_{N}(l-\tfrac{k}{2})}\cos(\tfrac{1}{2}\vep_{N}k)-1 & \sin(\vep_{N}(l-\tfrac{k}{2})) \end{pmatrix}.
\end{align}
We find the explicit expressions for the KS approximants in terms of the chiral decomposition \eqref{eq:chiral1p} by applying the chiral projectors $p_{\pm}$ to the one-particle operators $\ell^{(N)}_{k}$, $\overline{\ell}^{(N)}_{k}$:
\begin{align}
\label{eq:latviraDchi}
\ell^{(N)}_{k}(l) & = \vep_{N}^{-1}\!\!\begin{pmatrix}\hspace{-0.25cm} -\sin(\tfrac{1}{4}\vep_{N}k)^{2}\sin(\vep_{N}(l+\tfrac{k}{2})\!) & \hspace{-0.25cm} -i(\sin(\tfrac{1}{2}\vep_{N}l)^{2}+\sin(\tfrac{1}{2}\vep_{N}(l+k)\!)^{2}) \\[0.25cm] i(\sin(\tfrac{1}{2}\vep_{N}l)^{2}+\sin(\tfrac{1}{2}\vep_{N}(l+k)\!)^{2}) & -\cos(\tfrac{1}{4}\vep_{N}k)^{2}\sin(\vep_{N}(l+\tfrac{k}{2})\!) \end{pmatrix},
\end{align}
\begin{align}
\label{eq:latviraDantichi}
\overline{\ell}^{(N)}_{k}(l) & = \vep_{N}^{-1}\!\!\begin{pmatrix}\hspace{-0.25cm} \cos(\tfrac{1}{4}\vep_{N}k)^{2}\sin(\vep_{N}(l-\tfrac{k}{2})\!) & \hspace{-0.25cm} i(\sin(\tfrac{1}{2}\vep_{N}l)^{2}+\sin(\tfrac{1}{2}\vep_{N}(l-k)\!)^{2}) \\[0.25cm] -i(\sin(\tfrac{1}{2}\vep_{N}l)^{2}+\sin(\tfrac{1}{2}\vep_{N}(l-k)\!)^{2}) & \sin(\tfrac{1}{4}\vep_{N}k)^{2}\sin(\vep_{N}(l-\tfrac{k}{2})\!) \end{pmatrix}.
\end{align}
From the latter expressions we can read of the KS approximants of the chiral and anti-chiral subalgebras \eqref{eq:chiralF}, i.e.~$p_{-}\ell^{(N)}_{k}(l)p_{-} = \ell^{(N)}_{-,k}$ respectively $p_{+}\overline{\ell}^{(N)}_{k}(l)p_{+} = \ell^{(N)}_{+,k}$:
\begin{align}
\label{eq:latvirasubchi}
L^{(N)}_{-,k} & = \tfrac{L}{\pi}dF_{0}(\ell^{(N)}_{-,k}) = \tfrac{-1}{2\pi}\sum_{l\in\Gamma_{N,{\color{blue}\pm}}}\cos(\tfrac{1}{4}\vep_{N}k)^{2}\sin(\vep_{N}(l+\tfrac{k}{2}))\hat{\psi}_{-|l+k}^{\dag}\hat{\psi}_{-|l}, \\ \nonumber
L^{(N)}_{+,k} & = \tfrac{L}{\pi}dF_{0}(\ell^{(N)}_{+,k}) = \tfrac{1}{2\pi}\sum_{l\in\Gamma_{N,{\color{blue}\pm}}}\cos(\tfrac{1}{4}\vep_{N}k)^{2}\sin(\vep_{N}(l-\tfrac{k}{2}))\hat{\psi}_{+|l-k}^{\dag}\hat{\psi}_{+|l}.
\end{align}
A direct computation verifies the validity of the lattice analogue of the Virasoro algebra \eqref{eq:viralg} at $c=0$ for $L^{(N)}_{k}$ and $\overline{L}^{(N)}_{k}$ at first order in $\vep_{N}$, with the following explicit expressions for the chiral and anti-chiral subalgebras:
\begin{align}
\label{eq:latviraCRchi}
\left[L^{(N)}_{-,k},L^{(N)}_{-,k’}\right] & = \tfrac{L_{N}}{\pi}2\sin(\tfrac{1}{2}\vep_{N}(k-k’))\frac{\cos(\frac{1}{4}\vep_{N}k)^{2}\cos(\frac{1}{4}\vep_{N}k’)^{2}}{\cos(\frac{1}{4}\vep_{N}(k+k’))^{2}}\left(L^{(N)}_{-,k+k’} + R^{(N)}_{-,k+k’}\right) \\ \nonumber
\left[L^{(N)}_{+,k},L^{(N)}_{+,k’}\right] & = \tfrac{L_{N}}{\pi}2\sin(\tfrac{1}{2}\vep_{N}(k-k’))\frac{\cos(\frac{1}{4}\vep_{N}k)^{2}\cos(\frac{1}{4}\vep_{N}k’)^{2}}{\cos(\frac{1}{4}\vep_{N}(k+k’))^{2}}\left(L^{(N)}_{+,k+k’} + R^{(N)}_{+,k+k’}\right),
\end{align}
where the remainders $R^{(N)}_{k+k’,-}$, $R^{(N)}_{k+k’,+}$ are in $\cO(\vep_{N}^{2})$ and given by:
\begin{align}
\label{eq:latviraR}
R^{(N)}_{-,k+k’} &\!=\!\!\tfrac{-1}{\pi}\!\cos(\tfrac{1}{4}\vep_{N}(k\!+\!k’)\!)^{2}\!\!\!\sum_{l\in\Gamma_{N,{\color{blue}\pm}}}\!\!\!\sin(\vep_{N}(l\!+\tfrac{k+k’}{2}\!)\sin(\tfrac{1}{2}\vep_{N}(l\!+\tfrac{k+k’}{2}\!)\hat{\psi}_{-|l+k+k’}^{\dag}\hat{\psi}_{-|l}, \\ \nonumber
R^{(N)}_{+,k+k’} &\!=\!\!\tfrac{1}{\pi}\!\cos(\tfrac{1}{4}\vep_{N}(k\!+\!k’)\!)^{2}\!\!\!\sum_{l\in\Gamma_{N,{\color{blue}\pm}}}\!\!\!\sin(\vep_{N}(l\!-\tfrac{k+k’}{2}\!)\!)\sin(\tfrac{1}{2}\vep_{N}(l\!-\tfrac{k+k’}{2}\!)\!)\hat{\psi}_{+|l-(k+k’)}^{\dag}\hat{\psi}_{+|l},
\end{align}

A similar computation starting from \eqref{eq:stagHmajSQ} yields the KS approximants for the two copies of the Majorana algebra $\fB_{N,{\color{blue}\pm}}$ making up $\fA_{N,{\color{blue}\pm}}$, which we also denote by $L^{(N)}_{k}$ and $\overline{L}^{(N)}_{k}$ by a slight abuse of notation:
\begin{align}
\label{eq:latviramaj}
L^{(N)}_{k} & = \tfrac{L}{\pi}dQ_{0}(\ell^{(N)}_{k}) = \tfrac{1}{2\pi}\sum_{l\in\Gamma_{N,{\color{blue}\pm}}}\sum_{\sigma,\sigma'=\pm}\!\!\tfrac{1}{2}\hat{\Psi}_{\sigma|-(l+k)}\ell^{(N)}_{k}(l)_{\sigma\sigma'}\hat{\Psi}_{\sigma'|l},\\ \nonumber
\ell^{(N)}_{k}(l) & = \tfrac{\vep_{N}^{-1}}{2}\!\!\left(\begin{matrix} \sin(\vep_{N}(l\!+\!k)\!)(1\!-\!\cos(\vep_{N}(l\!+\!\tfrac{k}{2})\!)\!)\!-\!\sin(\vep_{N}(l\!+\!\tfrac{k}{2}))(1\!-\!\cos(\vep_{N}(l\!+\!\tfrac{k}{2})\!)\cos(\vep_{N}\tfrac{k}{2})\!) &  \\[-0.1cm]  & \!\!\!\dots \\[-0.1cm] -i(\sin(\tfrac{1}{2}\vep_{N}l)^{2}\!+\!\sin(\tfrac{1}{2}\vep_{N}(l\!+\!k)\!)^{2})(1\!-\!\cos(\vep_{N}(l\!+\!\tfrac{k}{2})\!)\!) &  \end{matrix}\right. \\[0.1cm] \nonumber
& \left.\begin{matrix}  & i(\sin(\tfrac{1}{2}\vep_{N}l)^{2}\!+\!\sin(\tfrac{1}{2}\vep_{N}(l\!+\!k)\!)^{2})(1\!+\!\cos(\vep_{N}(l\!+\!\tfrac{k}{2})\!)\!) \\[-0.1cm] \dots &  \\[-0.1cm]  & -\sin(\vep_{N}(l\!+\!k)\!)(1\!+\!\cos(\vep_{N}(l\!+\!\tfrac{k}{2})\!)\!)\!-\!\sin(\vep_{N}(l\!+\!\tfrac{k}{2})\!)(1\!-\!\cos(\vep_{N}(l\!+\!\tfrac{k}{2})\!)\cos(\vep_{N}\tfrac{k}{2})\!)  \end{matrix}\right),
\end{align}
\begin{align}
\label{eq:latviramajanti}
\overline{L}^{(N)}_{k} & = \tfrac{L}{\pi}dQ_{0}(\overline{\ell}^{(N)}_{k}) = \tfrac{1}{2\pi}\sum_{l\in\Gamma_{N,{\color{blue}\pm}}}\sum_{\sigma,\sigma'=\pm}\!\!\tfrac{1}{2}\hat{\Psi}_{\sigma|-(l-k)}\overline{\ell}^{(N)}_{k}(l)_{\sigma\sigma'}\hat{\Psi}_{\sigma'|l},\\ \nonumber
\overline{\ell}^{(N)}_{k}(l) & = \tfrac{\vep_{N}^{-1}}{2}\!\!\left(\begin{matrix} \sin(\vep_{N}(l\!-\!k)\!)(1\!+\!\cos(\vep_{N}(l\!-\!\tfrac{k}{2})\!)\!)\!+\!\sin(\vep_{N}(l\!-\!\tfrac{k}{2})\!)(1\!-\!\cos(\vep_{N}(l\!-\!\tfrac{k}{2})\!)\cos(\vep_{N}\tfrac{k}{2})\!) &  \\[-0.1cm]  & \!\!\!\dots \\[-0.1cm] -i(\sin(\tfrac{1}{2}\vep_{N}l)^{2}\!+\!\sin(\tfrac{1}{2}\vep_{N}(l\!-\!k)\!)^{2})(1\!+\!\cos(\vep_{N}(l\!-\!\tfrac{k}{2})\!)\!) &  \end{matrix}\right. \\[0.1cm] \nonumber
& \left.\begin{matrix}  & i(\sin(\tfrac{1}{2}\vep_{N}l)^{2}\!+\!\sin(\tfrac{1}{2}\vep_{N}(l\!-\!k)\!)^{2})(1\!-\!\cos(\vep_{N}(l\!-\!\tfrac{k}{2})\!)\!) \\[-0.1cm] \dots &  \\[-0.1cm]  & -\sin(\vep_{N}(l\!-\!k)\!)(1\!-\!\cos(\vep_{N}(l\!-\!\tfrac{k}{2})\!)\!)\!+\!\sin(\vep_{N}(l\!-\!\tfrac{k}{2})\!)(1\!-\!\cos(\vep_{N}(l\!-\!\tfrac{k}{2})\!)\cos(\vep_{N}\tfrac{k}{2})\!)  \end{matrix}\right),
\end{align}
with identical expressions for $\tilde{\Psi}_{\pm}$. The analogues of the restrictions \eqref{eq:latvirasubchi} of $L^{(N)}_{k}$ and $\overline{L}^{(N)}_{k}$ to the chiral and anti-chiral subalgebras are (in agreement with \eqref{eq:latvirasubchi}):
\begin{align}
\label{eq:latvirasubchimaj}
L^{(N)}_{-,k} & = \tfrac{L}{\pi}dQ_{0}(\ell^{(N)}_{-,k}) \\ \nonumber
& = \!\tfrac{-1}{4\pi}\!\!\!\sum_{l\in\Gamma_{N,{\color{blue}\pm}}}\!\!\!\!\big(\!\sin(\vep_{N}(l\!+\!k)\!)\cos(\tfrac{1}{2}\vep_{N}(l\!+\!\tfrac{k}{2})\!)^{2} \\[-0.5cm] \nonumber
&\hspace{1.75cm}+\!\tfrac{1}{2}\sin(\vep_{N}(l\!+\!\tfrac{k}{2})\!)(1\!-\!\cos(\vep_{N}(l\!+\!\tfrac{k}{2})\!)\cos(\vep_{N}\tfrac{k}{2})\!)\!\big)\!\hat{\Psi}_{-|-(l+k)}\hat{\Psi}_{-|l} \\ \nonumber
& = \tfrac{-1}{4\pi}\sum_{l\in\Gamma_{N,{\color{blue}\pm}}}\cos(\tfrac{1}{4}\vep_{N}k)^{2}\sin(\vep_{N}(l+\tfrac{k}{2}))\hat{\Psi}_{-|-(l+k)}\hat{\Psi}_{-|l}, \\[0.1cm] \nonumber
L^{(N)}_{+,k} & = \tfrac{L}{\pi}dQ_{0}(\ell^{(N)}_{+,k}) \\ \nonumber
& = \!\tfrac{1}{4\pi}\!\!\!\sum_{l\in\Gamma_{N,{\color{blue}\pm}}}\!\!\!\!\big(\!\sin(\vep_{N}(l\!-\!k)\!)\cos(\tfrac{1}{2}\vep_{N}(l\!-\!\tfrac{k}{2})\!)^{2} \\[-0.5cm] \nonumber
&\hspace{1.75cm}+\!\tfrac{1}{2}\sin(\vep_{N}(l\!-\!\tfrac{k}{2})\!)(1\!-\!\cos(\vep_{N}(l\!-\!\tfrac{k}{2})\!)\cos(\vep_{N}\tfrac{k}{2})\!)\!\big)\hat{\Psi}_{+|-(l-k)}\hat{\Psi}_{+|l} \\ \nonumber
& = \tfrac{1}{4\pi}\sum_{l\in\Gamma_{N,{\color{blue}\pm}}}\cos(\tfrac{1}{4}\vep_{N}k)^{2}\sin(\vep_{N}(l-\tfrac{k}{2}))\hat{\Psi}_{+|-(l-k)}\hat{\Psi}_{+|l},
\end{align}
where we used the identity,
\begin{align}
\label{eq:latviramajantisym}
\sum_{l\in\Gamma_{N,{\color{blue}\pm}}}\tfrac{1}{2}(f_{\pm}(l,k)+f_{\pm}(-(l\pm k),k))\hat{\Psi}_{\pm|-(l\mp k)}\hat{\Psi}_{\pm|l} & = L_{N}\delta_{k,0}\sum_{l\in\Gamma_{N,{\color{blue}\pm}}}f_{\pm}(l,0),
\end{align}
 which is a consequence of the self-dual CAR, to arrive at each of the last lines.

\subsubsection{Relation with Temperley-Lieb algebras}
\label{sec:tl}
In the original work of Koo-Saleur \cite{KooRepresentationsOfThe} an essential step consists in relating the Hamiltonian lattice density $h^{(N)}_{x}$ to generators $e_{x}$ of a Temperley-Lieb algebra $\TL_{\delta}$ with loop parameter $\delta$. Let us briefly explain how such a relation with Temperley-Lieb algebras \cite{TemperleyRelationsBetweenThe, GrimmTheSpin12} arises in our setting, specifically in the case $c=\tfrac{1}{2}$ which is intimately connected with the transverse-field Ising model \cite{SchultzTwoDimensionalIsing}.
Invoking the self-dual formulation involving the Majorana mass term in Section \ref{sec:sdc}, we can rewrite the Hamilton \eqref{eq:stagHmaj} in terms of a pair of anti-commuting complex fermions $c^{(1)}$, $c^{(2)}$ in momentum space,
\begin{align}
\hat{c}^{(1)}_{k} & = \theta(k)\hat{\psi}_{+|k}+\theta(-k)\hat{\psi}^{\dag}_{-|-k}, & \hat{c}^{(2)}_{k} & = \theta(k)\hat{\psi}_{+|-k}+\theta(-k)\hat{\psi}^{\dag}_{-|k}
\end{align}
which yields:
\begin{align}
\label{eq:isingham1}
H^{(N)}_{0} & \!=\!\tfrac{1}{2L}\!\!\!\!\!\sum_{k\in\Gamma_{\textb{-},>0}} \!\!\!\begin{pmatrix} \hat{c}^{(1)}_{k} \\ \hat{c}^{(1)\!\ \dag}_{-k} \end{pmatrix}^{\!\!\dagger}\!\!\!\begin{pmatrix} \sin(\vep_{N}k) & \!\!\!\!-i((\cos(\vep_{N}k)-1) + \lambda_{N}) \\ i((\cos(\vep_{N}k)-1) + \lambda_{N}) & -\sin(\vep_{N}k) \end{pmatrix}\!\! \begin{pmatrix} \hat{c}^{(1)}_{k} \\ \hat{c}^{(1)\!\ \dag}_{-k} \end{pmatrix} \\ \nonumber
& \hspace{0.25cm}+\!\tfrac{1}{2L}\!\!\!\!\!\sum_{k\in\Gamma_{\textb{-},>0}} \!\!\!\begin{pmatrix} \hat{c}^{(2)}_{k} \\ \hat{c}^{(2)\!\ \dag}_{-k} \end{pmatrix}^{\!\!\dagger}\!\!\!\begin{pmatrix} -\sin(\vep_{N}k) & \!\!\!\!-i((\cos(\vep_{N}k)-1) + \lambda_{N}) \\ i((\cos(\vep_{N}k)-1) + \lambda_{N}) & \sin(\vep_{N}k) \end{pmatrix}\!\! \begin{pmatrix} \hat{c}^{(2)}_{k} \\ \hat{c}^{(2)\!\ \dag}_{-k} \end{pmatrix} .
\end{align}
Here, we restricted ourselves to anti-periodic boundary conditions (\textb{-}) for simplicity. Next, we apply additional Bogoliubov transformations,
\begin{align}
\begin{pmatrix} \hat{z}^{(1)}_{k} \\ \hat{z}^{(1)\!\ \dag}_{-k} \end{pmatrix} & = e^{-i\frac{\pi}{4}\sigma_{x}} \begin{pmatrix} \hat{c}^{(1)}_{k} \\ \hat{c}^{(1)\!\ \dag}_{-k} \end{pmatrix}, & \begin{pmatrix} \hat{z}^{(2)}_{k} \\ \hat{z}^{(2)\!\ \dag}_{-k} \end{pmatrix} & = i\sigma_{z}e^{i\frac{\pi}{4}\sigma_{x}} \begin{pmatrix} \hat{c}^{(2)}_{k} \\ \hat{c}^{(2)\!\ \dag}_{-k} \end{pmatrix}
\end{align}
to arrive at two copies of the Hamiltonian of transverse-field Ising model with mixed-sector boundary conditions \cite{SchuetzDualityTwistedBoundary}:
\begin{align}
\label{eq:isingham2}
H^{(N)}_{0} & \!=\!\tfrac{1}{2L}\!\!\!\!\!\sum_{k\in\Gamma_{\textb{-},>0}} \!\!\!\begin{pmatrix} \hat{z}^{(1)}_{k} \\ \hat{z}^{(1)\!\ \dag}_{-k} \end{pmatrix}^{\!\!\dagger} h^{(N)}_{\TIM}(k) \begin{pmatrix} \hat{z}^{(1)}_{k} \\ \hat{z}^{(1)\!\ \dag}_{-k} \end{pmatrix}+\!\tfrac{1}{2L}\!\!\!\!\!\sum_{k\in\Gamma_{\textb{-},>0}} \!\!\!\begin{pmatrix} \hat{z}^{(2)}_{k} \\ \hat{z}^{(2)\!\ \dag}_{-k} \end{pmatrix}^{\!\!\dagger} h^{(N)}_{\TIM}(k) \begin{pmatrix} \hat{z}^{(2)}_{k} \\ \hat{z}^{(2)\!\ \dag}_{-k} \end{pmatrix},
\end{align}
where $h^{(N)}_{\TIM}(k) = -\sin(\vep_{N}k)\sigma_{y}+((\cos(\vep_{N}k)-1)+\lambda_{N})\sigma_{z}$. Now, we are in a position to explicitly spell out the relation with two commuting` copies of the Temperley-Lieb algebra $\TL_{\delta=\sqrt{2}}$ (on $2^{N+2}$ strands). This is achieved by introducing the Majorana fermions on the refined lattice $\Lambda_{N+1}$,
\begin{align}
\label{eq:majising}
\psi^{(j)}_{x} & = z^{(j)}_{x}+z^{(j)\!\ \dag}_{x}, & \psi^{(j)}_{x+\vep_{N+1}} & = (-i)(z^{(j)}_{x}-z^{(j)\!\ \dag}_{x}), & j & = 1,2, & x & \in\Lambda_{N},
\end{align}
and defining the Temperley-Lieb generators:
\begin{align}
\label{eq:tlgen}
e^{(j)}_{x} & = 2^{-\frac{1}{2}}(1+i\psi^{(j)}_{x+\vep_{N+1}}\psi^{(j)}_{x}), & j & = 1,2, & x & \in\Lambda_{N+1}.
\end{align}
This leads to:
\begin{align}
\label{eq:isingham3}
H^{(N)}_{0} & \!=\! \tfrac{i}{4}\vep_{N+1}^{-1}\sum_{j=1}^{2}\sum_{x\in\Lambda_{N}}\big((1-\lambda_{N})\psi^{(j)}_{x+\vep_{N+1}}\psi^{(j)}_{x}+\psi^{(j)}_{x+\vep_{N}}\psi^{(j)}_{x+\vep_{N+1}}\big) \\ \nonumber
& \!=\! \tfrac{1}{2\sqrt{2}}\vep_{N+1}^{-1}\sum_{j=1}^{2}\sum_{x\in\Lambda_{N}}\big((1-\lambda_{N})e^{(j)}_{x} + e^{(j)}_{x+\vep_{N+1}}\big) - (2-\lambda_{N})2^{N}\vep_{N+1}^{-1}
\end{align}
Therefore, the KS approximants at criticality ($\lambda_{N}=0$) can be equivalently computed using the Temperley-Lieb generators $e^{(j)}_{x}$ (see \cite{StottmeisterAnyonBraidingAnd} for further explanations concerning this perspective in the setting of OAR).

\subsubsection{A modification of the Koo-Saleur formula}
\label{sec:KSmod}
In Section \ref{sec:KSconv}, we prove the convergence of the KS approximants in representations with non-trivial central charge $c\neq0$ (see Lemma \ref{lem:KSconvc}). But, this requires a certain modification of the original formula by Koo and Saleur to avoid spurious contributions due to the periodic boundary conditions introduced by the Fourier transform for one-particle functions on the dual momentum-space lattice $\Gamma_{N,{\color{blue}\pm}}$. While this modification has a concise form in terms of one-particle operators, it is also possible to formulate it in terms of the Hamiltonian density \eqref{eq:lathdfermisym}, and, moreover, in terms of Hamiltonians expressed via Temperley-Lieb generators as in Section \ref{sec:tl}.\\
In the formalism of second quantization, the Fourier transform of the Hamiltonian density is given by a one-particle operator $h^{(N)}_{k}$ according to formula  \eqref{eq:Flathdfermisym}. The modification of the Koo-Saleur formula is then given by:
\begin{align}
\label{eq:hdmod1p}
h^{(N)}_{k} & = \tfrac{1}{2L_{N}}\sum_{l\in\Gamma_{N,{\color{blue}\pm}}}h^{(N)}_{k}(l)\chi_{\Gamma_{N}}(l+k)e_{l+k}\otimes \overline{e}_{l} \\ \nonumber
 & = \sum_{x\in\Lambda_{N}}\Big(\tfrac{1}{2L_{N}}\sum_{l'\in\Gamma_{N,{\color{blue}\pm}}}e^{il'x}\chi_{\Gamma_{N}}(l'+k)\Big)\Big(\tfrac{1}{2L_{N}}\sum_{l\in\Gamma_{N,{\color{blue}\pm}}}h^{(N)}_{k}(l)e^{-ilx}e_{l+k}\otimes \overline{e}_{l}\Big) \\ \nonumber
  & = \sum_{x\in\Lambda_{N}}\overline{\widehat{\chi_{\Gamma_{N}}(\cdot+k)}}(x)\underbrace{\Big(\tfrac{1}{2L_{N}}\sum_{l\in\Gamma_{N,{\color{blue}\pm}}}h^{(N)}_{k}(l)e_{l+k}\otimes \overline{\tau^{(N)}_{{\color{blue}\pm} |-x}e_{l}}\Big)}_{=h^{(N)}_{k}(x)} \\ \nonumber
  & = \Big(\overline{\widehat{\chi_{\Gamma_{N}}(\cdot+k)}}\ast_{\Lambda_{N}}h^{(N)}_{k}\Big)(0),
\end{align}
where $\overline{e}_{l}$ is short hand for $\langle e_{l},\!\ \cdot\!\ \rangle_{N}$ for all $l\in\Gamma_{N,{\color{blue}\pm}}$. The characteristic function $\chi_{\Gamma_{N}}$ is understood to remove additional contributions arising from periodic boundary conditions on $\Gamma_{N}$, i.e., $h^{(N)}_{k}$ is not understood to be $\tfrac{2\pi}{\vep_{N}}$-periodic in $k\in\Gamma_{N,{\color{blue}+}}$. This formula together with \eqref{eq:latviraSQ} implies that the modified one-particle KS approximants resulting from the modified $h^{(N)}_{k}$ can be written as:
\begin{align}
\label{eq:KSmod1p}
\ell^{(N)}_{k} & = \Big(\overline{\widehat{\chi_{\Gamma_{N}}(\cdot+k)}}\ast_{\Lambda_{N}}\ell^{(N)}_{k}\Big)(0) \\ \nonumber
& = \sum_{x\in\Lambda_{N}}\overline{\widehat{\chi_{\Gamma_{N}}(\cdot+k)}}(x)\tfrac{1}{2}\vep_{N}^{-1}\Big(h_{k}^{(N)}(x)+(2\sin(\tfrac{1}{2}\vep_{N}k))^{-1}\Big[h_{k}^{(N)}(x),h^{(N)}_{0}\Big]\Big) \\ \nonumber
& = \tfrac{1}{2L_{N}}\sum_{l\in\Gamma_{N,{\color{blue}\pm}}}\ell^{(N)}_{k}(l)\chi_{\Gamma_{N}}(l+k)e_{l+k}\otimes \overline{e}_{l},
\end{align}
with the obvious definition of $\ell^{(N)}_{k}(x)$. An analogous formula holds for $\overline{\ell}^{(N)}_{k}$. At the level of the second-quantized operators, we obtain the following modifications:
\begin{align}
\label{eq:hdmod}
H^{(N)}_{k} & = \Big(\overline{\widehat{\chi_{\Gamma_{N}}(\cdot+k)}}\ast_{\Lambda_{N}}H^{(N)}_{k}\Big)(0), \\ \nonumber
H^{(N)}_{k}(x) & = \tfrac{L}{2\pi}\epsilon_{N}^{-1}\!\!\sum_{y\in\Lambda_{N}}\!\!\!e^{iky}\Big(a^{(1)\!\!\ \dagger}_{y+\vep_{N}}a^{(2)}_{y-x}\!-\!a^{(1)\!\!\ \dagger}_{y}a^{(2)}_{y-x}\!+\!a^{(1)\!\!\ \dagger}_{y}a^{(2)}_{y-\vep_{N}-x}\!-\!a^{(1)\!\!\ \dagger}_{y}a^{(2)}_{y-x} \\[-0.25cm] \nonumber
 & \hspace{2.6cm} \!+\!a^{(2)\!\!\ \dagger}_{y-\vep_{N}}a^{(1)}_{y-x}\!-\!a^{(2)\!\!\ \dagger}_{y}a^{(1)}_{y-x}\!+\!a^{(2)\!\!\ \dagger}_{y-\vep_{N}}a^{(1)}_{y-x}\!-\!a^{(2)\!\!\ \dagger}_{y}a^{(1)}_{y+\vep_{N}-x}\Big),
\end{align}
and
\begin{align}
\label{eq:KSmod}
L^{(N)}_{k} & = \Big(\overline{\widehat{\chi_{\Gamma_{N}}(\cdot+k)}}\ast_{\Lambda_{N}}L^{(N)}_{k}\Big)(0), \\ \nonumber
L^{(N)}_{k}(x) & = \tfrac{1}{2}\Big(H_{k}^{(N)}(x)+\tfrac{\pi\vep_{N}}{2L\sin(\frac{1}{2}\vep_{N}k)}\Big[H_{k}^{(N)}(x),H_0^{(N)}\Big]\Big),
\end{align}
and analogously for $\overline{L}^{(N)}_{k}$. It follows from \eqref{eq:hdmod} and Section \ref{sec:tl} that $H^{(N)}_{k}(x)$ and, thus, $L^{(N)}_{k}(x)$, $\overline{L}^{(N)}_{k}(x)$ can be expressed in terms of Temperley-Lieb generators and the action of the braid group (realized in terms of the Kauffman bracket or Jones representation, cf.~\cite{SchuetzDualityTwistedBoundary, StottmeisterAnyonBraidingAnd}).\\
It is evident from these formulas that we recover the original Koo-Saleur formulas for $N\rightarrow\infty$ because $\chi_{\Gamma_{\infty}}\equiv1$, and $H^{(N)}_{k}(0)$ agrees with the unmodified $H^{(N)}_{k}$ (and similarly all the other modified quantities).

\subsubsection{The formal scaling limit}
\label{sec:formallim}
Before we turn to the analysis of the convergence of the KS approximants in the scaling limit in the sense of Definition \ref{def:convseq}, we study their formal limit in terms of the asymptotic renormalization group elements $\alpha^{N}_{\infty}$. This allows us to determine candidates for limit operators which should correspond to representatives of the Virasoro algebra \eqref{eq:viralg}.
\paragraph{Complex fermions $\fA_{N,{\color{blue}\pm}}$.}
Let us consider the complex fermion case first. As $L^{(N)}_{k}$, $\overline{L}^{(N)}_{k}$ have convenient expression in momentum space, we first observe that:
\begin{align}
\label{eq:rglimalgFT}
\alpha^{N}_{\infty}(\hat{a}^{(j)}_{k}) & = \sum_{x\in\Lambda_{N}}e^{ikx}\alpha^{N}_{\infty}(a^{(j)}_{x}) = \sum_{x\in\Lambda_{N}}e^{ikx}a^{(j)}(s^{(\vep_{N})}_{x}).
\end{align}
Thus, because $\alpha^{N}_{\infty}$ is $*$-morphism, we find:
\begin{align}
\label{eq:limrglatvira}
\alpha^{N}_{\infty}\left(L^{(N)}_{k}\right) & = \tfrac{\vep_{N}}{2\pi}\sum_{l\in\Gamma_{N,{\color{blue}\pm}}}\sum_{i,j=1,2}\alpha^{N}_{\infty}(\hat{a}^{(i)\dag}_{l+k})\ell^{(N)}_{k}(l)_{ij}\alpha^{N}_{\infty}(\hat{a}^{(j)}_{l}) \\ \nonumber
 & = \tfrac{1}{2\pi}\vep_{N}^{2}\!\!\!\!\sum_{x,y\in\Lambda_{N}}\sum_{i,j=1,2}\!\!\!\!a^{(i)\dag}(\vep_{N}^{-\frac{1}{2}} s^{(\vep_{N})}_{x})\!\!\left(\sum_{l\in\Gamma_{N,{\color{blue}\pm}}} \!\!e^{i(l+k)x}\ell^{(N)}_{k}(l)_{ij}e^{-ily} \!\!\right) \!\!a^{(j)}(\vep_{N}^{-\frac{1}{2}} s^{(\vep_{N})}_{y}),\\ \nonumber
\alpha^{N}_{\infty}\left(\overline{L}^{(N)}_{k}\right) & = \tfrac{\vep_{N}}{2\pi}\sum_{l\in\Gamma_{N,{\color{blue}\pm}}}\sum_{i,j=1,2}\alpha^{N}_{\infty}(\hat{a}^{(i)\dag}_{l-k})\overline{\ell}^{(N)}_{k}(l)_{ij}\alpha^{N}_{\infty}(\hat{a}^{(j)}_{l}) \\ \nonumber
 & = \tfrac{1}{2\pi}\vep_{N}^{2}\!\!\!\!\sum_{x,y\in\Lambda_{N}}\sum_{i,j=1,2}\!\!\!\!a^{(i)\dag}(\vep_{N}^{-\frac{1}{2}} s^{(\vep_{N})}_{x}) \!\!\left(\sum_{l\in\Gamma_{N,{\color{blue}\pm}}} \!\!e^{i(l-k)x}\overline{\ell}^{(N)}_{k}(l)_{ij}e^{-ily} \!\!\right) \!\!a^{(j)}(\vep_{N}^{-\frac{1}{2}} s^{(\vep_{N})}_{y}).
\end{align}
Now, we can formally take the scaling limit $N\rightarrow\infty$ using $\vep_{N}^{2}\sum_{x,y\in\Lambda_{N}}\rightarrow\int_{S^{1}_{L}\times S^{1}_{L}}\!dx\!\ dy$ and $\vep_{N}^{-\frac{1}{2}}s^{(\vep_{N})}_{x}\rightarrow\delta_{x}$ (in the sense of distributions), as well as,
\begin{align}
\label{eq:scalelim1pvira}
 \ell^{(N)}_{k}(l) & \rightarrow \ell_{k}(l) := -(l+\tfrac{k}{2})p_{-}, & \overline{\ell}^{(N)}_{k}(l) & \rightarrow \overline{\ell}_{k}(l) := (l-\tfrac{k}{2})p_{+}.
\end{align}
The latter can equally be expressed as a formal convergence of operators on the one-particle space $\fh_{\infty,{\color{blue}\pm}}$:
\begin{align}
\label{eq:latviraconv1pformal}
\tilde{\ell}^{(N)}_{k} & := R^{N}_{\infty}\ell^{(N)}_{k}R^{N\!\ *}_{\infty} \rightarrow \ell_{k}, & \tilde{\overline{\ell}}^{(N)}_{k} & := R^{N}_{\infty}\overline{\ell}^{(N)}_{k}R^{N\!\ *}_{\infty} \rightarrow \overline{\ell}_{k}.
\end{align}
Therefore, the formal scaling limit of the KS approximants is:
\begin{align}
\label{eq:scalelimvira}
L_{k} & = \tfrac{L}{\pi}dF_{0}(\ell_{k}), & \overline{L}_{k} & = \tfrac{L}{\pi}dF_{0}(\overline{\ell}_{k}) \\ \nonumber
& = \tfrac{1}{2\pi}\sum_{l\in\Gamma_{\infty,{\color{blue}\pm}}}\sum_{i,j=1,2}\hat{a}^{(i)\dag}_{l+k}\ell_{k}(l)_{ij}\hat{a}^{(j)}_{l} &
 & = \tfrac{1}{2\pi}\sum_{l\in\Gamma_{\infty,{\color{blue}\pm}}}\sum_{i,j=1,2}\hat{a}^{(i)\dag}_{l-k}\overline{\ell}_{k}(l)_{ij}\hat{a}^{(j)} _{l} \\ \nonumber
 & = \tfrac{-1}{2\pi}\sum_{l\in\Gamma_{\infty,{\color{blue}\pm}}}(l+\tfrac{k}{2})\hat{\psi}^{\dag}_{-|l+k}\hat{\psi}_{-|l} &  & = \tfrac{1}{2\pi}\sum_{l\in\Gamma_{\infty,{\color{blue}\pm}}}(l-\tfrac{k}{2})\hat{\psi}^{\dag}_{+|l-k}\hat{\psi}_{+|l} \\ \nonumber
 & = \tfrac{L}{\pi}dF_{0}(\ell_{-,k}) = L_{-,k}, &  & = \tfrac{L}{\pi}dF_{0}(\ell_{+,k}) = L_{+,k},
\end{align}
where we invoked the chiral decomposition \eqref{eq:chiralF} in the last line. We give a precise definition of $\ell_{\pm,k}$ as an unbounded operator on one-particle $\fh_{\infty,{\color{blue}\pm}})$ in the next subsection. The definition of its (normal-ordered) second quantization as unbounded operator on Fock space $\fF_{\ua}(\fh_{\infty,{\color{blue}\pm}})$ is consequence of the results provided in Section \ref{sec:bogoliubov}, see especially \cite{CareyOnFermionGauge}. Similarly, a formal computation of the commutators in the scaling limit, again exploiting that $\alpha^{N}_{\infty}$ is a $*$-morphism, yields:
\begin{align}
\label{eq:scalelimviraCRex}
\left[L_{k}, L_{k’}\right] &\!=\!\tfrac{L}{\pi}(k-k’)L_{k+k’}, & \left[\overline{L}_{k}, \overline{L}_{k’}\right] &\!=\!\tfrac{L}{\pi}(k-k’)\overline{L}_{k+k’}, & \left[L_{k}, \overline{L}_{k’}\right] &\!=\!0.
\end{align}
Thus, we formally recover a representation of the Virasoro algebra \eqref{eq:viralg} (with central charge $c=0$) in the scaling limit of complex lattice fermions. The vanishing of the central charge is expected because the computation pertains to $\fA_{\infty,{\color{blue}\pm}}$ itself respectively to the Fock representation of $\fA_{\infty,{\color{blue}\pm}}$ with vacuum $\Omega_{0}$, i.e.~the Schwinger cocycle \eqref{eq:schwinger} vanishes: $c_{S=0} = 0$.

\medskip

It is also possible to recover representations of the Virasoro algebra with non-vanishing central charge ($c\neq0$) in the formal scaling limit by passing to the GNS representations of $\fA_{\infty,{\color{blue}\pm}}$ given by the scaling limits $\omega^{(\infty)}_{\infty,{\color{blue}\pm}} = \omega_{{\color{blue}\pm}}$ of the lattice vacua \eqref{eq:scalinglim} \& \eqref{eq:chiral2plimit}. But, at this point, we are forced to distinguish between (${\color{blue}-}$)- and (${\color{blue}+}$)-boundary conditions as the scaling limit $\omega_{{\color{blue}-}}$ is a pure state while $\omega_{{\color{blue}+}}$ is not because of the degeneracies at zero momentum.

\medskip

Let us first discuss the Neveu-Schwarz sector ((${\color{blue}-}$)-boundary condition). According to \eqref{eq:masslesslimproj}, the scaling limit $\omega_{{\color{blue}-}}$ is a pure quasi-free state on $\fA_{\infty,-}$ determined by the Hardy projection $P^{+}$ and its complement $P^{-} = 1- P^{+}$ given in \eqref{eq:pmom} via $S = P^{-}\oplus P^{+}$. Moreover, by \eqref{eq:chiral2plimit} $\omega_{{\color{blue}-}}$ is compatible with the factorization, $\fA_{\infty,-}\cong\fA^{(+)}_{\infty,{\color{blue}-}}\otimes_{\Z_{2}}\fA^{(-)}_{\infty,-}$, in to chiral components. By \eqref{eq:repqfpcar} the GNS representation $\pi_{{\color{blue}-}}$ of the restrictions of $\omega_{{\color{blue}-}} = \omega^{(+)}_{{\color{blue}-}}\otimes\omega^{(-)}_{{\color{blue}-}}$ can be realized on Fock space, $\fF_{\ua}(\fh_{\infty,{\color{blue}-}})\cong\fF_{\ua}(\fh^{(+)}_{\infty,{\color{blue}-}})\otimes\fF_{\ua}(\fh^{(-)}_{\infty,{\color{blue}-}})$, as:
\begin{align}
\label{eq:cxrepchi}
\pi_{{\color{blue}-}}(\psi_{\pm}(\xi)) & = \psi_{\pm}(P^{\pm}\xi)+\psi^{\dag}_{\pm}(JP^{\mp}\xi), & \xi & \in\fh^{(\pm)}_{\infty,{\color{blue}-}}
\end{align}
where we choose the conjugation $J\hat{\xi} = \overline{\hat{\xi}}$ which satisfies $JP^{\pm} = P^{\pm}J$.

\medskip

By analogy with \eqref{eq:limrglatvira}, we compute the formal limit of $\tfrac{L}{\pi}dF_{{\color{blue}-}}(\tilde{\ell}^{(N)}_{\pm,k})=\!\ :\!\pi_{{\color{blue}-}}(\alpha^{N}_{\infty}(\tfrac{L}{\pi}dF_{0}(\ell^{(N)}_{\pm,k})))$: resulting in:
\begin{align}
\label{eq:viraCRcxrep}
\tfrac{L}{\pi}dF_{{\color{blue}-}}(\tilde{\ell}^{(N)}_{\pm,k}) & \rightarrow \tfrac{L}{\pi}dF_{{\color{blue}-}}(\ell_{\pm,k}) =: L_{\pm,k}, \\ \nonumber
[L_{\pm,k},L_{\pm,k'}] & = \tfrac{L}{\pi}(k-k')L_{\pm,k+k'} + \big(\tfrac{L}{\pi}\big)^{2}c_{P^{\mp}}(\ell_{\pm,k},\ell_{\pm,k'}) \\ \nonumber
& = \tfrac{L}{\pi}(k-k')L_{\pm,k+k'} + \delta_{k+k’,0}\tfrac{1}{12}(\tfrac{L}{\pi}k)((\tfrac{L}{\pi}k)^2-1),
\end{align}
where we used the fact that $\ell_{\pm,k}$ has off-diagonal parts in the Hilbert-Schmidt class. This shows that we formally obtain a representation of the Virasoro algebra with central charge $c=1$ as expected, and the GNS vector of $\omega_{{\color{blue}-}}$ has conformal weight $h=0$, i.e.~$L_{\pm,0}\Omega_{0} = 0$.

\bigskip

In the Ramond sector ((${\color{blue}+}$)-boundary condition), we purify $\omega_{{\color{blue}+}}$ by doubling according to \eqref{eq:doubleproj}. The doubled projections, taking the role of $P^{\pm}$ in the Neveu-Schwarz sector, are explicitly given by,
\begin{align}
\label{eq:doubleprojR}
P_{S_{\pm}} & = \tfrac{1}{2L}\!\!\sum_{k\in\Gamma_{\infty,{\color{blue}+}}\setminus\{0\}}\!\!\begin{pmatrix} \theta(\pm k) & 0 \\ 0 & \theta(\mp k) \end{pmatrix} e_{k}\otimes \langle e_{k},\!\ \cdot\!\ \rangle_{\infty} + \tfrac{1}{2L}\!\!\begin{pmatrix} \tfrac{1}{2} & \pm\tfrac{1}{2} \\ \pm\tfrac{1}{2} & \tfrac{1}{2} \end{pmatrix} e_{0}\otimes \langle e_{0},\!\ \cdot\!\ \rangle_{\infty},
\end{align}
on $\fh^{(\pm)}_{\infty,{\color{blue}+}}\oplus\fh^{(\pm)}_{\infty,{\color{blue}+}}$. The GNS representation $\pi_{{\color{blue}+}}$ of $\omega_{{\color{blue}+}} = \omega^{(+)}_{{\color{blue}+}}\otimes\omega^{(-)}_{{\color{blue}+}}$ is then realized on $\fF_{\ua}(\fh_{\infty,{\color{blue}+}}\oplus\fh_{\infty,{\color{blue}+}})\cong\fF_{\ua}(\fh^{(+)}_{\infty,{\color{blue}+}})^{\otimes 2}\otimes\fF_{\ua}(\fh^{(-)}_{\infty,{\color{blue}+}})^{\otimes 2}$ by:
\begin{align}
\label{eq:cxrepchiR}
\pi_{{\color{blue}+}}(\psi_{\pm}(\xi)) & = \psi_{\pm}(P_{S_{\pm}}(\xi\oplus0))+\psi^{\dag}_{\pm}((J\oplus J)P_{S_{\mp}}(\xi\oplus0)), & \xi & \in\fh^{(\pm)}_{\infty,{\color{blue}+}},
\end{align}
where $\psi_{\pm}$ on the right hand side is the obvious two-component extension.

\medskip

Computing the formal limit as in \eqref{eq:viraCRcxrep} of the KS approximants with respect to $\pi_{{\color{blue}+}}$ leads to:
\begin{align}
\label{eq:viraCRcxrepR}
\tfrac{L}{\pi}dF_{{\color{blue}+}}(\tilde{\ell}^{(N)}_{\pm,k}) & \rightarrow \tfrac{L}{\pi}dF_{{\color{blue}+}}(\ell_{\pm,k}) =: L_{\pm,k}, \\ \nonumber
[L_{\pm,k},L_{\pm,k'}] & = \tfrac{L}{\pi}(k-k')L_{\pm,k+k'} + \big(\tfrac{L}{\pi}\big)^{2}c_{P_{S_{\mp}}}(\ell_{\pm,k},\ell_{\pm,k'}) \\ \nonumber
& = \tfrac{L}{\pi}(k-k')L_{\pm,k+k'} + \delta_{k+k’,0}\tfrac{1}{12}(\tfrac{L}{\pi}k)((\tfrac{L}{\pi}k)^2+2).
\end{align}
We recover the Virasoro algebra \eqref{eq:viralg} with $c=1$ by redefining the limit at zero momentum,
\begin{align}
\label{eq:zeromodeR}
L_{\pm,0} & := \tfrac{L}{\pi}dF_{{\color{blue}+}}(\ell_{\pm,k})+\tfrac{1}{8},
\end{align}
which shows that the GNS vector of $\omega_{{\color{blue}+}}$ has conformal weight $h=\tfrac{1}{8}$, i.e.~$L_{\pm,0}\Omega_{0}\otimes\Omega_{0} = \tfrac{1}{8}\Omega_{0}\otimes\Omega_{0}$.
\paragraph{Majorana fermions $\fB_{N,{\color{blue}\pm}}$.} The computation of the formal limit of the KS approximants in the Majorana case is structurally identical, hinging only on the formal convergence \eqref{eq:latviraconv1pformal} of the one-particle operators \eqref{eq:latviramaj} \& \eqref{eq:latviramajanti}:
\begin{align}
\label{eq:scalelim1pviramaj}
 \ell^{(N)}_{k}(l) & \rightarrow \ell_{k}(l) := -(l+\tfrac{k}{2})\tfrac{1}{2}(\1-\sigma_{z}), & \overline{\ell}^{(N)}_{k}(l) & \rightarrow \overline{\ell}_{k}(l) := (l-\tfrac{k}{2})\tfrac{1}{2}(\1+\sigma_{z}).
\end{align}
Therefore, we only state the final results pertaining to $\fB_{\infty,{\color{blue}\pm}}$ itself,
\begin{align}
\label{eq:scalelimviramaj}
L_{k} & = \tfrac{L}{\pi}dQ_{0}(\ell_{k}), & \overline{L}_{k} & = \tfrac{L}{\pi}dQ_{0}(\overline{\ell}_{k})  \\ \nonumber
 & = \tfrac{-1}{4\pi}\sum_{l\in\Gamma_{\infty,{\color{blue}\pm}}}(l+\tfrac{k}{2})\hat{\Psi}_{-|-(l+k)}\hat{\Psi}_{-|l} &  & = \tfrac{1}{4\pi}\sum_{l\in\Gamma_{\infty,{\color{blue}\pm}}}(l-\tfrac{k}{2})\hat{\Psi}_{+|-(l-k)}\hat{\Psi}_{+|l} \\ \nonumber
 & = \tfrac{L}{\pi}dQ_{0}(\ell_{-,k}) = L_{-,k}, &  & = \tfrac{L}{\pi}dQ_{0}(\ell_{+,k}) = L_{+,k},
\end{align}
and the scaling limits of the lattice vacua $\omega_{{\color{blue}-}}$,
\begin{align}
\label{eq:scalelimviramajNS}
L_{\pm, k} & = \tfrac{L}{\pi}dQ_{{\color{blue}-}}(\ell_{\pm, k}),
\end{align}
and $\omega_{{\color{blue}+}}$,
\begin{align}
\label{eq:scalelimviramajR}
L_{\pm, k} & = \tfrac{L}{\pi}dQ_{{\color{blue}+}}(\ell_{\pm, k}) + \tfrac{1}{16}\delta_{k,0},
\end{align}
both of which have $c=\tfrac{1}{2}$ and conformal weight $h=0$ respectively $h=\tfrac{1}{16}$. The normal-ordered quantizations $dQ_{{\color{blue}+}}$ and $dQ_{{\color{blue}-}}$ are defined with respect to the self-dual analogues of \eqref{eq:cxrepchi}:
\begin{align}
\label{eq:majrepchi}
\pi_{{\color{blue}-}}(\Psi_{\pm}(\xi)) & = \psi_{\pm}(P^{\pm}\xi)+\psi^{\dag}_{\pm}(CP^{\mp}\xi), & \xi & \in\fh^{(\pm)}_{\infty,{\color{blue}-}}, \\
\label{eq:majrepchiR}
\pi_{{\color{blue}+}}(\Psi_{\pm}(\xi)) & = \psi_{\pm}(P_{S_{\pm}}(\xi\oplus0))+\psi^{\dag}_{\pm}((C\oplus (-C))P_{S_{\mp}}(\xi\oplus0)), & \xi & \in\fh^{(\pm)}_{\infty,{\color{blue}+}}.
\end{align}

\subsubsection{Convergence of the KS approximants}
\label{sec:KSconv}
In the computation of the formal scaling limit in Section \ref{sec:formallim}, we have seen that establishing convergence of one-particle operators \eqref{eq:latviraconv1pformal} in a rigorous way is central to deduce the convergence of the KS approximant to the Virasoro generators in the scaling limit. Now, we make this idea precise and argue that such a result shows that $\{L^{(N)}_{k}\}_{N\in\N_{0}}$ and $\{\overline{L}^{(N)}_{k}\}_{N\in\N_{0}}$ are convergent sequences in the sense of Definition \ref{def:convseq} with limits given by $L_{k}$ and $\overline{L}_{k}$. We focus on the convergence of the chiral and anti-chiral components, $\{L^{(N)}_{\pm, k}\}_{N\in\N_{0}}$, to $L_{\pm,k}$ as the convergence of the other components follows along the same line albeit the limits are the zero operators. Moreover, we show that the result extends to smeared KS approximant \eqref{eq:latvirasmeared} and Virasoro generators \eqref{eq:virasmeared}.

\medskip

We explicitly provide the statements for the complex fermion algebra $\fA_{\infty,{\color{blue}\pm}}\cong\fA^{(+)}_{\infty,{\color{blue}\pm}}\otimes_{\Z_{2}}\fA^{(-)}_{\infty,{\color{blue}\pm}}$. The statements in Majorana case $\fB_{\infty,{\color{blue}\pm}}\cong\fB^{(+)}_{\infty,{\color{blue}\pm}}\otimes_{\Z_{2}}\fB^{(-)}_{\infty,{\color{blue}\pm}}$ are completely analogous.

\paragraph{Virasoro generators.} As a preparation for the convergence statement, we need to properly define the one-particle operator $\ell_{\pm,k}$ of the chiral and anti-chiral Virasoro generators as unbounded operators on $\fh^{(\pm)}_{\infty,{\color{blue}\pm}}\cong L^{2}(S^{1}_{L})_{{\color{blue}\pm}}$.

\begin{defn}[One-particle Virasoro generators]
\label{def:vir}
The operators $\ell_{\pm,k}$ associated with $\fA^{(\pm)}_{\infty,{\color{blue}\pm}}$ and $\fB^{(\pm)}_{\infty,{\color{blue}\pm}}$ are given on the standard dense invariant domain $\cD_{\std}\subset L^{2}(S^{1}_{L})_{{\color{blue}\pm}}$ spanned by the plane waves $\{e_{m}\}_{m\in\Gamma_{\infty,{\color{blue}\pm}}}$ by the expressions:
\begin{align}
\label{eq:contvira1p}
\ell_{\pm,k} & = \tfrac{\pm1}{2\pi}\sum_{l\in\Gamma_{\infty,{\color{blue}\pm}}}(l\mp\tfrac{k}{2})e_{l\mp k}\otimes \overline{e}_{l},
\end{align}
where $\overline{e}_{l}$ is short hand for $\langle e_{l},\!\ \cdot\!\ \rangle_{\infty}$ for all $l\in\Gamma_{\infty,{\color{blue}\pm}}$. Since we are particularly interested in associated unitary groups and,
\begin{align}
\label{eq:contvira1pad}
\ell_{\pm,k}^{*} & = \ell_{\pm,-k},
\end{align}
holds on $\cD_{\std}$, we define the real and imaginary parts of $\ell_{\pm,k}$,
\begin{align}
\label{eq:contvira1pRI}
r_{\pm,k} & = \tfrac{1}{2}(\ell_{\pm,k}+\ell_{\pm,-k}), & \iota_{\pm,k} & = \tfrac{1}{2i}(\ell_{\pm,k}-\ell_{\pm,-k}),
\end{align}
on $\cD_{\std}$ as well.
\end{defn}
We note that \eqref{eq:contvira1pad} implies that $\ell_{\pm,k}$ is closable. In view of this definition, we use the following notation for the one-particle KS approximants \eqref{eq:latvirasubchi},
\begin{align}
\label{eq:latvira1p}
\ell^{(N)}_{\pm,k} & = \tfrac{\pm1}{2\pi}\sum_{l\in\Gamma_{N,{\color{blue}\pm}}}\cos(\tfrac{1}{4}\vep_{N}k)^{2}\sin(\vep_{N}(l\mp\tfrac{k}{2}))e_{l\mp k}\otimes \overline{e}_{l},
\end{align}
and for the real and imaginary parts,
\begin{align}
\label{eq:latvira1pRI}
r^{(N)}_{\pm,k} & = \tfrac{1}{2}(\ell^{(N)}_{\pm,k}+\ell^{(N)}_{\pm,-k}), & \iota^{(N)}_{\pm,k} & = \tfrac{1}{2i}(\ell^{(N)}_{\pm,k}-\ell^{(N)}_{\pm,-k}).
\end{align}
as these evidently satisfy:
\begin{align}
\label{eq:latvira1pad}
\ell^{(N)}_{\pm,k}{}^{*} & = \ell^{(N)}_{\pm,-k}.
\end{align}

The following bounds for all $j\in\N_{0}$ are straightforward:
\begin{align}
\label{eq:analyticvector}
\|\ell_{\pm,k}^{j}e_{m}\|_{\infty}^{2} & \leq 2L(\tfrac{L}{\pi})^{2j}\left\{\begin{matrix} |m|^{2j} & k=0 \\ |k|^{2j}\left(\tfrac{\Gamma(|\frac{m}{k}|+\frac{1}{2}+j)}{\Gamma(|\frac{m}{k}|+\frac{1}{2})}\right)^{2} & k\neq0 \end{matrix}\right.,\\ \nonumber
\|r_{\pm,k}^{j}e_{m}\|_{\infty}^{2} & \leq 2L(\tfrac{L}{2\pi})^{2j}\left\{\begin{matrix} |2 m|^{2j} & k=0 \\ \binom{2j}{j}|k|^{2j}\left(\tfrac{\Gamma(|\frac{m}{k}|+\frac{1}{2}+j)}{\Gamma(|\frac{m}{k}|+\frac{1}{2})}\right)^{2} & k\neq0 \end{matrix}\right.,
\end{align}
with an analogous bound for $\|\iota_{\pm,k}^{j}e_{m}\|_{\infty}^{2}$. Here, $\Gamma(x+1) = x\Gamma(x)$ denotes the Gamma function. These bounds imply that each plane wave $e_{m}$ is an analytic vector for $\ell_{\pm,k}$, $r_{\pm,k}$ and $\iota_{\pm,k}$ because the series,
\begin{align*}
\sum_{j=0}^{\infty}\tfrac{t^{j}}{j!}\|\ell_{\pm,k}^{j}e_{m}\|_{\infty} & < \infty, & \sum_{j=0}^{\infty}\tfrac{t^{j}}{j!}\|r_{\pm,k}^{j}e_{m}\|_{\infty} & < \infty, & \sum_{j=0}^{\infty}\tfrac{t^{j}}{j!}\|\iota_{\pm,k}^{j}e_{m}\|_{\infty} & < \infty,
\end{align*}
are convergent for some $t>0$. By Nelson’s analytic vector theorem \cite{NelsonAnalyticVectors, ReedMethodsOfModern2} we have:
\begin{cor}
\label{cor:viraesa}
The plane waves $\{e_{m}\}_{m\in\Gamma_{\infty,{\color{blue}\pm}}}\subset\cD_{\std}$ are a total set of analytic vectors for $\ell_{\pm,k}$, $r_{\pm,k}$ and $\iota_{\pm,k}$. Moreover, $r_{\pm,k}$ and $\iota_{\pm,k}$ are essentially self-adjoint on $\cD_{\std}$.
\end{cor}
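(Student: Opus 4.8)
The plan is to read both claims off the growth bounds \eqref{eq:analyticvector} together with Nelson's analytic vector theorem \cite{NelsonAnalyticVectors, ReedMethodsOfModern2}. The corollary has two parts driven by the same estimate: first that each plane wave $e_{m}$ is an analytic vector for $\ell_{\pm,k}$, $r_{\pm,k}$ and $\iota_{\pm,k}$, and second that the symmetric operators $r_{\pm,k}$ and $\iota_{\pm,k}$ are essentially self-adjoint on $\cD_{\std}$.

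First I would verify analyticity. Fix $k\neq0$ and abbreviate $a=|\tfrac{m}{k}|+\tfrac{1}{2}$, so that the bracket in \eqref{eq:analyticvector} is the Pochhammer ratio $\Gamma(a+j)/\Gamma(a)=a(a+1)\cdots(a+j-1)$. By Stirling's formula this ratio grows like $j!\,j^{a-1}/\Gamma(a)$ up to a bounded factor, hence $\|\ell_{\pm,k}^{j}e_{m}\|_{\infty}\lesssim(\tfrac{L|k|}{\pi})^{j}\tfrac{\Gamma(a+j)}{\Gamma(a)}$ is bounded by $C^{j}j!$ times a polynomial in $j$. Inserting this into the analytic-vector series and cancelling the $j!$ leaves a geometric-times-polynomial series
\begin{align*}
\sum_{j=0}^{\infty}\tfrac{t^{j}}{j!}\|\ell_{\pm,k}^{j}e_{m}\|_{\infty} & \lesssim \sum_{j=0}^{\infty}\big(\tfrac{tL|k|}{\pi}\big)^{j}j^{a-1},
\end{align*}
which converges for every $t<\pi/(L|k|)$, giving a strictly positive radius. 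For $k=0$ the bound is merely exponential, $\|\ell_{\pm,0}^{j}e_{m}\|_{\infty}\lesssim(\tfrac{L|m|}{\pi})^{j}$, so the series converges for all $t$ and $e_{m}$ is in fact entire. The identical argument applies to $r_{\pm,k}$ and $\iota_{\pm,k}$ via their bounds in \eqref{eq:analyticvector}; the only new feature is the factor $\binom{2j}{j}\leq 4^{j}$, which shrinks the radius of convergence but does not affect summability.

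Next I would settle totality and symmetry. The plane waves $\{e_{m}\}_{m\in\Gamma_{\infty,\pm}}$ form an orthogonal basis of $L^{2}(S^{1}_{L})_{\pm}$, so their span $\cD_{\std}$ is dense and the family is a total set of analytic vectors, which is the first assertion. For the second, \eqref{eq:contvira1pad} gives $\ell_{\pm,k}^{*}=\ell_{\pm,-k}$ on $\cD_{\std}$, whence $r_{\pm,k}=\tfrac{1}{2}(\ell_{\pm,k}+\ell_{\pm,-k})$ and $\iota_{\pm,k}=\tfrac{1}{2i}(\ell_{\pm,k}-\ell_{\pm,-k})$ are symmetric on $\cD_{\std}$; moreover $\cD_{\std}$ is invariant under each because $\ell_{\pm,k}$ sends $e_{m}$ to a multiple of $e_{m\mp k}$ with $m\mp k\in\Gamma_{\infty,\pm}$. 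Having a symmetric operator with an invariant dense domain that contains a total set of analytic vectors, Nelson's theorem yields essential self-adjointness of $r_{\pm,k}$ and $\iota_{\pm,k}$.

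The only point demanding genuine care is the Stirling estimate controlling $\Gamma(a+j)/\Gamma(a)$ by $C^{j}j!$; everything else is bookkeeping. It is worth flagging that $\ell_{\pm,k}$ itself is not symmetric for $k\neq0$ (indeed $\ell_{\pm,k}^{*}=\ell_{\pm,-k}$), which is exactly why the essential self-adjointness statement is restricted to the real and imaginary parts $r_{\pm,k}$, $\iota_{\pm,k}$ rather than to $\ell_{\pm,k}$ directly.
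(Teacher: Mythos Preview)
Your proposal is correct and follows essentially the same route as the paper: the bounds \eqref{eq:analyticvector} give a positive radius of convergence for the analytic-vector series, and Nelson's theorem then yields essential self-adjointness of the symmetric parts. You supply more detail than the paper does---in particular the Stirling control of $\Gamma(a+j)/\Gamma(a)$, where a cruder product bound such as $\prod_{i=0}^{j-1}(a+i)\leq (a+1)^{j}j!$ would also suffice---but the argument and its ingredients are the same.
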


Because of these observations, we do not distinguish between the $\ell_{\pm,k}$, $r_{\pm,k}$, $\iota_{\pm,k}$ and their closures in what follows.

\begin{rem}[Majorana one-particle Virasoro generators]
\label{rem:impcond}
In the Majorana case the one-particle Virasoro generators need to satisfy the implementability condition \eqref{eq:sdcconstraints} which is met by the one-particle KS approximants \eqref{eq:latvira1p} and the one-particle Virasoro generators \eqref{eq:contvira1p}:
\begin{align}
\label{eq:impcondKS}
C\ell^{(N)}_{\pm,k}{}^{*}C & = -\ell^{(N)}_{\pm,k}, & C\ell_{\pm,k}^{*}C & = -\ell_{\pm,k} \;.
\end{align}
\end{rem}

We are now in a position to establish the convergence of the one-particle KS approximants to the one-particle Virasoro generators in the sense of Definition \ref{def:convseq} using the asymptotic maps $R^{N}_{\infty}:\ltwo(\Gamma_{N,{\color{blue}\pm}},(2L_{N})^{-1})\cong\fh^{(\pm)}_{N,{\color{blue}\pm}}\rightarrow\fh^{(\pm)}_{\infty,{\color{blue}\pm}}\cong L^{2}(S^{1}_{L})_{{\color{blue}\pm}}$, either from the wavelet or the momentum-cutoff renormalization group. In particular, we show the convergence of \eqref{eq:latviraconv1pformal} in the strong operator topology on a dense, common core $\cD\subset\fh^{(\pm)}_{\infty,{\color{blue}\pm}}$, i.e.:
\begin{align}
\label{eq:latviraconv1p}
\lim_{N\rightarrow\infty}\|(\tilde{\ell}^{(N)}_{\pm,k} - \ell_{\pm,k})\xi\|_{\infty} & = 0,
\end{align}
for all $\xi\in\cD$. A similar state holds for the real and imaginary parts \eqref{eq:contvira1pRI} \& \eqref{eq:latvira1pRI}.

\medskip

For the momentum-cutoff renormalization group, we use $\cD=\cD_{\std}$ because:
\begin{align}
\label{eq:momdominv}
R^{N}_{\infty}(e_{k}) & = \vep_{N}^{-\frac{1}{2}}e_{k}, & k&\in\Gamma_{N,{\color{blue}\pm}}\;,
\end{align}
where $e_{k}\in\fh^{(\pm)}_{N,{\color{blue}\pm}}$ on the left but $e_{k}\in\fh^{(\pm)}_{\infty,{\color{blue}\pm}}$ on the right, i.e.~$\bigcup_{N\in\N_{0}}R^{N}_{\infty}(\fh^{(\pm)}_{N,{\color{blue}\pm}})=\cD_{\std}$.

\medskip

For the wavelet renormalization group, we use $\cD=\cD_{W}$ defined as:
\begin{align}
\label{eq:waveletdom}
\cD_{W} = \bigcup_{N\in\N_{0}}R^{N}_{\infty}(\fh^{(\pm)}_{N,{\color{blue}\pm}}),
\end{align}
which agrees with the span of the wavelet basis associated with the scaling function $s^{(\vep_{0})}\in L^{2}(S^{1}_{L})_{+}$ by construction \cite{DaubechiesTenLecturesOn}. That $\cD_{W}$ is dense common core for the one-particle Virasoro generators and their real and imaginary parts is due to the following argument: 

We observe that the graph norms of $\ell_{\pm,k} $, $r_{\pm,k}$ and $\iota_{\pm,k}$ are bounded by the norm of the first-order Sobolev space $h^{1}(\Gamma_{\infty,{\color{blue}\pm}})$:
\begin{align*}
\|\xi\|_{\ell_{\pm,k}}^{2} & = \|\xi\|_{\infty}^{2}+\|\ell_{\pm,k}\xi\|_{\infty}^{2} \leq \max\{1+(\tfrac{L k}{2\pi})^{2},(\tfrac{L}{\pi})^{2}\}\|\xi\|_{h^{1}(\Gamma_{\infty,{\color{blue}\pm}})}^{2},
\end{align*}
and similarly for $\|\xi\|_{r_{\pm,k}}$ and $\|\xi\|_{\iota_{\pm,k}}$. But, it is known that $\cD_{\std}$ and $\cD_{W}$ (provided the scaling function $s$ is sufficiently regular) are dense in $h^{1}(\Gamma_{\infty,{\color{blue}\pm}})$ \cite{WASA}. It follows that the graph-norm closures of $\cD_{\std}$ and $\cD_{W}$ agree, and we conclude that $\cD_{W}$ is a core for $\ell_{\pm,k}$, $r_{\pm,k}$ and $\iota_{\pm,k}$ \cite{ReedMethodsOfModern1}.

\paragraph{Wavelet renormalization.}
\label{sec:KSconvwav}
In this paragraph $R^{N}_{M}$ and $R^{N}_{\infty}$ as well as $\alpha^{N}_{M}$ and $\alpha^{N}_{\infty}$ denote the wavelet renormalization group of Definition \ref{def:waveletrg}.

We formulate the essential convergence statement \eqref{eq:latviraconv1p} as a lemma:
\begin{lemma}[Convergence of the one-particle KS approximants]
\label{lem:KSconv}
Let $\xi\in D_{W}$ and $s\in C^{\alpha}(\R)$ a sufficiently regular, compactly supported orthonormal scaling function. Then:
\begin{align*}
\lim_{N\rightarrow\infty}\|(\tilde{\ell}^{(N)}_{\pm,k} - \ell_{\pm,k})\xi\|_{\infty} & = 0,
\end{align*}
where $\tilde{\ell}^{(N)}_{\pm,k}:=R^{N}_{\infty}\ell^{(N)}_{\pm,k}R^{N\!\ *}_{\infty}$, and similarly:
\begin{align*}
\lim_{N\rightarrow\infty}\|(\tilde{r}^{(N)}_{\pm,k} - r_{\pm,k})\xi\|_{\infty} & = 0, & \lim_{N\rightarrow\infty}\|(\tilde{\iota}^{(N)}_{\pm,k} - \iota_{\pm,k})\xi\|_{\infty} & = 0.
\end{align*} 
\end{lemma}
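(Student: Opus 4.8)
The plan is to reduce to plane waves, transport everything to momentum space, and combine the pointwise convergence of the lattice symbol with a dominated-convergence argument driven by the decay of $\hat{s}$ from Lemma \ref{lem:decay}. Throughout, $k$ is the fixed KS mode and $l$ the momentum variable. By \eqref{eq:waveletdom} we have $\cD_{W}=\bigcup_{N}R^{N}_{\infty}(\fh^{(\pm)}_{N,\pm})$, and since every operator in sight is linear, it suffices to treat $\xi=R^{N_{0}}_{\infty}(e_{m})$ for a single plane wave $e_{m}$, $m\in\Gamma_{N_{0},\pm}$, at a fixed scale $N_{0}$. For $N\geq N_{0}$, asymptotic compatibility $R^{N}_{\infty}\circ R^{N_{0}}_{N}=R^{N_{0}}_{\infty}$ and isometry $R^{N}_{\infty}{}^{*}R^{N}_{\infty}=\1$ (Proposition \ref{prop:waveletrg}) give $R^{N}_{\infty}{}^{*}\xi=R^{N_{0}}_{N}(e_{m})$, so that
\begin{align*}
\tilde{\ell}^{(N)}_{\pm,k}\xi & = R^{N}_{\infty}\,\ell^{(N)}_{\pm,k}\,R^{N_{0}}_{N}(e_{m}), & \ell_{\pm,k}\xi & = \ell_{\pm,k}\,R^{N_{0}}_{\infty}(e_{m}).
\end{align*}
The task is thus to show $\|R^{N}_{\infty}\ell^{(N)}_{\pm,k}R^{N_{0}}_{N}e_{m}-\ell_{\pm,k}R^{N_{0}}_{\infty}e_{m}\|_{\infty}\to0$.

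Next I would pass to momentum space using \eqref{eq:waveletrg1pF}: $R^{N_{0}}_{N}$ carries the weight $2^{(N-N_{0})/2}\prod_{n=1}^{N-N_{0}}m_{0}(\vep_{N_{0}+n}\,\cdot\,)$, while $R^{N_{0}}_{\infty}$ carries $\vep_{N_{0}}^{1/2}\hat{s}(\vep_{N_{0}}\,\cdot\,)$. The scaling relation \eqref{eq:infprod} in its finite form $\hat{s}(\vep_{N_{0}}\,\cdot\,)=\hat{s}(\vep_{N}\,\cdot\,)\prod_{n=1}^{N-N_{0}}m_{0}(\vep_{N_{0}+n}\,\cdot\,)$ together with $\vep_{N}^{1/2}2^{(N-N_{0})/2}=\vep_{N_{0}}^{1/2}$ makes the common factor $\vep_{N_{0}}^{1/2}\prod_{n}m_{0}(\vep_{N_{0}+n}\,\cdot\,)$ appear in both terms. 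Both the continuum generator and its approximant act by shifting $l\mapsto l\mp k$ and multiplying by a symbol — the linear $l\mapsto(l\mp\tfrac{k}{2})$ for $\ell_{\pm,k}$, respectively the bounded periodic symbol $\sigma^{(N)}_{\pm,k}(l)=\vep_{N}^{-1}\cos(\tfrac{1}{4}\vep_{N}k)^{2}\sin(\vep_{N}(l\mp\tfrac{k}{2}))$ for $\ell^{(N)}_{\pm,k}$. After factoring out the common weight, the component at each target momentum $l'$ reduces (up to the fixed prefactor $\tfrac{\pm1}{2\pi}$) to the single bracket $\sigma^{(N)}_{\pm,k}(l'\pm k)\,\hat{s}(\vep_{N}l')-(l'\pm\tfrac{k}{2})\,\hat{s}(\vep_{N}(l'\pm k))$, supported on the coarse sublattice $l'\equiv m\mp k$.

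As $N\to\infty$, $\vep_{N}\to0$ forces $\sigma^{(N)}_{\pm,k}(l'\pm k)\to(l'\pm\tfrac{k}{2})$ and $\hat{s}(\vep_{N}\,\cdot\,)\to\hat{s}(0)=1$, so the bracket vanishes pointwise. For norm convergence I would then invoke dominated convergence over $\Gamma_{\infty,\pm}$. The continuum piece telescopes exactly, via \eqref{eq:infprod}, to the $N$-independent weight $(l'\pm\tfrac{k}{2})\hat{s}(\vep_{N_{0}}(l'\pm k))$, which is square-summable by Lemma \ref{lem:decay} precisely because $s\in C^{\alpha}$ with $\cK<K-(1+\alpha)$, so the decay of $\hat{s}$ beats the linear growth of the symbol. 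The lattice piece, carrying the bare symbol $\sim\vep_{N}^{-1}\sin(\vep_{N}\,\cdot\,)$ and the finite product $\prod_{n}m_{0}$, is controlled uniformly in $N$ by Lemma \ref{lem:convergence}, which supplies exactly the $(\tfrac{|\sin|}{|\cdot|})^{K}(1+|\cdot|)^{\cK}$ estimate on the finite products inside the Brillouin zone $|l'|\lesssim\pi\vep_{N}^{-1}$. This produces an $N$-uniform, square-summable majorant and yields \eqref{eq:latviraconv1p}.

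The \textbf{main obstacle} is that the momentum shift by $k$ obstructs the clean cancellation $R^{N}_{\infty}R^{N_{0}}_{N}=R^{N_{0}}_{\infty}$: the $\vep_{N}^{-1}$ blow-up of the bare symbol is only tamed through the exact telescoping of the scaling relation combined with the polynomial decay of $\hat{s}$, and this is also the point at which the aliased, high-momentum ``fermion-doubler'' branches of the periodic symbol must be shown harmless, which is what forces the regularity hypothesis on $s$. Finally, since $r_{\pm,k},\iota_{\pm,k}$ and $r^{(N)}_{\pm,k},\iota^{(N)}_{\pm,k}$ are the same fixed real linear combinations of $\ell_{\pm,\pm k}$ and $\ell^{(N)}_{\pm,\pm k}$ by \eqref{eq:contvira1pRI} and \eqref{eq:latvira1pRI}, their convergence follows immediately from that of the $\ell^{(N)}_{\pm,\pm k}$ by the triangle inequality.
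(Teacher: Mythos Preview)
Your overall strategy matches the paper's proof: reduce to plane-wave inputs from $\cD_{W}$, pass to momentum space via \eqref{eq:waveletrg1pF}, exploit $R^{N}_{\infty}R^{N_{0}}_{N}=R^{N_{0}}_{\infty}$, establish pointwise convergence of the symbol, and finish by dominated convergence. The paper begins with a Cauchy--Schwarz reduction to matrix elements $f^{(N)}_{k,\pm}(M,m,l)$, then performs essentially the momentum-space computation you sketch, including the collapse of the aliasing sum via periodicity of $m_{0}$.

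The substantive divergence is in how the $N$-uniform dominant is built, and here your sketch has a gap. You propose to bound the lattice piece $\sigma^{(N)}\,\hat{s}(\vep_{N}l')\,\prod_{n}m_{0}(\vep_{N_{0}+n}(l'\pm k))$ via Lemma~\ref{lem:convergence} on the finite product. But that lemma controls $\prod_{n=1}^{M}m_{0}(2^{-n}\,\cdot\,)$ only on $[-2^{M}\pi,2^{M}\pi]$, i.e.\ for $|l'\pm k|\leq\pi\vep_{N}^{-1}$; outside this window the product is merely periodic, and combining the bound on the representative with $|\hat{s}(\vep_{N}l')|\lesssim(1+|\vep_{N}l'|)^{-K+\cK}$ and $|\sigma^{(N)}|\leq\vep_{N}^{-1}$ does not give an $N$-uniform square-summable majorant --- the tail contribution scales like a positive power of $2^{N}$. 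You correctly flag this as the ``main obstacle'', but Lemma~\ref{lem:convergence} alone does not close it. The paper instead rewrites the finite product as $\hat{s}(\vep_{N_{0}}(m\pm k))/\hat{s}(\vep_{N}(m\pm k))$ via \eqref{eq:infprod}, so that the lattice coefficient becomes $\hat{s}(\vep_{N_{0}}(m\pm k))\cdot(\text{bounded})\cdot\hat{s}(\vep_{N}m)/\hat{s}(\vep_{N}(m\pm k))$, and then argues that the last ratio is bounded by a constant $C_{k}$ uniformly in $m$ and $N$ (using uniform continuity of $\hat{s}$). This yields the single $N$-independent dominant $C_{k}\,|m\pm\tfrac{k}{2}|\,|\hat{s}(\vep_{N_{0}}(m\pm k))|$, square-summable by Lemma~\ref{lem:decay}, valid for \emph{all} $m\in\Gamma_{\infty,\pm}$ at once. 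This ratio step is exactly what realigns the $k$-shifted argument and tames the aliased branches; it is the missing ingredient in your argument.
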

\begin{proof}
We only spell out the proof for $\tilde{\ell}^{(N)}_{\pm,k}$ and $\ell_{\pm,k}$. We first observe that:
\begin{align*}
\tilde{\ell}^{(N)}_{\pm,k}(R^{M}_{\infty}(\xi)) & = \tilde{\ell}^{(N)}_{\pm,k}(R^{N}_{\infty}(R^{M}_{N}(\xi))) = R^{N}_{\infty}(\ell^{(N)}_{\pm,k}(R^{M}_{N}(\xi))),
\end{align*}
for $N>M$ and $\xi\in\fh_{M}$ because $(R^{N}_{\infty})^{*}R^{N}_{\infty} = \1_{\fh_{N}}$. Now, the Cauchy-Schwarz inequality implies:
\begin{align*}
& \|R^{N}_{\infty}(\ell^{(N)}_{\pm,k}(R^{M}_{N}(\xi))) - \ell_{\pm,k}(R^{M}_{\infty}(\xi))\|_{\infty}^{2} \\ 
& \leq \|\xi\|_{\fh_{M}}^{2}\tfrac{1}{2L}\sum_{m\in\Gamma_{\infty,{\color{blue}\pm}}}\tfrac{1}{2L_{M}}\sum_{l\in\Gamma_{M,{\color{blue}\pm}}}|\underbrace{\langle e_{m},R^{N}_{\infty}(\ell^{(N)}_{\pm,k}(R^{M}_{N}(e_{l})))\rangle_{\infty}}_{:=f^{(N)}_{k,\pm}(M,m,l)}-\underbrace{\langle e_{m},\ell_{\pm,k}(R^{M}_{\infty}(e_{l}))\rangle_{\infty}}_{:=f_{k,\pm}(M,m,l)}|^{2}.
\end{align*}
Next, we derive explicit expression for $f^{(N)}_{k,\pm}(M,m,l)$ and $f_{k,\pm}(M,m,l)$ to use the dominated convergence theorem with respect to $\ltwo(\Gamma_{\infty,{\color{blue}\pm}},(2L)^{-1})$ to deduce the convergence of the limit $N\rightarrow\infty$:
\begin{align*}
f^{(N)}_{k,\pm}(M,m,l) & = \tfrac{\pm1}{2\pi}\sum_{n\in\Gamma_{N,{\color{blue}\pm}}}\cos(\tfrac{1}{4}\vep_{N}k)^{2}\sin(\vep_{N}(n\mp\tfrac{k}{2}))\langle e_{m},R^{N}_{\infty}(e_{n\mp k})\rangle_{\infty}\langle e_{n},R^{M}_{N}(e_{l})\rangle_{N}, \\
f_{k,\pm}(M,m,l) & = \tfrac{\pm1}{2\pi}\sum_{n\in\Gamma_{\infty,{\color{blue}\pm}}}(n\mp\tfrac{k}{2})\underbrace{\langle e_{m},e_{n\mp k}\rangle_{\infty}}_{=2L\delta_{m,n\mp k}}\langle e_{n},R^{M}_{\infty}(e_{l})\rangle_{\infty}.
\end{align*}
Explicitly evaluating the inner product as in \eqref{eq:waveletrginner}, this leads to:
\begin{align*}
f^{(N)}_{k,\pm}\!(M,m,l) &\!=\!\tfrac{\pm1}{4\pi}\vep_{M}^{\frac{1}{2}} \cos(\tfrac{1}{4}\vep_{N}k)^{2}\hat{s}(\vep_{N}m)\!\!\sum_{n\in\Gamma_{N,{\color{blue}\pm}}}\!\!2L_{N}\sin(\vep_{N}(n\mp\tfrac{k}{2}))\delta_{0,\frac{L}{\pi}(m-(n\mp k))\!\!\!\!\mod 2L_{N}} \\
& \hspace{5cm} \times 2L_{M}\delta_{0,\frac{L}{\pi}(n-l)\!\!\!\!\mod 2L_{M}}\prod_{j=1}^{N-M}m_{0}(\vep_{M+j}n) \\
&\!=\!\tfrac{\pm1}{4\pi}\vep_{M}^{\frac{1}{2}} \cos(\tfrac{1}{4}\vep_{N}k)^{2}\hat{s}(\vep_{N}m)\!\!\sum_{n\in\Gamma_{\infty,{\color{blue}\pm}}}\!\!2L_{N}\sin(\vep_{N}(n\mp\tfrac{k}{2}))\delta_{0,\frac{L}{\pi}(m-(n\mp k))\!\!\!\!\mod 2L_{N}} \\
& \hspace{5cm} \times 2L_{M}\delta_{0,\frac{L}{\pi}(n-l)\!\!\!\!\mod 2L_{M}}\chi_{\Gamma_{N}}(n)\prod_{j=1}^{N-M}\!\!\!m_{0}(\vep_{M+j}n) \\
&\!=\!\tfrac{\pm L}{2\pi}2L_{M}\vep_{M}^{\frac{1}{2}} \cos(\tfrac{1}{4}\vep_{N}k)^{2}\hat{s}(\vep_{N}m)\sum_{n\in\Z}\tfrac{\sin(\vep_{N}(m\pm\frac{k}{2})\!)}{\vep_{N}} \delta_{0,\frac{L}{\pi}(m\pm k+\frac{2\pi}{\vep_{N}}n-l)\!\!\!\!\mod 2L_{M}}\\
& \hspace{5.5cm} \times\!\!\chi_{\Gamma_{N}}\!(m\!\pm\!k\!+\!\tfrac{2\pi}{\vep_{N}}n)\!\!\!\prod_{j=1}^{N-M}\!\!\!m_{0}(\vep_{M+j}(m\!\pm\!k\!+\!\tfrac{2\pi}{\vep_{N}}n)\!), \\
f_{k,\pm}\!(M,m,l) &\!=\!\tfrac{\pm L}{2\pi}2L_{M}\vep_{M}^{\frac{1}{2}}\hat{s}(\vep_{M}(m\!\pm\!k))(m\!\pm\!\tfrac{k}{2})\delta_{0,\frac{L}{\pi}(m\pm k-l)\!\!\!\!\mod 2L_{M}},
\end{align*}
where $\chi_{\Gamma_{N}}$ is the indicator function of the sublattice $\Gamma_{N,{\color{blue}\pm}}\subset\Gamma_{\infty,{\color{blue}\pm}}$, and we evaluated the sum over $n\in\Gamma_{\infty,{\color{blue}\pm}}$ using $\delta_{0,\frac{L}{\pi}(m-(n\mp k))\!\!\!\!\mod 2L_{N}}$ form the second to third line. Finally, we use the periodicity, $m_{0}(\vep_{M+j}(m+\tfrac{2\pi}{\vep_{N}}n)) = m_{0}(\vep_{M+j}m)$ for all $m\in\Gamma_{\infty,{\color{blue}\pm}}$, $n\in\Z$ and $j=1,\dots,N-M$, and:
\begin{align*}
\delta_{0,\frac{L}{\pi}(m\pm k+\frac{2\pi}{\vep_{N}}n-l)\!\!\!\!\mod 2L_{M}} & = \delta_{0,\frac{L}{\pi}(m\pm k-l)+2^{N-M}L_{M}n\!\!\!\!\mod 2L_{M}} = \delta_{0,\frac{L}{\pi}(m\pm k-l)\!\!\!\!\mod 2L_{M}}.
\end{align*}
Thus, we arrive at:
\begin{align*}
f^{(N)}_{k,\pm}\!(M,m,l) &\!=\!\tfrac{\pm L}{2\pi}2L_{M}\vep_{M}^{\frac{1}{2}} \cos(\tfrac{1}{4}\vep_{N}k)^{2}\hat{s}(\vep_{N}m)\sum_{n\in\Z}\tfrac{\sin(\vep_{N}(m\pm\frac{k}{2}))}{\vep_{N}} \delta_{0,\frac{L}{\pi}(m\pm k-l)\!\!\!\!\mod 2L_{M}}\\
& \hspace{5.5cm} \times \chi_{\Gamma_{N}}\!(m\!\pm\!k\!+\!\tfrac{2\pi}{\vep_{N}}n)\!\!\prod_{j=1}^{N-M}\!\!\!m_{0}(\vep_{M+j}(m\!\pm\!k)\!) \\
&\!=\!\tfrac{\pm L}{2\pi}2L_{M}\vep_{M}^{\frac{1}{2}} \cos(\tfrac{1}{4}\vep_{N}k)^{2}\hat{s}(\vep_{N}m)\tfrac{\sin(\vep_{N}(m\pm\frac{k}{2})\!)}{\vep_{N}} \delta_{0,\frac{L}{\pi}(m\pm k-l)\!\!\!\!\mod 2L_{M}}\tfrac{\hat{s}(\vep_{M}(m\pm k)\!)}{\hat{s}(\vep_{N}(m\pm k)\!)},
\end{align*}
because $\sum_{n\in\Z}\chi_{\Gamma_{N}}(m\pm k+\tfrac{2\pi}{\vep_{N}}n) = 1$ for any fixed $m\pm k\in\Gamma_{\infty,{\color{blue}\pm}}$. We also used \eqref{eq:infprod}. Next, we write:
\begin{align*}
f^{(N)}_{k,\pm}(M,m,l) & = \tfrac{\pm L}{2\pi}2L_{M}\vep_{M}^{\frac{1}{2}}f^{(N)}_{\pm k}(M,m)\delta_{0,\frac{L}{\pi}(m\pm k-l)\!\!\!\!\mod 2L_{M}}, \\
f_{k,\pm}(M,m,l) & =  \tfrac{\pm L}{2\pi}2L_{M}\vep_{M}^{\frac{1}{2}}f_{\pm k}(M,m)\delta_{0,\frac{L}{\pi}(m\pm k-l)\!\!\!\!\mod 2L_{M}}.
\end{align*}
where we defined:
\begin{align*}
f^{(N)}_{k}(M,m) & := \hat{s}(\vep_{M}(m+k)\!)\cos(\tfrac{1}{4}\vep_{N}k)^{2}\tfrac{\sin(\vep_{N}(m+\frac{k}{2})\!)}{\vep_{N}}\tfrac{\hat{s}(\vep_{N}m)}{\hat{s}(\vep_{N}(m+k)\!)}, \\
f_{k}(M,m) & := \hat{s}(\vep_{M}(m+k)\!)(m\!+\!\tfrac{k}{2}).
\end{align*}
This allows us to conclude that we have pointwise convergence (in $m\in\Gamma_{\infty,{\color{blue}\pm}}$):
\begin{align*}
\lim_{N\rightarrow\infty}f^{(N)}_{k}(M,m) & = f_{k}(M,m),
\end{align*}
for arbitrary $M<N$ and $k\in\tfrac{\pi}{L}\Z$, and, thus, also pointwise (in $m\in\Gamma_{\infty}$):
\begin{align*}
\lim_{N\rightarrow\infty}f^{(N)}_{k,\pm}(M,m,l) & = f_{k,\pm}(M,m,l),
\end{align*}
for arbitrary $l\in\Gamma_{M,{\color{blue}\pm}}$. It remains to show that $\lim_{N\rightarrow\infty}f^{(N)}_{k,\pm}(M,\!\ .\!\ ,l) = f_{k,\pm}(M,\!\ .\!\ ,l)$ in $\ltwo(\Gamma_{\infty,{\color{blue}\pm}},(2L)^{-1})$, which follows from the dominated convergence theorem if we provide a $g_{k}(M,\!\ .\!\ )\in\ltwo(\Gamma_{\infty,{\color{blue}\pm}}, (2L)^{-1})$ such that:
\begin{align*}
|f^{(N)}_{k}(M,m)| & \leq g_{k}(M,m).
\end{align*}
The existence of such a $g_{k}(N,\!\ .\!\ )$ is implied by,
\begin{align*}
|f^{(N)}_{k}(M,m)| & \leq |(m+\tfrac{k}{2})\hat{s}(\vep_{M}(m+k))|\Big|\tfrac{\hat{s}(\vep_{N}m)}{\hat{s}(\vep_{N}(m+k)\!)}\Big| \leq C_{k}|(m+\tfrac{k}{2})\hat{s}(\vep_{M}(m+k))|,
\end{align*}
for some $C_{k}>0$, because of the uniform continuity of $\hat{s}$, and the decay estimate \eqref{eq:decaysf} for sufficiently regular $s\in C^{\alpha}(\R)$. $\hat{s}$ is uniformly continuous since $\lim_{|k|\rightarrow\infty}|\hat{s}(k)|=0$ and $\hat{s}(k) = \prod_{j=1}^{\infty}m_{0}(2^{-j}k)$ is locally uniformly convergent.
\end{proof}

The lemma implies the convergence of the KS approximants in the standard Fock space representation ($c=0$).
\begin{thm}[Covergence of the KS approximants]
\label{thm:KSconv}
Let $s\in C^{\alpha}(\R)$ be a sufficiently regular, compactly supported orthonormal Daubechies scaling function. The KS approximants, $L^{(N)}_{\pm,k}$, converge strongly to the continuum Virasoro generators, $L_{\pm,k}$, on the dense, common core $\fF^{\alg}_{\ua}(\cD_{W})\subset\fF_{\ua}(\fh^{(\pm)}_{\infty,{\color{blue}\pm}})$ spanned by anti-symmetric Fock vectors with finitely many one-particle excitations in $\cD_{W}$ (finite $\cD_{W}$-particle number), i.e.:
\begin{align}
\label{eq:latviraconv}
\lim_{N\rightarrow\infty}\|(\alpha^{N}_{\infty}(L^{(N)}_{\pm,k}) - L_{\pm,k})a^{\dagger}(\xi_{1})\dots a^{\dagger}(\xi_{n})\Omega_{0}\| & = 0,
\end{align}
for all $n\in\N_{0}$ and $\xi_{1},\dots,\xi_{n}\in\cD_{W}$.
\end{thm}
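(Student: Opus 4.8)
The plan is to reduce the statement to the one-particle convergence already established in Lemma \ref{lem:KSconv}, exploiting that both $\alpha^{N}_{\infty}(L^{(N)}_{\pm,k})$ and $L_{\pm,k}$ are additive second quantizations of one-particle operators. Since $L^{(N)}_{\pm,k} = \tfrac{L}{\pi}dF_{0}(\ell^{(N)}_{\pm,k})$ and $L_{\pm,k} = \tfrac{L}{\pi}dF_{0}(\ell_{\pm,k})$, the compatibility \eqref{eq:Ralpha} of the renormalization group with second quantization gives $\alpha^{N}_{\infty}(L^{(N)}_{\pm,k}) = \tfrac{L}{\pi}dF_{0}(\tilde{\ell}^{(N)}_{\pm,k})$ with $\tilde{\ell}^{(N)}_{\pm,k} = R^{N}_{\infty}\ell^{(N)}_{\pm,k}R^{N*}_{\infty}$. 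By linearity of $dF_{0}$ in its argument, the difference is itself an additive second quantization,
\begin{align*}
\alpha^{N}_{\infty}(L^{(N)}_{\pm,k}) - L_{\pm,k} & = \tfrac{L}{\pi}dF_{0}(G_{N}), & G_{N} & := \tilde{\ell}^{(N)}_{\pm,k} - \ell_{\pm,k},
\end{align*}
where $G_{N}$ is well-defined and maps $\cD_{W}$ into the one-particle space (the operators $\tilde{\ell}^{(N)}_{\pm,k}$ are bounded, being finite-rank conjugated by an isometry, while $\ell_{\pm,k}$ maps $\cD_{W}$ into the one-particle space by Corollary \ref{cor:viraesa}). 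As the theorem restricts to the generating vectors $a^{\dag}(\xi_{1})\cdots a^{\dag}(\xi_{n})\Omega_{0}$, $\xi_{j}\in\cD_{W}$, it suffices to prove \eqref{eq:latviraconv} on these.

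Next I would expand $dF_{0}(G_{N})$ on such a vector. Since $dF_{0}$ acts as a derivation with respect to the wedge product (equivalently, by \eqref{eq:secondquantaraki}), one has on $\fF^{\alg}_{\ua}(\cD_{W})$ the Leibniz expansion
\begin{align*}
dF_{0}(G_{N})\,a^{\dag}(\xi_{1})\cdots a^{\dag}(\xi_{n})\Omega_{0} & = \sum_{m=1}^{n} a^{\dag}(\xi_{1})\cdots a^{\dag}(G_{N}\xi_{m})\cdots a^{\dag}(\xi_{n})\Omega_{0},
\end{align*}
which is well-defined because each $G_{N}\xi_{m}$ again lies in the one-particle space. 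Taking norms, using the triangle inequality together with $\|a^{\dag}(\eta)\| = \|\eta\|_{\infty}$ (equivalently the vector bound \eqref{eq:secondquantboundvect}), yields
\begin{align*}
\|dF_{0}(G_{N})\,a^{\dag}(\xi_{1})\cdots a^{\dag}(\xi_{n})\Omega_{0}\| & \leq \sum_{m=1}^{n}\Big(\prod_{j\neq m}\|\xi_{j}\|_{\infty}\Big)\|G_{N}\xi_{m}\|_{\infty}.
\end{align*}

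The proof is then completed by Lemma \ref{lem:KSconv}: for each fixed $m$ the factor $\|G_{N}\xi_{m}\|_{\infty} = \|(\tilde{\ell}^{(N)}_{\pm,k} - \ell_{\pm,k})\xi_{m}\|_{\infty}$ tends to $0$ as $N\to\infty$, since $\xi_{m}\in\cD_{W}$, while the remaining products $\prod_{j\neq m}\|\xi_{j}\|_{\infty}$ are fixed finite constants. A finite sum of null sequences is null, so the right-hand side vanishes in the limit; multiplying by $\tfrac{L}{\pi}$ gives \eqref{eq:latviraconv} on the generating vectors, and hence on all of $\fF^{\alg}_{\ua}(\cD_{W})$ by linearity. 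The estimate itself is routine; the only point requiring care is the bookkeeping that $\fF^{\alg}_{\ua}(\cD_{W})$ is a legitimate common core on which all the $\alpha^{N}_{\infty}(L^{(N)}_{\pm,k})$ and $L_{\pm,k}$ act through the derivation formula. This rests on $\cD_{W}$ being a core for $\ell_{\pm,k}$ (established before the lemma via the Sobolev comparison of graph norms) and on the essential self-adjointness furnished by Corollary \ref{cor:viraesa}, so that the strong limit genuinely identifies the continuum generator $L_{\pm,k}$ rather than merely a symmetric restriction.
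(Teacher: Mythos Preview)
Your proof is correct and follows essentially the same route as the paper: reduce the Fock-space statement to the one-particle convergence of Lemma~\ref{lem:KSconv} via the derivation formula for $dF_{0}$ and the vector bound \eqref{eq:secondquantboundvect}. The paper's argument is somewhat terser, but the logical content---identifying $\alpha^{N}_{\infty}(L^{(N)}_{\pm,k})-L_{\pm,k}$ with $\tfrac{L}{\pi}dF_{0}(\tilde{\ell}^{(N)}_{\pm,k}-\ell_{\pm,k})$, expanding by the Leibniz rule, and bounding termwise---is identical.
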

\begin{proof}
We have by the definition of second quantization:
\begin{align*}
dF_{0}(o)a^{\dagger}(\xi_{1})\dots a^{\dagger}(\xi_{n})\Omega_{0} & = \sum^{n}_{k=1}a^{\dagger}(\xi_{1})\dots a^{\dagger}(o\xi_{k})\dots a^{\dagger}(\xi_{n})\Omega_{0},
\end{align*}
for any one-particle operator $o$ on $\cD_{o}\subset\fh^{(\pm)}_{\infty,{\color{blue}\pm}}$ such that $\xi_{1},...,\xi_{n}\in\cD_{o}$. Thus, because of \eqref{eq:secondquantboundvect}, we know that \eqref{eq:latviraconv} is implied by \eqref{eq:latviraconv1p}.
\end{proof}

The lemma also implies the convergence of the unitary Bogoliubov transformations associated with the KS approximants to those of the Virasoro generators (a more general statement for smeared KS approximant and Virasoro generators is stated below). Let us note that a quasi-free representation $\pi_{S}$ admits an implementation of the Virasoro generators $L_{\pm,k}$ by normal-ordered second quantization \eqref{eq:secondquantarakino} provided the one-particle operators $\ell_{\pm,k}$ satisfy the off-diagonal Hilbert-Schmidt bounds \eqref{eq:hilbertschmidt}, for example, the representations $\pi_{{\color{blue}\pm}}$ of the scaling limit states \eqref{eq:cxrepchi} \& \eqref{eq:cxrepchiR}. Thus, for admissible quasi-free representations, we have:
\begin{align}
\label{eq:KSconvgroup}
\tilde{\sigma}^{(N)}_{t} & = \Ad_{e^{it dF_{S}(\tilde{r}^{(N)}_{\pm,k})}}(\pi_{S}(a^{\dagger}(\xi)\!)\!) = \pi_{S}(a^{\dagger}(e^{it \tilde{r}^{(N)}_{\pm,k}}\xi)\!), \\ \nonumber
\sigma_{t} & = \Ad_{e^{it dF_{S}(r_{\pm,k})}}(\pi_{S}(a^{\dagger}(\xi)\!)\!) = \pi_{S}(a^{\dagger}(e^{it r_{\pm,k}}\xi)\!),
\end{align}
where the unitary $e^{it dF_{S}(r_{\pm,k})}$ is well-defined by essential self-adjointness of $dF_{S}(r_{\pm,k})$ (see below Theorem \ref{thm:KSconvc}), and similarly for $\tilde{\iota}^{(N)}_{\pm,k}$ and $\iota_{\pm,k}$ by Corollary \ref{cor:viraesa}. By construction the relation with the KS approximants is:
\begin{align}
\label{eq:KShermiteanparts}
dF_{S}(\tilde{\ell}^{(N)}_{k,\pm}) & =\ :\pi_{S}(\alpha^{N}_{\infty}(dF_{0}(\ell^{(N)}_{k,\pm}))):\ =\ :\pi_{S}(\alpha^{N}_{\infty}(L^{(N)}_{k,\pm})):. 
\end{align}
The analogous statements hold in the Majorana setting, for example, for the scaling limit representations \eqref{eq:majrepchi} \& \eqref{eq:majrepchiR}.

\begin{cor}[Convergence of KS Bogoliubov transformations]
\label{cor:KSconvbog}
Let $s\in C^{\alpha}(\R)$ be a sufficiently regular, compactly supported orthonormal Daubechies scaling function. In any admissible quasi-free representation $\pi_{S}$, we have:
\begin{align*}
\lim_{N\rightarrow\infty}\|\tilde{\sigma}^{(N)}_{t}(\pi_{S}(A)) - \sigma_{t}(\pi_{S}(A))\| & = 0, & \forall A\in\fA^{(\pm)}_{\infty,{\color{blue}\pm}},
\end{align*}
and uniformly on compact intervals in $t\in\R$.
\end{cor}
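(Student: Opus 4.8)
The plan is to transfer everything to the one-particle level, where Lemma~\ref{lem:KSconv} already supplies the decisive input, and then to promote the resulting strong convergence of unitary groups to norm convergence of the automorphisms by purely $C^{*}$-algebraic bookkeeping. For the reduction, note that by the identity \eqref{eq:KSconvgroup} both automorphisms act on the generators through the one-particle unitaries $T^{(N)}_{t}:=e^{it\tilde{r}^{(N)}_{\pm,k}}$ and $T_{t}:=e^{it r_{\pm,k}}$, so that for $\xi\in\fh^{(\pm)}_{\infty,{\color{blue}\pm}}$
\begin{align*}
\tilde{\sigma}^{(N)}_{t}(\pi_{S}(a^{\dagger}(\xi))) - \sigma_{t}(\pi_{S}(a^{\dagger}(\xi))) & = \pi_{S}\big(a^{\dagger}((T^{(N)}_{t}-T_{t})\xi)\big).
\end{align*}
As $\pi_{S}$ is a $*$-representation and $\|a^{\dagger}(\eta)\|=\|\eta\|_{\infty}$, this gives
\begin{align*}
\big\|\tilde{\sigma}^{(N)}_{t}(\pi_{S}(a^{\dagger}(\xi))) - \sigma_{t}(\pi_{S}(a^{\dagger}(\xi)))\big\| & \leq \|(T^{(N)}_{t}-T_{t})\xi\|_{\infty},
\end{align*}
and identically for $a(\xi)$. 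It therefore suffices to prove $T^{(N)}_{t}\to T_{t}$ strongly, uniformly for $t$ in compact intervals.

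\textbf{One-particle convergence via Trotter--Kato.} Each $\tilde{r}^{(N)}_{\pm,k}=R^{N}_{\infty}r^{(N)}_{\pm,k}R^{N\,*}_{\infty}$ is bounded and self-adjoint, since $r^{(N)}_{\pm,k}$ lives on the finite-dimensional space $\fh^{(\pm)}_{N,{\color{blue}\pm}}$ and $R^{N}_{\infty}$ is an isometry, whereas $r_{\pm,k}$ is essentially self-adjoint on the common core $\cD_{W}$ by Corollary~\ref{cor:viraesa} together with the graph-norm comparison preceding Lemma~\ref{lem:KSconv}. Lemma~\ref{lem:KSconv} yields $\tilde{r}^{(N)}_{\pm,k}\xi\to r_{\pm,k}\xi$ for all $\xi\in\cD_{W}$, and since $\cD_{W}$ is a common core, the Trotter--Kato theorem \cite{ReedMethodsOfModern1} gives strong resolvent convergence $\tilde{r}^{(N)}_{\pm,k}\to r_{\pm,k}$, hence $T^{(N)}_{t}\to T_{t}$ strongly, uniformly for $t$ in compact intervals. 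Because $\|T^{(N)}_{t}\|=\|T_{t}\|=1$, a three-$\varepsilon$ estimate upgrades this to $\|(T^{(N)}_{t}-T_{t})\xi\|_{\infty}\to0$ for every $\xi\in\fh^{(\pm)}_{\infty,{\color{blue}\pm}}$, still uniformly on compact $t$-intervals; combined with the reduction above, the corollary holds on all generators.

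\textbf{Extension to $\fA^{(\pm)}_{\infty,{\color{blue}\pm}}$.} Both $\tilde{\sigma}^{(N)}_{t}$ and $\sigma_{t}$ are $*$-automorphisms, hence isometric and multiplicative. On a monomial in the generators the difference telescopes into $n$ summands, the $j$-th of which is bounded by $\big(\prod_{i\neq j}\|\xi_{i}\|_{\infty}\big)\|(T^{(N)}_{t}-T_{t})\xi_{j}\|_{\infty}$ (using that the $T$'s are unitary), and thus vanishes as $N\to\infty$ uniformly in $t$. Since the algebraic span of such monomials is norm-dense in $\fA^{(\pm)}_{\infty,{\color{blue}\pm}}$ and both automorphisms are contractions, a final three-$\varepsilon$ argument extends the convergence to arbitrary $A\in\fA^{(\pm)}_{\infty,{\color{blue}\pm}}$, uniformly for $t$ in compact intervals. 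The Majorana case is verbatim, replacing $a,a^{\dagger},dF_{S}$ by $\Psi,dQ_{S}$ and using the representations \eqref{eq:majrepchi} \& \eqref{eq:majrepchiR}.

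The genuine content sits entirely in the Trotter--Kato step: once one knows that $\cD_{W}$ is a core for the unbounded limits $r_{\pm,k}$ and that the one-particle approximants converge there (both already established), the passage to the unitary groups and then to the $C^{*}$-dynamics is soft. The only point demanding care is the uniformity in $t$, which is guaranteed by the Trotter--Kato theorem and is preserved by the elementary density arguments.
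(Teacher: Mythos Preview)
Your proof is correct and follows essentially the same route as the paper: reduce to generators via \eqref{eq:KSconvgroup} and the norm bound $\|a^{\dagger}(\eta)\|=\|\eta\|_{\infty}$, use Lemma~\ref{lem:KSconv} on the common core $\cD_{W}$ to obtain strong resolvent convergence of the one-particle self-adjoint operators (the paper cites \cite{ReedMethodsOfModern1} rather than naming Trotter--Kato explicitly), deduce strong convergence of the one-particle unitaries uniformly on compact $t$-intervals, and extend to all of $\fA^{(\pm)}_{\infty,{\color{blue}\pm}}$ by telescoping on monomials and a density argument. Your write-up simply spells out the three-$\varepsilon$ steps a bit more explicitly than the paper does.
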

\begin{proof}
It is sufficient to prove the statement for $a = a^{\dagger}(\xi)$ with $\xi\in\fh^{(\pm)}_{\infty,{\color{blue}\pm}}$, because the multinomials $A = a^{\dagger}(\xi_{1})\dots a^{\dagger}(\xi_{n})a(\xi_{n+1})\dots a(\xi_{n+m})$ form a norm total set in $\fA^{(\pm)}_{\infty,{\color{blue}\pm}}$ and we have:
\begin{align*}
\|\tilde{\sigma}^{(N)}_{t}(AB) - \sigma_{t}(AB)\| & \leq \|\tilde{\sigma}^{(N)}_{t}(A)-\sigma_{t}(A)\|\|B\| +\|A\|\|\tilde{\sigma}^{(N)}_{t}(B)-\sigma_{t}(B)\|,
\end{align*}
for all $A,B\in\pi_{S}(\fA^{(\pm)}_{\infty,{\color{blue}\pm}})$. Now, let $o = r_{\pm,k}, \iota_{\pm,k}$ and $\tilde{o}^{(N)} = r^{(N)}_{\pm,k}, \tilde{\iota}^{(N)}_{\pm,k}$. Then, because of \eqref{eq:KSconvgroup}, we have:
\begin{align*}
\|\tilde{\sigma}^{(N)}_{t}(\pi_{S}(a^{\dagger}(\xi))) - \sigma_{t}(\pi_{S}(a^{\dagger}(\xi)))\| & = \|\pi_{S}(a^{\dagger}(e^{it \tilde{o}^{(N)}}\xi)) - \pi_{S}(a^{\dagger}(e^{ito}\xi))\|\\
& \leq \|(e^{it \tilde{o}^{(N)}} - e^{it o})\xi\|_{\infty}.
\end{align*}
Now, $\lim_{N\rightarrow\infty}\|(e^{it \tilde{o}^{(N)}} - e^{it o})\xi\|_{\infty} = 0$ uniformly on compact intervals in $t\in\R$ because $\lim_{N\rightarrow\infty}\|(\tilde{o}^{(N)} - o)\xi\|_{\infty} = 0$ for all $\xi\in\mathcal{D}_{W}$ by Lemma \ref{lem:KSconv} which entails strong resolvent convergence \cite{ReedMethodsOfModern1}.
\end{proof}

\begin{rem}
\label{rem:KSunitaryconv}
For multinomials $A = a^{\dagger}(\xi_{1})\dots a^{\dagger}(\xi_{n})a(\xi_{n+1})\dots a(\xi_{n+m})$ we have the estimate:
\begin{align}
\label{eq:KSmultiestimate}
\|\tilde{\sigma}^{(N)}_{t}(\pi_{S}(A)) - \sigma_{t}(\pi_{S}(A))\| & \leq \left(\prod_{l=1}^{n+m}\|\xi_{l}\|_{\infty}\right)\sum_{k=1}^{n+m}\frac{\|(e^{it \tilde{o}^{(N)}} - e^{it o})\xi_{k}\|_{\infty}}{\|\xi_{k}\|_{\infty}}.
\end{align}
\end{rem}

\begin{cor}[Convergence of KS Bogoliubov derivations]
\label{cor:KSderivationconv}
With the same assumptions and notation as in Corollary \ref{cor:KSconvbog}, we have the convergence of the quasi-free derivations,
\begin{align*}
\tfrac{d}{dt}_{|t=0}\tilde{\sigma}^{(N)}_{t}(\pi_{S}(a(\xi))) & = i[:\!dF_{S}(\tilde{o}^{(N)})\!:,a(\xi)] = a(i\tilde{o}^{(N)}\xi), \\
\tfrac{d}{dt}_{|t=0}\sigma_{t}(\pi_{S}(a(\xi))) & = i[:\!dF_{S}(o)\!:,a(\xi)] = a(io\xi),
\end{align*}
for $A\in\fA^{(\pm)}_{\infty,{\color{blue}\pm}}$ in the algebraic span of $a^{\dag}(\xi), a(\xi)$ with $\xi\in\cD_{W}$. Moreover, we have:
\begin{align}
\label{eq:KSmultiestimatederivation}
\|[:\!dF_{S}(\tilde{o}^{(N)})\!:,A] - [:\!dF_{S}(o)\!:,A]\| & \leq \left(\prod_{l=1}^{n+m}\|\xi_{l}\|_{\infty}\right)\sum_{k=1}^{n+m}\frac{\|(\tilde{o}^{(N)} - o)\xi_{k}\|_{\infty}}{\|\xi_{k}\|_{\infty}}\;
\end{align}
for $A = a^{\dagger}(\xi_{1})\dots a^{\dagger}(\xi_{n})a(\xi_{n+1})\dots a(\xi_{n+m})$.
\end{cor}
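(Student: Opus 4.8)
The plan is to reduce both assertions to the group-level identity \eqref{eq:KSconvgroup} established in the proof of Corollary~\ref{cor:KSconvbog}, together with the derivation (Leibniz) property of normal-ordered second quantization. Writing $o = r_{\pm,k},\iota_{\pm,k}$ and $\tilde{o}^{(N)} = r^{(N)}_{\pm,k},\tilde{\iota}^{(N)}_{\pm,k}$ as in Corollary~\ref{cor:KSconvbog}, I would first take $*$-adjoints in \eqref{eq:KSconvgroup} and use that each $\sigma_{t}$, $\tilde{\sigma}^{(N)}_{t}$ is a $*$-automorphism to obtain $\sigma_{t}(\pi_{S}(a(\xi))) = \pi_{S}(a(e^{ito}\xi))$ and $\tilde{\sigma}^{(N)}_{t}(\pi_{S}(a(\xi))) = \pi_{S}(a(e^{it\tilde{o}^{(N)}}\xi))$. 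Since each $\xi\in\cD_{W}$ lies in the domain of the essentially self-adjoint generator (Corollary~\ref{cor:viraesa}), the curve $t\mapsto e^{ito}\xi$ is differentiable at $t=0$ with derivative $io\xi$; because $\eta\mapsto a(\eta)$ is antilinear and isometric, hence real-linear and norm-continuous, differentiation passes through $a$, giving $\tfrac{d}{dt}_{|t=0}\sigma_{t}(\pi_{S}(a(\xi))) = \pi_{S}(a(io\xi))$. This is the rightmost equality of the displayed derivation formulas, and the identical computation with $\tilde{o}^{(N)}$ yields the finite-scale line.

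To obtain the inner-commutator expressions, I would use that $\sigma_{t} = \Ad_{e^{itdF_{S}(o)}}$, so its generator is the densely defined derivation $i[dF_{S}(o),\,\cdot\,]$. On single fermion operators this derivation is computed from \eqref{eq:bogder}--\eqref{eq:bogmultcar}, whose normal-ordered transcription (the scalar Wick-ordering constant of \eqref{eq;vacexpren} drops out of every commutator) reads $[dF_{S}(o),\pi_{S}(a^{\dagger}(\xi))] = \pi_{S}(a^{\dagger}(o\xi))$ and $[dF_{S}(o),\pi_{S}(a(\xi))] = -\pi_{S}(a(o\xi))$. For self-adjoint $o$ the antilinearity of $a$ then gives $i[dF_{S}(o),\pi_{S}(a(\xi))] = -i\pi_{S}(a(o\xi)) = \pi_{S}(a(io\xi))$, matching the previous paragraph and completing both derivation formulas.

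For the estimate \eqref{eq:KSmultiestimatederivation} I would apply the Leibniz rule of the derivation $[dF_{S}(o),\,\cdot\,]$ to the monomial $A = a^{\dagger}(\xi_{1})\cdots a^{\dagger}(\xi_{n})a(\xi_{n+1})\cdots a(\xi_{n+m})$, expanding it as
\begin{align*}
[dF_{S}(o),A] & = \sum_{k=1}^{n+m}\epsilon_{k}\,\pi_{S}(a^{\dagger}(\xi_{1}))\cdots\pi_{S}(a^{(\dagger)}(o\xi_{k}))\cdots\pi_{S}(a(\xi_{n+m})),
\end{align*}
with $\epsilon_{k}=+1$ for $k\leq n$ and $\epsilon_{k}=-1$ for $k>n$, the $k$-th factor carrying $o\xi_{k}$ in place of $\xi_{k}$. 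Subtracting the same expansion with $\tilde{o}^{(N)}$ replaces the $k$-th entry by $a^{(\dagger)}((\tilde{o}^{(N)}-o)\xi_{k})$ in each summand; taking norms, using the CAR isometry $\|a^{(\dagger)}(\eta)\| = \|\eta\|_{\infty}$ and submultiplicativity of the operator norm, and factoring out $\prod_{l}\|\xi_{l}\|_{\infty}$, gives \eqref{eq:KSmultiestimatederivation}. The asserted norm convergence of the derivations follows at once, since $\|(\tilde{o}^{(N)}-o)\xi_{k}\|_{\infty}\to 0$ for each $\xi_{k}\in\cD_{W}$ by Lemma~\ref{lem:KSconv}.

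The main obstacle is the rigorous handling of the unbounded generators: one must ensure that the commutators and the Leibniz expansion are genuine operator identities on a common invariant core rather than formal manipulations. This is secured by working on $\fF^{\alg}_{\ua}(\cD_{W})$, on which $dF_{S}(o)$ is well defined thanks to the off-diagonal Hilbert--Schmidt bounds \eqref{eq:hilbertschmidt} of the admissible representation $\pi_{S}$, and which is invariant under $\pi_{S}(a^{(\dagger)}(\xi))$ and stable under the one-particle operators since $o\xi_{k}, (\tilde{o}^{(N)}-o)\xi_{k}\in\cD_{W}$; the analytic-vector core of Corollary~\ref{cor:viraesa} guarantees that the differentiation in the first step is legitimate.
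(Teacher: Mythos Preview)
Your proof is correct and is precisely the argument the paper has in mind: the corollary is stated in the paper without proof, as the derivation-level analogue of Corollary~\ref{cor:KSconvbog} and Remark~\ref{rem:KSunitaryconv}, obtained by differentiating \eqref{eq:KSconvgroup} at $t=0$, applying the Leibniz rule to monomials, and invoking Lemma~\ref{lem:KSconv}. One small inaccuracy in your closing paragraph: you do not need (and generally do not have) $o\xi_{k}\in\cD_{W}$ or $(\tilde{o}^{(N)}-o)\xi_{k}\in\cD_{W}$; it suffices that these vectors lie in $\fh^{(\pm)}_{\infty,{\color{blue}\pm}}$, which is guaranteed since $\cD_{W}$ is a core for $o$, and the estimate \eqref{eq:KSmultiestimatederivation} only involves $\|a^{(\dagger)}(\eta)\|=\|\eta\|_{\infty}$ for arbitrary one-particle vectors $\eta$.
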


\paragraph{Momentum-cutoff renormalization.}
\label{sec:momconv}
The convergence results for the wavelet renormalization group lead to the convergence of the KS approximants to the Virasoro generators in the standard Fock representation ($c=0$). Clearly, we would like to achieve a similar statement for non-vanishing central charge ($c\neq 0$). Specifically, we would like to deduce the convergence of the KS approximants in the representations $\pi_{{\color{blue}\pm}}$ of the scaling limit states \eqref{eq:cxrepchi} \& \eqref{eq:cxrepchiR} as well as \eqref{eq:majrepchi} \& \eqref{eq:majrepchiR}. Such a result is achieved by using the momentum-cutoff renormalization group of Definition \ref{def:momrg}. Although, we initially lose the compatibility with the quasi-local structure (see Proposition \ref{prop:antilocal}) in this way, we explain in the next paragraph of this subsection how to remedy this issue by combining the wavelet renormalization group with the momentum-cutoff renormalization group for smeared KS approximants and Virasoro generators.

Moreover, it is necessary to slightly modify the KS approximants \eqref{eq:latvira1p},
\begin{align}
\label{eq:latvirachi1pmommod}
\ell^{(N)}_{k,\pm} & = \tfrac{\pm1}{2\pi}\sum_{l\in\Gamma_{N,{\color{blue}\pm}}} \cos(\tfrac{1}{4}\vep_{N}k)^{2}\sin(\vep_{N}(l\mp\tfrac{k}{2}))\chi_{\Gamma_{N}}(l\mp k)e_{l\mp k}\otimes \overline{e}_{l}\;,
\end{align}
where the additional factor $\chi_{\Gamma_{N}}$ has been introduced to avoid spurious contributions due to the periodic boundary conditions for one-particle functions on the dual momentum-space lattice $\Gamma_{N}$. As the scale $N$ can be chosen large compared to intrinsic scale of any fixed $k\in\Gamma_{M,{\color{blue}+}}$, i.e., $N\gg M$, this modification does not affect the asymptotic properties of $\ell^{(N)}_{k,\pm}$. A direct computation shows that the modified KS approximants still satisfy the adjointness relations \eqref{eq:latvira1pad} and implementability conditions \eqref{eq:impcondKS}\footnote{In the Ramond sector some additional care is necessary, i.e., $\chi_{\Gamma_{N}}$ needs to be defined as $\chi_{(-\frac{\pi}{\vep_{N}},\frac{\pi}{\vep_{N}})\cap \Gamma_{N}}$ to ensure is invariance under reflections about $0$.}. We note that the modification of the KS approximants at the one-particle level can also be translated into modification of the formula of the KS approximants in Definition \ref{def:KSapprox}.

\bigskip

Denoting by $R^{N}_{M}$ and $R^{N}_{\infty}$ as well as $\alpha^{N}_{M}$ and $\alpha^{N}_{\infty}$ the momentum-cutoff renormalization group throughout this paragraph, the one-particle operators $\tilde{\ell}^{(N)}_{k,\pm}$ are now given by:
\begin{align}
\label{eq:latvirachi1pmom}
\tilde{\ell}^{(N)}_{k,\pm} & = \tfrac{\pm1}{2\pi}\sum_{l\in\Gamma_{N,{\color{blue}\pm}}} \cos(\tfrac{1}{4}\vep_{N}k)^{2}\tfrac{\sin(\vep_{N}(l\mp\frac{k}{2}))}{\vep_{N}}\chi_{\Gamma_{N}}(l\mp k)e_{l\mp k}\otimes \overline{e}_{l}\;.
\end{align}
To prove the analogue of Theorem \ref{thm:KSconv} for $\pi_{{\color{blue}\pm}}$ (with $c=1,\tfrac{1}{2}$), we need to compare the matrix elements of $\tilde{\ell}^{(N)}_{\pm,k}$ with those of $\ell_{\pm,k}$:
\begin{align}
\label{eq:latvirachi1pmommat}
(\tilde{\ell}^{(N)}_{\pm,k})_{mn} &\!=\!\pm\tfrac{L}{\pi}2L\delta_{m,n\mp k}\cos(\tfrac{1}{4}\vep_{N}k)^{2}\tfrac{\sin(\vep_{N}(n\mp\frac{k}{2}))}{\vep_{N}}\chi_{\Gamma_{N}(n\!\mp\!k)}\chi_{\Gamma_{N}}(n), \\ \nonumber
(\ell_{\pm,k})_{mn} &\!=\!\pm\tfrac{L}{\pi}2L\delta_{m,n\mp k}(n\!\mp\!\tfrac{k}{2}).
\end{align}
If we have an approximating sequence $\{o^{(N)}\}_{N\in\N_{0}}$ of $o$ on the one-particle space $\fh^{(\pm)}_{\infty,{\color{blue}\pm}}$, the convergence of their normal-ordered second quantizations can be analyzed using the following estimate because of \eqref{eq:secondquantbound} \& \eqref{eq:hilbertschmidt}:
\begin{align}
\label{eq:noquantconv} \nonumber
\|dF_{S}(\tilde{o}^{(N)}\!-\!o)a^{\dag}\!(R^{M}_{\infty}(\eta_{1})\!)...a^{\dag}\!(R^{M}_{\infty}(\eta_{n})\!)\Omega_{0}\| &\!\leq\!\!\Big(\!\sum_{k=1}^{n}\!\Big\{\!\tfrac{\|(\tilde{o}^{(N)} - o)_{++}R^{M}_{\infty}(\eta_{k})\|_{\infty}+\|(\tilde{o}^{(N)} - o)^{t}_{- -}R^{M}_{\infty}(\eta_{k})\|_{\infty}}{\|\eta_{k}\|_{M}}\!\Big\} \\ \nonumber
 & \hspace{1cm} +\!(n\!+\!2)\|(\tilde{o}^{(N)}\!-\!o)_{+-}\|_{2}\!+\!n\|(\tilde{o}^{(N)}\!-\!o)_{-+}\|_{2}\!\Big) \\
 & \hspace{0.5cm} \times \prod_{l=1}^{n}\|\eta_{l}\|_{M}.
\end{align}
For $\tilde{o}^{(N)}=\tilde{\ell}^{(N)}_{\pm,k}$ and $o = \ell_{\pm,k}$, we find ($N>M$):
\begin{align}
\label{eq:momrgestimates}
& \|((\tilde{\ell}^{(N)}_{k,\pm}-\ell_{k,\pm})_{\pm\pm})^{(t)}R^{M}_{\infty}(\eta)\|_{\infty}^{2} \\ \nonumber
& \leq \|\eta\|_{\fh_{M}}^{2}(\tfrac{L}{\pi})^{2}\!\!\!\sum_{n\in\Gamma_{\infty,{\color{blue}\pm}}}\!\!\!\chi_{\Gamma_{M}}(n)\theta(\pm(n\!\mp\!k))\theta(\pm n)\Big|\!\cos(\tfrac{1}{4}\vep_{N}k)^{2}\tfrac{\sin(\vep_{N}(n\mp\frac{k}{2}))}{\vep_{N}}\chi_{\Gamma_{N}(n\!\mp\!k)}\!-\!(n\!\mp\!\tfrac{k}{2})\Big|^{2}, \\ \nonumber
& \|(\tilde{\ell}^{(N)}_{k,\pm}-\ell_{k,\pm})_{\pm\mp}\|_{2}^{2} \\ \nonumber
& \leq (\tfrac{L}{\pi})^{2}\!\!\!\sum_{n\in\Gamma_{\infty,{\color{blue}\pm}}}\!\!\theta(\pm(n\!\mp\!k))\theta(\mp n)\Big|\cos(\tfrac{1}{4}\vep_{N}k)^{2}\tfrac{\sin(\vep_{N}(n\mp\frac{k}{2}))}{\vep_{N}}\chi_{\Gamma_{N}(n\!\mp\!k)}\chi_{\Gamma_{N}}(n)\!-\!(n\!\mp\!\tfrac{k}{2})\Big|^{2}.
\end{align}
The sums in these estimates are finite (for fixed $M$) which entails convergence for $N\rightarrow\infty$. It is evident from \eqref{eq:momrgestimates}, that we have the analogue of Lemma \ref{lem:KSconv}:
\begin{lemma}
\label{lem:KSconvc}
Let $\xi\in D_{\std}$. Then:
\begin{align*}
\lim_{N\rightarrow\infty}\|(\tilde{\ell}^{(N)}_{\pm,k}\!-\!\ell_{\pm,k})\xi\|_{\infty} & = 0, & \lim_{N\rightarrow\infty}\|(\tilde{\ell}^{(N)}_{\pm,k}\!-\!\ell_{\pm,k})_{\pm\pm}\xi\|_{\infty} & = 0, \lim_{N\rightarrow\infty}\|(\tilde{\ell}^{(N)}_{\pm,k}\!-\!\ell_{\pm,k})_{\pm\mp}\|_{2} & = 0,
\end{align*}
where $\tilde{\ell}^{(N)}_{\pm,k}:=R^{N}_{\infty}\ell^{(N)}_{\pm,k}R^{N\!\ *}_{\infty}$, and similarly:
\begin{align*}
\lim_{N\rightarrow\infty}\|(\tilde{r}^{(N)}_{\pm,k}\!-\!r_{\pm,k})\xi\|_{\infty} & = 0, & \lim_{N\rightarrow\infty}\|(\tilde{r}^{(N)}_{\pm,k}\!-\!r_{\pm,k})_{\pm\mp}\xi\|_{\infty} & = 0, & \lim_{N\rightarrow\infty}\|(\tilde{r}^{(N)}_{\pm,k}\!-\!r_{\pm,k})_{\pm\mp}\|_{\infty} & = 0, \\
\lim_{N\rightarrow\infty}\|(\tilde{\iota}^{(N)}_{\pm,k}\!-\!\iota_{\pm,k})\xi\|_{\infty} & = 0, & \lim_{N\rightarrow\infty}\|(\tilde{\iota}^{(N)}_{\pm,k}\!-\!\iota_{\pm,k})_{\pm\pm}\xi\|_{\infty} & = 0, & \lim_{N\rightarrow\infty}\|(\tilde{\iota}^{(N)}_{\pm,k}\!-\!\iota_{\pm,k})_{\pm\mp}\|_{\infty} & = 0.
\end{align*} 
\end{lemma}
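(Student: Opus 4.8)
The plan is to extract everything from the explicit matrix elements \eqref{eq:latvirachi1pmommat} and the three estimates collected in \eqref{eq:momrgestimates}; once those are in hand the only remaining work is to combine pointwise convergence of the individual summands with the finiteness of the relevant index sets. The decisive simplification in the momentum-cutoff setting is \eqref{eq:momdominv}, $R^{N}_{\infty}(e_{k}) = \vep_{N}^{-\frac{1}{2}}e_{k}$ for $k\in\Gamma_{N,{\color{blue}\pm}}$, which makes $\tilde{\ell}^{(N)}_{\pm,k}$ act on the plane-wave basis by the shift $e_{l}\mapsto e_{l\mp k}$ with symbol $\cos(\tfrac{1}{4}\vep_{N}k)^{2}\tfrac{\sin(\vep_{N}(n\mp\frac{k}{2}))}{\vep_{N}}\chi_{\Gamma_{N}}(n)$, to be compared with the symbol $(n\mp\tfrac{k}{2})$ of $\ell_{\pm,k}$. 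First I would record the elementary pointwise limit, valid for each fixed $n\in\Gamma_{\infty,{\color{blue}\pm}}$ as $\vep_{N}=2^{-N}\vep_{0}\to0$: one has $\cos(\tfrac{1}{4}\vep_{N}k)^{2}\to1$, $\tfrac{\sin(\vep_{N}(n\mp\frac{k}{2}))}{\vep_{N}}\to n\mp\tfrac{k}{2}$, and $\chi_{\Gamma_{N}}(n)\to1$ (indeed $\chi_{\Gamma_{N}}(n)=1$ once $N$ is large enough that $n\in\Gamma_{N,{\color{blue}\pm}}$). Hence every summand occurring in \eqref{eq:momrgestimates} tends to $0$.

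The first convergence, $\|(\tilde{\ell}^{(N)}_{\pm,k}-\ell_{\pm,k})\xi\|_{\infty}\to0$ on $\cD_{\std}$, then follows by linearity: since $\cD_{\std}$ is the finite span of the $e_{l}$, it suffices to take $\xi=e_{l}$, and $(\tilde{\ell}^{(N)}_{\pm,k}-\ell_{\pm,k})e_{l}$ is a single multiple of $e_{l\mp k}$ whose coefficient is the difference of the two symbols at $n=l$, which vanishes in the limit. For the diagonal statements I would invoke the first two lines of \eqref{eq:momrgestimates}: the bound there restricts the summation through the factor $\chi_{\Gamma_{M}}$, so for fixed $M$ it is a \emph{finite} sum of terms each converging to $0$, whence $(\tilde{\ell}^{(N)}_{\pm,k}-\ell_{\pm,k})_{\pm\pm}$ converges strongly. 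For the off-diagonal statement I would use the third line of \eqref{eq:momrgestimates}: here the product of step functions $\theta(\pm(n\mp k))\theta(\mp n)$ confines $n$ to the finitely many lattice momenta strictly between $0$ and $\pm k$, so the Hilbert--Schmidt norm is again a finite sum of null summands (the index set being empty, and the norm identically zero, when $k$ carries the opposing sign). Finally, the assertions for $r_{\pm,k}$ and $\iota_{\pm,k}$ reduce to those for $\ell_{\pm,k}$ by the triangle inequality, since by \eqref{eq:latvira1pRI} they are the fixed linear combinations $\tfrac{1}{2}(\ell^{(N)}_{\pm,k}\pm\ell^{(N)}_{\pm,-k})$ (divided by $i$ in the imaginary case) and conjugation by $R^{N}_{\infty}$ respects these combinations.

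The point requiring care -- rather than a genuine obstacle -- is the bookkeeping of the support conditions that render each sum finite: one must check that the step-function constraints in the off-diagonal estimate truly bound $n$ to a window of width $\sim|k|$, and that the $\chi_{\Gamma_{M}}$ factor in the diagonal estimate pins the summation to the finite lattice $\Gamma_{M,{\color{blue}\pm}}$. Once finiteness is secured, no dominated-convergence argument is needed at all: a finite sum of null sequences is null, and all three limits follow at once, in contrast with the wavelet case of Lemma \ref{lem:KSconv} where the sums are genuinely infinite and the decay estimate \eqref{eq:decaysf} had to supply an integrable majorant.
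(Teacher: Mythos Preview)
Your proposal is correct and follows the same route as the paper: the paper's argument for this lemma is the one-line remark preceding it that ``the sums in these estimates are finite (for fixed $M$) which entails convergence for $N\rightarrow\infty$,'' i.e., precisely your observation that the $\chi_{\Gamma_{M}}$-factor (diagonal case) and the window $\theta(\pm(n\mp k))\theta(\mp n)$ (off-diagonal case) reduce \eqref{eq:momrgestimates} to finite sums of null summands. Your write-up is in fact more detailed than the paper's, and the reduction of the $r$- and $\iota$-statements to the $\ell$-statement via \eqref{eq:latvira1pRI} and the triangle inequality is the intended (if implicit) step.
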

Because of Remark \ref{rem:impcond}, similar estimates as \eqref{eq:noquantconv} and \eqref{eq:momrgestimates} also hold for $dQ_{S}$ in Majorana setting. Applying the lemma, we have:
\begin{thm}[Covergence of the KS approximants for $\pi_{{\color{blue}\pm}}$]
\label{thm:KSconvc}
The KS approximants, $L^{(N)}_{\pm,k}$, converge strongly to the continuum Virasoro generators, $L_{\pm,k}$, on the dense, common core $\fF^{\alg}_{\ua}(\cD_{\std})\subset\fF_{\ua}(\fh^{(\pm)}_{\infty,{\color{blue}\pm}})$ spanned by anti-symmetric Fock vectors with finitely many one-particle excitations in $\cD_{\std}$ (finite $\cD_{\std}$-particle number):
\begin{align}
\label{eq:latviraconvc}
\lim_{N\rightarrow\infty}\|(:\!\pi_{{\color{blue}\pm}}(\alpha^{N}_{\infty}(L^{(N)}_{\pm,k})\!)\!: - L_{\pm,k})a^{\dagger}(\xi_{1})\dots a^{\dagger}(\xi_{n})\Omega_{0}\| & = 0,
\end{align}
for all $n\in\N_{0}$ and $\xi_{1},\dots,\xi_{n}\in\cD_{\std}$.
\end{thm}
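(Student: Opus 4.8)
The plan is to reduce the stated operator convergence to the one-particle convergence already secured in Lemma \ref{lem:KSconvc}, exploiting that on the finite-particle core everything is controlled by normal-ordered second quantization through the explicit bound \eqref{eq:noquantconv}. First I would record the two identifications that turn the claim into a statement about $\tilde\ell^{(N)}_{\pm,k}$ and $\ell_{\pm,k}$. On the approximant side, combining \eqref{eq:Ralpha} with \eqref{eq:KShermiteanparts} gives
\begin{align*}
:\!\pi_{{\color{blue}\pm}}(\alpha^{N}_{\infty}(L^{(N)}_{\pm,k}))\!: & = \tfrac{L}{\pi}dF_{S}(\tilde\ell^{(N)}_{\pm,k}), & \tilde\ell^{(N)}_{\pm,k} & = R^{N}_{\infty}\ell^{(N)}_{\pm,k}R^{N\!\ *}_{\infty},
\end{align*}
where $S$ is the (doubled, in the Ramond sector) Hardy projection defining $\pi_{{\color{blue}\pm}}$. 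On the continuum side, \eqref{eq:viraCRcxrep} and \eqref{eq:viraCRcxrepR} give $L_{\pm,k} = \tfrac{L}{\pi}dF_{S}(\ell_{\pm,k})$, the Ramond zero mode carrying in addition the finite shift \eqref{eq:zeromodeR} (the conformal weight), which is a multiple of the identity and must be matched by the corresponding constant on the lattice side. By linearity of $dF_{S}$ the operator appearing in the theorem is therefore $\tfrac{L}{\pi}dF_{S}(\tilde\ell^{(N)}_{\pm,k}-\ell_{\pm,k})$, and $dF_{S}(\ell_{\pm,k})$ is a genuine Fock-space operator because $\ell_{\pm,k}$ has Hilbert--Schmidt off-diagonal parts, cf.\ \eqref{eq:hilbertschmidt}.

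Next I would reconcile the two descriptions of the core. A generic spanning vector is $a^{\dagger}(\xi_{1})\cdots a^{\dagger}(\xi_{n})\Omega_{0}$ with $\xi_{i}\in\cD_{\std}$; by \eqref{eq:momdominv} each $\xi_{i}$ lies in $R^{M_{i}}_{\infty}(\fh^{(\pm)}_{M_{i},{\color{blue}\pm}})$, and by asymptotic compatibility (Proposition \ref{prop:momrg}) all of them lie in $R^{M}_{\infty}(\fh^{(\pm)}_{M,{\color{blue}\pm}})$ for $M=\max_{i}M_{i}$. Writing $\xi_{i}=R^{M}_{\infty}(\eta_{i})$ puts the vector into exactly the form for which \eqref{eq:noquantconv} applies, with $o=\ell_{\pm,k}$ and $\tilde o^{(N)}=\tilde\ell^{(N)}_{\pm,k}$.

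I would then invoke \eqref{eq:noquantconv} and estimate each contribution on its right-hand side using Lemma \ref{lem:KSconvc}: the diagonal terms $\|(\tilde\ell^{(N)}_{\pm,k}-\ell_{\pm,k})_{++}R^{M}_{\infty}(\eta_{k})\|_{\infty}$ together with their transposes vanish as $N\to\infty$ by the strong $(\pm\pm)$-convergence on $\cD_{\std}$, while the off-diagonal (particle--hole) terms $\|(\tilde\ell^{(N)}_{\pm,k}-\ell_{\pm,k})_{+-}\|_{2}$ and $\|(\tilde\ell^{(N)}_{\pm,k}-\ell_{\pm,k})_{-+}\|_{2}$ vanish by the Hilbert--Schmidt $(\pm\mp)$-convergence. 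Since $n$, $M$ and the $\eta_{i}$ are fixed, the finite sum of these terms tends to zero, yielding \eqref{eq:latviraconvc}. The Majorana case is identical, replacing $dF_{S}$ by $dQ_{S}$ and using the self-dual analogues of the bounds together with the implementability condition of Remark \ref{rem:impcond}.

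The genuine obstacle is already resolved inside Lemma \ref{lem:KSconvc}, and it is precisely the Hilbert--Schmidt convergence of the off-diagonal parts --- the piece responsible, through the Schwinger cocycle \eqref{eq:schwinger}, for the nonzero central charge. This is the reason the argument is run in the momentum-cutoff renormalization group rather than the wavelet one: for the sharp cutoff the operators $\tilde\ell^{(N)}_{\pm,k}$ are literal momentum truncations, so the off-diagonal Hilbert--Schmidt norms reduce to the finite, explicitly summable expressions \eqref{eq:momrgestimates}, whereas the wavelet maps mix momentum scales and do not obviously furnish Hilbert--Schmidt control of the cocycle term. Everything else is the routine passage from one-particle strong and Hilbert--Schmidt convergence to Fock-space strong convergence via \eqref{eq:secondquantboundvect} and \eqref{eq:hilbertschmidt}.
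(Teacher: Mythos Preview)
Your proposal is correct and follows essentially the same route as the paper: the paper's proof is the single line ``Applying the lemma, we have:'', meaning one feeds Lemma~\ref{lem:KSconvc} into the normal-ordered second-quantization bound \eqref{eq:noquantconv} exactly as you describe. Your write-up spells out the intermediate identifications (via \eqref{eq:Ralpha}, \eqref{eq:KShermiteanparts}, \eqref{eq:momdominv}) and the reason for using the momentum cutoff more carefully than the paper does, but the argument is the same.
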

By Corollary \ref{cor:viraesa} and \eqref{eq:noquantconv}, $\fF^{\alg}_{\ua}(\cD_{\std})$ contains a total set of analytic vectors for  $L_{\pm,k}$, and by Nelson's analytic vector theorem \cite{NelsonAnalyticVectors} the real and imaginary parts of $L_{\pm,k}$ are essentially self-adjoint on $\fF^{\alg}_{\ua}(\cD_{\std})$. This implies strong resolvent convergence of the KS approximants and, therefore \cite{ReedMethodsOfModern1}:
\begin{cor}[Convergence of KS unitaries]
\label{cor:KSunitariesc}
Consider the unitaries,
\begin{align*}
U^{(N)}_{t} & = e^{it dF_{{\color{blue}\pm}}(\tilde{o}^{(N)})}, & U_{t} & = e^{it dF_{{\color{blue}\pm}}(o)},
\end{align*}
for $o^{(N)} = \tilde{r}^{(N)}_{\pm,k}, \tilde{\iota}^{(N)}_{\pm,k}$ and $o = r_{\pm,k}, \iota_{\pm,k}$. Then,
\begin{align*}
\lim_{N\rightarrow\infty}\|(U^{(N)}_{t}-U_{t})\phi\| & = 0,
\end{align*}
for all $\phi\in\fF_{\ua}(\fh^{(\pm)}_{\infty,{\color{blue}\pm}})$ uniformly on compact intervals in $t\in\R$.
\end{cor}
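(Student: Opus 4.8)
The plan is to deduce the corollary from the standard correspondence between strong resolvent convergence of self-adjoint operators and strong convergence of the unitary groups they generate. I would fix $o = r_{\pm,k}$, the argument for $\iota_{\pm,k}$ being word-for-word identical, and work directly with the normal-ordered second quantizations $dF_{{\color{blue}\pm}}(\tilde{r}^{(N)}_{\pm,k})$ and $dF_{{\color{blue}\pm}}(r_{\pm,k})$, which are precisely the self-adjoint generators appearing in $U^{(N)}_{t}$ and $U_{t}$.

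First I would record that the entire family is essentially self-adjoint on the common domain $\fF^{\alg}_{\ua}(\cD_{\std})$. For the limit operator this is the statement preceding the corollary: by Corollary \ref{cor:viraesa} the plane waves are analytic vectors for $r_{\pm,k}$ at the one-particle level, and the bound \eqref{eq:noquantconv} together with the Hilbert–Schmidt control of the off-diagonal parts supplied by Lemma \ref{lem:KSconvc} promotes the finite-particle Fock vectors built from $\cD_{\std}$ to a total set of analytic vectors for $dF_{{\color{blue}\pm}}(r_{\pm,k})$, so Nelson's analytic vector theorem \cite{NelsonAnalyticVectors} yields essential self-adjointness. The finite-scale operators $\tilde{r}^{(N)}_{\pm,k}$ are bounded and supported on the finite momentum set $\Gamma_{N,{\color{blue}\pm}}$, hence have Hilbert–Schmidt off-diagonal parts, so the same analytic-vector reasoning (in fact simpler, since the one-particle operator is bounded) shows $\fF^{\alg}_{\ua}(\cD_{\std})$ is a core for each $dF_{{\color{blue}\pm}}(\tilde{r}^{(N)}_{\pm,k})$ as well. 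Thus $\fF^{\alg}_{\ua}(\cD_{\std})$ is a common core for the whole sequence and its limit.

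Next I would invoke the strong convergence on this core: Lemma \ref{lem:KSconvc} gives $\tilde{r}^{(N)}_{\pm,k}\xi\to r_{\pm,k}\xi$ for $\xi\in\cD_{\std}$ along with the convergence of the off-diagonal Hilbert–Schmidt parts, and feeding these into \eqref{eq:noquantconv} yields $dF_{{\color{blue}\pm}}(\tilde{r}^{(N)}_{\pm,k})\phi\to dF_{{\color{blue}\pm}}(r_{\pm,k})\phi$ for every $\phi\in\fF^{\alg}_{\ua}(\cD_{\std})$, which is Theorem \ref{thm:KSconvc} applied to the real part. Having self-adjoint operators that converge strongly on a common core on which the limit is essentially self-adjoint, I would conclude strong resolvent convergence by the standard criterion \cite[Thm.~VIII.25(a)]{ReedMethodsOfModern1}. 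That same theorem identifies strong resolvent convergence with the strong convergence $U^{(N)}_{t}\to U_{t}$ for each $t$, uniformly for $t$ in compact intervals; since the $U^{(N)}_{t}$ are unitary contractions, this convergence extends from the dense core to all of $\fF_{\ua}(\fh^{(\pm)}_{\infty,{\color{blue}\pm}})$ by a $3\varepsilon$ estimate, giving the claim.

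The one genuinely delicate point — and hence the main obstacle — is the essential-self-adjointness/common-core hypothesis: one must verify that $\fF^{\alg}_{\ua}(\cD_{\std})$ is simultaneously a core for every finite-scale generator and for the limit, which rests entirely on the analytic-vector bounds \eqref{eq:analyticvector}, the lift \eqref{eq:noquantconv}, and the finite momentum support of the momentum-cutoff approximants, rather than on any new estimate. Once this is secured, the corollary is a formal consequence of Theorem \ref{thm:KSconvc} and Lemma \ref{lem:KSconvc} via \cite[Thm.~VIII.25]{ReedMethodsOfModern1}.
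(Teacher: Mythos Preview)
Your proposal is correct and follows essentially the same route as the paper: establish essential self-adjointness of the limit generator on $\fF^{\alg}_{\ua}(\cD_{\std})$ via Nelson's analytic vector theorem (using Corollary \ref{cor:viraesa} and \eqref{eq:noquantconv}), combine this with the strong convergence on that core from Theorem \ref{thm:KSconvc}/Lemma \ref{lem:KSconvc} to obtain strong resolvent convergence, and then invoke \cite{ReedMethodsOfModern1} for the strong convergence of the unitary groups. Your extra verification that $\fF^{\alg}_{\ua}(\cD_{\std})$ is also a core for each finite-scale generator is a harmless elaboration---for the standard criterion you cite it suffices that this domain is a core for the \emph{limit} and lies inside the domain of each approximant, which is automatic here since the $\tilde{r}^{(N)}_{\pm,k}$ are bounded.
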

As in the previous section, Lemma \ref{lem:KSconvc} implies the following convergence statement for the Bogoliubov transformations \eqref{eq:KSconvgroup}:
\begin{cor}[Convergence of KS Bogoliubov transformations for $\pi_{{\color{blue}\pm}}$]
\label{cor:KSconvbogc}
In the quasi-free representations $\pi_{{\color{blue}\pm}}$, we have:
\begin{align*}
\lim_{N\rightarrow\infty}\|\tilde{\sigma}^{(N)}_{t}(\pi_{{\color{blue}\pm}}(A)) - \sigma_{t}(\pi_{{\color{blue}\pm}}(A))\| & = 0, & \forall A\in\fA^{(\pm)}_{\infty,{\color{blue}\pm}},
\end{align*}
and uniformly on compact intervals in $t\in\R$.
\end{cor}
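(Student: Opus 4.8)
The plan is to run the proof of Corollary~\ref{cor:KSconvbog} essentially verbatim, feeding in the momentum-cutoff one-particle convergence of Lemma~\ref{lem:KSconvc} in place of the wavelet Lemma~\ref{lem:KSconv}, and $\cD_{\std}$ in place of $\cD_{W}$. First I would reduce the assertion to the fermionic generators. Since the multinomials $A = a^{\dag}(\xi_{1})\dots a^{\dag}(\xi_{n})a(\xi_{n+1})\dots a(\xi_{n+m})$ with $\xi_{l}\in\cD_{\std}$ form a norm-total set in $\fA^{(\pm)}_{\infty,\pm}$, and both $\tilde{\sigma}^{(N)}_{t}$ and $\sigma_{t}$ are unital $*$-automorphisms (hence isometric), the Leibniz-type bound $\|\tilde{\sigma}^{(N)}_{t}(AB) - \sigma_{t}(AB)\|\leq \|\tilde{\sigma}^{(N)}_{t}(A)-\sigma_{t}(A)\|\,\|B\| + \|A\|\,\|\tilde{\sigma}^{(N)}_{t}(B)-\sigma_{t}(B)\|$ already used there reduces the general claim to the single-generator case, with all $\|A\|,\|B\|$ being $t$-independent so that uniformity in $t$ on compact intervals is preserved.

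Next I would invoke the implementation identity \eqref{eq:KSconvgroup}, which is available in the scaling-limit representations $\pi_{\pm}$ precisely because these are admissible: Lemma~\ref{lem:KSconvc} together with the Hilbert--Schmidt bounds \eqref{eq:hilbertschmidt} shows that the off-diagonal parts of $\ell_{\pm,k}$, hence of $r_{\pm,k}$ and $\iota_{\pm,k}$, lie in the Hilbert--Schmidt class, so the normal-ordered second quantizations $dF_{\pm}(r_{\pm,k})$ and $dF_{\pm}(\iota_{\pm,k})$ are densely defined on $\fF_{\ua}(\fh^{(\pm)}_{\infty,\pm})$ and, by Corollary~\ref{cor:viraesa} and the Nelson analytic-vector argument on $\fF^{\alg}_{\ua}(\cD_{\std})$ following Theorem~\ref{thm:KSconvc}, essentially self-adjoint. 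The finite-rank truncations $\tilde{r}^{(N)}_{\pm,k}, \tilde{\iota}^{(N)}_{\pm,k}$ are bounded and self-adjoint, so no domain issue arises at finite $N$. Using that $\pi_{\pm}$ is faithful (the CAR algebra over an infinite-dimensional one-particle space is simple) and $\|\pi_{\pm}(a^{\dag}(\zeta))\| = \|\zeta\|_{\infty}$, the automorphisms act at the one-particle level and yield, with $o = r_{\pm,k},\iota_{\pm,k}$ and $\tilde{o}^{(N)} = \tilde{r}^{(N)}_{\pm,k},\tilde{\iota}^{(N)}_{\pm,k}$,
\begin{align*}
\|\tilde{\sigma}^{(N)}_{t}(\pi_{\pm}(a^{\dag}(\xi))) - \sigma_{t}(\pi_{\pm}(a^{\dag}(\xi)))\| & \leq \|(e^{it\tilde{o}^{(N)}} - e^{ito})\xi\|_{\infty}.
\end{align*}

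Finally, the right-hand side is handled exactly as in Corollary~\ref{cor:KSunitariesc}: Lemma~\ref{lem:KSconvc} gives $\|(\tilde{o}^{(N)} - o)\xi\|_{\infty}\to 0$ for every $\xi\in\cD_{\std}$, and since $\cD_{\std}$ is a core for the essentially self-adjoint $o$ this upgrades to strong resolvent convergence, whence $e^{it\tilde{o}^{(N)}}\to e^{ito}$ strongly and, the $e^{it\tilde{o}^{(N)}}$ being isometries, uniformly in $t$ on compact intervals. Assembling the three reductions proves the corollary; the Majorana case is identical with $dQ_{\pm}$ replacing $dF_{\pm}$, the compatibility condition being guaranteed by Remark~\ref{rem:impcond}. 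I expect the only genuinely delicate point to be the admissibility of $\pi_{\pm}$ — that passing to a non-Fock representation with $c\neq0$ does not spoil the clean one-particle action \eqref{eq:KSconvgroup} or the essential self-adjointness needed to define the implementing unitaries — but this is exactly what the Hilbert--Schmidt estimate in Lemma~\ref{lem:KSconvc} and the analytic-vector argument after Theorem~\ref{thm:KSconvc} already furnish, so once these are cited the remaining analysis is routine.
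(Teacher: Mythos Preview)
Your proposal is correct and follows essentially the same approach as the paper, which simply states that the result follows ``as in the previous section'' from Lemma~\ref{lem:KSconvc} in place of Lemma~\ref{lem:KSconv}; your write-up is in fact more detailed than the paper's, spelling out the admissibility of $\pi_{{\color{blue}\pm}}$ and the role of $\cD_{\std}$ as a core where the paper leaves these implicit.
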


\begin{rem}[Moebius group in the Neveu-Schwarz sector]
\label{rem:moeberror}
We note that the Hilbert-Schmidt norms in \eqref{eq:momrgestimates} vanish in the Neveu-Schwarz sector ((${\color{blue}-}$)-boundary condition) for $k=0,\pm\tfrac{\pi}{L}$, i.e.~$\|(\tilde{\ell}^{(N)}_{k,\pm}-\ell_{k,\pm})_{\pm\mp}\|_{2}^{2}=0$ for the approximation of the Virasoro generators corresponding to the Moebius group. Therefore, a simple extension of Lemma \ref{lem:KSconv} is sufficient to prove Theorem \ref{thm:KSconvc} in this case using the wavelet renormalization group.
\end{rem}

\paragraph{Smeared KS approximants.}
\label{sec:KSsmeared}
We combine the wavelet and momentum-cutoff renormalization groups to analyze the convergence of the smeared KS approximants \eqref{eq:latvirasmeared} to the smeared Virasoro generators \eqref{eq:virasmeared}. In this way, it is possible to exploit the convergence of the KS approximants in the non-trivial quasi-free representations $\pi_{{\color{blue}\pm}}$ while, at the same time, preserving localization in real space in the sense of Proposition \ref{prop:antilocal}.

\bigskip

Throughout this paragraph we denote by $R^{N}_{M}$ and $R^{N}_{\infty}$ as well as $\alpha^{N}_{M}$ and $\alpha^{N}_{\infty}$ the momentum-cutoff renormalization group, while $S^{N}_{M}$ and $S^{N}_{\infty}$ denote the wavelet renormalization group for differential loops (see Definition \ref{def:waveletrgcur}).
\begin{lemma}[Convergence of smeared one-particle KS approximants]
\label{lem:KSconvsmeared}
Let $s\in C^{\alpha}(\R)$ be a sufficiently regular, compactly supported orthonormal Daubechies scaling function.  For $M\in\N_{0}$ and $X\in\fl(\Lambda_{M},\C)$, we consider,
\begin{align*}
\tilde{\ell}^{(N)}_{\pm}(S^{M}_{N}(X)) & = \tfrac{1}{2L_{N}}\sum_{k\in\Gamma_{N,{\color{blue}+}}}S^{M}_{N}(\hat{X})_{k}\tilde{\ell}^{(N)}_{\pm,k}, & \ell_{\pm}(S^{M}_{\infty}(X)) & = \tfrac{1}{2L}\sum_{k\in\Gamma_{\infty,{\color{blue}+}}}S^{M}_{\infty}(\hat{X})_{k}\ell_{\pm,k},
\end{align*}
for any $N\geq M$.Then,
\begin{align*}
\lim_{N\rightarrow\infty}\|(\tilde{\ell}^{(N)}_{\pm}(S^{M}_{N}(X)\!)\!-\!\ell_{\pm}(S^{M}_{\infty}(X)\!)\!)_{\pm\pm}\xi\|_{\infty} & = 0, & \lim_{N\rightarrow\infty}\|(\tilde{\ell}^{(N)}_{\pm}(S^{M}_{N}(X)\!)\!-\!\ell_{\pm}(S^{M}_{\infty}(X)\!)\!)\xi\|_{\infty} & = 0,
\end{align*}
for $\xi\in\cD_{\std}$, and,
\begin{align*}
\lim_{N\rightarrow\infty}\|(\tilde{\ell}^{(N)}_{\pm}(S^{M}_{N}(X)\!)\!-\!\ell_{\pm}(S^{M}_{\infty}(X)\!)\!)_{\pm\mp}\|_{2} & = 0.
\end{align*}
\end{lemma}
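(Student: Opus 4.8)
The plan is to treat the smeared approximant as a weighted superposition of the single-mode operators already controlled by Lemma \ref{lem:KSconvc}, and to reduce the entire statement to one dominated-convergence argument over the momentum sum. First I would record the explicit form of the smearing weights. Iterating the wavelet renormalization group for loops \eqref{eq:waveletrgcur} and using $\vep_{M+j}=2^{-j}\vep_M$ together with $2L_N=2L\vep_N^{-1}$ gives
\begin{align*}
\tfrac{1}{2L_N}S^M_N(\hat X)_k & = \tfrac{\vep_M}{2L}\Big(\prod_{j=1}^{N-M}m_{0}(2^{-j}\vep_M k)\Big)(\hat X_{\per})_k, & \tfrac{1}{2L}S^M_\infty(\hat X)_k & = \tfrac{\vep_M}{2L}\hat s(\vep_M k)(\hat X_{\per})_k,
\end{align*}
so by the infinite-product identity \eqref{eq:infprod} the weights converge pointwise in $k$, and by Lemma \ref{lem:convergence} (with $l=\vep_M k$, noting that on $\Gamma_{N,{\color{blue}+}}$ one has $|\vep_M k|\le 2^{N-M}\pi$ so the characteristic function there is automatic) the finite products obey the $N$-\emph{independent} bound $|\prod_{j=1}^{N-M}m_{0}(2^{-j}\vep_M k)|\le C(1+|\vep_M k|)^{-K+\cK}$. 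Since $\hat X_{\per}$ is bounded and periodic, this furnishes a polynomially decaying dominating weight, uniformly in $N$.

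Next, writing $w^{(N)}_k=\tfrac{1}{2L_N}S^M_N(\hat X)_k$ (extended by $\chi_{\Gamma_N}$) and $w^{(\infty)}_k=\tfrac{1}{2L}S^M_\infty(\hat X)_k$, I would split
\begin{align*}
\tilde{\ell}^{(N)}_{\pm}(S^M_N(X))-\ell_{\pm}(S^M_\infty(X)) & = \sum_{k\in\Gamma_{\infty,{\color{blue}+}}}(w^{(N)}_k-w^{(\infty)}_k)\tilde{\ell}^{(N)}_{\pm,k} + \sum_{k\in\Gamma_{\infty,{\color{blue}+}}}w^{(\infty)}_k(\tilde{\ell}^{(N)}_{\pm,k}-\ell_{\pm,k}).
\end{align*}
For the second sum, Lemma \ref{lem:KSconvc} gives mode-by-mode convergence (strongly on $\cD_{\std}$, and in Hilbert--Schmidt norm for the off-diagonal block), while for the first sum the weight difference tends to zero pointwise. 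In both sums the single-mode contribution is bounded uniformly in $N$: using $|\sin(\vep_N x)/\vep_N|\le|x|$ on the matrix elements \eqref{eq:latvirachi1pmommat} one has, for fixed $l$, $\|\tilde{\ell}^{(N)}_{\pm,k}e_l\|_{\infty}\lesssim(1+|l|+|k|)\|e_l\|_{\infty}$ and $\|(\tilde{\ell}^{(N)}_{\pm,k})_{\pm\mp}\|_2\lesssim(1+|k|)^{3/2}$ (the $\theta$-constraints in \eqref{eq:momrgestimates} confining the summation range to $O(|k|)$ lattice points on each of which the matrix entry is $O(|k|)$), both uniform in $N$. Multiplying these polynomial growths against the decaying dominating weight from the first paragraph yields a summable majorant, so Lebesgue's dominated convergence theorem applies to each of the three claimed limits: the two strong limits are handled via $\|(\cdots)e_l\|_{\infty}^2=\sum_m|(\cdots)_{ml}|^2$ (the momentum sum becoming a sum over $m$ with $k=\pm(l-m)$, then by linearity on $\cD_{\std}$), and the Hilbert--Schmidt limit via the off-diagonal double sum exactly as in \eqref{eq:noquantconv}.

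I expect the main obstacle to be precisely this interchange of the momentum sum with the limit $N\to\infty$: for \emph{fixed} $N$ the smearing weight $w^{(N)}_k$ is only periodic and does not decay, so a naive termwise bound is not summable on $\Gamma_{\infty,{\color{blue}+}}$. The resolution is that the decay needed for domination is supplied not by any single factor but by the \emph{finite product} $\prod_{j=1}^{N-M}m_{0}(2^{-j}\vep_M k)$, whose $N$-uniform polynomial decay is exactly the content of Lemma \ref{lem:convergence}; this, combined with the automatic momentum cutoff on $\Gamma_{N,{\color{blue}+}}$, is what makes the dominating majorant $N$-independent and summable for $s$ sufficiently regular (large $K$). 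Once the majorant is in place, the remaining steps are routine, mirroring the unsmeared arguments of Lemma \ref{lem:KSconvc} and the quantization estimate \eqref{eq:noquantconv}.
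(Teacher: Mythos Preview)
Your proposal is correct and follows essentially the same route as the paper's proof: the same two-term split (weight-difference against $\tilde\ell^{(N)}_{\pm,k}$, then fixed weight against $\tilde\ell^{(N)}_{\pm,k}-\ell_{\pm,k}$), the same reliance on Lemma~\ref{lem:convergence} for the $N$-uniform polynomial decay of the finite products $\prod_{j=1}^{N-M}m_0(2^{-j}\vep_M k)$, and the same polynomial bounds on the single-mode operators from \eqref{eq:momrgestimates} and \eqref{eq:analyticvector}. The only cosmetic difference is that the paper packages the interchange of sum and limit via an explicit Cauchy--Schwarz in $k$ with weight $(1+|\vep_M k|)^{2\delta}$, whereas you invoke dominated convergence directly; these are equivalent, lead to the same regularity requirement on $s$, and your identification of Lemma~\ref{lem:convergence} as the crux is exactly the point.
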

\begin{proof}
For sufficiently regular $s$, we know that,
\begin{align*}
\ell_{\pm}(S^{M}_{\infty}(X)) & = \tfrac{1}{2L_{M}}\sum_{k\in\Gamma_{\infty,{\color{blue}+}}}\hat{s}(\vep_{M}k)\hat{X}_{\per|k}\ell_{\pm,k},
\end{align*}
is well-defined on $\cD_{\std}$ because of \eqref{eq:decay}. Now, we use,
\begin{align*}
\tilde{\ell}^{(N)}_{\pm}(S^{M}_{N}(X)\!)\!-\!\ell_{\pm}(S^{M}_{\infty}(X)\!) & \!=\!\tilde{\ell}^{(N)}_{\pm}(S^{M}_{N}(X)\!-\!\chi_{\Gamma_{N}}S^{M}_{\infty}(X)\!)\!+\!\tilde{\ell}^{(N)}_{\pm}(\chi_{\Gamma_{N}}S^{M}_{\infty}(X)\!)\!-\!\ell_{\pm}(S^{M}_{\infty}(X)\!),
\end{align*}
to write:
\begin{align*}
\|(\tilde{\ell}^{(N)}_{\pm}(S^{M}_{N}(X)\!)\!-\!\ell_{\pm}(S^{M}_{\infty}(X)\!)\!)_{\pm\pm}\xi\|_{\infty} & \!\leq\!\|\tilde{\ell}^{(N)}_{\pm}(S^{M}_{N}(X)\!-\!\chi_{\Gamma_{N}}S^{M}_{\infty}(X)\!)_{\pm\pm}\xi\|_{\infty} \\
& \hspace{0.5cm} + \|(\tilde{\ell}^{(N)}_{\pm}(\chi_{\Gamma_{N}}S^{M}_{\infty}(X)\!)\!-\!\ell_{\pm}(S^{M}_{\infty}(X)\!)\!)_{\pm\pm}\xi\|_{\infty}, \\[0.25cm]
\|(\tilde{\ell}^{(N)}_{\pm}(S^{M}_{N}(X)\!)\!-\!\ell_{\pm}(S^{M}_{\infty}(X)\!)\!)\xi\|_{\infty} & \!\leq\!\|\tilde{\ell}^{(N)}_{\pm}(S^{M}_{N}(X)\!-\!\chi_{\Gamma_{N}}S^{M}_{\infty}(X)\!)\xi\|_{\infty} \\
& \hspace{0.5cm} + \|(\tilde{\ell}^{(N)}_{\pm}(\chi_{\Gamma_{N}}S^{M}_{\infty}(X)\!)\!-\!\ell_{\pm}(S^{M}_{\infty}(X)\!)\!)\xi\|_{\infty}, \\[0.25cm]
\|(\tilde{\ell}^{(N)}_{\pm}(S^{M}_{N}(X)\!)\!-\!\ell_{\pm}(S^{M}_{\infty}(X)\!)\!)_{\pm\mp}\|_{2} & \!\leq\!\|\tilde{\ell}^{(N)}_{\pm}(S^{M}_{N}(X)\!-\!\chi_{\Gamma_{N}}S^{M}_{\infty}(X)\!)_{\pm\mp}\|_{2} \\
& \hspace{0.5cm} + \|(\tilde{\ell}^{(N)}_{\pm}(\chi_{\Gamma_{N}}S^{M}_{\infty}(X)\!)\!-\!\ell_{\pm}(S^{M}_{\infty}(X)\!)\!)_{\pm\mp}\|_{2}
\end{align*}
Now, let $\xi = R^{K}_{\infty}(\eta)$ for $\eta\in\fh_{K}$. By the Cauchy-Schwarz inequality we have:
\begin{align*}
 & \|\tilde{\ell}^{(N)}_{\pm}(S^{M}_{N}(X)\!-\!\chi_{\Gamma_{N}}S^{M}_{\infty}(X)\!)_{\pm\pm}R^{K}_{\infty}(\eta)\|_{\infty}^{2} \\
 & \leq \|\eta\|_{K}^{2}\tfrac{1}{2L_{K}}\!\!\!\sum_{l\in\Gamma_{K,{\color{blue}\pm}}}\!\|\tfrac{1}{2L}\!\!\sum_{k\in\Gamma_{N,{\color{blue}+}}}\!\!\vep_{M}\Bigg(\prod_{j=1}^{N-M}m_{0}(\vep_{M+j}k)\!-\!\hat{s}(\vep_{M}k)\Bigg)\hat{X}_{\per|k}(\tilde{\ell}^{(N)}_{\pm,k})_{\pm\pm}R^{K}_{\infty}(e_{l})\|_{\infty}^{2} \\
 & \leq \|\eta\|_{K}^{2}\tfrac{1}{2L_{K}}\!\!\!\sum_{l\in\Gamma_{K,{\color{blue}\pm}}}\!\!\!\Bigg(\tfrac{1}{2L}\!\!\sum_{k\in\Gamma_{N,{\color{blue}+}}}\!\!\!\big(1\!+\!|\vep_{M}k|\big)^{2\delta}\Big|\vep_{M}\Bigg(\!\prod_{j=1}^{N-M}m_{0}(\vep_{M+j}k)\!-\!\hat{s}(\vep_{M}k)\!\Bigg)\hat{X}_{\per|k}\Big|^{2}\!\Bigg) \\
 &\hspace{2.4cm} \times\Bigg(\!\tfrac{1}{2L}\!\!\sum_{k\in\Gamma_{N,{\color{blue}+}}}\!\!\!\big(1\!+\!|\vep_{M}k|\big)^{-2\delta}\|(\tilde{\ell}^{(N)}_{\pm,k})_{\pm\pm}R^{K}_{\infty}(e_{l})\|_{\infty}^{2}\!\!\Bigg),
\end{align*}
and,
\begin{align*}
 & \|(\tilde{\ell}^{(N)}_{\pm}(\chi_{\Gamma_{N}}S^{M}_{\infty}(X)\!)\!-\!\ell_{\pm}(S^{M}_{\infty}(X)\!)\!)_{\pm\pm}R^{K}_{\infty}(\eta)\|_{\infty}^{2} \\
 & \leq \|\eta\|_{K}^{2}\tfrac{1}{2L_{K}}\!\!\!\sum_{l\in\Gamma_{K,{\color{blue}\pm}}}\!\|\tfrac{1}{2L}\!\!\sum_{k\in\Gamma_{\infty,{\color{blue}+}}}\!\!\vep_{M}\hat{s}(\vep_{M}k)\hat{X}_{\per|k}(\chi_{\Gamma_{N}}(k)(\tilde{\ell}^{(N)}_{\pm,k})\!-\!(\ell_{\pm,k})\!)_{\pm\pm}R^{K}_{\infty}(e_{l})\|_{\infty}^{2} \\
 & \leq \|\eta\|_{K}^{2}\tfrac{1}{2L_{K}}\!\!\!\sum_{l\in\Gamma_{K,{\color{blue}\pm}}}\!\!\!\Bigg(\tfrac{1}{2L}\!\!\sum_{k\in\Gamma_{\infty,{\color{blue}+}}}\!\!\!\big(1\!+\!|\vep_{M}k|\big)^{2\delta}\Big|\vep_{M}\hat{s}(\vep_{M}k)\hat{X}_{\per|k}\Big|^{2}\!\Bigg) \\
 &\hspace{2.4cm} \times\Bigg(\!\tfrac{1}{2L}\!\!\sum_{k\in\Gamma_{\infty,{\color{blue}+}}}\!\!\!\big(1\!+\!|\vep_{M}k|\big)^{-2\delta}\|(\chi_{\Gamma_{N}}(k)(\tilde{\ell}^{(N)}_{\pm,k})\!-\!(\ell_{\pm,k})\!)_{\pm\pm}R^{K}_{\infty}(e_{l})\|_{\infty}^{2}\!\!\Bigg).
\end{align*}
For sufficiently large $\delta>0$ and $\alpha>0$, this implies the first convergence statement because of \eqref{eq:momrgestimates}, Lemma \ref{lem:convergence}, and the polynomial boundedness of $\|\tilde{\ell}^{(N)}_{\pm, k}R^{K}_{\infty}(e_{l})\|_{\infty}$ and $\|\ell_{\pm,k}R^{K}_{\infty}(e_{l})\|_{\infty}$ in $k\in\Gamma_{\infty,{\color{blue}\pm}}$ due to \eqref{eq:analyticvector}.

The case without the restriction to the diagonal parts $(\!\ \cdot\!\ )_{\pm\pm}$ of the block-diagonal decomposition \eqref{eq:blockmatrix} is analogous. Similarly, we find:
\begin{align*}
\|\tilde{\ell}^{(N)}_{\pm}\!(S^{M}_{N}\!(X)\!-\!\chi_{\Gamma_{N}}S^{M}_{\infty}\!(X)\!)_{\pm\mp}\|_{2}^{2} & \!\leq\!\!\Bigg(\!\!\tfrac{1}{2L}\!\!\!\!\sum_{k\in\Gamma_{N,{\color{blue}+}}}\!\!\!\!\!\big(1\!+\!|\vep_{M}k|\big)^{2\delta}\!\vep_{M}\Big|\!\!\!\!\prod_{j=1}^{\ N-M}\!\!\!\!m_{0}(\vep_{M+j}k)\!-\!\hat{s}(\vep_{M}k)\Big|^{2}\!|\hat{X}_{\per|k}|^{2}\!\Bigg) \\
 &\hspace{1cm} \times\Bigg(\!\tfrac{1}{2L}\!\!\!\!\!\sum_{k\in\Gamma_{N,{\color{blue}\pm}}}\!\!\!\big(1\!+\!|\vep_{M}k|\big)^{-2\delta}\|(\tilde{\ell}^{(N)}_{\pm,k})_{\pm\mp}\|_{2}^{2}\!\!\Bigg),
\end{align*}
and,
\begin{align*}
\|(\tilde{\ell}^{(N)}_{\pm}(\chi_{\Gamma_{N}}S^{M}_{\infty}(X)\!)\!-\!\ell_{\pm}(S^{M}_{\infty}(X)\!)\!)_{\pm\mp}\|_{2}^{2} & \!\leq\!\!\Bigg(\tfrac{1}{2L}\!\!\sum_{k\in\Gamma_{\infty,{\color{blue}+}}}\!\!\!\big(1\!+\!|\vep_{M}k|\big)^{2\delta}\Big|\vep_{M}\hat{s}(\vep_{M}k)\hat{X}_{\per|k}\Big|^{2}\!\Bigg) \\
 &\hspace{0.25cm}\times\!\!\Bigg(\!\tfrac{1}{2L}\!\!\!\!\!\sum_{k\in\Gamma_{\infty,{\color{blue}+}}}\!\!\!\!\!\big(1\!+\!|\vep_{M}k|\big)^{-2\delta}\|(\chi_{\Gamma_{N}}\!(k)\tilde{\ell}^{(N)}_{\pm,k}\!-\!\ell_{\pm,k})_{\pm\mp}\|_{2}^{2}\!\!\Bigg).
\end{align*}
Again for sufficiently large $\delta>0$ and $\alpha>0$, the convergence of the second statement follows using \eqref{eq:momrgestimates}, the observation that $\|(\tilde{\ell}^{(N)}_{\pm,k})_{\pm\mp}\|_{2}$ and $\|(\ell_{\pm,k})_{\pm\mp}\|_{2}$ are polynomially bounded in $k\in\Gamma_{\infty,{\color{blue}\pm}}$:
\begin{align*}
\|(\tilde{\ell}^{(N)}_{\pm,k})_{\pm\mp}\|_{2}^{2} & = (\tfrac{L}{\pi})^{2}\!\!\!\!\sum_{n\in\Gamma_{N,{\color{blue}\pm}}}\!\!\!\!\theta(\pm n\!-\!k)\theta(\mp n)\cos(\tfrac{1}{4}\vep_{N}k)^{4}\Big(\tfrac{\sin(\vep_{N}(n\mp\frac{k}{2}))}{\vep_{N}}\Big)^{2}\chi_{\Gamma_{N}(n\!\mp\!k)}, \\
\|(\ell_{\pm,k})_{\pm\mp}\|_{2}^{2} & = (\tfrac{L}{\pi})^{2}\!\!\!\!\sum_{n\in\Gamma_{\infty,{\color{blue}\pm}}}\!\!\!\!\theta(\pm n\!-\!k)\theta(\mp n)(n\!\mp\!\tfrac{k}{2})^{2}.
\end{align*}
\end{proof}
As before, we use the preceding lemma to deduce the following result on the convergence of the KS approximants (cf.~\eqref{eq:noquantconv}).
\begin{thm}[Convergence of smeared KS approximants for $\pi_{{\color{blue}\pm}}$]
\label{thm:KSconvsmeared}
Let $s\in C^{\alpha}(\R)$ be a sufficiently regular, compactly supported orthonormal Daubechies scaling function, and let $X\in\fl(\Lambda_{M},\C)$ for some $M\in\N_{0}$. Then, the smeared KS approximants, $L^{(N)}_{\pm}(S^{M}_{N}(X))$, converge strongly to the continuum Virasoro generators, $L_{\pm}(S^{M}_{\infty}(X))$, on the dense domain $\fF^{\alg}_{\ua}(\cD_{\std})\subset\fF_{\ua}(\fh^{(\pm)}_{\infty,{\color{blue}\pm}})$ spanned by anti-symmetric Fock vectors with finitely many one-particle excitations in $\cD_{\std}$ (finite $\cD_{\std}$-particle number):
\begin{align}
\label{eq:latviraconvsmeared}
\lim_{N\rightarrow\infty}\|(:\!\pi_{{\color{blue}\pm}}(\alpha^{N}_{\infty}(L^{(N)}_{\pm}(S^{M}_{N}(X)))\!)\!: - L_{\pm}(S^{M}_{\infty}(X)))a^{\dagger}(\xi_{1})\dots a^{\dagger}(\xi_{n})\Omega_{0}\| & = 0,
\end{align}
for all $n\in\N_{0}$ and $\xi_{1},\dots,\xi_{n}\in\cD_{\std}$.
\end{thm}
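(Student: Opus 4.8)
The plan is to reduce the statement to Lemma~\ref{lem:KSconvsmeared} in exactly the way Theorem~\ref{thm:KSconvc} was reduced to Lemma~\ref{lem:KSconvc}, using the normal-ordered second-quantization estimate \eqref{eq:noquantconv} as the bridge from one-particle convergence to Fock-space convergence. First I would note that $\alpha^{N}_{\infty}$, $\pi_{{\color{blue}\pm}}$ and normal-ordered second quantization are all linear in the underlying one-particle operator, so the identity \eqref{eq:KShermiteanparts} extends from the individual modes $L^{(N)}_{\pm,k}$ to the smeared combinations, giving
\begin{align*}
:\!\pi_{{\color{blue}\pm}}(\alpha^{N}_{\infty}(L^{(N)}_{\pm}(S^{M}_{N}(X))))\!: & = dF_{{\color{blue}\pm}}\big(\tilde{\ell}^{(N)}_{\pm}(S^{M}_{N}(X))\big),
\end{align*}
with $\tilde{\ell}^{(N)}_{\pm}(S^{M}_{N}(X))$ the smeared one-particle KS approximant of Lemma~\ref{lem:KSconvsmeared}, and likewise $L_{\pm}(S^{M}_{\infty}(X)) = dF_{{\color{blue}\pm}}(\ell_{\pm}(S^{M}_{\infty}(X)))$. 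Before applying the bound I would check that the target is a genuine operator on $\fF^{\alg}_{\ua}(\cD_{\std})$: the smearing coefficients $S^{M}_{\infty}(\hat{X})_{k} = \vep_{M}\hat{s}(\vep_{M}k)\hat{X}_{\per|k}$ decay rapidly by \eqref{eq:decaysf}, so the off-diagonal part $(\ell_{\pm}(S^{M}_{\infty}(X)))_{\pm\mp}$ lies in the Hilbert--Schmidt class (using that $\|(\ell_{\pm,k})_{\pm\mp}\|_{2}$ grows only polynomially in $k$) while the diagonal part maps $\cD_{\std}$ into $\fh^{(\pm)}_{\infty,{\color{blue}\pm}}$; hence the hypotheses of \eqref{eq:noquantconv} are satisfied.

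Next I would apply \eqref{eq:noquantconv} to the difference with $\tilde{o}^{(N)} = \tilde{\ell}^{(N)}_{\pm}(S^{M}_{N}(X))$ and $o = \ell_{\pm}(S^{M}_{\infty}(X))$, evaluated on an arbitrary Fock vector $a^{\dagger}(\xi_{1})\dots a^{\dagger}(\xi_{n})\Omega_{0}$ with $\xi_{j}\in\cD_{\std}$. Writing each $\xi_{j} = R^{K}_{\infty}(\eta_{j})$ for a common $K$ (possible for the momentum-cutoff group, since $\cD_{\std} = \bigcup_{N}R^{N}_{\infty}(\fh^{(\pm)}_{N,{\color{blue}\pm}})$), the estimate bounds the Fock-space norm by a finite sum of three kinds of terms: the diagonal pieces $\|(\tilde{o}^{(N)}-o)_{\pm\pm}R^{K}_{\infty}(\eta_{k})\|_{\infty}$ together with the transpose diagonal term, and the off-diagonal Hilbert--Schmidt norms $\|(\tilde{o}^{(N)}-o)_{\pm\mp}\|_{2}$. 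Each of these tends to zero as $N\to\infty$ precisely by the three convergence statements of Lemma~\ref{lem:KSconvsmeared}, so the whole right-hand side vanishes and \eqref{eq:latviraconvsmeared} follows.

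The only point requiring genuine care in this reduction is the transpose diagonal term $(\tilde{o}^{(N)}-o)^{t}_{--}$ appearing in \eqref{eq:noquantconv}: since $G^{t} = JG^{*}J$ and the adjoint reverses the sign of $k$ (as $\ell^{(N)}_{\pm,k}{}^{*} = \ell^{(N)}_{\pm,-k}$ by \eqref{eq:latvira1pad}), this term is governed by the smeared operator built from the complex-conjugate loop. I would argue that Lemma~\ref{lem:KSconvsmeared} applies verbatim to it, using that $J$ is an isometry permuting the plane-wave basis of $\cD_{\std}$ and that the lemma's hypotheses are symmetric under $k\mapsto -k$ and $\hat{X}\mapsto\overline{\hat{X}}$, so no new estimate is needed. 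I expect that, conditional on the lemma, this transpose bookkeeping is the main (but minor) obstacle; the genuinely hard analytic content — the interplay of the wavelet smearing $S^{M}_{N}$ with the momentum cutoff and the uniform summability in $k$ secured by the decay of $\hat{s}$ from Lemma~\ref{lem:convergence} — has already been absorbed into Lemma~\ref{lem:KSconvsmeared}, so the present theorem is essentially a packaging of that lemma through the second-quantization estimate \eqref{eq:noquantconv}.
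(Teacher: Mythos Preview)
Your proposal is correct and follows exactly the paper's approach: the paper's proof is a single sentence referring back to Lemma~\ref{lem:KSconvsmeared} and the second-quantization estimate \eqref{eq:noquantconv}, and you have simply unpacked that reference with the appropriate bookkeeping (the identification via \eqref{eq:KShermiteanparts}, well-definedness of the target, and the handling of the transpose diagonal term). There is no substantive difference between the two.
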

By the results in \cite{CarpiOnTheUniqueness}, $\fF^{\alg}_{\ua}(\cD_{\std})$ is a domain of essential self-adjointness of $L_{\pm}(X)$ for real smearing functions $X\in\fl^{\frac{3}{2}}(S^{1},\R)$ (the representations $\pi_{{\color{blue}\pm}}$ have positive energy and $c>0$, see also \cite{GoodmanProjectiveUnitaryPositive}). Thus, the preceding theorem implies strong resolvent convergence of the smeared KS approximants and, therefore \cite{ReedMethodsOfModern1}:
\begin{cor}[Convergence of smeared KS unitaries]
\label{cor:KSconvsmeared}
For $X\in\fl(\Lambda_{M},\R)$, consider the unitaries,
\begin{align*}
U^{(N)}_{t}(S^{M}_{N}(X)) & = e^{it:\pi_{{\color{blue}\pm}}(\alpha^{N}_{\infty}(L^{(N)}_{\pm}(S^{M}_{N}(X)))):}, & U_{t}(X) & = e^{itL_{\pm}(S^{M}_{\infty}(X))},
\end{align*}
Then,
\begin{align*}
\lim_{N\rightarrow\infty}\|(U^{(N)}_{t}(X)-U_{t}(X))\phi\| & = 0,
\end{align*}
for all $\phi\in\fF_{\ua}(\fh^{(\pm)}_{\infty,{\color{blue}\pm}})$ uniformly on compact intervals in $t\in\R$.
\end{cor}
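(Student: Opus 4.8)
The plan is to reduce the statement to the standard equivalence between strong resolvent convergence of self-adjoint generators and strong convergence of the unitary groups they generate, exactly mirroring the unsmeared Corollary~\ref{cor:KSunitariesc}. Set $A_N := {:}\pi_{{\color{blue}\pm}}\bigl(\alpha^{N}_{\infty}(L^{(N)}_{\pm}(S^{M}_{N}(X)))\bigr){:}$ and $A := L_{\pm}(S^{M}_{\infty}(X))$, both regarded as operators on $\fF_{\ua}(\fh^{(\pm)}_{\infty,{\color{blue}\pm}})$ with common invariant domain $\cD := \fF^{\alg}_{\ua}(\cD_{\std})$. By \eqref{eq:KShermiteanparts} the former is the normal-ordered second quantization $dF_{{\color{blue}\pm}}\bigl(\tilde{\ell}^{(N)}_{\pm}(S^{M}_{N}(X))\bigr)$ of the smeared one-particle KS approximant, while $A = dF_{{\color{blue}\pm}}\bigl(\ell_{\pm}(S^{M}_{\infty}(X))\bigr)$ in the same normalization. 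Since $X$ is real we have $\hat{X}_{-k} = \overline{\hat{X}_{k}}$, and combined with $\ell^{(N)}_{\pm,k}{}^{*} = \ell^{(N)}_{\pm,-k}$ (Remark~\ref{rem:impcond} and its continuum analogue) this makes both $A_N$ and $A$ symmetric on $\cD$. The target $U^{(N)}_{t}(X) = e^{itA_N} \to e^{itA} = U_{t}(X)$ then follows once $A_N \to A$ in the strong resolvent sense.

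First I would record the self-adjointness inputs. For each finite $N$ the one-particle operator $\tilde{\ell}^{(N)}_{\pm}(S^{M}_{N}(X)) = R^{N}_{\infty}\,\ell^{(N)}_{\pm}(S^{M}_{N}(X))\,R^{N\,*}_{\infty}$ has finite rank, since its range lies in the span of the finitely many modes of $\Gamma_{N,{\color{blue}\pm}}$; in particular its off-diagonal parts are trivially Hilbert--Schmidt, so by the bounds \eqref{eq:secondquantbound} \& \eqref{eq:hilbertschmidt} the normal-ordered second quantization $A_N$ is essentially self-adjoint on the finite-particle domain $\cD$. For the limit, essential self-adjointness of $A = L_{\pm}(S^{M}_{\infty}(X))$ on $\cD$ is provided by the results of \cite{CarpiOnTheUniqueness}, using that the representations $\pi_{{\color{blue}\pm}}$ have positive energy with $c = 1,\tfrac{1}{2} > 0$. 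Here one only needs that $S^{M}_{\infty}(X)$ is an admissible smearing function, i.e.\ $S^{M}_{\infty}(X) \in \fl^{\frac{3}{2}}(S^{1}_{L},\R)$, which holds because $\im S^{M}_{\infty}$ consists of $C^{\alpha}$-loops spanned by the $\Lambda_{N}$-translates of $s^{(\vep_{N})}$ (Definition~\ref{def:waveletrgcur} and below) with $\alpha \geq \tfrac{3}{2}$ for sufficiently regular Daubechies $s$ by Lemma~\ref{lem:decay}.

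Next I would feed in Theorem~\ref{thm:KSconvsmeared}, which gives $\lim_{N\to\infty}\|(A_N - A)\phi\| = 0$ for every $\phi \in \cD$ (this is the full-operator convergence of Lemma~\ref{lem:KSconvsmeared} passed through \eqref{eq:noquantconv}, not merely the block-wise convergence). Since $\cD$ is a common core for all $A_N$ and for $A$, and $A_N\phi \to A\phi$ on $\cD$, the convergence-on-a-core criterion \cite[Theorem VIII.25(a)]{ReedMethodsOfModern1} yields $A_N \to A$ in the strong resolvent sense. By \cite[Theorem VIII.21]{ReedMethodsOfModern1} this is equivalent to $\|(e^{itA_N} - e^{itA})\phi\| \to 0$ for each fixed $t \in \R$ and all $\phi \in \fF_{\ua}(\fh^{(\pm)}_{\infty,{\color{blue}\pm}})$, i.e.\ the pointwise statement of the corollary. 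Uniformity on compact $t$-intervals then follows from the equicontinuity afforded by $\|e^{itA_N}\| = 1$: writing $e^{itA_N} - e^{it'A_N} = (e^{i(t-t')A_N} - \1)e^{it'A_N}$ and combining a finite $t$-net with the uniform operator bound gives the standard $\tfrac{\varepsilon}{3}$-argument, as in the remark following Corollary~\ref{cor:KSconvbogc}.

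The hard part will not be this final assembly, which is routine once the ingredients are in place, but rather the verification underlying essential self-adjointness of the limit: confirming that $S^{M}_{\infty}(X)$ meets the regularity threshold $\fl^{\frac{3}{2}}$ demanded by \cite{CarpiOnTheUniqueness}, and that $dF_{{\color{blue}\pm}}\bigl(\ell_{\pm}(S^{M}_{\infty}(X))\bigr)$ is genuinely the Virasoro generator of $\pi_{{\color{blue}\pm}}$ rather than differing from it by a constant zero-mode shift of the type encountered in the Ramond-sector redefinition \eqref{eq:zeromodeR}. A secondary subtlety is to ensure that the input to the core criterion is strong convergence of the symmetric operator $A_N$ itself (second limit in Lemma~\ref{lem:KSconvsmeared}) rather than of a non-self-adjoint truncation, so that \cite[Theorem VIII.25(a)]{ReedMethodsOfModern1} applies to bona fide self-adjoint operators.
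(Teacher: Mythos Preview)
Your proposal is correct and follows essentially the same route as the paper: Theorem~\ref{thm:KSconvsmeared} gives strong convergence on the core $\fF^{\alg}_{\ua}(\cD_{\std})$, essential self-adjointness of the limit comes from \cite{CarpiOnTheUniqueness} (positive energy, $c>0$, real $S^{M}_{\infty}(X)\in\fl^{\frac{3}{2}}$), and the core criterion plus the standard Reed--Simon equivalence yields strong resolvent convergence and hence strong convergence of the unitary groups, locally uniformly in $t$. You supply more detail than the paper (essential self-adjointness of the finite-$N$ approximants via boundedness of the one-particle operator, and the uniformity-in-$t$ argument), but the logical skeleton is identical.
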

\begin{rem}
\label{rem:KSconvsmeared}
We can lift the restriction to sequences $\{S^{M}_{N}(X)\}_{N\in\N_{0}}$ with $X\in\fl(\Lambda_{M},\C)$ (called basic sequences in \cite{DuffieldMeanFieldDynamical}) by:
\begin{align*}
\tilde{\ell}^{(N)}_{\pm}(X_{N})\!-\!\ell_{\pm}(X) & \!=\!\tilde{\ell}^{(N)}_{\pm}(X_{N}\!-\!S^{M}_{N}(X_{M})\!)\!+\!\tilde{\ell}^{(N)}_{\pm}(S^{M}_{N}(X_{M})\!)\!-\!\ell_{\pm}(S^{M}_{\infty}(X)\!)\!+\!\ell_{\pm}(S^{M}_{\infty}(X)\!-\!X),
\end{align*}
for $X_{N}\in\fl(\Lambda_{N},\C)$ and $X\in\fl^{\alpha}(S^{1}_{L},\C)$ ($\alpha>0$). Then, assuming we have a $S$-convergent sequence of smearing functions $X_{N}\stackrel{N\rightarrow\infty}{\rightarrow} X$ in the sense of Definition \ref{def:convseq} for a suitable sequence of Sobolev-type semi-norm $\{p_{N,\delta}\}_{N\in\N_{0}}$,
\begin{align*}
\lim_{M\rightarrow\infty}\limsup_{N\rightarrow\infty}p_{N,\delta}(X_{N}-S^{M}_{N}(X_{M})\!) & = 0,
\end{align*}
we conclude the convergence of the KS approximants as in \eqref{eq:latviraconvsmeared}. In principle this allows us to reconstruct the smeared Virasoro generators for smooth smearing functions $X\in\fl^{\infty}(S^{1}_{L},\C)$.
\end{rem}
In analogy with the Corollaries \ref{cor:KSconvbog} \& \ref{cor:KSconvbogc}, Lemma \ref{lem:KSconvsmeared} implies the convergence of the Bogoliubov transformation with smeared generators, see \eqref{eq:latviradyn}:
\begin{cor}[Convergence of smeared KS Bogoliubov transformations]
\label{cor:KSconvbogsmeared}
Let $s\in C^{\alpha}(\R)$ be a sufficiently regular, compactly supported orthonormal Daubechies scaling function. In the quasi-free representations $\pi_{{\color{blue}\pm}}$, we have:
\begin{align*}
\lim_{N\rightarrow\infty}\|\tilde{\sigma}^{(N)}_{tS^{M}_{N}(X)}(\pi_{{\color{blue}\pm}}(A)) - \sigma_{tS^{M}_{\infty}(X)}(\pi_{{\color{blue}\pm}}(A))\| & = 0, & \forall A\in\fA^{(\pm)}_{\infty,{\color{blue}\pm}},
\end{align*}
uniformly on compact intervals in $t\in\R$, for any $M\in\N_{0}$ and $X\in\fl(\Lambda_{M},\R)$.
\end{cor}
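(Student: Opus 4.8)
The plan is to follow the template of Corollaries~\ref{cor:KSconvbog} and \ref{cor:KSconvbogc} verbatim, now with the smeared one-particle convergence of Lemma~\ref{lem:KSconvsmeared} as input. First I would record that, because $X\in\fl(\Lambda_M,\R)$ is real and $\ell^{(N)}_{\pm,k}{}^{*}=\ell^{(N)}_{\pm,-k}$ by \eqref{eq:latvira1pad} (together with $\hat X_{-k}=\overline{\hat X_{k}}$ and the conjugation symmetries $\overline{m_{0}(-k)}=m_{0}(k)$, $\overline{\hat s(-k)}=\hat s(k)$), the smeared one-particle operators $\tilde o^{(N)}:=\tilde\ell^{(N)}_{\pm}(S^{M}_{N}(X))$ and $o:=\ell_{\pm}(S^{M}_{\infty}(X))$ are symmetric on $\cD_{\std}$. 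Here $\tilde o^{(N)}$ is bounded (a finite-rank operator on $\fh^{(\pm)}_{N,\pm}$ pushed forward by the isometry $R^{N}_{\infty}$), so the exponentials $e^{it\tilde o^{(N)}}$ are genuine one-particle unitaries for every $N$. Consequently, exactly as in \eqref{eq:KSconvgroup}, the Bogoliubov automorphisms act at the one-particle level by
\[
\tilde\sigma^{(N)}_{tS^{M}_{N}(X)}(\pi_{\pm}(a^{\dagger}(\xi)))=\pi_{\pm}(a^{\dagger}(e^{it\tilde o^{(N)}}\xi)),\qquad \sigma_{tS^{M}_{\infty}(X)}(\pi_{\pm}(a^{\dagger}(\xi)))=\pi_{\pm}(a^{\dagger}(e^{ito}\xi)),
\]
the second identity being meaningful once $e^{ito}$ is known to exist.

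Second, I would reduce the statement to single generators. Since the monomials $A=a^{\dagger}(\xi_{1})\cdots a^{\dagger}(\xi_{n})a(\xi_{n+1})\cdots a(\xi_{n+m})$ are norm-total in $\fA^{(\pm)}_{\infty,\pm}$, the Leibniz-type bound $\|\tilde\sigma^{(N)}(AB)-\sigma(AB)\|\le\|\tilde\sigma^{(N)}(A)-\sigma(A)\|\,\|B\|+\|A\|\,\|\tilde\sigma^{(N)}(B)-\sigma(B)\|$ and the multinomial estimate of Remark~\ref{rem:KSunitaryconv} reduce everything to $A=a^{\dagger}(\xi)$ (and, by taking adjoints, $A=a(\xi)$). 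For these, contractivity of $\pi_{\pm}$ and the CAR norm $\|a^{\dagger}(\eta)\|=\|\eta\|_{\infty}$ give the clean one-particle bound $\|\tilde\sigma^{(N)}_{tS^{M}_{N}(X)}(\pi_{\pm}(a^{\dagger}(\xi)))-\sigma_{tS^{M}_{\infty}(X)}(\pi_{\pm}(a^{\dagger}(\xi)))\|\le\|(e^{it\tilde o^{(N)}}-e^{ito})\xi\|_{\infty}$. Thus the whole claim collapses to strong convergence $e^{it\tilde o^{(N)}}\to e^{ito}$ on $\fh^{(\pm)}_{\infty,\pm}$, uniformly for $t$ in compact intervals.

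Third, I would upgrade the strong convergence $\tilde o^{(N)}\to o$ on $\cD_{\std}$ — supplied by the second, unrestricted limit of Lemma~\ref{lem:KSconvsmeared} — to strong resolvent convergence, which by \cite{ReedMethodsOfModern1} delivers exactly the uniform-on-compacta convergence of the one-parameter unitary groups. The standard criterion requires strong convergence on a common core together with essential self-adjointness of the limit $o$ there. The main obstacle is precisely this one-particle essential self-adjointness: one must show that $\cD_{\std}$ is a core for $o=\ell_{\pm}(S^{M}_{\infty}(X))$. For sufficiently regular real smearing (so that $S^{M}_{\infty}(X)\in\fl^{3/2}(S^{1}_{L},\R)$, using the decay estimate \eqref{eq:decaysf} and Lemma~\ref{lem:convergence}), $o$ is, up to a bounded term, the essentially skew-adjoint Lie-derivative generator of the $\Diff_{+}(S^{1}_{L})$-flow associated with $X$; this is the one-particle shadow of the Fock-space essential self-adjointness of \cite{CarpiOnTheUniqueness, GoodmanProjectiveUnitaryPositive} invoked after Theorem~\ref{thm:KSconvsmeared}, and can alternatively be obtained by summing the analytic-vector bounds \eqref{eq:analyticvector} against the rapidly decaying coefficients $\hat s(\vep_{M}k)\hat X_{\per|k}$ and applying Nelson's theorem \cite{NelsonAnalyticVectors}, as in Corollary~\ref{cor:viraesa}. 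Granting this, strong resolvent convergence holds, uniformity in $t$ is automatic, and density of the monomials closes the argument.

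Finally, the Majorana case $\fB^{(\pm)}_{\infty,\pm}$ runs identically: the implementability constraint \eqref{eq:sdcconstraints} is preserved by $\tilde o^{(N)}$ and $o$ thanks to Remark~\ref{rem:impcond}, and one replaces the normal-ordered second quantization $dF_{\pm}$ by its self-dual analogue $dQ_{\pm}$ throughout.
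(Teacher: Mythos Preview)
Your proposal is correct and follows essentially the same route as the paper, which simply states that Corollary~\ref{cor:KSconvbogsmeared} follows from Lemma~\ref{lem:KSconvsmeared} ``in analogy with'' Corollaries~\ref{cor:KSconvbog} and~\ref{cor:KSconvbogc}. You are more careful than the paper in isolating the one technical ingredient the analogy needs---essential self-adjointness of the smeared one-particle operator $o=\ell_{\pm}(S^{M}_{\infty}(X))$ on $\cD_{\std}$---and your justification via the $\Diff_{+}(S^{1}_{L})$-flow is the right one; only your alternative ``summing the analytic-vector bounds'' is a bit glib, since the $\ell_{\pm,k}$ do not commute and a direct Nelson argument for the smeared operator requires controlling mixed products rather than simply summing the single-mode estimates~\eqref{eq:analyticvector}.
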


\paragraph{Error estimates.}
In this paragraph, we discuss some basic estimates of the approximation errors occurring in the proofs of Theorems \ref{thm:KSconv} \& \ref{thm:KSconvc}. Explicit control of the approximation errors is particularly interesting in view of the potential simulation of conformal field theories on quantum computers using a lattice discretization.

\bigskip

Let us begin with a discussion of the approximation error in Lemma \ref{lem:KSconv}. The structure of the proof, and specifically considering,
\begin{align}
\label{eq:uniformestimate}
& \sum_{m\in\Gamma_{\infty,{\color{blue}\pm}}}|f^{(N)}_{k}(M,m)-f_{k}(M,m)|^{2} \\ \nonumber
& = \sum_{m\in\Gamma_{\infty,{\color{blue}\pm}}}\left|\hat{s}(\vep_{M}(m\!+\!k))(m\!+\!\tfrac{k}{2})\right|^{2}\left|\cos(\tfrac{1}{4}\vep_{N}k)^{2}\tfrac{\sin(\vep_{N}(m+\frac{k}{2}))}{\vep_{N}(m+\frac{k}{2})}\tfrac{\hat{s}(\vep_{N}m)}{\hat{s}(\vep_{N}(m+k))}-1\right|^{2},
\end{align}
suggests a simple strategy to obtain a specific error bound by estimating against the decay of $\hat{s}$:
\begin{align}
\label{eq:sobolevtrade}
& \sum_{m\in\Gamma_{\infty,{\color{blue}\pm}}}\left|\hat{s}(\vep_{M}(m\!+\!k))(m\!+\!\tfrac{k}{2})\right|^{2}\left| \cos(\tfrac{1}{4}\vep_{N}k)^{2}\tfrac{\sin(\vep_{N}(m+\frac{k}{2}))}{\vep_{N}(m+\frac{k}{2})}\tfrac{\hat{s}(\vep_{N}m)}{\hat{s}(\vep_{N}(m+k))}-1\right|^{2} \\ \nonumber
& \leq\!\!\left(\!\tfrac{2\pi^{2}}{L}\!\!\sup_{m\in\Gamma_{\infty,{\color{blue}\pm}}}\!\!(1\!+\!\tfrac{L}{\pi}|m\!+\!\tfrac{k}{2}|)^{-2\delta}\!\left| \cos(\tfrac{1}{4}\vep_{N}k)^{2}\tfrac{\sin(\vep_{N}(m+\frac{k}{2}))}{\vep_{N}(m+\frac{k}{2})}\tfrac{\hat{s}(\vep_{N}m)}{\hat{s}(\vep_{N}(m+k))}\!-\!1\right|^{2}\right)\|\hat{s}(\vep_{M}(\!\ .\!\ \!+\!\tfrac{k}{2}))\|_{h^{1+\delta}}^{2},
\end{align}
for some $\delta>0$ and $\|\hat{s}(\vep_{M}(\!\ .\!\ +\tfrac{k}{2}))\|_{h^{1+\delta}}^{2}\!=\!\tfrac{1}{2L}\sum_{m\in\Gamma_{\infty,{\color{blue}\pm}}}(1\!+\!\tfrac{L}{\pi}|m|)^{2(1+\delta)}|\hat{s}(\vep_{M}(m\!+\!\tfrac{k}{2}))|^{2}$. The regularizing factor $(1\!+\!\tfrac{L}{\pi}|m\!+\!\tfrac{k}{2}|)^{-2\delta}$ has the effect that,
\begin{align}
\label{eq:errordecay}
\textup{Error}^{2}(\delta,L, k, N) &\!:=\!\tfrac{2\pi^2}{L}\!\!\sup_{m\in\Gamma_{\infty,{\color{blue}\pm}}}\!\!\!\textup{error}^{2}(\delta,L, k, N,m) \stackrel{N\rightarrow\infty}\longrightarrow 0 \\ \nonumber
\textup{error}^{2}(\delta,L, k, N,m) &\!:=\!(1+\tfrac{L}{\pi}|m+\tfrac{k}{2}|)^{-2\delta}\left| \cos(\tfrac{1}{4}\vep_{N}k)^{2}\tfrac{\sin(\vep_{N}(m+\frac{k}{2}))}{\vep_{N}(m+\frac{k}{2})}\tfrac{\hat{s}(\vep_{N}m)}{\hat{s}(\vep_{N}(m+k))}\!-\!1\right|^{2},
\end{align}
with a rate determined by $\delta = \delta(L,k)$. To obtain a precise numerical bound we also need to estimate $\|\hat{s}(\vep_{M}(\!\ .\!\ +\tfrac{k}{2}))\|_{h^{1+\delta}}^{2}$, which will decrease with increasing regularity of $s$.

\medskip

According to the proof of Theorem \ref{thm:KSconv}, we can estimate the Fock-space norm in \eqref{eq:latviraconv} using \eqref{eq:uniformestimate}:
\begin{align}
\label{eq:uniformestimateFock}
\|(\alpha^{N}_{\infty}(L^{(N)}_{k,\pm})\!-\!L_{k,\pm})a^{\dagger}(\xi_{1})...a^{\dagger}(\xi_{n})\Omega_{0}\| & \!\!\leq\!\!\tfrac{nL}{\pi}\textup{Error}(\delta,L, k, N)\|\hat{s}(\vep_{M}(\!\ .\!\ \!\!+\!\tfrac{k}{2}))\|_{h^{1+\delta}}\!\!\prod_{p=1}^{n}\!\|\eta_{p}\|_{M},
\end{align}
where $\xi_{p} = R^{M}_{\infty}(\eta_{p})$ for $p=1,\dots,n$ and $M<N$\footnote{Note that the states $a^{\dagger}(\xi_{1})\dots a^{\dagger}(\xi_{n})\Omega_{0}$ have the norm $\|a^{\dagger}(\xi_{1})\dots a^{\dagger}(\xi_{n})\Omega_{0}\| = \prod_{p=1}^{n}\|\xi_{p}\|_{\infty}$, for $\xi_{1},\dots,\xi_{n}$ orthogonal.}.

\medskip

The supremum in \eqref{eq:errordecay} can be estimated for $\delta\in[0,2]$ as follows:
\begin{align}
\label{eq:errordecayexp}
& \textup{Error}^{2}(\delta,L, k, N) \\ \nonumber
& = \tfrac{2\pi^2}{L}(\tfrac{\pi}{L_{N}})^{2\delta}\!\!\sup_{m\in\Gamma_{\infty}}\!\!(\tfrac{\pi}{L_{N}}\!+\!\vep_{N}|m+\tfrac{k}{2}|)^{-2\delta}\!\left| \cos(\tfrac{1}{4}\vep_{N}k)^{2}\tfrac{\sin(\vep_{N}(m+\frac{k}{2}))}{\vep_{N}(m+\frac{k}{2})}\tfrac{\hat{s}(\vep_{N}m)}{\hat{s}(\vep_{N}(m+k))}-1\right|^{2} \\ \nonumber
& \leq \tfrac{2\pi^2}{L}(\tfrac{\pi}{L_{N}})^{2\delta}\sup_{x\in\R}(\tfrac{\pi}{L_{N}}\!+\!|x+\vep_{N}\tfrac{k}{2}|)^{-2\delta}\!\left| \cos(\tfrac{1}{4}\vep_{N}k)^{2}\tfrac{\sin(x+\vep_{N}\frac{k}{2})}{x+\vep_{N}\frac{k}{2}}\tfrac{\hat{s}(x)}{\hat{s}(x+\vep_{N}k)}-1\right|^{2}.
\end{align}
Thus, we find for $k=0$:
\begin{align}
\label{eq:errordecayexp0}
\textup{Error}^{2}(\delta,L, 0, N) & \leq \tfrac{2\pi^2}{L}(\tfrac{\pi}{L_{N}})^{2\delta}\sup_{x\in\R}(\tfrac{\pi}{L_{N}}\!+\!|x|)^{-2\delta}\!\left|\tfrac{\sin(x)}{x}-1\right|^{2} \\ \nonumber
 & \leq \tfrac{2\pi^2}{L}(\tfrac{\pi}{L_{N}})^{2\delta}\sup_{x\in\R}|x|^{-2\delta}\!\left|\tfrac{\sin(x)}{x}-1\right|^{2},
\end{align}
and $\sup_{x\in\R}|x|^{-2\delta}\!\left|\tfrac{\sin(x)}{x}-1\right|^{2}$ is a finite constant of order unity for $\delta\in[0,2]$ since $|\tfrac{\sin(x)}{x}-1|^{2}\approx\tfrac{x^{4}}{36}+\mathcal{O}(x^{6})$. As long as $N$ is sufficiently large, we expect essentially the same behavior for $k\neq 0$ because of the uniform continuity of $\hat{s}$ in \eqref{eq:errordecayexp}. This suggests:
\begin{align}
\label{eq:errorscaling}
\textup{Error}^{2}(\delta,L, k, N) & \lesssim C_{\delta,L,k}\!\ 2^{-2\delta N},
\end{align}
with an explicitly computable proportionality factor $C_{\delta,L,k}$.

\begin{rem}[CFT simulation]
\label{rem:stateapprox}
Concerning the simulation of the a free-fermion CFT on a quantum computer, the scale parameter $N$ reflects the number of qubits used in the simulation because it determines the size of the real-space lattice $\Lambda_{N}$ which forms the basis for the Hilbert space on which the KS approximant $L^{(N)}_{\pm,k}$ acts. The second scale parameter $M$ reflects the resolution at which the Fock states $a^{\dagger}(\xi_{1})\dots a^{\dagger}(\xi_{n})\Omega_{0}$ are resolved. It should, therefore, be strictly less than $N$ reflecting at least a doubling of the sampling frequency in accordance with the Nyquist-Shannon theorem. At the same time, if we fix the length scale $L$ of the spatial circle $S^{1}_{L}$ on which the CFT is supposed to live, we are forced to choose $M$ in accordance with the regularity parameter $K\geq 2$ of the Daubechies scaling function $s=\!\ _{K}s$ since $\supp(\!\ _{K}s)=[0,2K-1]$. This suggest a lower bound on $M$ in terms of $K$ according to $\vep_{M}(2K-1)\sim L$ to ensure that $s^{(\vep_{M})}$ is well-localized in $S^{1}_{L}$.

The choice, $\xi_{p}=R^{M}_{\infty}(\eta_{p})$, $p=1,\dots,n$, for the one-particle states means that an arbitrary $n$-particle state $a^{\dagger}(\xi_{1})\dots a^{\dagger}(\xi_{n})\Omega_{0}$ is approximated by projecting its one-particle amplitudes onto the subspace generated by the wavelet basis determined by the scaling function $s$ up to a resolution scale $\vep_{M} = 2^{-M}\vep_{0}$, i.e.:
\begin{align*}
\eta_{p} & = R^{M}_{\infty}{}^{*}\xi_{p} = \sum_{x\in\Lambda_{M}}\langle s^{(\vep_{M})}(\!\ .\!\ -x),\xi_{p}\rangle s^{(\vep_{M})}(\!\ .\!\ -x) \\ 
& = \sum_{x\in\Lambda_{0}}\langle s(\!\ .\!\ -x),\xi_{p}\rangle s(\!\ .\!\ -x) + \sum_{j=0}^{M}\sum_{x\in\Lambda_{j}}\langle \psi^{(\vep_{j})}(\!\ .\!\ -x),\xi_{p}\rangle \psi^{(\vep_{j})}(\!\ .\!\ -x),
\end{align*}
where $\psi$ is the wavelet associated with $s$.

The inner products of the states $\pi_{{\color{blue}\pm}}(a^{\dagger}(\xi_{1}))...\pi_{{\color{blue}\pm}}(a^{\dagger}(\xi_{n}))\Omega_{0}$ are accessible according to Lemma \ref{lem:stateconv} in terms of the lattice correlation functions of $\omega^{(N)}_{0,{\color{blue}\pm}}$:
\begin{align*}
\langle \pi_{{\color{blue}\pm}}(a^{\dagger}(R^{M}_{\infty}(\eta)))\Omega_{0},\pi_{{\color{blue}\pm}}(a^{\dagger}(R^{M}_{\infty}(\eta')))\Omega_{0}\rangle & = \lim_{N\rightarrow\infty}\langle a^{\dagger}_{N}(R^{M}_{N}(\eta))\omega^{(N)}_{0,{\color{blue}\pm}},a^{\dagger}_{N}(R^{M}_{N}(\eta'))\omega^{(N)}_{0,{\color{blue}\pm}}\rangle,
\end{align*}
where we explicitly indicated that the creation operators $a^{\dagger}_{N}$ are to be realized by the quantum simulation at scale $N$.
\end{rem}

Let us continue with a few remarks on the estimation of errors occurring in Lemma \ref{lem:KSconvc} using the momentum-cutoff renormalization group.	
\begin{rem}[Error of time-evolution]
\label{rem:timeerror}
The error bound on the one-particle generators directly translates to an error bound on the unitary implementer of time evolution ($k=0$) by the fundamental theorem of calculus since $[\tilde{\ell}^{(N)}_{\pm,0},\exp(it\ell_{\pm,0})]=0$ implies:
\begin{align}
\label{eq:momrgestimatetime}
\|(e^{it\tilde{\ell}^{(N)}_{\pm,0}}-e^{it\ell_{\pm,0}})R^{M}_{\infty}(\eta_{p})\|_{\infty} & \leq t \|(\tilde{\ell}^{(N)}_{\pm,0}-\ell_{\pm,0})R^{M}_{\infty}(\eta_{p})\|_{\infty}.
\end{align}
Thus, we can explicitly estimate the error for the analogue of \eqref{eq:KSmultiestimate} as well as the approximation of time-dependent correlation functions as in Theorem \ref{thm:corapprox} below. For the same reason, we can estimate the approximation of correlation functions involving $U_{t}=\exp(itL_{\pm,0})$ by those involving $U^{(N)}_{t}=\exp(itdF_{{\color{blue}-}}(\tilde{\ell}^{(N)}_{\pm,0}))$, cf.~Corollary \ref{cor:KSunitariesc}. By a similar reasoning that leads to \eqref{eq:errorscaling}, we have for $\delta\in[0,2]$:
\begin{align}
\label{eq:momrgtimeerrorscaling}
\|((\tilde{\ell}^{(N)}_{\pm,0}\!-\!\ell_{\pm,0})^{(t)}_{\pm\pm}R^{M}_{\infty}(\eta_{p})\|_{\infty}^{2} & \!\leq\!\vep_{N}^{2\delta}\underbrace{(\sup_{x\in\R}\tfrac{|\sinc(x)\!-\!1|^{2}}{|x|^{2\delta}})}_{=\frac{1}{36}\textup{ for }\delta=2}\|\eta\|_{M}^{2}(\tfrac{L}{\pi})^{2}\!\!\!\!\sum_{n\in\Gamma_{M,{\color{blue}\pm}}}\!\!\!\!\theta(\pm n)|n|^{2(1+\delta)},
\end{align}
which shows that the asymptotic approximation error for the time-evolution behaves as $\sim 2^{-2N}$ for fixed $M$ ($\delta=2$).
\end{rem}

\begin{rem}[Error estimates for $k\neq0$]
\label{rem:conformalerror}
For $k\neq 0$, the error bound on the generators can be used to obtain an error bound (asymptotic in the time $t$) for the associated unitary groups acting on analytic vectors $\{e_{m}\}_{m\in\N_{0}}$ which are in the image of $R^{M}_{\infty}$. To this end, we observe that:
\begin{align*}
\|(e^{it\tilde{o}^{(N)}}-e^{ito})\xi\|_{\infty} & \leq t \|(\tilde{o}^{(N)}-o)\xi\|_{\infty}+\sum_{j=2}^{\infty}\tfrac{t^{j}}{j!}\|((\tilde{o}^{(N)})^{j}-o^{j})\xi\|_{\infty},
\end{align*}
which is convergent for some $t>0$ whenever $\xi$ is analytic for $\tilde{o}^{(N)}$, $o$. Thus, we can, for example, consider $\tilde{o}^{(N)}=\tilde{r}^{(N)}_{\pm,k}$ and $o = r_{\pm,k}$ and exploit \eqref{eq:analyticvector} to deduce for $k\neq0$:
\begin{align*}
\|((\tilde{r}^{(N)}_{\pm,k})^{j}-r_{\pm,k}^{j})e_{m}\|_{\infty} & \leq 4L(\tfrac{L}{2\pi})^{j}\binom{2j}{j}^{\frac{1}{2}}|k|^{j}\tfrac{\Gamma(|\frac{m}{k}|+\frac{1}{2}+j)}{\Gamma(|\frac{m}{k}|+\frac{1}{2})}.
\end{align*}
This immediately implies (for sufficiently small $t>0$ if $k\neq 0$):
\begin{align}
\label{eq:conformalerror}
\|(e^{it\tilde{r}^{(N)}_{\pm,k}}-e^{itr_{\pm,k}})e_{m}\|_{\infty} & \leq t \|(\tilde{r}^{(N)}_{\pm,k}-r_{\pm,k})e_{m}\|_{\infty} + 4L\frac{(\tfrac{L}{\pi}(|m|+\frac{1}{2}|k|)t)^{2}}{1-\tfrac{L}{\pi}(|m|+\frac{1}{2}|k|)t} ,
\end{align}
because $\binom{2j}{j}^{\frac{1}{2}}|k|^{j}\tfrac{\Gamma(|\frac{m}{k}|+\frac{1}{2}+j)}{\Gamma(|\frac{m}{k}|+\frac{1}{2})} \leq (2(|m|+\tfrac{1}{2}|k|))^{j}j!$. An analogous bound holds for the approximation of the unitary group associated with $\iota_{\pm,k}$.

\medskip

Thus, we find that for small $t>0$ (independent of $N$) the approximation error (in the limit $N\rightarrow\infty$) for the unitary groups can be explicitly estimated in terms of the error given for the generators.

\medskip

Being less explicit about the one-particle vectors used in \eqref{eq:momrgestimates} (or \eqref{eq:KSmultiestimate} in general), we can always get the following bound for the approximation error, valid for all $t>0$, when restricting to a spectral subspace $\fh_{\infty,{\color{blue}\pm}}^{\leq M} = P_{[M,-M]}\fh^{(\pm)}_{\infty,{\color{blue}\pm}}$ of $o$ for some finite cut-off $M<\infty$, 
\begin{align*}
\|(e^{it\tilde{o}^{(N)}}\!-\!e^{ito})\xi\|_{\infty} & \!\leq\! t \|(\tilde{o}^{(N)}\!-\!o)\xi\|_{\infty}\!+\!(1\!+\!a)(e^{M t}\!-\!(1\!+\!M t))\|\xi\|_{\infty}\!+\!b(e^{t}\!-\!(1\!+\!t))\|\xi\|_{\infty},
\end{align*}
with $\xi\in\fh_{\infty,{\color{blue}\pm}}^{\leq M}$, and assuming the powers of the approximants $(\tilde{o}^{(N)})^{j}$ are constructed as uniformly $o^{j}$-bounded operators, i.e.~$\|(\tilde{o}^{(N)})^{j}\xi\|_{\infty}\leq a\|o^{j}\xi\|_{\infty}+b\|\xi\|_{\infty}$.
\end{rem}

\subsubsection{Koo-Saleur approximants of the equivalent XY model}
\label{sec:latvirastag}
For completeness and in view of our companion article \cite{OsborneCFTsim}, we also provide the expressions for the KS approximants in terms of the single-component fermion $b$ using \eqref{eq:stagfermionfourier}:
\begin{align}
\label{eq:latvirastag}
L_k^{(N)} & = \tfrac{1}{8\pi}\sum_{l,l’\in\Gamma_{N+1,{\color{blue}\pm}}} \!\!\! \begin{pmatrix} \hat{b}_{l’} \\ \hat{b}^{\dag}_{-l’} \end{pmatrix}^{\dag} e^{-\frac{i}{4}\varepsilon_{N}k} \delta_{k,l’-l\!\!\!\!\mod\frac{2\pi}{\varepsilon_{N}}}  \ell^{(N)}_{k}(l’,l) \begin{pmatrix} \hat{b}_{l} \\ \hat{b}^{\dag}_{-l} \end{pmatrix}, \\ \nonumber
\ell^{(N)}_{k}(l’,l) & = \tfrac{1}{2}\begin{pmatrix} \hspace{-2cm} -e^{\frac{i}{4}\varepsilon_{N}k}\sin(\varepsilon_{N}(l+\tfrac{k}{2})) & \hspace{-2cm} -i(e^{\frac{i}{4}\varepsilon_{N}k}\sin(\varepsilon_{N}\tfrac{l}{2})+e^{-\frac{i}{4}\varepsilon_{N}k}\sin(\varepsilon_{N}\tfrac{l+k}{2})) \\[0.1cm] i(e^{\frac{i}{4}\varepsilon_{N}k}\sin(\varepsilon_{N}\tfrac{l’-k}{2})+e^{-\frac{i}{4}\varepsilon_{N}k}\sin(\varepsilon_{N}\tfrac{l’}{2})) & -e^{-\frac{i}{4}\varepsilon_{N}k}\sin(\varepsilon_{N}\tfrac{l+l’}{2}) \end{pmatrix}, \\ \nonumber
\overline{L}_k^{(N)} & = \tfrac{1}{8\pi}\sum_{l,l’\in\Gamma_{N+1,{\color{blue}\pm}}} \!\!\! \begin{pmatrix} \hat{b}_{l’} \\ \hat{b}^{\dag}_{-l’} \end{pmatrix}^{\dag} e^{\frac{i}{4}\varepsilon_{N}k}\delta_{-k,l’-l\!\!\!\!\mod\frac{2\pi}{\varepsilon_{N}}} \overline{\ell}^{(N)}_{k}(l’,l) \begin{pmatrix} \hat{b}_{l} \\ \hat{b}^{\dag}_{-l} \end{pmatrix}, \\ \nonumber
\overline{\ell}^{(N)}_{k}(l’,l) & = \tfrac{1}{2}\begin{pmatrix} \hspace{-2cm} e^{-\frac{i}{4}\varepsilon_{N}k}\sin(\varepsilon_{N}(l-\tfrac{k}{2})) & \hspace{-2cm} -i(e^{-\frac{i}{4}\varepsilon_{N}k}\sin(\varepsilon_{N}\tfrac{l}{2})+e^{\frac{i}{4}\varepsilon_{N}k}\sin(\varepsilon_{N}\tfrac{l-k}{2})) \\[0.1cm] i(e^{-\frac{i}{4}\varepsilon_{N}k}\sin(\varepsilon_{N}\tfrac{l’+k}{2})+e^{\frac{i}{4}\varepsilon_{N}k}\sin(\varepsilon_{N}\tfrac{l’}{2})) & e^{\frac{i}{4}\varepsilon_{N}k}\sin(\varepsilon_{N}\tfrac{l+l’}{2}) \end{pmatrix}.
\end{align}
Alternatively, we may derive the KS approximants directly in terms of the single-component fermion on the doubled lattice $\Lambda_{N+1}$, which results in:
\begin{align}
\label{eq:latvirastag1c}
L_k^{(N)} & = \tfrac{1}{4\pi}\sum_{l,l’\in\Gamma_{N+1}} \!\!\! \begin{pmatrix} \hat{b}_{l’} \\ \hat{b}^{\dag}_{-l’} \end{pmatrix}^{\dag} \ell^{(N)}_{k}(l’,l) \begin{pmatrix} \hat{b}_{l} \\ \hat{b}^{\dag}_{-l} \end{pmatrix}, \\ \nonumber
\ell^{(N)}_{k}(l’,l) & = \delta_{k,l’-l} \begin{pmatrix} -\sin(\varepsilon_{N+1}(l+l’)) & e^{-i\varepsilon_{N+1}(l+\frac{k}{2})}\cos(\tfrac{1}{2}\vep_{N+1}k) \\ e^{i\varepsilon_{N+1}(l’-\frac{k}{2})}\cos(\tfrac{1}{2}\vep_{N+1}k) & -\sin(\varepsilon_{N+1}(l+l’)) \end{pmatrix}, \\ \nonumber
\overline{L}_k^{(N)} & = \tfrac{1}{4\pi}\sum_{l,l’\in\Gamma_{N+1}} \!\!\! \begin{pmatrix} \hat{b}_{l’} \\ \hat{b}^{\dag}_{-l’} \end{pmatrix}^{\dag} \overline{\ell}^{(N)}_{k}(l’,l) \begin{pmatrix} \hat{b}_{l} \\ \hat{b}^{\dag}_{-l} \end{pmatrix}, \\ \nonumber
\overline{\ell}^{(N)}_{k}(l’,l) & = \delta_{-k,l’-l} \begin{pmatrix} \sin(\varepsilon_{N+1}(l+l’)) & e^{-i\varepsilon_{N+1}(l-\frac{k}{2})}\cos(\tfrac{1}{2}\vep_{N+1}k) \\ e^{i\varepsilon_{N+1}(l’+\frac{k}{2})}\cos(\tfrac{1}{2}\vep_{N+1}k) & \sin(\varepsilon_{N+1}(l+l’)) \end{pmatrix}.
\end{align}
But, it should be noted that these alternative expressions are not directly comparable to those in \eqref{eq:latvirastag} because they arise through the Fourier transform on $\Lambda_{N+1}$, not $\Lambda_{N}$, and, therefore, refer to the (symmetrized) Hamiltonian density 
\begin{align*}
\tilde{h}^{(N)}_{x} & = \tfrac{1}{2}\vep_{N+1}^{-2}(b_{x}b_{x+\vep_{N+1}}-b^{\dag}_{x}b^{\dag}_{x+\vep_{N+1}}+b_{x-\vep_{N+1}}b_{x}-b^{\dag}_{x-\vep_{N+1}}b^{\dag}_{x}),
\end{align*}
of the single-component fermion which is related to the Hamiltonian density of the two-component fermion by the half-shift $x\mapsto x+\vep_{N+1}$:
\begin{align}
\label{eq:hdstag1c}
h^{(N)}_{x} & = \tilde{h}^{(N)}_{x} + \tilde{h}^{(N)}_{x+\vep_{N+1}}, & x & \in\Lambda_{N}.
\end{align}

\section{Approximation of Wess-Zumino-Witten currents}
\label{sec:WZWcur}
We use the same approach that shows the convergence of the KS approximants to the Virasoro generators in the scaling limit to approximate the currents of some Wess-Zumino-Witten models. Similar to the Virasoro generators and the KS approximants the currents of these WZW models can be obtained as normal-ordered second quantizations of certain one-particle operators \cite{DiFrancescoCFTBook, EvansQuantumSymmetriesOn, WassermannOperatorAlgebrasAnd}, i.e.~normal-ordered fermion bilinears.

\bigskip

\paragraph{The $U(1)$-current.}
For simplicity, we describe the adaptation of the procedure for the (chiral) $U(1)$-current,
\begin{align}
\label{eq:U1cur}
J_{{\color{blue}-}}(x) & = :\pi_{{\color{blue}-}}(\psi_{+|x}^{\dag})\pi_{{\color{blue}-}|x}(\psi_{\pm}):, & x & \in S^{1}_{L},
\end{align}
in the Neveu-Schwarz sector of $\fA^{(+)}_{\infty,{\color{blue}-}}$ first. $J_{{\color{blue}-}}(x)$ is a local bosonic field on $\fF_{\ua}(\fh^{(+)}_{\infty,{\color{blue}-}})$ with the following commutation relations \cite{RehrenCFTLectures}:
\begin{align}
\label{eq:U1curcom}
[J_{{\color{blue}-}}(x),J_{{\color{blue}-}}(y)] = \tfrac{i}{2\pi}\delta'_{0}(x-y)\;,
\end{align}
which reflect the presence of a non-trivial central charge $c=1$. At finite scales, the direct analogue of $J_{{\color{blue}-}}(x)$ is given by the \emph{lattice current}:
\begin{align}
\label{eq:U1curfinite}
J^{(N)}(x) & = \psi_{+|x}^{\dag}\psi_{+|x}, & x & \in\Lambda_{N}, 
\end{align}
as an element of $\fA^{(+)}_{N,{\color{blue}-}}$. As for the Virasoro generators, $J_{{\color{blue}-}}$ can be approximated by $J^{(N)}$ in the scaling limit representation $\pi_{{\color{blue}-}}$ on $\fF_{\ua}(\fh^{(+)}_{\infty,{\color{blue}-}})$ using the asymptotic maps $\alpha^{N}_{\infty}:\fA^{(+)}_{N,{\color{blue}-}}\rightarrow\fA^{(+)}_{\infty,{\color{blue}-}}$ as in the Theorem \ref{thm:KSconv} using the wavelet renormalization group, 
\begin{align}
\label{eq:U1curapprox}
:\pi_{{\color{blue}-}}(\alpha^{N}_{\infty}(J^{(N)}(x)\!)\!):\!\ & =\ :\!\pi_{{\color{blue}-}}(\psi_{+}^{\dag}(s^{(\vep_{N})}_{x})\!)\pi_{{\color{blue}-}}(\psi_{+}(s^{(\vep_{N})}_{x})\!):\;,
\end{align}
and similar for the momentum-cutoff renormalization group as in Theorems \ref{thm:KSconvc} and \ref{thm:KSconvsmeared}.

We infer from \eqref{eq:U1curapprox} that the approximants for the wavelet renormalization group are localized operators in the even subalgebra of $\fA^{(+)}_{\infty,{\color{blue}-}}$ with a localization region determined by the support of the scaling function $s^{(\vep_{N})}_{x}$. 

\bigskip

In momentum space, we have the following picture:
\begin{align}
\label{eq:U1curFourier}
\hat{J}_{{\color{blue}-},k} & =\ :\!dF_{{\color{blue}-}}(j_{k})\!:\ = \tfrac{1}{2L}\!\!\!\sum_{l\in\Gamma_{\infty,{\color{blue}-}}}\!\!\!:\!\pi_{{\color{blue}-}}(\hat{\psi}^{\dag}_{+|l+k})\pi_{{\color{blue}-}}(\hat{\psi}_{+|l})\!:, \\ \nonumber
\hat{J}^{(N)}_{k} & = dF_{0}(j^{(N)}_{k}) = \tfrac{1}{2L_{N}}\!\!\!\sum_{l\in\Gamma_{N,{\color{blue}-}}}\!\hat{\psi}^{\dag}_{+|l+k}\hat{\psi}_{+|l},
\end{align}
such that $:\!\pi_{{\color{blue}-}}(\alpha^{N}_{\infty}(J^{(N)}_{k}))\!:\ =\ :\!dF_{C}(\tilde{j}^{(N)}_{k})\!:$ with $\tilde{j}^{(N)}_{k} = R^{N}_{\infty}j^{(N)}_{k}(R^{N}_{\infty})^{*}$.
The one-particle operators $j^{(N)}_{k}$, $j_{k}$ are the (unitary) momentum-space translations on $\fh^{(+)}_{N,{\color{blue}-}}$ respectively $\fh^{(+)}_{\infty,{\color{blue}-}}$:
\begin{align}
\label{eq:U1cur1p}
 (j^{(N)}_{k}\hat{\xi})_{m} & = \hat{\xi}_{m-k}, & (j_{k}\hat{\xi})_{m} & = \hat{\xi}_{m-k},
\end{align}
for $\xi\in\fh^{(+)}_{N,{\color{blue}-}}$ or $\fh^{(+)}_{\infty,{\color{blue}-}}$, satisfying the implementability condition \eqref{eq:sdcconstraints} and with matrix elements:
\begin{align}
\label{eq:U1cur1pmatrix}
(j^{(N)}_{k})_{mn} & = 2L_{N} \delta^{(N)}_{m,n+k}, & (j_{k})_{mn} & = 2L \delta_{m,n+k}.
\end{align}
Since the off-diagonal parts of $j_{k}$ satisfy the Hilbert-Schmidt condition,
\begin{align}
\label{eq:U1curimp}
\|(j_{k})_{\pm\mp}\|_{2}^{2} & = \sum_{n\in\Gamma_{\infty,{\color{blue}-}}}\theta(\pm n)\theta(\mp(n-k)) < \infty,
\end{align}
and,
\begin{align}
\label{eq:U1curadjoint}
j^{(N)}_{k}{}^{*} & = j^{(N)}_{-k}, & j_{k}^{*} & = j_{k},
\end{align}
the expression for $\hat{J}_{{\color{blue}-},k}$ in \eqref{eq:U1curFourier} results in a densely defined, closable operator on $\fF_{\ua}(\fh^{(+)}_{\infty,{\color{blue}-}})$ \cite{CareyOnFermionGauge, EvansQuantumSymmetriesOn}. As before, we define the smeared $U(1)$- and lattice current,
\begin{align}
\label{eq:U1cursmeared}
J^{(N)}(X) & = \tfrac{1}{2L_{N}}\sum_{k\in\Gamma_{N,{\color{blue}+}}}\hat{X}_{k}J^{(N)}_{k}, & J_{{\color{blue}-}}(X) & = \tfrac{1}{2L}\sum_{k\in\Gamma_{\infty,{\color{blue}+}}}\hat{X}_{k}\hat{J}_{{\color{blue}-},k},
\end{align}
and their one-particle analogues,
\begin{align}
\label{eq:U1cur1psmeared}
j^{(N)}(X) & = \tfrac{1}{2L_{N}}\sum_{k\in\Gamma_{N,{\color{blue}+}}}\hat{X}_{k}j^{(N)}_{k}, & j(X) & = \tfrac{1}{2L}\sum_{k\in\Gamma_{\infty,{\color{blue}+}}}\hat{X}_{k}j_{k},
\end{align}
for with sufficiently regular $X\in\fl^{\alpha}(S^{1}_{L},\C)$ respectively $X\in\fl(\Lambda_{N},\C):$. For real $X = \overline{X}$, i.e.~$iX\in\fu(1)$, the Lie algebra of $U(1)$, $J_{{\color{blue}-}}(X)$ and $J^{(N)}(X)$ are self-adjoint and induce automorphic Bogoliubov transformations of $\fA^{(+)}_{\infty,{\color{blue}-}}$ respectively $\fA^{(+)}_{N,{\color{blue}-}}$:
\begin{align}
\label{eq:U1curBT}
\Ad_{e^{itJ^{(N)}(X)}}(a^{\dagger}(\xi)) & = a^{\dagger}(e^{itj^{(N)}(X)}\xi), & \Ad_{e^{itJ_{{\color{blue}-}}(X)}}(\pi_{{\color{blue}-}}(a^{\dagger}(\xi))) & = \pi_{{\color{blue}-}}(a^{\dagger}(e^{itj(X)}\xi)), 
\end{align}
for $\xi\in\fh^{(+)}_{N,{\color{blue}-}}$ or $\fh^{(+)}_{\infty,{\color{blue}-}}$, and $t\in\R$.

\bigskip

Comparing the matrix elements \eqref{eq:U1cur1pmatrix} with those of the one-particle operators of the Virasoro generators and the KS approximants \eqref{eq:latvirachi1pmommat}, we deduce that all convergence results of Section \ref{sec:KSconv} equally apply to the approximation of the $U(1)$-current by \eqref{eq:U1curfinite} in the scaling limit. Notably, some of the proof of the results for the $U(1)$-current simplify slightly because the one-particle operators $j_{k}$, $k\in\Gamma_{\infty,{\color{blue}+}}$, are bounded. As an illustration we state the analogue of Theorem \ref{thm:KSconvsmeared}:
\begin{thm}[Convergence of smeared lattice currents for $\pi_{{\color{blue}-}}$]
\label{thm:U1cursmearedconv}
Let $s\in C^{\alpha}(\R)$ be a sufficiently regular, compactly supported orthonormal Daubechies scaling function, and let $X\in\fl(\Lambda_{M},\C)$ for some $M\in\N_{0}$. Then, the smeared lattice currents, $J^{(N)}(S^{M}_{N}(X))$, converge strongly to the $U(1)$-current, $J_{{\color{blue}-}}(S^{M}_{\infty}(X))$, on the dense domain $\fF^{\alg}_{\ua}(\cD_{\std})\subset\fF_{\ua}(\fh^{(\pm)}_{\infty,{\color{blue}\pm}})$ spanned by anti-symmetric Fock vectors with finitely many one-particle excitations in $\cD_{\std}$ (finite $\cD_{\std}$-particle number):
\begin{align}
\label{eq:U1curconvsmeared}
\lim_{N\rightarrow\infty}\|(:\!\pi_{{\color{blue}-}}(\alpha^{N}_{\infty}(J^{(N)}(S^{M}_{N}(X)))\!)\!: - J_{{\color{blue}-}}(S^{M}_{\infty}(X)))a^{\dagger}(\xi_{1})\dots a^{\dagger}(\xi_{n})\Omega_{0}\| & = 0,
\end{align}
for all $n\in\N_{0}$ and $\xi_{1},\dots,\xi_{n}\in\cD_{\std}$.
\end{thm}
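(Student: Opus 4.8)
The plan is to mirror the proof of Theorem~\ref{thm:KSconvsmeared} step for step, reducing the Fock-space convergence \eqref{eq:U1curconvsmeared} to a one-particle statement and then feeding the latter into the second-quantization estimate \eqref{eq:noquantconv}. The decisive simplification, already anticipated in the remark preceding the theorem, is that the one-particle currents $j_{k}$ are \emph{bounded} (indeed $\|j_{k}\|=2L$), so that none of the analytic-vector machinery of Corollary~\ref{cor:viraesa}, forced upon us in the Virasoro case by the linear growth of the matrix elements in \eqref{eq:latvirachi1pmommat}, is required here.

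First I would pass to the one-particle level. Writing $o=j(S^{M}_{\infty}(X))$ and $\tilde{o}^{(N)}=\tilde{j}^{(N)}(S^{M}_{N}(X))$ with $\tilde{j}^{(N)}_{k}=R^{N}_{\infty}j^{(N)}_{k}(R^{N}_{\infty})^{*}$ for the momentum-cutoff group of Definition~\ref{def:momrg}, the bound \eqref{eq:noquantconv}, which itself rests on \eqref{eq:secondquantbound} and \eqref{eq:hilbertschmidt}, shows that the left-hand side of \eqref{eq:U1curconvsmeared} is controlled by the diagonal strong-convergence quantities $\|(\tilde{o}^{(N)}-o)_{\pm\pm}R^{M}_{\infty}(\eta_{p})\|_{\infty}$ together with the off-diagonal Hilbert--Schmidt quantities $\|(\tilde{o}^{(N)}-o)_{\pm\mp}\|_{2}$, the blocks referring to the decomposition \eqref{eq:blockmatrix} relative to the Hardy projection $S=P^{-}$ that defines $\pi_{{\color{blue}-}}$. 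It therefore suffices to prove the current analogue of Lemma~\ref{lem:KSconvsmeared}.

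Second, I would establish this one-particle convergence through the same telescoping decomposition used there,
\begin{align*}
\tilde{j}^{(N)}(S^{M}_{N}(X)) - j(S^{M}_{\infty}(X)) & = \tilde{j}^{(N)}(S^{M}_{N}(X) - \chi_{\Gamma_{N}}S^{M}_{\infty}(X)) \\
& \quad + \tilde{j}^{(N)}(\chi_{\Gamma_{N}}S^{M}_{\infty}(X)) - j(S^{M}_{\infty}(X)).
\end{align*}
The first term measures the gap between the finite product $\prod_{j=1}^{N-M}m_{0}(\vep_{M+j}k)$ and the infinite product $\hat{s}(\vep_{M}k)$ and is dominated using Lemma~\ref{lem:convergence} together with the uniform bound $\|j^{(N)}_{k}\|=2L_{N}$. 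For the second term the matrix elements \eqref{eq:U1cur1pmatrix} give, after the cutoff, $(\tilde{j}^{(N)}_{k}-j_{k})_{mn}=2L\,\delta_{m,n+k}(\chi_{\Gamma_{N}}(n)-1)$, which vanishes pointwise as $N\to\infty$ and, notably, carries none of the trigonometric distortion factors present in \eqref{eq:latvirachi1pmommat}; convergence of the $k$-sum then follows by dominated convergence, the dominating weight being furnished by the rapid decay of the smearing factor $\hat{s}(\vep_{M}k)\hat{X}_{\mathrm{per}|k}$ via Lemma~\ref{lem:decay}. For the off-diagonal Hilbert--Schmidt parts the key input is \eqref{eq:U1curimp}, which shows that $\|(j_{k})_{\pm\mp}\|_{2}^{2}=\sum_{n\in\Gamma_{\infty,{\color{blue}-}}}\theta(\pm n)\theta(\mp(n-k))$ grows only linearly in $|k|$; paired with the polynomial decay of $\hat{s}(\vep_{M}k)$ this again supplies a summable dominating function and hence convergence in $k$, exactly as in the corresponding step of Lemma~\ref{lem:KSconvsmeared}.

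I expect the only point requiring genuine care, rather than a true obstacle, to be the off-diagonal Hilbert--Schmidt estimate, where one must verify that the momentum cutoff respects the block decomposition relative to $P^{\pm}$ and that the linear-in-$k$ growth of $\|(j_{k})_{\pm\mp}\|_{2}$ is dominated by a Sobolev weight $(1+|\vep_{M}k|)^{-2\delta}$ for $\delta$ large enough, which is guaranteed once $s$ is regular enough to make Lemma~\ref{lem:decay} applicable. Since $j_{k}$ is bounded, $dF_{{\color{blue}-}}(j(X))$ is essentially self-adjoint on $\fF^{\alg}_{\ua}(\cD_{\std})$ for real $X$ by the standard criterion of \cite{CareyOnFermionGauge}, so the strong convergence \eqref{eq:U1curconvsmeared} upgrades to strong resolvent convergence and the associated unitary groups \eqref{eq:U1curBT} converge as well. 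The Majorana case is identical with $dF_{{\color{blue}-}}$ replaced by its self-dual analogue $dQ_{{\color{blue}-}}$, using Remark~\ref{rem:impcond} to ensure the implementability condition \eqref{eq:sdcconstraints}.
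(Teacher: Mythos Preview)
Your proposal is correct and follows essentially the same route as the paper: the paper's proof is a one-liner stating that the argument is identical to that of Theorem~\ref{thm:KSconvsmeared} via the analogue of Lemma~\ref{lem:KSconvsmeared}, with the simplification that the one-particle currents satisfy the uniform bounds $\|\tilde{j}^{(N)}_{k}R^{K}_{\infty}(e_{l})\|_{\infty}=\|j_{k}R^{K}_{\infty}(e_{l})\|_{\infty}=1$. Note a minor slip: since $j_{k}$ is the unitary momentum shift $(j_{k}\hat{\xi})_{m}=\hat{\xi}_{m-k}$, one has $\|j_{k}\|=1$ rather than $2L$ (the $2L$ is the matrix element against unnormalized plane waves), but this does not affect your argument.
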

\begin{proof}
The proof is identical to that Theorem \ref{thm:KSconvsmeared} using the analogue of Lemma \ref{lem:KSconvsmeared} with the additional simplification that we have uniform bounds:
\begin{align*}
\|\tilde{j}^{(N)}_{k}R^{K}_{\infty}(e_{l})\|_{\infty} & = 1, & \|j_{k}R^{K}_{\infty}(e_{l})\|_{\infty} & = 1.
\end{align*}
\end{proof}

\paragraph{WZW-currents.}
We illustrate the adaptation of the procedure for the $U(1)$-current to non-abelian currents by the example of the level-$1$ (chiral) $\hat{\fu}(D)_{1}$-currents. Here, $\fu(D)$ denotes the Lie algebra of $U(D)$, the unitary group in $D$ dimensions. The $\hat{\fu}(D)_{1}$-currents are given by:
\begin{align}
\label{eq:WZWcur}
J^{\mu}_{{\color{blue}-}}(x) & = \sum_{i,j=1}^{D}t^{\mu}_{ij}:\!\pi_{{\color{blue}-}}(\psi^{(i)\dag}_{+|x})\pi_{{\color{blue}-}}(\psi^{(j)}_{+|x})\!:\ = \tfrac{1}{2L}\sum_{k\in\Gamma_{\infty,{\color{blue}-}}}e^{-ikx}\hat{J}^{\mu}_{{\color{blue}-},k}\;,
\end{align}
where the matrices $\{t^{\mu}\}_{\mu=1}^{D^2-1}$ are a basis of the Lie algebra $\fu(D)$, and we use $D$-component (chiral) complex fermions $\fA^{(+)}_{\infty,{\color{blue}-}}(D)=\fA_{\CAR}(\fh^{(+)}_{\infty,{\color{blue}-}}\otimes\C^{D})$. The finite-scale analogues of the non-abelian currents and the associated approximants are given by:
\begin{align}
\label{eq:WZWcurfinite}
J^{(N),\mu}(x) & = \sum_{i,j=1}^{D}t^{\mu}_{ij}\psi^{(i)\dag}_{+|x}\psi^{(j)}_{+|x} = \tfrac{1}{2L_{N}}\!\!\!\sum_{k\in\Gamma_{N,{\color{blue}-}}}e^{-ikx}\hat{J}^{(N),\mu}_{k}, & \hat{J}^{(N),\mu}_{k} & = dF_{0}(j^{(N),\mu}_{k}),
\end{align}
with the one-particle operators that also satisfy the implementability condition \eqref{eq:sdcconstraints}:
\begin{align}
\label{eq:WZWcur1p}
(j^{(N),\mu}_{k}\hat{\xi})_{i,m} & = \sum_{j=1}^{D}t^{\mu}_{ij}\hat{\xi}_{j,m-k}, & (j^{\mu}_{k}\hat{\xi})_{i,m} & = \sum_{j=1}^{D}t^{\mu}_{ij}\hat{\xi}_{j,m-k},
\end{align}
for $\fh^{(+)}_{N,{\color{blue}-}}\otimes\C^{D}$ respectively $\xi\in\fh^{(+)}_{\infty,{\color{blue}-}}\otimes\C^{D}$. The finite-scale one-particle operators are mapped to the scaling limit by $\tilde{j}^{(N),\mu}_{k} = R^{N}_{\infty}(j^{(N),\mu}_{k})(R^{N}_{\infty})^{*}$. where asymptotic maps are used componentwise.

It is natural to use the Hilbert-Schmidt norm $\|\!\ .\!\ \|_{2,D}$ on the basis elements $t^{\mu}$ as it is directly related to the Killing form on $\fu(D)$, and by the Cauchy-Schwarz inequality we have:
\begin{align}
\label{eq:WZWcur1pestimate}
\|j^{(N),\mu}_{k}\hat{\eta}\|_{\fh^{(+)}_{N,{\color{blue}-}}\otimes\C^{D}}^{2} & \leq \|t^{\mu}\|_{2,D}^{2}\sum_{i=1}^{D}\|(j^{(N)}_{k}\hat{\eta})_{i}\|_{\fh_{N}}^{2} = \|t^{\mu}\|_{2,D}^{2}\|\eta\|_{\fh^{(+)}_{N,{\color{blue}-}}\otimes\C^{D}}^{2}, \\ \nonumber
\|j^{\mu}_{k}\hat{\xi}\|_{\fh^{(+)}_{\infty,{\color{blue}-}}\otimes\C^{D}}^{2} & \leq \|t^{\mu}\|_{2,D}^{2}\sum_{i=1}^{D}\|(j_{k}\hat{\xi})_{i}\|_{\fh_{\infty}}^{2} = \|t^{\mu}\|_{2,D}^{2}\|\xi\|_{\fh^{(+)}_{\infty,{\color{blue}-}}\otimes\C^{D}}^{2}.
\end{align}
The convergence results for the non-abelian lattice currents as, for example, in Theorem \ref{thm:U1cursmearedconv} follow directly from the estimate:
\begin{align}
\label{eq:WZWcur1pconvestimate}
\|(\tilde{j}^{(N),\mu}_{k}-j^{\mu}_{k})R^{K}_{\infty}(\eta)\|_{\fh^{(+)}_{\infty,{\color{blue}-}}\otimes\C^{D}}^{2} & \leq \|t^{\mu}\|_{2,D}^{2}\sum_{i=1}^{D}\|(\tilde{j}^{(N)}_{k}-j_{k})R^{K}_{\infty}(\eta_{i})\|_{\infty}^{2},
\end{align}
as this reduce the problem to setting of the $U(1)$-current.

\bigskip

A similar reasoning applies to any non-abelian $\hat{\fg}_{k}$-current of the form \eqref{eq:WZWcur} associated with a (simply-laced) Lie algebra $\fg$ and representation $\lambda$ determining the level $k = k(\lambda)$, possibly by using the Majorana algebras $\fB^{(+)}_{N,{\color{blue}-}}$ and $\fB^{(+)}_{\infty,{\color{blue}-}}$ instead.

\section{Approximation of correlation functions}
\label{sec:corr}
The convergence results of the previous sections for the approximation of Virasoro generator, their unitaries and the associated Bogoliubov transformations allow for the approximation of (chiral) correlation functions of the resulting fermionic continuum field theories in the scaling limit. Explicit error bounds can be obtained along the lines outlined in the last paragraph of Section \ref{sec:KSconv}.

\paragraph{Fermion correlation function.}
Using either the wavelet or the momentum-cutoff renormalization group, we obtain the following approximation theorem concerning the dynamical (chiral) correlation functions of fermions. We state the theorem for complex fermions $\fA^{(\pm)}_{\infty,{\color{blue}\pm}}$, but an analogous statement holds for the Majorana fermions $\fB^{(\pm)}_{\infty,{\color{blue}\pm}}$ as is clear from the structure of the proofs in Section \ref{sec:KSconv}.
\begin{thm}[Convergence of fermion correlation functions]
\label{thm:corapprox}
Given the quasi-free representations $\pi_{{\color{blue}\pm}}$ of the fermion algebra $\fA^{(\pm)}_{\infty,{\color{blue}\pm}}$ arising from the scaling limit states $\omega_{{\color{blue}\pm}}$. Then, for any $M\in\N_{0}$, $A,B\in\fA^{(\pm)}_{M,{\color{blue}\pm}}$ and uniformly in $t\in\R$ on compact intervals, we have:
\begin{align*}
\lim_{N\rightarrow\infty}(\omega^{(N)}_{0,{\color{blue}\pm}},\pi^{(N)}_{{\color{blue}\pm}}(\alpha^{M}_{N}(A))\sigma^{(N)}_{t}(\pi^{(N)}_{{\color{blue}\pm}}(\alpha^{M}_{N}(B)))\omega^{(N)}_{0,{\color{blue}\pm}}) & = (\Omega_{0},\pi_{{\color{blue}\pm}}(\alpha^{M}_{\infty}(A))\sigma_{t}(\pi_{{\color{blue}\pm}}(\alpha^{M}_{\infty}(B)))\Omega_{0})\;,
\end{align*}
where $\omega^{(N)}_{0,{\color{blue}\pm}}$, $\pi^{(N)}_{{\color{blue}\pm}}$ are the GNS-vector and -representation of $\omega^{(N)}_{0,{\color{blue}\pm}}$, and $\Omega_{0}\in\fF_{\ua}(\fh^{(\pm)}_{\infty,{\color{blue}\pm}})$ is the standard Fock vacuum. $\sigma^{(N)}_{t}$ and $\sigma_{t}$ are $1$-parameter (semi-)groups of (automorphic) Bogoliubov transformations generated by a finite-scale KS approximant \eqref{eq:latvirasubchi} or a (non-)abelian current \eqref{eq:U1curfinite} \& \eqref{eq:WZWcurfinite} and their scaling limits respectively.
\end{thm}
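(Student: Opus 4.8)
The plan is to reduce the statement to the convergence of one-particle data, exploiting that $\omega^{(N)}_{0,\pm}$ is gauge-invariant quasi-free and that $\sigma^{(N)}_{t}$ is a Bogoliubov automorphism, and then to combine the convergence of the lattice vacua (Lemma \ref{lem:stateconv}, Corollary \ref{cor:stateconv}) with the convergence of the one-particle dynamics (Lemma \ref{lem:KSconvc}, Corollary \ref{cor:KSunitariesc}; respectively the smeared versions Lemma \ref{lem:KSconvsmeared}, Corollary \ref{cor:KSconvbogsmeared}). First I would use the GNS identity to rewrite the finite-scale correlator as a state evaluation, $F_{N}(t):=\omega^{(N)}_{0,\pm}\big(\alpha^{M}_{N}(A)\,\sigma^{(N)}_{t}(\alpha^{M}_{N}(B))\big)$, which is legitimate because $\sigma^{(N)}_{t}$, being generated by a real or imaginary part of a KS approximant or current, acts as a Bogoliubov automorphism of $\fA^{(\pm)}_{N,\pm}$ and is implemented in its GNS representation; likewise $(\Omega_{0},\pi_{\pm}(\,\cdot\,)\Omega_{0})=\omega_{\pm}(\,\cdot\,)$ identifies the right-hand side with $F_{\infty}(t):=\omega_{\pm}\big(\alpha^{M}_{\infty}(A)\,\sigma_{t}(\alpha^{M}_{\infty}(B))\big)$. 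Expanding $\alpha^{M}_{N}(A)$ and $\sigma^{(N)}_{t}(\alpha^{M}_{N}(B))$ into monomials in creation/annihilation operators and invoking the Wick/determinant formula \eqref{eq:carqfs}, $F_{N}(t)$ becomes a finite sum of products of two-point functions of the two types $\langle\phi^{(N)},(1-S^{(N)}_{0})\psi^{(N)}\rangle_{N}$ and $\langle\phi^{(N)},S^{(N)}_{0}\psi^{(N)}\rangle_{N}$, in which each one-particle vector is either $R^{M}_{N}\xi$ (from $A$) or $e^{it o^{(N)}}R^{M}_{N}\eta$ (from $\sigma^{(N)}_{t}(B)$), with $o^{(N)}$ the relevant one-particle generator. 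The same expansion applied to $F_{\infty}(t)$ yields the analogous sum with $R^{M}_{\infty}$, $e^{ito}$ and $S$.

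Next I would push every factor to the one-particle limit space using the isometry $R^{N}_{\infty}{}^{*}R^{N}_{\infty}=\1$ (Propositions \ref{prop:waveletrg}, \ref{prop:momrg}). Writing $\tilde S^{(N)}:=R^{N}_{\infty}S^{(N)}_{0}R^{N}_{\infty}{}^{*}$ and using the intertwiner $R^{N}_{\infty}e^{ito^{(N)}}R^{M}_{N}=e^{it\tilde o^{(N)}}R^{M}_{\infty}$ (which follows from $\tilde o^{(N)}R^{N}_{\infty}=R^{N}_{\infty}o^{(N)}$ and $R^{N}_{\infty}R^{M}_{N}=R^{M}_{\infty}$), each two-point factor turns into $\langle e^{it\tilde o^{(N)}}R^{M}_{\infty}\xi,\ \tilde S^{(N)}_{\bullet}\,e^{it\tilde o^{(N)}}R^{M}_{\infty}\eta\rangle_{\infty}$ with $\tilde S^{(N)}_{\bullet}\in\{\tilde S^{(N)},\1-\tilde S^{(N)}\}$ (factors involving only $A$-vectors being the case $t=0$). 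Here two limits must be controlled simultaneously: the time-evolved vectors converge \emph{in norm}, $e^{it\tilde o^{(N)}}R^{M}_{\infty}\xi\to e^{ito}R^{M}_{\infty}\xi$, by Corollary \ref{cor:KSunitariesc} (which rests on Lemma \ref{lem:KSconvc} together with the Trotter--Kato semi-group convergence theorem \cite{KatoPerturbationTheoryFor}); while the symbol converges only \emph{weakly}, $\tilde S^{(N)}\to S$ on $\fh^{(\pm)}_{\infty,\pm}$, which follows from Lemma \ref{lem:stateconv} / Corollary \ref{cor:stateconv} on the dense set $\bigcup_{M}\im R^{M}_{\infty}$ (using $\langle R^{M}_{\infty}\xi,\tilde S^{(N)}R^{M'}_{\infty}\eta\rangle_{\infty}=\langle R^{M}_{N}\xi,S^{(N)}_{0}R^{M'}_{N}\eta\rangle_{N}$) extended by the uniform bound $0\le\tilde S^{(N)}\le\1$. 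A three-term split $\langle a_{N},\tilde S^{(N)}_{\bullet}b_{N}\rangle-\langle a,S_{\bullet}b\rangle=\langle a_{N}-a,\tilde S^{(N)}_{\bullet}b_{N}\rangle+\langle a,\tilde S^{(N)}_{\bullet}(b_{N}-b)\rangle+\langle a,(\tilde S^{(N)}_{\bullet}-S_{\bullet})b\rangle$ then shows each factor converges: the first two terms vanish by norm convergence of the vectors and boundedness of $\tilde S^{(N)}_{\bullet}$, the third by weak convergence of the symbol against the fixed limit vectors $a,b$.

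Finally, since $F_{N}(t)$ is a finite sum of products of uniformly bounded convergent scalar factors, $F_{N}(t)\to F_{\infty}(t)$, and uniformity in $t$ on compact intervals is inherited from the uniform-on-compacts convergence of the unitary groups in Corollary \ref{cor:KSunitariesc} (resp. \ref{cor:KSconvbogsmeared}). The hypothesis $A,B\in\fA^{(\pm)}_{M,\pm}$ at a fixed scale is used precisely to guarantee that the limit vectors $R^{M}_{\infty}\xi$ lie in the common core ($\cD_{\std}$ for the momentum-cutoff group, $\cD_{W}$ for the wavelet group), where the one-particle convergence is available. The Majorana case is identical after replacing $dF$ by $dQ$, determinants by the Pfaffian of \eqref{eq:sdcqfs}, and using the self-dual convergence statements.

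The main obstacle is exactly the coupling of the two limits inside the same matrix elements: one cannot take the limit of the state and of the dynamics separately, and the argument closes only because the dynamics converge strongly (norm on vectors) while the state symbol converges merely weakly---had the time-evolved vectors converged only weakly, the product $\langle a_{N},\tilde S^{(N)}_{\bullet}b_{N}\rangle$ could fail to converge. Promoting the one-particle strong-resolvent convergence of the generators to the required uniform-in-$t$ norm convergence of the associated groups via Trotter--Kato is therefore the technical crux.
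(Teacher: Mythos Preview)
Your argument is correct and furnishes a genuine alternative to the paper's proof. The paper does \emph{not} expand into two-point functions. Instead it works entirely at the $C^{*}$-algebra level: since $\sigma^{(N)}_{t}=\Ad_{U_{N}(t)}$ is inner, the element $O_{N}:=\alpha^{M}_{N}(A)\,\sigma^{(N)}_{t}(\alpha^{M}_{N}(B))$ lives in the finite-dimensional algebra $\fA^{(\pm)}_{N,{\color{blue}\pm}}$, and the proof shows that $\alpha^{N}_{\infty}(O_{N})$ converges \emph{in norm} in $\fA^{(\pm)}_{\infty,{\color{blue}\pm}}$ (by applying Corollary~\ref{cor:KSconvbog} in the Fock representation $\pi_{0}$). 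The difference $F_{N}-F_{\infty}$ is then split as $(\omega^{(N)}_{0,{\color{blue}\pm}}-\omega^{(N)}_{{\color{blue}\pm}})(O_{N})$ plus a term controlled by $\|\tilde\sigma^{(N)}_{t}-\sigma_{t}\|$ on $\pi_{{\color{blue}\pm}}(\alpha^{M}_{\infty}(B))$, and both pieces are handled by the norm convergence of states (Lemma~\ref{lem:stateconv}) together with the fact that $\{O_{N}\}$ is an $\alpha$-convergent sequence. Your route instead exploits the quasi-free structure to reduce everything to one-particle matrix elements and then runs a three-term split there. The paper's approach is more robust---it uses nothing about quasi-freeness beyond Lemma~\ref{lem:stateconv} and therefore extends verbatim to arbitrary convergent sequences $A_{N},B_{N}$ (this is exactly how Corollary~\ref{cor:corapprox} is obtained). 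Your approach, by contrast, exposes the one-particle mechanism directly and would yield explicit error bounds by simply stacking two-point errors, which is closer in spirit to the estimates in the ``Error estimates'' paragraph.

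Two minor remarks on your write-up. First, the convergence $\tilde S^{(N)}\to S$ is in fact \emph{strong} on $\fh^{(\pm)}_{\infty,{\color{blue}\pm}}$, not merely weak: the proof of Lemma~\ref{lem:stateconv} establishes $\|(\tilde S^{(N)}-S)R^{M}_{\infty}\eta\|_{\infty}\to 0$ for each fixed $M$, and uniform boundedness plus density extends this to all vectors. This makes your ``main obstacle'' less delicate than you indicate. Second, for the uniformity in $t$ of the third term $\langle a,(\tilde S^{(N)}_{\bullet}-S_{\bullet})b\rangle$ with $a,b$ depending on $t$ through $e^{ito}$, you should invoke that the orbit $\{e^{ito}R^{M}_{\infty}\eta:t\in[-T,T]\}$ is norm-compact, so strong convergence of $\tilde S^{(N)}$ becomes uniform over it; this step is implicit in your sketch but deserves a sentence.
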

\begin{proof}
Let us first introduce the short hands:
\begin{align*}
C^{(N)}_{t}(A,B) & = (\omega^{(N)}_{0,{\color{blue}\pm}},\pi^{(N)}_{{\color{blue}\pm}}(A)\sigma^{(N)}_{t}(\pi^{(N)}_{{\color{blue}\pm}}(B))\omega^{(N)}_{0,{\color{blue}\pm}}), & C_{t}(A,B) & = (\Omega_{0},\pi_{{\color{blue}\pm}}(A)\sigma_{t}(\pi_{{\color{blue}\pm}}(B))\Omega_{0})\;,
\end{align*}
for either $A,B\in\fA^{(\pm)}_{N,{\color{blue}\pm}}$ or $A,B\in\fA^{(\pm)}_{\infty,{\color{blue}\pm}}$, as well as,
\begin{align*}
\tilde{C}^{(N)}_{t}(A,B) & = (\Omega_{0},\pi_{{\color{blue}\pm}}(A)\tilde{\sigma}^{(N)}_{t}(\pi_{{\color{blue}\pm}}(B))\Omega_{0})\;,
\end{align*}
with the approximation $\tilde{\sigma}^{(N)}_{t}$ of $\sigma_{t}$ given by $\tilde{\sigma}^{(N)}_{t} = \Ad_{\pi_{{\color{blue}\pm}}(\alpha^{N}_{\infty}(U_{N}(t)))}$ because $\sigma^{(N)}_{t} = \Ad_{U_{N}(t)}$ is inner by construction, i.e.~$U_{N}(t)\in\fA^{(\pm)}_{N,{\color{blue}\pm}}$. Thus, the statement can be rephrased as:
\begin{align*}
\lim_{N\rightarrow\infty}C^{(N)}_{t}(\alpha^{M}_{N}(A),\alpha^{M}_{N}(B)) & = C_{t}(\alpha^{M}_{\infty}(A),\alpha^{M}_{\infty}(B)).
\end{align*}
We observe that:
\begin{align}
\label{eq:algcor}
C^{(N)}_{t}(\alpha^{M}_{N}(A),\alpha^{M}_{N}(B)) & = (\omega^{(N)}_{0,{\color{blue}\pm}},\pi^{(N)}_{{\color{blue}\pm}}(\alpha^{M}_{N}(A)\Ad_{U_{N}(t)}(\alpha^{M}_{N}(B)))\omega^{(N)}_{0,{\color{blue}\pm}}) \\ \nonumber
 & = \omega^{(N)}_{0,{\color{blue}\pm}}(\alpha^{M}_{N}(A)\Ad_{U_{N}(t)}(\alpha^{M}_{N}(B))), \\[0.25cm] \nonumber
\tilde{C}^{(N)}_{t}(\alpha^{M}_{\infty}(A),\alpha^{M}_{\infty}(B))  & = (\Omega_{0},(\pi_{{\color{blue}\pm}}\circ\alpha^{N}_{\infty})(\alpha^{M}_{N}(A)\Ad_{U_{N}(t)}(\alpha^{M}_{N}(B)))\Omega_{0}) \\ \nonumber
 & = \omega_{{\color{blue}\pm}}(\alpha^{N}_{\infty}(\alpha^{M}_{N}(A)\Ad_{U_{N}(t)}(\alpha^{M}_{N}(B))))\;,
\end{align}
Thus, we  can estimate:
\begin{align*}
 & |C^{(N)}_{t}\!(\alpha^{M}_{N}\!(A),\alpha^{M}_{N}\!(B)\!)\!-\!C_{t}(\alpha^{M}_{\infty}\!(A),\alpha^{M}_{\infty}\!(B)\!)| \\
 &\!\leq\!|C^{(N)}_{t}\!(\alpha^{M}_{N}\!(A),\alpha^{M}_{N}\!(B)\!)\!-\!\tilde{C}^{(N)}_{t}\!(\alpha^{M}_{\infty}\!(A),\alpha^{M}_{\infty}\!(B)\!)|\!+\!|\tilde{C}^{(N)}_{t}\!(\alpha^{M}_{\infty}\!(A),\alpha^{M}_{\infty}\!(B)\!)\!-\!C_{t}(\alpha^{M}_{\infty}\!(A),\alpha^{M}_{\infty}\!(B)\!)| \\
&\!\leq\!|(\Omega_{0},\pi_{{\color{blue}\pm}}(\alpha^{M}_{\infty}\!(A)\!)(\tilde{\sigma}^{(N)}_{t}\!-\!\sigma_{t})(\pi_{{\color{blue}\pm}}(\alpha^{M}_{\infty}\!(B)\!)\!)\Omega_{0})|\!+\!|(\omega^{(N)}_{{\color{blue}\pm}}\!\!-\!\omega^{(N)}_{0,{\color{blue}\pm}})(\alpha^{M}_{N}\!(A)\!\Ad_{U_{N}(t)}(\alpha^{M}_{N}\!(B)\!)\!)| \\
&\!\leq\!\|A\|\|(\tilde{\sigma}^{(N)}_{t}\!-\!\sigma_{t})(\pi_{{\color{blue}\pm}}(\alpha^{M}_{\infty}\!(B)\!)\!)\|\!+\!|(\omega^{(N)}_{{\color{blue}\pm}}\!\!-\!\omega^{(N)}_{0,{\color{blue}\pm}})(\alpha^{M}_{N}\!(A)\!\Ad_{U_{N}(t)}(\alpha^{M}_{N}\!(B)\!)\!)|
\end{align*}
Thus, the result follows from Corollary \ref{cor:KSconvbog} (or \ref{cor:KSconvbogc} \& \ref{cor:KSconvbogsmeared}) provided:
\begin{align*}
\lim_{N\rightarrow\infty}|(\omega^{(N)}_{{\color{blue}\pm}}\!\!-\!\omega^{(N)}_{0,{\color{blue}\pm}})(\alpha^{M}_{N}(A)\Ad_{U_{N}(t)}(\alpha^{M}_{N}(B))))| & = 0\;.
\end{align*}
But, this is itself a consequence of Corollary \ref{cor:KSconvbog} and the fact that $\omega_{{\color{blue}\pm}}$ is the scaling limit of $\omega^{(N)}_{0,{\color{blue}\pm}}$. To see this, we observe that Corollary \ref{cor:KSconvbog} implies for $\pi_{S} = \pi_{0}$ (the standard Fock-space representation) that $O_{N} = \alpha^{M}_{N}(A)\Ad_{U_{N}(t)}(\alpha^{M}_{N}(B))$ converges to an element $O\in\fA^{(\pm)}_{\infty,{\color{blue}\pm}}$ for any $M\in\N_{0}: A,B\in\fA_{M}$ and $t\in\R$ (uniformly on compact intervals):
\begin{align*}
\lim_{K\rightarrow\infty}\lim_{N\rightarrow\infty}\!\!\|O_{N}\!-\!\alpha^{K}_{N}(O_{K})\| &\!=\!\lim_{K\rightarrow\infty}\lim_{N\rightarrow\infty}\!\!\|\alpha^{N}_{\infty}(O_{N})\!-\!\alpha^{N}_{\infty}(\alpha^{K}_{N}(O_{K})\!)\| \\
& \!\leq\!\lim_{K\rightarrow\infty}\lim_{N\rightarrow\infty}\!\!\|\alpha^{N}_{\infty}\!(O_{N})\!-\!O\|\!+\!\lim_{K\rightarrow\infty}\lim_{N\rightarrow\infty}\!\!\|\alpha^{N}_{\infty}\!(O_{N})\!-\!\alpha^{N}_{\infty}\!(\alpha^{K}_{N}\!(O_{K})\!)\| \\
& = \lim_{N\rightarrow\infty}\|\alpha^{N}_{\infty}(O_{N})-O\| + \lim_{K\rightarrow\infty}\|O-\alpha^{K}_{\infty}(O_{K})\| \\
& = 0,
\end{align*}
with $O$ defined by $\pi_{0}(\alpha^{M}_{\infty}(A)\!)\sigma_{t}(\pi_{0}(\alpha^{M}_{\infty}(B)\!)\!)$, and $\pi_{0}(\alpha^{N}_{\infty}(O_{N})\!)\!=\!\pi_{0}(\alpha^{M}_{\infty}(A)\!)\tilde{\sigma}^{(N)}_{t}\!(\pi_{0}(\alpha^{M}_{\infty}(B)\!)\!)$. Finally, the scaling limit construction of $\omega_{{\color{blue}\pm}}$ gives (see Lemma \ref{lem:stateconv}):
\begin{align*}
\lim_{N\rightarrow\infty}\omega^{(N)}_{0,{\color{blue}\pm}}(\alpha^{K}_{N}(O_{K})\!) & = \omega_{{\color{blue}\pm}}(\alpha^{K}_{\infty}(O_{K})\!), & \forall K&\in\N_{0},
\end{align*}
which defines $\omega_{{\color{blue}\pm}}$ on the dense subalgebra $\bigcup_{K\in\N_{0}}\alpha^{K}_{\infty}(\fA^{(\pm)}_{K,{\color{blue}\pm}})$. The extension to $O\in\fA^{(\pm)}_{\infty,{\color{blue}\pm}}$ can be determined via the approximating sequence $\alpha^{K}_{\infty}(O_{K})$:
\begin{align*}
\omega_{{\color{blue}\pm}}(O_{\infty}) & = \lim_{K\rightarrow\infty}\omega_{{\color{blue}\pm}}(\alpha^{K}_{\infty}(O_{K})) = \lim_{K\rightarrow\infty}\lim_{N\rightarrow\infty}\omega^{(N)}_{0,{\color{blue}\pm}}(\alpha^{K}_{N}(O_{K})) \\
& = \lim_{K\rightarrow\infty}\lim_{N\rightarrow\infty}\omega^{(N)}_{0,{\color{blue}\pm}}(O_{N}) + \lim_{K\rightarrow\infty}\lim_{N\rightarrow\infty}\omega^{(N)}_{0,{\color{blue}\pm}}(O_{N}-\alpha^{K}_{N}(O_{K})) \\
& = \lim_{N\rightarrow\infty}\omega^{(N)}_{0,{\color{blue}\pm}}(O_{N})\;.
\end{align*}
which proves the result. To achieve uniform convergence in $t$ on compact intervals, we need the following estimate:
\begin{align*}
|(\omega^{(N)}_{{\color{blue}\pm}}\!\!-\!\omega^{(N)}_{0,{\color{blue}\pm}})(O_{N})| & \!\leq\!|\omega^{(N)}_{{\color{blue}\pm}}\!(O_{N}\!-\!\alpha^{K}_{N}(O_{K})\!)|\!+\!|(\omega^{(N)}_{{\color{blue}\pm}}\!\!-\!\omega^{(N)}_{0,{\color{blue}\pm}})(\alpha^{K}_{N}(O_{K})\!)|\!+\!|\omega^{(N)}_{0,{\color{blue}\pm}}\!(\alpha^{K}_{N}(O_{K})\!-\!O_{N})| \\ 
 & \!\leq\! 2\|O_{N}\!-\!\alpha^{K}_{N}(O_{K})\|\!+\!|(\omega^{(K)}_{\pm}\!-\!\omega^{(K)}_{N-K,{\color{blue}\pm}})(O_{K})| \\
 & \!\leq\! 2(\|\alpha^{N}_{\infty}(O_{N})\!-\!O\|\!+\!\|O\!-\!\alpha^{K}_{\infty}(O_{K})\|)+\|\omega^{(K)}_{{\color{blue}\pm}}-\omega^{(K)}_{N-K,{\color{blue}\pm}}\|\|O_{K}\|.
\end{align*}
Since, by Corollary \ref{cor:KSconvbog}, $\|\alpha^{N}_{\infty}(O_{N})-O\|$ and $\|O-\alpha^{K}_{\infty}(O_{K})\|$ can be made uniformly small for $t$ in a compact interval, we only need to show that this is possible for $\|\omega^{(K)}_{{\color{blue}\pm}}-\omega^{(K)}_{N-K,{\color{blue}\pm}}\|\|O_{K}\|$ as well. But, this follows from the fact that $\omega^{(K)}_{N-K,{\color{blue}\pm}}$ approximates $\omega^{(K)}_{{\color{blue}\pm}}$ in the sense of Lemma \ref{lem:stateconv}, and provided $\|U_{K}(t)\|\leq C^{(K)}_{I}$ is uniformly bounded in $t$ on compact intervals $I\subset\R$ (obviously true in the unitary case) because:
\begin{align*}
\|O_{K}\| & \leq \|U_{K}(t)\|^{2}\|A\|\|B\|.
\end{align*}
\end{proof}

\begin{cor}
\label{cor:corapprox}
The statement of Theorem \ref{thm:corapprox} remains valid, if $A,B\in\fA^{(\pm)}_{M,{\color{blue}\pm}}$ (for some $M\in\N_{0}$) are replaced by convergent sequences $\{A_{N}\}_{N\in\N_{0}}$, $\{B_{N}\}_{N\in\N_{0}}$ in the sense of \eqref{eq:convseq}. Then
\begin{align*}
\lim_{N\rightarrow\infty}(\omega^{(N)}_{0,{\color{blue}\pm}},\pi^{(N)}_{{\color{blue}\pm}}(A_{N})\sigma^{(N)}_{t}(\pi^{(N)}_{{\color{blue}\pm}}((B_{N})))\omega^{(N)}_{0,{\color{blue}\pm}}) & = (\Omega_{0},\pi_{{\color{blue}\pm}}(A)\sigma_{t}(\pi_{{\color{blue}\pm}}(B))\Omega_{0})\;,
\end{align*}
where $\lim_{N\rightarrow\infty}\alpha^{N}_{\infty}(A_{N}) = A$ and $\lim_{N\rightarrow\infty}\alpha^{N}_{\infty}(B_{N}) = B$.
\end{cor}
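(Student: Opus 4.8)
The plan is to reduce the statement to Theorem \ref{thm:corapprox} by approximating the general convergent sequences $\{A_N\}$, $\{B_N\}$ by the \emph{basic sequences} $\{\alpha^M_N(A_M)\}_{N\geq M}$ and $\{\alpha^M_N(B_M)\}_{N\geq M}$, to which the theorem applies directly, and then letting $M\to\infty$. Retaining the short hands $C^{(N)}_t(\cdot,\cdot)$ and $C_t(\cdot,\cdot)$ from the proof of Theorem \ref{thm:corapprox}, the assertion reads $\lim_{N\to\infty}C^{(N)}_t(A_N,B_N)=C_t(A,B)$, and I would establish it through the triangle-inequality decomposition
\begin{align*}
|C^{(N)}_t(A_N,B_N)-C_t(A,B)| & \leq |C^{(N)}_t(A_N,B_N)-C^{(N)}_t(\alpha^M_N(A_M),\alpha^M_N(B_M))| \\
& \quad + |C^{(N)}_t(\alpha^M_N(A_M),\alpha^M_N(B_M))-C_t(\alpha^M_\infty(A_M),\alpha^M_\infty(B_M))| \\
& \quad + |C_t(\alpha^M_\infty(A_M),\alpha^M_\infty(B_M))-C_t(A,B)|,
\end{align*}
followed by an $\epsilon/3$ argument in the two parameters $M$ and $N$.

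The central technical ingredient is a norm estimate controlling the variation of the correlation functions in their operator arguments. Since $\omega^{(N)}_{0,{\color{blue}\pm}}$ is a state, $\pi^{(N)}_{{\color{blue}\pm}}$ a $*$-representation (hence contractive), and $\sigma^{(N)}_t=\Ad_{U_N(t)}$ an isometric automorphism (unitary in the cases at hand, or at worst contractive), one has for all $t$
\begin{align*}
|C^{(N)}_t(A,B)-C^{(N)}_t(A',B')| & \leq \|A-A'\|\,\|B\| + \|A'\|\,\|B-B'\|,
\end{align*}
and the identical bound holds for $C_t$ in the scaling-limit representation $\pi_{{\color{blue}\pm}}$. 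Applying this with $(A,B)=(A_N,B_N)$ and $(A',B')=(\alpha^M_N(A_M),\alpha^M_N(B_M))$, and using that $\alpha^M_N$ is isometric (so $\|\alpha^M_N(A_M)\|=\|A_M\|$) and that $\|B_N\|$ is uniformly bounded because convergent sequences are norm-bounded, the first term is dominated by $\|A_N-\alpha^M_N(A_M)\|\,\|B_N\| + \|A_M\|\,\|B_N-\alpha^M_N(B_M)\|$, whose $\lim_{M\to\infty}\limsup_{N\to\infty}$ vanishes by the convergence hypothesis \eqref{eq:convseq}. The third term is likewise bounded by $\|\alpha^M_\infty(A_M)-A\|\,\|B\|+\|\alpha^M_\infty(A_M)\|\,\|\alpha^M_\infty(B_M)-B\|$ and tends to $0$ as $M\to\infty$, because $\alpha^M_\infty(A_M)\to A$ and $\alpha^M_\infty(B_M)\to B$ in norm; this in turn follows from the identity $\|\alpha^M_\infty(A_M)-\alpha^N_\infty(A_N)\|=\|A_N-\alpha^M_N(A_M)\|$ (isometry of $\alpha^N_\infty$ together with $\alpha^M_\infty=\alpha^N_\infty\circ\alpha^M_N$) combined with \eqref{eq:convseq} and $\alpha^N_\infty(A_N)\to A$. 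The middle term vanishes as $N\to\infty$ for each \emph{fixed} $M$ by Theorem \ref{thm:corapprox} applied to $A_M,B_M\in\fA^{(\pm)}_{M,{\color{blue}\pm}}$.

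Assembling the pieces, given $\epsilon>0$ I would first choose $M$ large enough that the third term is below $\epsilon/3$ and that $\limsup_{N\to\infty}$ of the first term is below $\epsilon/3$ (both achievable simultaneously, since each is governed by the single limit $M\to\infty$), and then, with $M$ fixed, choose $N$ large enough that the middle term is below $\epsilon/3$ by Theorem \ref{thm:corapprox}. The point requiring care is uniformity in $t$ on compact intervals: it is inherited because the norm bounds entering the first and third terms are manifestly $t$-independent, while the middle term converges uniformly in $t$ by Theorem \ref{thm:corapprox}. The only genuine subtlety is the legitimacy of interchanging the $M$- and $N$-limits, but the nested structure $\lim_M\limsup_N$ built into Definition \ref{def:convseq} is precisely tailored to this $\epsilon/3$ scheme, so that no diagonal extraction is needed and the estimate closes.
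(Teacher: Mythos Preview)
Your proof is correct and complete; the $\epsilon/3$ scheme with the intermediate scale $M$ is a clean way to reduce to Theorem~\ref{thm:corapprox}, and your uniformity-in-$t$ argument is sound since the first and third terms are bounded by $t$-independent norm differences.

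The paper takes a somewhat different route. Instead of triangulating through the basic sequences $\alpha^M_N(A_M),\alpha^M_N(B_M)$, it works directly with the product $O_N=A_N\sigma^{(N)}_t(B_N)$ and decomposes the error as $(\omega^{(N)}_{0,{\color{blue}\pm}}-\omega^{(N)}_{{\color{blue}\pm}})(O_N)+\omega_{{\color{blue}\pm}}(\alpha^N_\infty(O_N)-A\sigma_t(B))$, then invokes Corollary~\ref{cor:KSconvbog} in the Fock representation $\pi_0$ to obtain the norm convergence $\alpha^N_\infty(A_N\sigma^{(N)}_t(B_N))\to A\sigma_t(B)$. In other words, the paper re-enters the Bogoliubov-convergence machinery to show that the whole dynamical product forms a convergent sequence in $\fA^{(\pm)}_{\infty,{\color{blue}\pm}}$, and then repeats the state-convergence argument from the proof of Theorem~\ref{thm:corapprox}. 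Your approach is more self-contained: it uses only Theorem~\ref{thm:corapprox} as a black box together with the defining property \eqref{eq:convseq}, and avoids re-invoking Corollary~\ref{cor:KSconvbog}. The paper's approach, on the other hand, makes explicit that $\{A_N\sigma^{(N)}_t(B_N)\}_N$ is itself $\alpha$-convergent, a slightly stronger intermediate statement that could be reused elsewhere.
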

\begin{proof}
This follows using,
\begin{align*}
(\omega^{(N)}_{0,{\color{blue}\pm}},\pi^{(N)}_{{\color{blue}\pm}}(A_{N})\omega^{(N)}_{0,{\color{blue}\pm}}) - (\Omega_{0},\pi_{{\color{blue}\pm}}(A)\Omega_{0})& = (\omega^{(N)}_{0,{\color{blue}\pm}},\pi^{(N)}_{{\color{blue}\pm}}(A_{N})\omega^{(N)}_{0,{\color{blue}\pm}}) - (\Omega_{0},\pi_{{\color{blue}\pm}}(\alpha^{N}_{\infty}(A_{N}))\Omega_{0}) \\
& \hspace{0.2cm} + (\Omega_{0},\pi_{{\color{blue}\pm}}(\alpha^{N}_{\infty}(A_{N}))\Omega_{0}) - (\Omega_{0},\pi_{{\color{blue}\pm}}(A)\Omega_{0}) \\
& = (\omega^{(N)}_{0,{\color{blue}\pm}} - \omega^{(N)}_{{\color{blue}\pm}})(A_{N})+\omega_{{\color{blue}\pm}}(\alpha^{N}_{\infty}(A_{N})-A),
\end{align*}
repeatedly applying the triangle inequality, and the observation that,
\begin{align*}
\lim_{N\rightarrow\infty}\|\alpha^{N}_{\infty}(A_{N}\sigma^{(N)}_{t}(B_{N}))-A\sigma_{t}(B)\| & = 0,
\end{align*}
by applying Corollary \ref{cor:KSconvbog} to the standard Fock-space representation $\pi_{0}$.
\end{proof}

\paragraph{Virasoro correlation functions}
Since the (smeared) KS approximants converge to the (smeared) Virasoro generators in the scaling limit representations $\pi_{{\color{blue}\pm}}$ using the momentum-cutoff renormalization group (possibly in combination with the wavelet renormalization group for the smearing functions) by Theorems \ref{thm:KSconvc} \& \ref{thm:KSconvsmeared}, we obtain an analogue of Theorem \ref{thm:corapprox} for correlation functions involving the Virasoro generators or their associated unitaries, cp.~\cite{ZiniConformalFieldTheories}. A subtle difference arises from the fact that the KS approximants are only strongly operator-convergent on the Fock-space vectors with finite particle number in momentum space, $\fF^{\alg}_{\ua}(\cD_{\std})$, and not operator-convergent in the sense of Corollary \ref{cor:corapprox}. Again, we state the theorem for complex fermions $\fA^{(\pm)}_{\infty,{\color{blue}\pm}}$, but an analogous statement holds for the Majorana fermions $\fB^{(\pm)}_{\infty,{\color{blue}\pm}}$.
\begin{thm}[Convergence of Virasoro correlation functions]
\label{thm:corapproxvir}
Let $s\in C^{\alpha}(\R)$ be a sufficiently regular, compactly supported orthonormal Daubechies scaling function. Given the quasi-free representations $\pi_{{\color{blue}\pm}}$ of the fermion algebra $\fA^{(\pm)}_{\infty,{\color{blue}\pm}}$ arising from the scaling limit states $\omega_{{\color{blue}\pm}}$. Then, for any $n\in\N$ and $S$-convergent sequences of smearing functions $X_{N,p}\stackrel{N\rightarrow\infty}{\rightarrow} X_{p}$, $N\in\N_{0}$ and $p=1,...,n$ with sufficient regularity as in Remark \ref{rem:KSconvsmeared}, for example, $X_{N,p} = S^{M_{p}}_{N}(X_{p})$ for $X_{p}\in\fl(\Lambda_{M_{p}},\R)$, we have:
\begin{align*}
\lim_{N\rightarrow\infty}(\Omega_{0},\prod_{p=1}^{n}:\!(\pi_{{\color{blue}\pm}}\circ\alpha^{N}_{\infty})(L^{(N)}_{\pm}(X_{N,p}))\!:\Omega_{0}) & = (\Omega_{0},\prod_{p=1}^{n}L_{\pm}(X_{p})\Omega_{0})\;,
\end{align*}
and similarly,
\begin{align*}
\lim_{N\rightarrow\infty}(\Omega_{0},\prod_{p=1}^{n}e^{i:(\pi_{{\color{blue}\pm}}\circ\alpha^{N}_{\infty})(L^{(N)}_{\pm}(X_{N,p})):}\Omega_{0}) & = (\Omega_{0},\prod_{p=1}^{n}e^{iL_{\pm}(X_{p})}\Omega_{0})\;,
\end{align*}
where $\Omega_{0}$ is the standard Fock vacuum of $\fF_{\ua}(\fh^{(\pm)}_{\infty,{\color{blue}\pm}})$. Moreover, the finite-scale correlation functions of the scaling limit state $\omega_{{\color{blue}\pm}}$ can be approximated in terms of the renormalized finite-scale states:
\begin{align*}
& (\Omega_{0},\prod_{p=1}^{n}:\!(\pi_{{\color{blue}\pm}}\circ\alpha^{N}_{\infty})(L^{(N)}_{\pm}(X_{N,p})\!)\!:\Omega_{0}) = \omega^{(N)}_{{\color{blue}\pm}}(\prod_{p=1}^{n}(L^{(N)}_{\pm}(X_{N,p})-\omega^{(N)}_{{\color{blue}\pm}}(L^{(N)}_{\pm}(X_{N,p})\!)\!)\!) \\
& = \lim_{M\rightarrow\infty}\omega^{(N)}_{M,{\color{blue}\pm}}(\prod_{p=1}^{n}(L^{(N)}_{\pm}(X_{N,p})-\omega^{(N)}_{{\color{blue}\pm}}(L^{(N)}_{M,\pm}(X_{N,p})\!)\!)\!) \\
& = \lim_{M\rightarrow\infty}\omega^{(N+M)}_{0,{\color{blue}\pm}}\!(\prod_{p=1}^{n}(\alpha^{N}_{N+M}(L^{(N)}_{\pm}\!(X_{N,p})\!)\!-\!\omega^{(N+M)}_{0,{\color{blue}\pm}}\!(\alpha^{N}_{N+M}(L^{(N)}_{\pm}\!(X_{N,p})\!)\!)\!)\!) \;,
\end{align*}
and similarly,
\begin{align*}
& (\Omega_{0},\prod_{p=1}^{n}e^{i:(\pi_{{\color{blue}\pm}}\circ\alpha^{N}_{\infty})(L^{(N)}_{\pm}(X_{N,p})\!):}\Omega_{0}) = \omega^{(N)}_{{\color{blue}\pm}}(\prod_{p=1}^{n}e^{iL^{(N)}_{\pm}(X_{N,p})-i\omega^{(N)}_{{\color{blue}\pm}}(L^{(N)}_{\pm}(X_{N,p})\!)}) \\
& = \lim_{M\rightarrow\infty}\omega^{(N)}_{M,{\color{blue}\pm}}(\prod_{p=1}^{n}e^{iL^{(N)}_{\pm}(X_{N,p})-i\omega^{(N)}_{M,{\color{blue}\pm}}(L^{(N)}_{\pm}(X_{N,p})\!)}) \\
& = \lim_{M\rightarrow\infty}\omega^{(N+M)}_{0,{\color{blue}\pm}}\!(\prod_{p=1}^{n}e^{i\alpha^{N}_{N+M}(L^{(N)}_{\pm}(X_{N,p})\!)-i\omega^{(N+M)}_{0,{\color{blue}\pm}}\!(\alpha^{N}_{N+M}(L^{(N)}_{\pm}(X_{N,p})\!)\!)}) \;,
\end{align*}
where $\omega^{(N)}_{{\color{blue}\pm}} = \omega_{{\color{blue}\pm}}\circ\alpha^{N}_{\infty}$.
\end{thm}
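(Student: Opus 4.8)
The plan is to split the statement into two parts of entirely different character. The chain of equalities identifying the Fock-space correlation functions with correlation functions of the renormalized finite-scale states holds verbatim at each fixed scale $N$ and is purely algebraic: since $\fA_{N,{\color{blue}\pm}}$ is finite-dimensional, every $L^{(N)}_{\pm}(X_{N,p})$ is a bounded element and all manipulations below are literal. The two limit statements ($N\to\infty$) carry the analytic content and rest on Theorem \ref{thm:KSconvsmeared} and Corollary \ref{cor:KSconvsmeared}, together with the state convergence of Lemma \ref{lem:stateconv} and its lift to general $S$-convergent smearing sequences in Remark \ref{rem:KSconvsmeared}. I would prove the equalities first and then the limits.

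For the equalities I would start from the smeared version of \eqref{eq:KShermiteanparts}, $:\!(\pi_{{\color{blue}\pm}}\circ\alpha^{N}_{\infty})(L^{(N)}_{\pm}(X_{N,p}))\!: = dF_{{\color{blue}\pm}}(\tilde{\ell}^{(N)}_{\pm}(S^{M}_{N}(X_{N,p})))$, and invoke \eqref{eq;vacexpren} in the form $:\!\pi_{{\color{blue}\pm}}(B)\!: = \pi_{{\color{blue}\pm}}(B) - \omega^{(N)}_{{\color{blue}\pm}}(B)$ for $B = \alpha^{N}_{\infty}(L^{(N)}_{\pm}(X_{N,p}))$, where $\omega^{(N)}_{{\color{blue}\pm}} = \omega_{{\color{blue}\pm}}\circ\alpha^{N}_{\infty}$. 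Because $\alpha^{N}_{\infty}$ and $\pi_{{\color{blue}\pm}}$ are unital $*$-morphisms, the product of these centered, normal-ordered approximants equals $\pi_{{\color{blue}\pm}}(\alpha^{N}_{\infty}(\prod_{p}(L^{(N)}_{\pm}(X_{N,p})-\omega^{(N)}_{{\color{blue}\pm}}(L^{(N)}_{\pm}(X_{N,p})))))$, and since $\Omega_{0}$ is the GNS vector of $\omega_{{\color{blue}\pm}}$ in $\pi_{{\color{blue}\pm}}$ its vacuum expectation collapses to $\omega^{(N)}_{{\color{blue}\pm}}$ applied to that centered product, giving the first equality. The remaining two equalities then follow by inserting $\omega^{(N)}_{{\color{blue}\pm}} = \lim_{M\to\infty}\omega^{(N)}_{M,{\color{blue}\pm}}$ (the scaling limit \eqref{eq:sl}, Lemma \ref{lem:stateconv}) and the flow relation $\omega^{(N)}_{M,{\color{blue}\pm}} = \omega^{(N+M)}_{0,{\color{blue}\pm}}\circ\alpha^{N}_{N+M}$ from \eqref{eq:rgflow}, again distributing $\alpha^{N}_{N+M}$ across the product by multiplicativity. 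The exponential chain is identical, first pulling the c-number $e^{-i\omega^{(N)}_{{\color{blue}\pm}}(\,\cdot\,)}$ out of each normal-ordered exponential and then repeating the same GNS/scaling-limit/flow substitutions.

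For the limit statements I would telescope. The exponential limit is the easy one: by Corollary \ref{cor:KSconvsmeared} the smeared KS unitaries $U^{(N)}_{p}:=e^{i:(\pi_{{\color{blue}\pm}}\circ\alpha^{N}_{\infty})(L^{(N)}_{\pm}(X_{N,p})):}$ converge strongly to $U_{p}:=e^{iL_{\pm}(X_{p})}$ on all of $\fF_{\ua}(\fh^{(\pm)}_{\infty,{\color{blue}\pm}})$, so writing $\prod_{p}U^{(N)}_{p}\Omega_{0} - \prod_{p}U_{p}\Omega_{0} = \sum_{q}(\prod_{p<q}U^{(N)}_{p})(U^{(N)}_{q}-U_{q})(\prod_{p>q}U_{p})\Omega_{0}$ and using unitarity of the left factor reduces each summand to $\|(U^{(N)}_{q}-U_{q})\psi_{q}\|$ with $\psi_{q}=\prod_{p>q}U_{p}\Omega_{0}$ a fixed vector, which tends to zero; the matrix element with $\Omega_{0}$ then converges. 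For the linear correlator I would instead use the quasi-free (Wick) structure of $\Omega_{0}$: the vacuum expectation of a product of the normal-ordered bilinears $dF_{{\color{blue}\pm}}(\tilde{\ell}^{(N)}_{\pm}(S^{M}_{N}(X_{N,p})))$ equals a finite sum over complete pairings, each contributing a product of connected two-point contractions built from the one-particle operators $\tilde{\ell}^{(N)}_{\pm}(S^{M}_{N}(X_{N,p}))$ and the covariance $S$ of $\omega_{{\color{blue}\pm}}$ (traces such as $\Tr((\tilde{\ell}^{(N)}_{i})_{+-}(\tilde{\ell}^{(N)}_{j})_{-+})$ and diagonal terms). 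Convergence of the whole correlator thus reduces to convergence of each individual contraction, which is exactly the diagonal strong convergence and off-diagonal Hilbert--Schmidt convergence supplied by Lemma \ref{lem:KSconvsmeared}.

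The main obstacle is precisely the unboundedness in the linear case: the intermediate vectors $\prod_{p>q}L_{\pm}(X_{p})\Omega_{0}$ carry Hilbert--Schmidt (not finitely supported) one-particle wavefunctions and therefore lie outside the algebraic core $\fF^{\alg}_{\ua}(\cD_{\std})$ on which Theorem \ref{thm:KSconvsmeared} is phrased, while the leftmost factors $\prod_{p<q}dF_{{\color{blue}\pm}}(\tilde{\ell}^{(N)}_{\pm})$ are unbounded and $N$-dependent, so a naive telescoping cannot be closed by strong convergence alone. Passing to the Wick expansion circumvents this entirely, reducing everything to the pairwise contractions governed by Lemma \ref{lem:KSconvsmeared}; the mode sums are controlled by the uniform polynomial bounds on $\|\tilde{\ell}^{(N)}_{\pm,k}R^{K}_{\infty}(e_{l})\|_{\infty}$ from \eqref{eq:analyticvector} together with the Sobolev decay of $\hat{s}$, ensuring absolute convergence and dominated convergence of the sums. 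Finally, the analytic-vector property of $\Omega_{0}$ for $L_{\pm}(X)$ (via \cite{CarpiOnTheUniqueness}, already used for Theorem \ref{thm:KSconvsmeared}) guarantees that the continuum products and their exponentials are well defined, closing the argument.
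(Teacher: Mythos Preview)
Your treatment of the chain of equalities (via the normal-ordering identity \eqref{eq;vacexpren}, the GNS identification, and the state convergence of Lemma \ref{lem:stateconv}) and of the exponential limit (telescoping plus strong convergence of the unitaries from Corollary \ref{cor:KSconvsmeared}) matches the paper's proof essentially line for line.

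Where you diverge is the linear correlator. The paper does \emph{not} pass to a Wick/cycle expansion; it keeps the telescoping decomposition
\[
\Big|\Big(\Omega_{0},\Big(\prod_{p}:\!\pi_{{\color{blue}\pm}}(\alpha^{N}_{\infty}(L^{(N)}_{\pm}(X_{N,p})))\!:-\prod_{p}L_{\pm}(X_{p})\Big)\Omega_{0}\Big)\Big|
\le \sum_{q}\Big\|\prod_{p>q}:\!\cdots\!:\Omega_{0}\Big\|\;\Big\|\big(:\!\cdots\!:_{q}-L_{\pm}(X_{q})\big)\prod_{p<q}L_{\pm}(X_{p})\Omega_{0}\Big\|
\]
and closes it with two observations you flagged as problematic: (i) the left factors are uniformly bounded in $N$ because the approximants are uniformly relatively bounded by the limiting $L_{\pm}(X_{p})$, and (ii) $\prod_{p<q}L_{\pm}(X_{p})\Omega_{0}$, while not in $\fF^{\alg}_{\ua}(\cD_{\std})$, is a norm-convergent sum of finite-$\cD_{\std}$-particle vectors, so the strong convergence of Theorem \ref{thm:KSconvsmeared} extends to it. Thus the paper meets the unboundedness obstacle head-on rather than circumventing it.

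Your Wick route is a legitimate alternative and has the virtue of reducing everything to one-particle data governed by Lemma \ref{lem:KSconvsmeared}. Two caveats worth noting: first, the expansion of a product of normal-ordered bilinears is a sum over permutations (cycle decompositions), not merely pairings, with each cycle contributing a trace of a cyclic product of blocks $G_{s_{i}s_{i+1}}$; second, such cycles can contain the unbounded diagonal blocks $(\ell_{\pm}(X))_{++}$, $(\ell_{\pm}(X))_{--}$, so trace-class control and convergence do not follow from Hilbert--Schmidt convergence of the off-diagonals alone --- you need the extra Sobolev decay of $\hat{X}$ (equivalently of $\hat{s}$) to absorb the polynomial growth, which you gesture at but which is where the real work sits in your version. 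The paper's telescoping packages this same analytic input as the single statement ``approximants uniformly bounded by the limit'', which is shorter to state but no less nontrivial to verify.
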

\begin{proof}
The statements follow from a direct application of Theorem \ref{thm:KSconvsmeared} and Corollary \ref{cor:KSconvsmeared}, the identity,
\begin{align*}
:\!(\pi_{S}\circ\alpha^{N}_{\infty})(L^{(N)}_{\pm}(X_{N,p})\!)\!: & = (\pi_{S}\circ\alpha^{N}_{\infty})(L^{(N)}_{\pm}(X_{N,p})\!)-\omega^{(N)}_{S}(L^{(N)}_{\pm}(X_{N,p})\!),
\end{align*}
and the convergence, according to Lemma \ref{lem:stateconv}, of the renormalized states $\omega^{(N)}_{M,{\color{blue}\pm}}$ at finite scales to the scaling limit $\omega^{(N)}_{{\color{blue}\pm}}$:
\begin{align}
\label{eq:corapproxvirestimate}
& |(\Omega_{0},\!\Big(\prod_{p=1}^{n}\!:\!\pi_{{\color{blue}\pm}}(\alpha^{N}_{\infty}(L^{(N)}_{\pm}(X_{N,p})\!)\!)\!:\!-\!\prod_{p=1}^{n}L_{\pm}(X_{p})\!)\!\!\Big)\Omega_{0})| \\ \nonumber
& \!\leq\!\sum^{n}_{q=1}|(\Omega_{0},\!\Big(\!\!\prod_{p=q+1}^{n}\!\!\!\!:\!\pi_{{\color{blue}\pm}}(\alpha^{N}_{\infty}\!(L^{(N)}_{\pm}\!(X_{N,p})\!)\!)\!:\!\!\Big)\!\Big(\!\!:\!\pi_{{\color{blue}\pm}}(\alpha^{N}_{\infty}\!(L^{(N)}_{\pm}\!(X_{N,q})\!)\!)\!:\!-L_{\pm}(X_{q})\!\!\Big)\!\Big(\!\prod_{p=1}^{q-1}L_{\pm}(X_{p})\!\!\Big)\Omega_{0})| \\ \nonumber
& \!\leq\!\sum^{n}_{q=1}\Big\|\!\!\prod_{p=q+1}^{n}\!\!\!\!:\!\pi_{{\color{blue}\pm}}(\alpha^{N}_{\infty}\!(L^{(N)}_{\pm}\!(X_{N,p})\!)\!)\!:\!\Omega_{0}\Big\|\Big\|\!\Big(\!\!:\!\pi_{{\color{blue}\pm}}(\alpha^{N}_{\infty}(L^{(N)}_{\pm}\!(X_{N,q})\!)\!)\!:\!-L_{\pm}(X_{q})\!\!\Big)\!\Big(\prod_{p=1}^{q-1}\!L_{\pm}(X_{p})\!\!\Big)\Omega_{0}\Big\|,
\end{align}
and similarly,
\begin{align}
\label{eq:corapproxvirestimateunitary}
& |(\Omega_{0},\Big(\prod_{p=1}^{n}e^{i:\pi_{{\color{blue}\pm}}(\alpha^{N}_{\infty}(L^{(N)}_{\pm}(X_{N,p})\!)\!):}\!-\!\prod_{p=1}^{n}e^{iL_{\pm}(X_{p})}\!\Big)\Omega_{0})| \\ \nonumber
& \!\leq\!\sum^{n}_{q=1}|(\Omega_{0},\!\Big(\!\!\prod_{p=q+1}^{n}\!\!\!e^{i:\pi_{{\color{blue}\pm}}(\alpha^{N}_{\infty}(L^{(N)}_{\pm}(X_{N,p})\!)\!):}\!\Big)\!\Big(e^{i:\pi_{{\color{blue}\pm}}(\alpha^{N}_{\infty}(L^{(N)}_{\pm}(X_{N,q})\!)\!):}\!-e^{iL_{\pm}(X_{q})}\!\Big)\!\Big(\prod_{p=1}^{q-1}e^{iL_{\pm}(X_{p})}\!\Big)\Omega_{0})| \\ \nonumber
& \!\leq\! \sum^{n}_{q=1}\Big\|\!\!\prod_{p=q+1}^{n}\!\!\!e^{i:\pi_{{\color{blue}\pm}}(\alpha^{N}_{\infty}(L^{(N)}_{\pm}(X_{N,p})\!)\!):}\Omega_{0}\Big\|\Big\|\!\Big(e^{i:\pi_{{\color{blue}\pm}}(\alpha^{N}_{\infty}(L^{(N)}_{\pm}(X_{N,q})\!)\!):}\!-\!e^{iL_{\pm}(X_{q})}\!\Big)\!\Big(\prod_{p=1}^{q-1}e^{iL_{\pm}(X_{p})}\!\Big)\Omega_{0}\Big\| \\ \nonumber
& \!\leq\!\sum^{n}_{q=1}\!\Big\|\!\Big(e^{i:\pi_{{\color{blue}\pm}}(\alpha^{N}_{\infty}(L^{(N)}_{\pm}(X_{N,q})\!)\!):}\!-\!e^{iL_{\pm}(X_{q})}\!\Big)\!\Big(\prod_{p=1}^{q-1}e^{iL_{\pm}(X_{p})}\!\Big)\Omega_{0}\Big\|,
\end{align}
In the case of correlation functions of Virasoro generators, we need two additional observation: First, $\|\prod_{p=1}^{n}\!:\!\!(\pi_{{\color{blue}\pm}}(\alpha^{N}_{\infty}(L^{(N)}_{\pm}(X_{N,p})\!)\!)\!\!:\!\Omega_{0}\|$ can be bounded uniformly in $N$ in terms of $\|\prod_{p=1}^{n}L_{\pm}(X_{p})\Omega_{0}\|$ for sufficiently regular $X_{p}$ as powers of the approximants $:\!(\pi_{{\color{blue}\pm}}\circ\alpha^{N}_{\infty})(L^{(N)}_{\pm}(X_{N,p}))\!:$ are uniformly bounded (in $N$) by $L_{\pm}(X_{p})$. Second, $\prod_{p=1}^{n}L_{\pm}(X_{p}))\Omega_{0}$ is given by a convergent sum (depending on the regularity of $X_{p}$) of Fock-space vectors with finite particle number in momentum space.
\end{proof}
\begin{rem}[Mixed correlation functions]
\label{rem:corapproxvir}
Theorem \ref{thm:corapprox} \& \ref{thm:corapproxvir} can be combined to obtain convergent expressions for mixed correlation functions, i.e.~those containing insertions of operators in $\fA^{(\pm)}_{\infty,{\color{blue}\pm}}$ respectively $\fB^{(\pm)}_{\infty,{\color{blue}\pm}}$, Virasoro generators, WZW currents and their associated unitaries.
\end{rem}

\section*{Acknowledgements}
The authors would like to thank R.~F.~Werner for valuable discussion about inductive limits in quantum theory. AS would like to thank Y.~Tanimoto for helpful discussions concerning the essential self-adjointness of smeared Virasoro generators. Moreover, the authors would like to extend their gratitude to the unknown reviewers providing valuable feedback on a previous version of the manuscript, which improved the presentation and clarified the relation to existing work. Special thanks are extended to one of the reviewers for suggesting to clarify the relation with the Temperley-Lieb algebra and suggesting the case of symplectic fermions for future work. This work was supported, in part, by the DFG through SFB 1227 (DQ-mat), Quantum Valley Lower Saxony, and funded by the Deutsche Forschungsgemeinschaft (DFG, German Research Foundation) under Germanys Excellence Strategy EXC-2123 QuantumFrontiers 390837967.

\section*{Declarations}

The authors have no conflicts of interest to declare.


\end{document}